\documentclass[11pt]{article}

\usepackage{amsmath, amssymb, amsthm}
\usepackage{algorithm}
\usepackage{algpseudocode}
\usepackage{float, graphicx, multirow,subcaption, setspace}
\usepackage{comment}
\usepackage{url}
\usepackage{enumitem}
\usepackage{tikz}
\usepackage{bm, bbm}
\usetikzlibrary{shapes.geometric, positioning}
\usetikzlibrary{quotes, angles}
\usepackage{rotating}
\usepackage{hyperref}
\usepackage{natbib}
\usepackage{cleveref}

\definecolor{SkyBlue}{RGB}{14, 118, 188}
\definecolor{BrightRed}{RGB}{223, 82, 78}
\definecolor{Green638}{RGB}{165,255,118} 
\definecolor{RevisedColor}{RGB}{25, 200, 150}

\hypersetup{pdfborder = {0 0 0.5 [3 3]}, colorlinks = true, linkcolor = BrightRed, citecolor = SkyBlue}

\newcommand{\E}{\mathbb{E}} 
\def\P{\mathbb{P}} 

\newcommand{\bx}{\bm{x}}
\newcommand{\bz}{\bm{z}}

\newcommand{\bY}{\bm{Y}}
\newcommand{\bX}{\bm{X}}
\newcommand{\bZ}{\bm{Z}}

\theoremstyle{plain}
\newtheorem{proposition}{Proposition}
\newtheorem{theorem}{Theorem}
\newtheorem{lemma}[theorem]{Lemma}
\newtheorem{corollary}{Corollary}

\theoremstyle{definition}

\newtheorem{ex}{Example}

\newlist{assumenum}{enumerate}{1}
\setlist[assumenum]{label=\textbf{A\theassumenumi}, ref=A\theassumenumi}
\newcounter{assumption}
\crefname{assumenumi}{Assumption}{Assumptions}
\Crefname{assumenumi}{Assumption}{Assumptions}

\newlist{bssumenum}{enumerate}{1}
\setlist[bssumenum]{label=\textbf{B\thebssumenumi}, ref=B\thebssumenumi}

\crefname{bssumenumi}{Assumption}{Assumptions}
\Crefname{bssumenumi}{Assumption}{Assumptions}

\newlist{cssumenum}{enumerate}{1}
\setlist[cssumenum]{label=\textbf{C\thecssumenumi}, ref=C\thecssumenumi}

\crefname{cssumenumi}{Assumption}{Assumptions}
\Crefname{cssumenumi}{Assumption}{Assumptions}

\newlist{pssumenum}{enumerate}{1}
\setlist[pssumenum]{label=\textbf{P\arabic*}, ref=P\arabic*}

\crefname{pssumenumi}{Assumption}{Assumptions}
\Crefname{pssumenumi}{Assumption}{Assumptions}

\usepackage{fullpage, parskip}
\usepackage{amsmath}
\usepackage{mathtools}
\usepackage{bm}
\usepackage{graphicx}
\usepackage{multirow}
\usepackage{booktabs}
\usepackage{tikz}
\usetikzlibrary{arrows.meta,positioning,calc,fit,backgrounds,decorations.pathreplacing}

\usepackage{algorithm}
\usepackage{algpseudocode}
\usepackage{rotating}
\usepackage{comment}
\usepackage{url}
\usepackage{enumitem}
\usepackage{bbm}
\usepackage{siunitx}

\usepackage{cleveref}

\creflabelformat{equation}{#2\textup{#1}#3}

\hypersetup{
  pdfborder = {0 0 0.5 [3 3]},
  colorlinks = true,
  linkcolor = BrightRed,
  citecolor = SkyBlue,
  urlcolor = SkyBlue
}

\newcommand{\includesupp}{1}

\newcommand{\switchref}[2]{%
  \if\includesupp1%
    #1%
  \else%
    #2%
  \fi%
}

\newcommand{\suppref}[2]{%
  \if\includesupp1%
    #1%
  \else%
    #2%
  \fi%
}

\graphicspath{{./figures/}{./plots/}{./art/}}

\title{Multivariate Varying-Coefficient BART with Graphical Horseshoe Priors}

\author{
Soham Ghosh\thanks{Department of Statistics, University of Wisconsin--Madison. \texttt{sghosh39@wisc.edu}}
\and
Sameer K. Deshpande\thanks{Department of Statistics, University of Wisconsin--Madison. \texttt{sameer.deshpande@wisc.edu}}
}

\begin{document}

\maketitle

\begin{abstract}
Modern multivariate regression problems involve several related outcomes whose regression effects are not only nonlinear, heterogeneous, and outcome-specific, but the residual dependence among outcomes is scientifically meaningful. Existing multivariate Bayesian tree-based methods typically address only part of this problem: some impose substantial sharing of tree architecture across outcomes, which can be overly restrictive when different responses depend on distinct predictors or effect modifiers, while others accommodate residual dependence but retain simpler mean structures. This paper develops \texttt{multiVCBART}, a multivariate varying-coefficient Bayesian additive regression tree framework that jointly models flexible outcome-specific coefficient surfaces and a sparse residual precision matrix. Specifically, each entry of the coefficient matrix $\bm{B}(\bm{x})$ is represented by its own BART ensemble, allowing predictor effects to vary nonlinearly with modifiers $\bm{x}$ and to differ across outcomes, while a Graphical Horseshoe prior on the precision matrix $\Omega$ captures parsimonious residual conditional dependence. This yields a multivariate BART model that allows rich heterogeneous mean structure, and still borrows strength across outcomes through joint estimation of $(\bm{B},\Omega)$.
We introduce a sampler that reduces the multivariate Gaussian likelihood to a sequence of outcome-wise Gaussian pseudo-response updates, thereby permitting efficient BART backfitting within each outcome while updating $\Omega$ through a column-wise graphical shrinkage step. From a theoretical standpoint, we establish posterior contraction for the joint multivariate varying-coefficient model, showing that the posterior adapts to heterogeneous smoothness and sparsity in $\bm{B}$ and $\Omega$. Empirically, \texttt{multiVCBART} performs strongly on synthetic datasets, especially in high-dimensional settings with sparse and outcome-specific signals, where shared-tree multivariate BART formulations and linear SUR-type competitors are the most restrictive. In a re-analysis of the study of Drug Sensitivity in cancer, \texttt{multiVCBART} identifies significant biomarker signals, recovers coherent pharmacologic groupings among the drugs, and estimates an interpretable residual drug-response network after adjustment for tissue context and molecular covariates.

\end{abstract}

\noindent\textbf{Keywords:}
Bayesian additive regression tree; Graphical horseshoe; Multivariate regression;
Precision matrix; Varying-coefficient model.

\section{Introduction}
\label{sec:intro}

In many modern scientific applications, researchers seek to understand not only whether certain primary covariates influence multiple, possibly correlated, outcomes but also whether the relationship between covariates and outcomes varies with respect to other variables. 
For instance, in pharmacogenetics, researchers observe drug sensitivity
measurements for multiple compounds on the same cancer cell lines, together with molecular
features such as gene expression, copy-number variation, and mutation indicators \citep{garnett2012a,YangGDSC} and try to answer several questions simultaneously: which molecular features (covariates) are associated with sensitivity to each drug? How do these associations vary across tissue contexts (effect modifiers)? And which drug responses remain conditionally dependent after adjusting for observed molecular and tissue information? 

These goals naturally lead to a multivariate varying-coefficient model
\begin{equation}\label{eq:multi-VCM}
\bm{y} \mid \bm{x},\bm{z} \sim \mathcal N_{q} \big(\bm{B}(\bm{x})^\top \bm{z}, \Omega^{-1}\big),
\end{equation}
where $\bm y\in\mathbb R^q$ is a vector of outcomes (e.g., sensitivities of multiple drugs); $\bm z \in\mathbb R^p$ is a vector of covariates of primary interest (e.g., molecular features); and $\bm x\in\mathbb R^d$ is a vector of effect modifiers (e.g., tissue type).
In \Cref{eq:multi-VCM}, $\bm{B}(\bm x)\in\mathbb R^{p\times q}$ is a matrix of unknown coefficient functions evaluated at $\bm x$, and $\Omega$ is a residual precision matrix encoding conditional dependence among the outcomes. 
The model in \Cref{eq:multi-VCM} extends the classical linear varying coefficient \citep[][VCM]{Hastie1993} to the multi-output setting and allows the effects of each covariate $Z_{j}$ on each outcome $Y_{k}$ to vary as a function of the effect modifiers $\bm{X}.$

The model in \Cref{eq:multi-VCM} subsumes several important existing statistical frameworks as special cases. 
First, if the coefficient matrix is constant with respect to $\bm x$ (i.e., $\bm{B}(\bm x_i) \equiv \bm{B}$), the model reduces the standard multivariate linear regression model.
This seamlessly generalizes to settings where predictors differ by outcome; by structuring the model equation-by-equation --- stacking outcome-specific predictors into a single expanded vector $\bm z_i = (\bm z_{i1}^\top, \dots, \bm z_{iq}^\top)^\top$ and defining the coefficient matrix as the direct sum of outcome-specific vectors $\bm B = \bigoplus_{r=1}^q \bm{\beta}_r$ --- the model reduces to \citet{ZellnerSUR}'s celebrated Seemingly Unrelated Regressions (SUR).
Second, when the $\bm{z} = (1, T)^{\top}$ for a binary treatment $T,$ the model in \Cref{eq:multi-VCM} can be used to estimate conditional average treatment effects under suitable identifying assumptions, effectively extending \citet{Hahn2020}'s Bayesian causal forests model to the multi-outcome setting. 

Despite the generality of \eqref{eq:multi-VCM}, fitting such a model in high dimensions poses substantial statistical challenges. 
Different outcomes may depend on distinct subsets of predictors, exhibit unique nonlinear patterns of effect modification, and yet remain strongly dependent even after adjusting for observed covariates.
Ignoring this cross-outcome dependence by fitting separate univariate models forfeits efficiency and distorts joint uncertainty quantification \citep{ZellnerSUR,SmithKohn2000}, while imposing a common mean structure across outcomes can be overly restrictive for distinct scientific processes \citep{baldwin2014analyzing}. 
Furthermore, classical approaches to fitting VCMs like linear smoothers and kernel methods \citep[see, e.g.,][]{Hoover1998, Wu2000, LiRacine2010} or basis expansions \citep{huang2002,Bai2023} struggle when the coefficient functions in $\bm{B}(\bm{x})$ contain strong nonlinearities, higher-order interactions, heterogeneous smoothness, or many irrelevant modifiers \citep{Liu2014,Deshpande2024}.

To operationalize the general multivariate framework in \eqref{eq:multi-VCM} and overcome the limitations of classical smoothers, we propose a Bayesian nonparametric approach that represents each scalar coefficient function \(B_{jr}(\cdot)\) using its own Bayesian additive regression tree ensemble \citep[BART;][]{Chipman2010}. 
To encourage sparsity in $\bm{B},$ we specify global--local shrinkage priors on the tree leaf parameters to suppress irrelevant covariate-outcome surfaces.
We further specify a graphical horseshoe prior on \(\Omega\) then learns a sparse residual conditional-dependence graph together with the nonlinear mean structure.

\subsection{Related work}

A closely related line of work arises from multivariate regression and seemingly unrelated regression models, where one jointly models \(q\) responses through a coefficient matrix and a residual covariance or precision matrix. 
The classical SUR framework of \citet{ZellnerSUR} showed that, when responses are correlated, joint modeling can improve efficiency relative to equation-by-equation estimation. 
In the Bayesian literature, this perspective has been extended in several important directions. 
Early work by \citet{brown1998multivariate} developed multivariate Bayesian variable selection for regression with correlated outcomes, while \citet{BhadraMallick2013} studied sparse high-dimensional Gaussian SUR models with joint selection of predictors and inverse-covariance elements. 
More recent contributions have emphasized scalable and structured joint regularization of regression coefficients and residual dependence, including \texttt{BayesSUR} of \citet{Bottolo2021}, which combines cell-sparse variable selection with sparse covariance selection, and the horseshoe\(+\) based SUR model of \citet{han2023}, which places continuous shrinkage priors on both the regression coefficients and the precision matrix. 
These methods provide an important foundation for our work, but they are largely built around linear mean specifications and therefore do not directly accommodate nonlinear effect heterogeneity.

A second relevant line of work concerns Bayesian tree-based varying-coefficient models. \citet{Deshpande2024} introduced \texttt{VCBART}, which represents each coefficient function in a univariate varying-coefficient model by a BART ensemble, thereby combining coefficient-level interpretability with flexible nonlinear effect modification. 
More recently, \citet{ghosh2025fittingsparsehighdimensionalvaryingcoefficient} proposed \texttt{sparseVCBART}, which extends this idea to high-dimensional settings by placing global--local shrinkage priors on the regression tree leaf outputs and hierarchical sparsity priors on the splitting probabilities. 
This allows the model to learn both which covariates have nonzero effects and which modifiers drive each nonzero coefficient function. 
Our construction builds directly on this sparse varying-coefficient BART philosophy, but moves from a univariate response model to a genuinely multivariate regression setting.

A growing literature has extended BART beyond standard univariate regression to structured and
multivariate settings. 
One important strand uses \emph{shared} tree structures across related
responses or model components. 
For example, \citet{mcjames2023bayesiancausalforestsmultivariate} develop a multivariate extension of Bayesian causal forests, \texttt{mvbcf}, for multiple outcomes that allows all outcomes to share the same tree architecture, and \citet{Um2023} propose a multivariate \texttt{skewBART} model in which a common multivariate BART prior is combined with a skew-normal error model to accommodate within-subject dependence and non-Gaussian responses. 
Related ideas also appear in the shared-forest framework of \citet{Linero2019}, where multiple model components are tied together through a common set of trees. 
A second strand moves closer to the SUR perspective by allowing outcome-specific mean models while still accounting for residual dependence. 
Most notably, \citet{esser2025seeminglyunrelatedbayesianadditive} propose seemingly unrelated BART, \texttt{suBART}, a multivariate extension of BART in which each response is assigned its own ensemble of trees while dependence across outcomes is captured through a joint residual model. 
Multivariate BART ideas have also been developed in time-series settings, such as Bayesian additive vector autoregressive tree models \citep{Huber2022} and multivariate BART
models for tail forecasting \citep{clark2023}. These contributions highlight the promise of joint tree-based modeling for multivariate responses.

Despite these important contributions, existing multivariate BART formulations do not fully address the particular combination of features targeted here. 
Approaches such as \texttt{mvbcf} and multivariate \texttt{skewBART} borrow strength through shared or partially shared tree structures.
This is attractive when outcomes have similar regression structure, but it can be restrictive when different outcomes depend on distinct predictors or effect modifiers. 
Conversely, SUR-style models such as \texttt{suBART} allow response-specific ensembles and residual dependence, but they do not directly target sparse, predictor-specific varying-coefficient structure. In contrast, our goal is to learn a separate nonlinear coefficient surface for each covariate--outcome pair, shrink irrelevant surfaces toward zero, select relevant modifiers within each surface, and estimate the residual dependence graph jointly with the mean functions.

\subsection{Our contributions}

We propose \texttt{multiVCBART}, a Bayesian nonparametric method for the general model in \eqref{eq:multi-VCM} that combines outcome-adaptive varying-coefficient BART components with a sparse residual precision model. 
Concretely, each entry of \(\bm B(\bm x)\) is represented by its own BART ensemble, allowing the effect of each covariate to vary nonlinearly with the modifier vector \(\bm x_i\) in an outcome-specific manner. 
At the same time, the residual precision matrix \(\Omega\) is jointly estimated so that cross-outcome dependence is learned rather than ignored. 
In this way, our method extends the classical benefits of joint multivariate modeling beyond linear SUR-type settings to a substantially richer varying-coefficient regime.

To make this practically useful in high-dimensional settings, we equip the coefficient-function ensembles with sparsity-inducing horseshoe priors \citep{Carvalho2010} that shrink irrelevant covariate--outcome surfaces while preserving important heterogeneous effects. 
We place a complementary Graphical Horseshoe prior \citep{LiJCGS} on \(\Omega\) to encourage a parsimonious residual dependence graph. The resulting framework retains coefficient-level interpretability, permits outcome-specific nonlinear heterogeneity, and still borrows strength across outcomes through joint estimation of \((\bm B,\Omega)\).

Empirically, in \Cref{sec:experiments} and \Cref{sec:multi-experiments}, we show, across a range of controlled synthetic benchmarks, that \texttt{multiVCBART} achieves strong predictive accuracy and uncertainty quantification relative to representative multivariate baselines. In high-dimensional nonlinear regression settings with sparse and outcome-specific signal structure, our method consistently attains the best or near-best estimation and often improved predictive interval coverage relative to competing multivariate BART and Bayesian SUR alternatives. These gains are particularly pronounced when the true signals are sparse, nonlinear, and distributed differently across outcomes, precisely the regime where shared-tree formulations or linear multivariate models are most restrictive.

Finally, beyond methodology and computation, \Cref{sec:theory} establishes, to our knowledge, the first full posterior contraction rates for a multivariate BART model with jointly estimated residual dependence. 
A recognised challenge in the Bayesian asymptotics of continuous shrinkage is that heavy-tailed priors, such as the vanilla horseshoe, often necessitate mathematically artificial posterior truncations to control global metric entropy \citep{vanderpasHS2017}. 
We overcome this by deploying an amplitude shelling argument, proving that the \texttt{multiVCBART} posterior achieves near-minimax adaptation to structural sparsity and functional complexity. Furthermore, by integrating functional restricted eigenvalue and $\beta$-min signal separation conditions, we extend these contraction guarantees beyond the joint predictive surface to establish the optimal recovery of the matrix $\Omega^{-1}\bm{B}(\bx)$, which encodes the effects of each covariate on each outcome \emph{conditionally} given all other outcomes. 

The remainder of this paper is organized as follows. 
\Cref{sec:model} specifies the full Bayesian model, including the prior structure on the coefficient-function ensembles and the residual precision matrix. 
\Cref{sec:comp} details our posterior computation strategy and practical implementation choices. 
Our main theoretical results, including posterior contraction guarantees, are developed in \Cref{sec:theory}. 
Empirical performance is assessed in \Cref{sec:experiments} through a synthetic
benchmark and re-analysis of drug sensitivity data in \Cref{sec:gdsc_realdata}. 
The paper concludes with a brief discussion in \Cref{sec:discussion}.
An open-source \textsf{R} implementation of \texttt{multiVCBART} is publicly available at \url{https://github.com/ghoshstats/multiVCBART}.

\section{Multivariate varying-coefficient BART}
\label{sec:model}
Our primary inferential objective is to jointly estimate the $p \times q$ matrix of unknown coefficient functions $\bm B(\cdot)$ and the $q \times q$ residual precision matrix $\Omega$ in \Cref{eq:multi-VCM}.
In many modern scientific applications, the ambient covariate dimension $p$ and the number of outcomes $q$ may be large relative to the sample size $n$. 
Consequently, flexible and robust estimation in this high-dimensional regime necessitates assuming structural sparsity, both in the set of active coefficient functions within $\bm B(\cdot)$ and in the conditional dependence graph encoded by the off-diagonal entries of $\Omega$.\footnote{We assume the modifier vector satisfies $\bx_i\in[0,1]^d$ only to simplify the presentation. Extensions to other continuous ranges, categorical modifiers, mixed input spaces, and more general splitting rules are natural; see, for example, \citet{flexBART2025}.}

\textbf{BART representation of the coefficient functions.} To flexibly capture nonlinear effect modification, we approximate each scalar coefficient function \(B_{jr}(\cdot)\) by its own BART ensemble:
\begin{equation}\label{eq:BARTrep}
B_{jr}(\bx)
=
\sum_{t=1}^M g(\bx;\mathcal T_{jrt},\mathcal M_{jrt}),
\quad j=1,\dots,p; \ r = 1,\dots,q.
\end{equation}
where \(\mathcal T_{jrt}\) is a regression tree and
\(\mathcal M_{jrt}\) is the corresponding collection of terminal-node, or leaf,
parameters. We write
\(
\mathcal M_{jrt}
=
\{\mu_{jrt\ell}:\ell\in\mathcal L(\mathcal T_{jrt})\},
\)
where \(\mathcal L(\mathcal T_{jrt})\) denotes the set of leaves of
\(\mathcal T_{jrt}\). Each tree induces an axis-aligned partition of the
modifier space,
\(
[0,1]^d
=
\bigcup_{\ell=1}^{L_{jrt}} A_{jrt\ell},
\)
where \(L_{jrt}=|\mathcal L(\mathcal T_{jrt})|\). The tree evaluation map is the
piecewise-constant step function
\[
g(\bx;\mathcal T_{jrt},\mathcal M_{jrt})
=
\sum_{\ell=1}^{L_{jrt}}
\mu_{jrt\ell}\mathbbm 1\{\bx\in A_{jrt\ell}\}.
\]
Consequently, plugging this expression in \eqref{eq:BARTrep},
\[
B_{jr}(\bx)
=
\sum_{t=1}^M
\sum_{\ell=1}^{L_{jrt}}
\mu_{jrt\ell}\mathbbm 1\{\bx\in A_{jrt\ell}\}.
\]

\begin{figure}[htbp]
\centering
\resizebox{0.75\textwidth}{!}{
\begin{tikzpicture}[
    x=1cm,y=1cm,
    >=Latex,
    line width=0.6pt,
    splitnode/.style={
        draw,
        rounded corners=2pt,
        fill=gray!10,
        minimum width=1.55cm,
        minimum height=0.62cm,
        inner sep=2pt,
        font=\small
    },
    leafnodeS/.style={
        draw,
        fill=purple!15,
        minimum width=1.45cm,
        minimum height=0.9cm,
        inner sep=3pt,
        font=\footnotesize,
        align=center
    },
    leafnode1/.style={
        draw,
        fill=blue!15,
        minimum width=1.05cm,
        minimum height=0.62cm,
        inner sep=2pt,
        font=\footnotesize,
        align=center
    },
    leafnode2/.style={
        draw,
        fill=red!15,
        minimum width=1.05cm,
        minimum height=0.62cm,
        inner sep=2pt,
        font=\footnotesize,
        align=center
    },
    paneltitle/.style={font=\bfseries\large},
    outtitle1/.style={font=\bfseries\large, text=blue!80!black},
    outtitle2/.style={font=\bfseries\large, text=red!80!black},
    axlbl/.style={font=\scriptsize},
    smalllbl/.style={font=\footnotesize},
    note/.style={font=\small, align=center}
]

\node[paneltitle] at (2.2,4.8) {};

\node[splitnode] (Sroot) at (2.2,3.6) {$x_1 \le 0.5$};
\node[splitnode] (Sleft) at (1.15,2.35) {$x_2 \le 0.4$};

\node[leafnodeS] (Sright) at (3.55,2.35)
{$\bm{\mu}_3=\begin{pmatrix}\mu_{13}\\ \mu_{23}\end{pmatrix}$};

\node[leafnodeS] (Sll) at (0.45,1.0)
{$\bm{\mu}_1=\begin{pmatrix}\mu_{11}\\ \mu_{21}\end{pmatrix}$};

\node[leafnodeS] (Slr) at (2.50,1.0)
{$\bm{\mu}_2=\begin{pmatrix}\mu_{12}\\ \mu_{22}\end{pmatrix}$};

\draw (Sroot) -- (Sleft);
\draw (Sroot) -- (Sright);
\draw (Sleft) -- (Sll);
\draw (Sleft) -- (Slr);

\begin{scope}[shift={(0.35,-3.45)}]
    \draw[->] (0,0) -- (3.25,0) node[right, axlbl] {$x_1$};
    \draw[->] (0,0) -- (0,3.25) node[above, axlbl] {$x_2$};

    \draw[thick] (0,0) rectangle (3,3);
    \draw[thick] (1.5,0) -- (1.5,3);
    \draw[thick] (0,1.2) -- (1.5,1.2);

    \fill[purple!20] (0,0) rectangle (1.5,1.2);
    \fill[purple!40] (0,1.2) rectangle (1.5,3);
    \fill[purple!10] (1.5,0) rectangle (3,3);

    \node[smalllbl] at (0.75,0.55) {$\bm{\mu}_1$};
    \node[smalllbl] at (0.75,2.1) {$\bm{\mu}_2$};
    \node[smalllbl] at (2.25,1.45) {$\bm{\mu}_3$};

    \node[note] at (1.5,-0.95) {Joint Feature Space};
\end{scope}

\draw[dashed, gray!70, thick] (5.25,4.6) -- (5.25,-5.2);

\node[paneltitle] at (10.2,4.8) {};
\node[outtitle1] at (7.5,4.45) {Outcome 1 ($Y_1$)};
\node[outtitle2] at (12.9,4.45) {Outcome 2 ($Y_2$)};

\node[splitnode, fill=blue!5] (T1root) at (7.55,3.6) {$x_1 \le 0.5$};
\node[splitnode, fill=blue!5] (T1left) at (6.55,2.35) {$x_2 \le 0.4$};
\node[leafnode1] (T1right) at (8.95,2.35) {$\mu_{13}$};
\node[leafnode1] (T1ll) at (5.95,1.0) {$\mu_{11}$};
\node[leafnode1] (T1lr) at (7.15,1.0) {$\mu_{12}$};

\draw (T1root) -- (T1left);
\draw (T1root) -- (T1right);
\draw (T1left) -- (T1ll);
\draw (T1left) -- (T1lr);

\begin{scope}[shift={(5.75,-3.45)}]
    \draw[->] (0,0) -- (3.25,0) node[right, axlbl] {$x_1$};
    \draw[->] (0,0) -- (0,3.25) node[above, axlbl] {$x_2$};

    \draw[thick] (0,0) rectangle (3,3);
    \draw[thick, blue!80!black] (1.5,0) -- (1.5,3);
    \draw[thick, blue!80!black] (0,1.2) -- (1.5,1.2);

    \fill[blue!20] (0,0) rectangle (1.5,1.2);
    \fill[blue!45] (0,1.2) rectangle (1.5,3);
    \fill[blue!12] (1.5,0) rectangle (3,3);

    \node[smalllbl] at (0.75,0.55) {$\mu_{11}$};
    \node[smalllbl] at (0.75,2.1) {$\mu_{12}$};
    \node[smalllbl] at (2.25,1.45) {$\mu_{13}$};

    \node[note] at (1.5,-0.95) {Feature Space for $Y_1$};
\end{scope}

\node[splitnode, fill=red!5] (T2root) at (12.9,3.6) {$x_2 \le 0.6$};
\node[leafnode2] (T2left) at (11.7,2.35) {$\mu_{21}$};
\node[splitnode, fill=red!5] (T2right) at (14.05,2.35) {$x_1 \le 0.7$};
\node[leafnode2] (T2rl) at (13.45,1.0) {$\mu_{22}$};
\node[leafnode2] (T2rr) at (14.75,1.0) {$\mu_{23}$};

\draw (T2root) -- (T2left);
\draw (T2root) -- (T2right);
\draw (T2right) -- (T2rl);
\draw (T2right) -- (T2rr);

\begin{scope}[shift={(12.0,-3.45)}]
    \draw[->] (0,0) -- (3.25,0) node[right, axlbl] {$x_1$};
    \draw[->] (0,0) -- (0,3.25) node[above, axlbl] {$x_2$};

    \draw[thick] (0,0) rectangle (3,3);
    \draw[thick, red!80!black] (0,1.8) -- (3,1.8);
    \draw[thick, red!80!black] (2.1,1.8) -- (2.1,3);

    \fill[red!20] (0,0) rectangle (3,1.8);
    \fill[red!45] (0,1.8) rectangle (2.1,3);
    \fill[red!12] (2.1,1.8) rectangle (3,3);

    \node[smalllbl] at (1.5,0.9) {$\mu_{21}$};
    \node[smalllbl] at (1.05,2.4) {$\mu_{22}$};
    \node[smalllbl] at (2.55,2.4) {$\mu_{23}$};

    \node[note] at (1.5,-0.95) {Feature Space for $Y_2$};
\end{scope}

\end{tikzpicture}
}
\caption{
Schematic comparison of tree partitions.
\textbf{Left:} A shared-tree architecture.
\textbf{Right:} The flexible \texttt{multiVCBART} architecture.
}
\label{fig:partition_comparison}
\end{figure}

Each scalar coefficient function is approximated by an ensemble
of piecewise-constant functions, and the partition used for one coefficient
surface need not be shared with any other surface. For example,
\(B_{jr_1}(\cdot)\) and \(B_{jr_2}(\cdot)\) may use entirely different splitting
variables, cutpoints, tree depths, and terminal regions. Thus, in general,
\(
\{A_{j r_1 t\ell}\}_{t,\ell}
\neq
\{A_{j r_2 t\ell}\}_{t,\ell}.
\)
This distinction is illustrated schematically in \Cref{fig:partition_comparison}. In the shared-tree formulation, a common partition of the modifier space is used across outcomes, and each terminal node carries a vector-valued leaf parameter. In contrast, our formulation assigns separate scalar-valued tree ensembles to each coefficient surface, allowing the induced partitions to adapt to outcome-specific nonlinearities and sparsity patterns.

This is substantially more flexible than shared-tree multivariate BART
formulations, such as \texttt{mvbcf}
\citep{mcjames2023bayesiancausalforestsmultivariate}, in which related outcomes
borrow strength through common or partially common tree partitions. Shared
partitions can be effective when outcomes have similar regression structure, but
they may be restrictive when different outcomes depend on different predictors,
different effect modifiers, or different nonlinear interaction patterns.

\subsection{Prior specification on \texorpdfstring{$(\bm B,\Omega)$}{(B, Omega)}}

We place independent prior components on the varying-coefficient surfaces and
the residual precision matrix,
\(
\Pi(\bm B,\Omega)
=
\Pi(\bm B)\Pi(\Omega).
\)

\paragraph{Prior on \(\bm B\).}
For each coefficient function \(B_{jr}(\cdot)\), let
\(
\mathcal E_{jr}
=
\{(\mathcal T_{jrt},\mathcal M_{jrt}):t=1,\ldots,M\}
\)
denote its BART ensemble. Conditional on the shared global shrinkage scale $\tau_B$ defined below, the ensembles are independent across \((j,r)\).

Each tree \(\mathcal T_{jrt}\) is assigned a depth-penalized Galton--Watson branching-process prior, following \citet{RockovaSaha2019}. Starting from the root, each node \(v\) at depth \(h=\mathrm{depth}(v)\) independently splits into two children with probability
\(
p_{\mathrm{split}}(h)=\gamma^h,\ 0<\gamma<1/2.
\)
Thus, deeper nodes are increasingly unlikely to split, which favors shallow trees and regularizes each tree toward being a weak learner. Conditional on a node splitting, the split variable is drawn from a coefficient-specific probability vector
\(
s(v)\mid \bm\pi_{jr}
\sim
\mathrm{Categorical}(\bm\pi_{jr}),
\)
and the cutpoint is drawn uniformly from the available cutpoints for the selected
modifier.

To encourage modifier sparsity within each coefficient function, we use the
heavy-tailed Dirichlet prior of \citet{Linero2018}. Specifically,
\(
\bm\pi_{jr}\mid \theta_{jr}
\sim
\mathrm{Dirichlet}\left(
{\theta_{jr}}/{d},\ldots,{\theta_{jr}}/{d}
\right),
\)
with hyperpriors
\(
{\theta_{jr}}/{(\theta_{jr}+d)}
\sim
\mathrm{Beta}(1,0.5).
\)
Small values of \(\theta_{jr}\) concentrate \(\bm\pi_{jr}\) near the simplex
corners, so that only a few modifiers are likely to be used for splitting in the
\((j,r)\)-th coefficient surface. The heavy tail on \(\theta_{jr}\) prevents the
prior from forcing sparsity when the data support a richer modifier set.

For each terminal-node parameter, we use a global--local horseshoe prior:
\[
\mu_{jrt\ell}\mid \lambda_{jr}^2,\tau_B^2
\sim
\mathcal N\left(0,\frac{\tau_B^2\lambda_{jr}^2}{M}\right),
\quad
\ell\in\mathcal L(\mathcal T_{jrt}),
\]
with
\[
\lambda_{jr}\sim\mathcal C^+(0,1),
\quad
\tau_B\sim\mathcal C^+(0,\sigma_B).
\]
The factor \(1/M\) stabilizes the prior variance of the ensemble as the number
of trees grows. The global scale \(\tau_B\) is shared across all \(pq\)
coefficient surfaces and controls the overall degree of sparsity, while the local
scale \(\lambda_{jr}\) allows important covariate--outcome surfaces to escape
global shrinkage. Thus, irrelevant coefficient surfaces are aggressively shrunk
toward zero, whereas truly active heterogeneous effects can remain large.

\paragraph{Prior on \(\Omega\).}
To model residual dependence across outcomes, we place a Graphical Horseshoe
prior \citep{LiJCGS} on the residual precision matrix \(\Omega\). The diagonal
entries receive weakly informative gamma priors,
\[
\omega_{rr}\sim\mathrm{Gamma}(a_0,b_0),
\quad r=1,\ldots,q,
\]
while the off-diagonal entries satisfy
\[
\omega_{rs}\mid \lambda_{rs}^2,\tau_\Omega^2
\sim
\mathcal N(0,\tau_\Omega^2\lambda_{rs}^2),
\quad r\neq s,
\]
with
\[
\lambda_{rs}\sim\mathcal C^+(0,1),
\quad
\tau_\Omega\sim\mathcal C^+(0,\sigma_\Omega).
\]
This prior shrinks many off-diagonal elements of \(\Omega\) toward zero,
encouraging a sparse residual conditional-dependence graph while still allowing
large residual associations when supported by the data.

Overall, the prior regularizes the model at two levels. The BART ensembles
flexibly represent the nonlinear varying-coefficient surfaces
\(B_{jr}(\cdot)\), while the global--local shrinkage prior suppresses irrelevant
covariate--outcome pairs and the Dirichlet splitting prior encourage the modifier
selection within active surfaces. Similarly, the Graphical Horseshoe prior
learns a sparse residual dependence structure across outcomes through
\(\Omega\).

\section{Posterior computation}
\label{sec:comp}

We outline the Metropolis-within-Gibbs sampler used to conduct posterior inference
for our multivariate varying-coefficient BART model. The sampler alternates between
updating the coefficient-function ensembles in \(\bm B(\cdot)\), updating the
residual precision matrix \(\Omega\), and updating the associated shrinkage and
splitting-probability hyperparameters. The key computational simplification is
that, conditional on the current value of \(\Omega\), the multivariate Gaussian
likelihood can be reduced to a sequence of scalar Gaussian working regressions.
Full conditional expressions are collected in \suppref{\Cref{sec:multi-algorithm}}{Supplementary Section S2}.

Let
\(
\bm E_i
=
\bm Y_i-\bm B(\bx_i)^\top\bz_i
\in\mathbb R^q
\)
denote the current residual vector for observation \(i\). We write
\(\omega_{rs}\) for the \((r,s)\)-entry of the precision matrix \(\Omega\). 

\paragraph{Outcome-wise pseudo-responses.}
Fixing an outcome \(r\in\{1,\ldots,q\}\), the Gaussian likelihood contribution for
observation \(i\) is proportional to
\(
|\Omega|^{1/2}
\exp\left\{
- \bm E_i^\top\Omega\bm E_i/2
\right\}.
\)
To isolate the terms involving \(E_{ir}\), expand the quadratic form as
\begin{align*}
\bm E_i^\top\Omega\bm E_i
&=
\sum_{a=1}^q\sum_{b=1}^q E_{ia}\omega_{ab}E_{ib}  \\
&=
\omega_{rr}E_{ir}^2
+
2E_{ir}\sum_{k\neq r}\omega_{rk}E_{ik}
+
\sum_{a\neq r}\sum_{b\neq r}E_{ia}\omega_{ab}E_{ib} \\
& = \omega_{rr}
\left(
E_{ir}
+
\omega_{rr}^{-1}\sum_{k\neq r}\omega_{rk}E_{ik}
\right)^2 + C_{i,-r},
\end{align*}
where \(C_{i,-r}\) does not depend on \(E_{ir}\). Therefore,
\[
E_{ir}\mid \bm E_{i,-r},\Omega
\sim
\mathcal N\left(
-\omega_{rr}^{-1}\sum_{k\neq r}\omega_{rk}E_{ik},
\ \omega_{rr}^{-1}
\right).
\]
Substituting \(E_{ir}=Y_{ir}-\eta_{ir}\) yields the scalar pseudo-response
\[
\widetilde Y_{ir}
:=
Y_{ir}
+
\omega_{rr}^{-1}\sum_{k\neq r}\omega_{rk}E_{ik},
\]
which satisfies
\(
\widetilde Y_{ir}\mid \bm E_{i,-r},\Omega
\sim
\mathcal N\left(\eta_{ir},\omega_{rr}^{-1}\right).
\)
Thus, conditional on the current values of the other outcomes and on \(\Omega\),
updating the mean function for outcome \(r\) reduces exactly to a univariate
Gaussian BART regression with working variance \(\omega_{rr}^{-1}\).


\paragraph{Updating the coefficient-function ensembles.}
Conditional on \(\widetilde Y_{ir}\) and \(\omega_{rr}\), we update the ensembles
\(\{B_{jr}(\cdot):j=1,\ldots,p\}\) for outcome \(r\) by Bayesian backfitting. For
the \(t\)-th tree in the \((j,r)\)-th ensemble, define the leave-one-tree-out
partial residual
\[
\widetilde r_{ir}^{(j,t)}
=
\widetilde Y_{ir}
-
\sum_{j'\neq j} z_{ij'}B_{j'r}(\bx_i)
-
\sum_{t'\neq t} z_{ij}\,
g(\bx_i;\mathcal T_{jrt'},\mathcal M_{jrt'}).
\]
Then the working leaf model is
\[
\widetilde r_{ir}^{(j,t)}
=
z_{ij}\mu_{jrt,\ell(\bx_i;\mathcal T_{jrt})}
+
\varepsilon_{ir},
\quad
\varepsilon_{ir}\sim\mathcal N(0,\omega_{rr}^{-1}).
\]
This is exactly the standard scalar Gaussian tree-update problem. Accordingly, each tree is updated by the usual Metropolis--Hastings \textsc{grow/prune} proposals with the leaf parameters integrated out (\suppref{\Cref{eq:multi-tree-mh}}{Eq. S2.6}), followed by Gibbs sampling of the leaf means conditional on the accepted structure (\suppref{\Cref{eq:multi-leaf-full}}{Eq. S2.7}). The split-probability vectors for the $(j,r)$-th ensembles are updated by drawing from their conjugate Dirichlet full conditionals (\suppref{\Cref{eq:multi-pi-full}}{Eq. S2.8}). Finally, the global--local horseshoe shrinkage scales are updated via the conditionally conjugate inverse-gamma augmentation scheme of \citet{Makalic_2016}. Qualitatively, the local, ensemble-specific scales $\lambda_{jr}$ are updated by pooling the squared leaf parameters across all $M$ trees within the $(j,r)$-th ensemble (\suppref{\Cref{eq:multi-lambda-full}}{Eq. S2.10}), allowing the model to shrink irrelevant predictors toward zero on an outcome-specific basis. Concurrently, the global scale $\tau_B$ is updated by aggregating the standardized signals across all $p \times q$ coefficient functions (\suppref{\Cref{eq:multi-tauB-full}}{Eq. S2.12}), actively adapting the baseline regularization to the overall sparsity level of $\bm{B}$.

\paragraph{Computational remark.}
It is useful to distinguish our update from existing related multivariate BART strategies. Shared-tree formulations model responses as joint $q$-dimensional vectors at each leaf, necessitating an $\mathcal{O}(q^3)$ dense matrix inversion to evaluate the marginal likelihood of every candidate split. Specifically, this involves computing the posterior leaf covariance $V_{I} = (\Sigma_{\mu}^{-1} + |I| \Omega)^{-1},$ where $|I|$ is the number of observations falling into leaf $I$, and $\Sigma_{\mu}$ is the $q \times q$ prior covariance matrix of the multivariate leaf parameters. Because this inversion must be computed for every candidate cutpoint, scanning a node of size $n$ incurs a severe $\mathcal{O}(nq^3)$ computational bottleneck. Alternatively, covariance-parameterized outcome-wise models (e.g., \texttt{suBART}; \citealp{esser2025seeminglyunrelatedbayesianadditive}) avoid vector-valued leaves but still necessitate computing dense Schur complements of $\Sigma=\Omega^{-1}$ to form conditional regressions. In contrast, we parameterize the likelihood directly via the precision matrix $\Omega,$ which naturally decouples the system into scalar pseudo-responses $\tilde{Y}_{ir}$ with scalar conditional variances $\omega_{rr}^{-1}$. This eliminates matrix operations from the tree-building phase entirely; given recursively updated sufficient statistics, the marginal likelihood evaluation cost per candidate split is reduced to $\mathcal{O}(1)$, bringing the total cost of scanning a node back down to the standard $\mathcal{O}(n)$ time of univariate BART.
\paragraph{Updating the residual precision matrix.}
After cycling through all coefficient-function ensembles, we recompute the residuals
\(\bm E_i = \bm Y_i - \bm{B}(\bx_i)^\top \bz_i,\)  and define \(S_E := \sum_{i=1}^n \bm E_i \bm E_i^\top.\)

Conditional on the current mean functions, the precision matrix is updated under a Graphical Horseshoe prior using the blocked Gibbs sampler of \citet{LiJCGS}. Writing the posterior kernel as
\[
p(\Omega\mid \bm{B},\ldots)
\propto
\mathbbm 1\{\Omega\succ0\}\,
|\Omega|^{n/2}
\exp \left\{-\frac12\operatorname{tr}(S_E\Omega)\right\}
\pi(\Omega\mid \Lambda,\tau_\Omega),
\]
we update \(\Omega\) column-by-column. For column \(c\), partition \(\Omega\) and \(S_E\) conformably, let
\[
\beta := \omega_{-c,c},
\quad
\gamma := \omega_{cc} - \beta^\top \Omega_{-c,-c}^{-1}\beta,
\]
and write \(\Lambda_c=\mathrm{diag}(\lambda_{rc}^2:r\neq c)\). Then the conditionally conjugate updates are
\[
\beta\mid \ldots
\sim
\mathcal N \big(-C\,s_{-c,c},\,C\big),
\quad
C=
\Big(s_{cc}\,\Omega_{-c,-c}^{-1}+(\tau_\Omega^2\Lambda_c)^{-1}\Big)^{-1},
\]
and
\[
\gamma\mid \ldots
\sim
\mathrm{Ga}\left(\frac n2 + a_0,\ \frac{s_{cc}+2b_0}{2}\right).
\]
The updated column is then recovered by \(\omega_{-c,c}=\beta\) and \(\omega_{cc}=\gamma+\beta^\top \Omega_{-c,-c}^{-1}\beta.\)
The local and global Graphical Horseshoe scales are subsequently updated by the standard \citet{Makalic_2016} inverse-gamma augmentation scheme (additional details in \suppref{\Cref{sec:GHSdetails}}{Supplementary Section S2.5}).

\paragraph{Recommended default hyperparameters.}
Our model depends on several hyperparameters controlling tree depth,
ensemble size, modifier-splitting sparsity, coefficient-surface shrinkage, and
residual-graph shrinkage. In our experiments, we have found the following choices
to work well; additional sensitivity analyses are reported in the supplement (\suppref{\Cref{sec:sensitivity}}{Section S3.4}). We use the same number of trees for every coefficient-function ensemble,
\(M_{jr}\equiv M=20\), for all \(j=1,\ldots,p\) and \(r=1,\ldots,q\). For the global horseshoe shrinkage scales, we use \(\sigma_B= 1\) for coefficient-surface shrinkage and \(\sigma_\Omega=1\)
for residual-graph shrinkage. Finally, for the diagonal entries of the Graphical
Horseshoe prior, we use weakly informative gamma hyperparameters
\(
(a_0,b_0)=(0.01,0.01),
\)
allowing the data to determine the marginal residual precisions.

\section{Theoretical results}
\label{sec:theory}
We establish a contraction rate for our proposed model by showing that the posterior concentrates around the true data-generating parameter $\Theta_0$ at a rate $\varepsilon_n ^{\dagger} \to 0$. Adopting the framework of \citep[GGV;][]{Ghosal2000}, we (i) lower bound prior mass in suitable KL neighborhoods of $\Theta_0$, (ii) construct exponentially consistent tests, and (iii) control the metric entropy of the pseudo-sparse sieve introduced in \suppref{\Cref{sec:revised-sieve}}{Supplementary Section S1}. Broadly, $\varepsilon_n ^{\dagger}$ reflects the balance between approximation accuracy and effective model complexity. For the BART mean component $\eta_i$, we achieve this by pairing tree-ensemble approximation theory for H\"older-smooth coefficient surfaces with entropy bounds that quantify the combinatorial richness of tree partitions and active predictor sets, building on the foundational analyses of \citet{Rockova2019} and \citet{Jeong2023}.

\subsection{Notations and Definitions}
For our working model, the true data-generating parameter is denoted by $\Theta_0=(\eta_0,\Omega_0)$. Throughout the theoretical analysis, the design points \(\{(\bm x_i,\bm z_i)\}_{i=1}^n\) are treated as
fixed. For any \(\Theta=(\eta,\Omega)\), let \(f_{\Theta,i}\) denote the
conditional density of \(\bm Y_i\) given \((\bm x_i,\bm z_i)\) under
\(\Theta\), which is a multivariate Normal density.
We measure the distance between the model and the truth using the empirical Hellinger distance $H(\Theta, \Theta_0)$, defined via $H^2(\Theta,\Theta_0) = n^{-1} \sum_{i=1}^n h^2(f_{\Theta, i}, f_{\Theta_0, i})$, where $h$ is the standard Hellinger distance between two $q$-variate Normal distributions.

For the true mean functions $B_{0,jr}$, we let $J_{0,jr} \in \{1,\dots,d\}$ denote the subset of active coordinates with intrinsic dimension $d_{0,jr} := |J_{0,jr}|$. We say $B_{0,jr} \in \mathcal{H}^{\alpha_{jr}}([0,1]^{d_{0,jr}}; K)$ if it is H\"older-$\alpha_{jr}$ smooth with radius $K$ on its active coordinates. The effective row support is $S_{B,r} := \{j = 1,\dots,p : B_{0,jr} \not\equiv 0\}$ with cardinality $s_r := |S_{B,r}|$. The total number of active mean functions is $S_B := \sum_{r=1}^q s_r$. For the true precision matrix $\Omega_0$, the degree of the $k$-th node is $d_k := \#\{k' \neq k : (\Omega_0)_{kk'} \neq 0\}$, and the total number of active off-diagonal edges is $s_q := \sum_{k=1}^q d_k / 2$. The maximum intrinsic dimension across all functions is $d_{\mathrm{eff}} := \max_{j,r} d_{0,jr}$. 

\subsection*{Model Assumptions}

\begin{assumenum}
\item{\textbf{Smoothness and per-component coordinate sparsity.}
For each response $r \in \{1,\ldots,q\}$ and predictor $j \in \{1,\ldots,p\}$, 
$B_{0,jr}(\bx)=B_{0,jr}(\bx_{J_{0,jr}})$ and $B_{0,jr}\in \mathcal H^{\alpha_{jr}}\big([0,1]^{d_{0,jr}};K\big)$; i.e., $B_{0,jr}$ depends only on the coordinates of $\bx$ indexed by $J_{0,jr}$ and is constant in all other coordinates.}\label{assum:multi-smoothness}

\item{{\textbf{Design constraints.}
\begin{enumerate}
    \item[(A2a)] \emph{Bounded design:} There exists $D<\infty$ such that $\bx_i\in[0,1]^d$ and $\bz_i\in[-D,D]^{p}$ for all $i$.
    \item[(A2b)] \emph{Ensemble size:} The number of trees $M$ used to construct each scalar function $B_{jr}$ is a fixed constant, $M = \mathcal{O}(1)$.
    \item[(A2c)] \emph{k-d regularity:} The design points $\{\bx_i\}_{i=1}^n\subset[0,1]^d$ are \emph{k-d--regular}. That is, there exist constants $0<c_\mathrm{kd}\le C_\mathrm{kd}<\infty$ such that for any axis-aligned rectangle $A\subset[0,1]^d$,
    \[
      c_\mathrm{kd}\,\mathrm{vol}(A) \le \frac{1}{n}\#\{i:\bx_i\in A\} \le C_\mathrm{kd}\,\mathrm{vol}(A).
    \]
\end{enumerate}}}\label{assum:multi-design}

\item{\textbf{Asymptotic Growth and Sparsity Regime.}
Let $\alpha_\star := \min_{j,r} \alpha_{jr}$. The true model complexities grow sufficiently slowly relative to $n$:
\begin{equation*}
  S_B \log p  = o(n), \quad
  s_q n^{A_{\mathrm{diag}}}\log n = \mathcal{O}\big(n^{1/2-\kappa_{\mathrm{spec}}}\big), \quad
  S_B (\log n) = o\Big(n^{\frac{2\alpha_\star}{2\alpha_\star + d_{\mathrm{eff}}}}\Big),
\end{equation*}
for some fixed constant $\kappa_{\mathrm{spec}}>0$. Here $A_{\mathrm{diag}}>0$ is the lower diagonal truncation exponent in \Cref{assum:GHS-scale}. 

Furthermore, we assume $p$ and $q$ to strictly dominate the effective complexities. Specifically, for a fixed constant $c_0 \in (0,1)$,
\begin{equation*}
 \max \left\{ q \log n, \log(pq) \log n \right\} \lesssim n(\varepsilon_n^\dagger)^2  \lesssim \min \left\{q^{2(1-c_0)} \log q, (pq)^{1-c_0} \log(pq) \right\}.
\end{equation*}
Additionally, we require the BART local-scale truncation to be large enough relative to the ambient sparsity, namely
\[
p/S_B \lesssim n^{\kappa_B}
\quad\text{for some}\quad
\kappa_B\in(0,A_B),
\]
where $A_B$ is the polynomial truncation exponent in \Cref{assum:multi-HSscale}.
}\label{assum:multi-growth}

\item{\textbf{True Precision Bounds.}
The true precision matrix $\Omega_0$ has strictly bounded eigenvalues:
\[
  0<\underline\lambda\le \lambda_{\min}(\Omega_0)\le \lambda_{\max}(\Omega_0)\le \bar\lambda<\infty.
\]}\label{assum:multi-eig}

\item{\textbf{Functional restricted eigenvalue (RE).}\label{assump:RE-func}
There exists a constant $\kappa_z>0$ such that for every coefficient-difference function
$\Delta \bm{B}(\cdot) \coloneqq \bm{B}(\cdot)-\bm{B}_{0}(\cdot)$ lying in the inner sieve $\mathcal{F}_n$,
\begin{equation*}
\frac1n\sum_{i=1}^n \|\Delta \bm{B}(\bx_i)^\top \bz_i\|_2^2
 \ge \kappa_z\cdot \frac1n\sum_{i=1}^n \|\Delta \bm{B}(\bx_i)\|_F^2.
\end{equation*}
Equivalently, $\|\Delta \bm{B}\|_{F,2,n}\le \kappa_z^{-1/2}\|\Delta \eta\|_{2,n}$,
where $\Delta \eta_i=\Delta \bm{B}(\bx_i)^\top \bz_i$.}\label{assum:multi-RE}

\item{\textbf{True Signal Separation.}\label{assum:beta-min}
For the effective thresholds $t_{\Omega,n}$ and $u_{B,n}$ defined in \suppref{\Cref{sec:revised-sieve}}{Supplementary Section S1}, there exist constants $a_\Omega, a_B > 0$ such that:
\begin{itemize}
    \item[(A6a)] $\min_{(k,k') \in S_{0,\Omega}} |\omega_{kk'}^{0}| \ge 4t_{\Omega,n}+a_{\Omega}\varepsilon_n^{\dagger}$
    \item[(A6b)] $\min_{(j,r) \in S_{0,B}} \|B_{0,jr}\|_{2,n} \ge 4Mu_{B,n}/\sqrt{\kappa_z} + a_{B} \varepsilon_n^{\dagger}.$
\end{itemize}
}\label{assum:multi-betamin}

\end{assumenum}

To illustrate that the conditions in \Cref{assum:multi-growth} remain highly feasible, consider a sparse high-dimensional regime in which the response dimension grows as $q \asymp n^{1/2}$, and the ambient covariate dimension grows polynomially as $p \asymp n^\beta,$ for some fixed $\beta>1/2$. Let the true active mean complexity satisfy $S_B \asymp q \asymp n^{1/2}$, and let the precision graph remain sparse with $s_q \asymp n^\xi$ for some $\xi<1/2-A_{\mathrm{diag}}$.
Assuming standard smoothness $\alpha_\star=1$ and intrinsic dimension $d_{\mathrm{eff}}=1$, the smoothness bottleneck is $n^{{2\alpha_\star}/{(2\alpha_\star+d_{\mathrm{eff}})}} = n^{2/3},$ which strictly dominates \(S_B \log n \asymp n^{1/2}\log n.\)
Moreover, \(S_B \log p \asymp n^{1/2}\log n = o(n),\) and the graph-sparsity condition holds because
\[
s_q n^{A_{\mathrm{diag}}}\log n
\asymp
n^{\xi+A_{\mathrm{diag}}}\log n
=
\mathcal{O}\big(n^{1/2-\kappa_{\mathrm{spec}}}\big)
\]
for some $\kappa_{\mathrm{spec}}>0$ whenever $\kappa_{\mathrm{spec}}<1/2-A_{\mathrm{diag}}-\xi$. In particular, since $q\asymp n^{1/2}$ and $\xi<1/2-A_{\mathrm{diag}}<1/2$, this condition also implies the usual sparse-graph complexity requirement \(s_q\log q=o(n)\). In addition, since \({p}/{S_B} \asymp n^{\beta-1/2}\), it suffices to choose $\beta$ so that $\beta-1/2< A_B$. 

\Cref{assum:multi-smoothness} restricts the true varying-coefficient functions to H\"older spaces of anisotropic intrinsic dimension. The intrinsic sparsity condition ($d_{0,jr} \ll d$) is fundamentally necessary to bypass the curse of dimensionality, which is a standard requirement in the theoretical analysis of tree-based models \citep{Jeong2023}. \Cref{assum:multi-design} provides the structural regularity needed to transition between continuous function spaces and empirical observations. Crucially, the k-d regularity condition guarantees that the empirical measure of the design points behaves analogously to the Lebesgue measure \citep{RockovaSaha2019}. The condition $S_B \log p = o(n)$ is the well-known necessary threshold for sparse variable selection \citep{castillovdv2012}, while the stronger graph condition \(s_q n^{A_{\mathrm{diag}}}\log n=\mathcal O(n^{1/2-\kappa_{\mathrm{spec}}})\) implies the usual sparse-graph requirement \(s_q\log q=o(n)\) under the displayed dimension-growth bounds \citep{banerjeeghosal2015}. It also serves as a compatibility condition between graph sparsity and the lower diagonal truncation in the graphical horseshoe prior, ensuring that the SPD-restricted prior places at most polynomially small mass near the boundary of the positive-definite cone $\Omega \succ 0$. Furthermore, the upper bounds on the global complexity rate ensure that the ambient dimensions $p$ and $q$ grow fast enough relative to the true signals so that our continuous horseshoe priors can effectively shrink the vast noise space without violating the combinatorial subset capacity \citep{vanderpasHS2017}. \Cref{assum:multi-eig} assumes $\Omega_0$ is well-conditioned, which is ubiquitous in multivariate regression and Gaussian graphical models \citep[e.g.,][]{cai2016precision,ravikumar2011glasso}.

Crucially, Assumptions \ref{assum:multi-RE} and \ref{assum:multi-betamin} are required to untangle the coefficient signals and bypass the limitations of heavy-tailed continuous shrinkage priors. Without further conditions on the design, the map $\bm{B}(\cdot)\mapsto \eta(\cdot)=\bm{B}(\cdot)^\top \bz$ need not be injective along sparse functional directions. Assumption \ref{assum:multi-RE} imposes a functional restricted eigenvalue inequality, which rules out such degeneracies and ensures the mean functions are identifiable. Furthermore, \Cref{assum:multi-betamin} imposes a $\beta$-min separation condition requiring the true nonzero precision entries to exceed the edge threshold by at least \(a_\Omega\varepsilon_n^\dagger\), and true nonzero coefficient surfaces to exceed the mean threshold by at least \(a_B\varepsilon_n^\dagger\). Hence, models that drop such signals are separated from \(\Theta_0\) in Hellinger distance at the contraction scale and can be controlled by the shell-wise tests (see \Cref{lem:shell-separation}).


\subsection*{Prior Assumptions}

We assume the standard BART priors and the graphical horseshoe structure already defined in \Cref{sec:model}. We record only the relevant prior configuration assumptions used in the contraction analysis.

\begin{pssumenum}

\item{\textbf{Horseshoe shrinkage on leaves.}
Assume the following truncations on the BART scale parameters for constants $A_B>0$ and $C_{\tau} > 0$:
\[
\lambda_{jr} \sim \mathcal{C}^{+}(0,1)\mathbbm{1}_{[0,n^{A_B}]}, \quad \tau_{B} \sim \mathcal{C}^{+}(0,\sigma_{B,n}) \mathbbm{1}_{[0,C_{\tau}\sigma_{B,n}]},
\]
where $\sigma_{B,n} \asymp \sum_{r} s_r / p = S_{B}/p$.}\label{assum:multi-HSscale}

\item{\textbf{Graphical horseshoe on $\Omega$.} For the off-diagonal entries $\omega_{kk'}$ where $k \neq k'$ and constants $A_\Omega,C_{\tau} >0$,
\[
\lambda_{\Omega,kk'} \sim \mathcal C^+(0,1)\,\mathbbm 1_{[0,n^{A_\Omega}]}, \quad
\tau_\Omega \sim \mathcal C^+(0,\sigma_{\Omega,n})\,\mathbbm 1_{[0,C_{\tau}\sigma_{\Omega,n}]},
\]
where $\sigma_{\Omega,n}\asymp {s_q}/{(q^2\sqrt{n})}.$

For the diagonals, independently for $k$, 
\[
\omega_{kk}\sim \mathrm{Ga}(a_0,b_0)\mathbbm{1}_{[n^{-A_{\mathrm{diag}}},\,n^{A_{\mathrm{diag}}}]},
\]
where $A_{\mathrm{diag}}>0$ is chosen to satisfy the compatibility condition in \Cref{assum:multi-growth}.
}\label{assum:GHS-scale}
\end{pssumenum}

The explicit polynomial truncations in Assumptions \ref{assum:multi-HSscale} and \ref{assum:GHS-scale} serve distinct technical functions. The exponents $A_B$ and $A_\Omega$ bound the local scales of the BART and graphical horseshoe, respectively, ensuring that the metric entropy of the induced sieve does not explode while remaining large enough to avoid truncating the local scales needed to protect true signals from over-shrinkage. The diagonal exponent $A_{\mathrm{diag}}$ in \Cref{assum:GHS-scale}, together with the graph-sparsity condition $s_q n^{A_{\mathrm{diag}}}\log n=\mathcal O(n^{1/2-\kappa_{\mathrm{spec}}})$ in \Cref{assum:multi-growth}, is used to prove that the prior mass assigned to nearly singular precision matrices is polynomially small.

Finally, we define the posterior contraction rate $\varepsilon_n^\dagger$ as the maximum of the smoothness, mean sparsity, and precision sparsity penalties:
\[
\varepsilon_n^{\dagger}
:= \max\left\{
\sqrt{(\log n)\sum_{r=1}^q \sum_{j\in S_{B_0,r}} n^{-\frac{2\alpha_{jr}}{2\alpha_{jr}+d_{0,jr}}}},\  
\sqrt{\frac{1}{n}\sum_{r=1}^q s_r \log\Big(\frac{e p}{s_r}\Big)},\  
\sqrt{\frac{1}{n}\sum_{k=1}^q d_k \log\Big(\frac{e q}{d_k}\Big)}
\right\}.
\]

\paragraph{Sieve construction.}
A key step in any posterior contraction proof based on the GGV framework is the construction of a sieve, which is a subset of the parameter space that is rich enough to contain suitable Kullback--Leibler neighborhoods of the truth, yet sufficiently regular to admit controlled metric entropy and manageable prior mass outside it. In our setting, the horseshoe priors retain heavy polynomial tails, so a single sieve $(\mathcal{F}_n)$ defined by a hard global upper bound on the magnitudes of $|b_{jrt\ell}|$ and $|\omega_{kk'}|$ is not adequate for proving contraction of the full posterior.

We therefore work with two nested sieves. The \emph{outer sieve} $\mathcal G_n$ controls the structural and polynomial-envelope complexity of the parameter space by restricting the effective support sizes for the precision and mean blocks, a total active-leaf budget, the number of distinct split variables used by each coefficient surface, and the polynomial envelopes induced by the truncations in Assumptions \ref{assum:multi-HSscale}--\ref{assum:GHS-scale}. Inside $\mathcal G_n$, we define an \emph{inner sieve} $\mathcal F_n \subset \mathcal G_n$ by additionally requiring the inactive noise energies of the precision and mean components to remain below baseline tolerances, and by imposing the working spectral envelope $\|\Omega^{-1}\|_{\mathrm{op}}\le \bar R_{\Omega,n}$. Equivalently, if $S_{k,m,a}$ denotes the shell indexed by precision-noise level $k$, mean-noise level $m$, and active-signal amplitude $a$, where $k,m,a\in\{0,1,2,\dots\}$, then
\[
\mathcal F_n=\bigcup_{a=0}^{A_{0,n}} S_{0,0,a},
\]
for a sufficiently large polynomial envelope $A_{0,n}$. Here $k=m=0$ selects the baseline-noise shells, while $A_{0,n}$ is chosen to dominate the polynomial amplitude bounds induced by $\max\{A_B,A_\Omega,A_{\mathrm{diag}}\}$. The full formal definitions are given in \suppref{\Cref{sec:revised-sieve}}{Supplementary Section S1}.

Our contraction analysis proceeds in two stages. We first establish posterior contraction under the posterior restricted to the effective inner sieve $\mathcal F_n$ in \Cref{prop:postconc_sieve}. We then upgrade this to the full posterior in \Cref{thm:postconc_full} by partitioning the outer region $\mathcal G_n\setminus \mathcal F_n$ into shells and showing that, although the horseshoe prior allocates only polynomially decaying mass across these shells, the Hellinger separation and shell-wise tests grow quickly enough to control their total posterior contribution. The geometry of this argument is illustrated schematically in \Cref{fig:ggv-vs-sliced-sieve-compact}.

We first state the contraction result under the posterior restricted to the inner sieve $\mathcal F_n$. For any measurable set \(A\), let
\[
\Pi^{\mathcal F_n}(A) := \frac{\Pi(A\cap \mathcal F_n)}{\Pi(\mathcal F_n)}
\]
denote the prior restricted to the sieve and renormalized, and let $\Pi^{\mathcal F_n}(A\mid \bY)$ be the corresponding posterior distribution.

\begin{proposition}[Sieve-Truncated Posterior Contraction]
\label{prop:postconc_sieve}
Under assumptions \ref{assum:multi-smoothness}--\ref{assum:multi-RE} and prior configurations \ref{assum:multi-HSscale}--\ref{assum:GHS-scale}, there exists a sufficiently large constant \(M<\infty\) such that
\[
  \Pi^{\mathcal F_n}\Big(H(\Theta,\Theta_0)>M\,\varepsilon_n^\dagger\ \Bigm|\ \bY \Big)
  \ \xrightarrow[n\to\infty]{P_{\Theta_0}}\ 0.
\]
\end{proposition}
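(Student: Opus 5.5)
We will verify the three conditions of the general contraction theorem of \citet{Ghosal2000} (in its independent, non-identically distributed form with the empirical Hellinger distance $H$), applied to the renormalised prior $\Pi^{\mathcal F_n}$. Write $\varepsilon_n:=\varepsilon_n^\dagger$. Restricting to $\mathcal F_n$ simplifies the argument: the sieve plays the role of the whole parameter space. The ``remaining mass'' condition is then automatic, so only two conditions need work. The first is prior mass, $\Pi^{\mathcal F_n}(B_n(\Theta_0,\varepsilon_n))\ge e^{-Cn\varepsilon_n^2}$ on a Kullback--Leibler neighbourhood. The second is an entropy bound, $\log N(\varepsilon_n/36,\mathcal F_n,H)\lesssim n\varepsilon_n^2$, which yields exponentially consistent tests.

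\textbf{Step 1 (KL neighbourhoods).} For Gaussians with bounded precision spectrum, the KL divergence and its second moment per observation are bounded by a constant multiple of
\[
\|\eta_i-\eta_{0,i}\|_2^2+\|\Omega-\Omega_0\|_F^2 .
\]
This uses \Cref{assum:multi-eig} and a perturbation bound keeping $\lambda_{\min}(\Omega)\ge\underline\lambda/2$ whenever $\|\Omega-\Omega_0\|_F$ is small. It therefore suffices to lower bound the prior probability of the product event
\[
\{\|\eta-\eta_0\|_{2,n}\le c\varepsilon_n\}\times\{\|\Omega-\Omega_0\|_F\le c\varepsilon_n\}.
\]
Because $\Pi^{\mathcal F_n}\ge\Pi(\cdot\cap\mathcal F_n)$, it is enough to show that this event lies inside $\mathcal F_n$ and that its unrestricted prior mass is large.

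\textbf{Step 2 (prior mass for the mean block).} We handle active and inactive surfaces separately.
\begin{itemize}
\item For each active $(j,r)\in S_{B,r}$, we use the tree approximation results of \citet{RockovaSaha2019} and \citet{Jeong2023} under k-d regularity (\Cref{assum:multi-design}). These give a Galton--Watson tree ensemble that splits only on $J_{0,jr}$, has $\asymp n^{d_{0,jr}/(2\alpha_{jr}+d_{0,jr})}$ leaves, and approximates $B_{0,jr}$ at rate $n^{-\alpha_{jr}/(2\alpha_{jr}+d_{0,jr})}$. The log prior mass of this tree structure, of the Dirichlet split probabilities concentrating on $J_{0,jr}$, and of the leaf values lying in small boxes is $\gtrsim -(\text{leaves})\log n$. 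For the leaf values we need $\lambda_{jr}$ of order $n^{A_B}$-compatible size; Cauchy tails give polynomial mass, costing only $\log n$.
\item For inactive $(j,r)$, the global scale $\tau_B\asymp S_B/p$ together with standard horseshoe computations \citep{vanderpasHS2017} give each inactive surface $\|B_{jr}\|\lesssim\varepsilon_n/\sqrt{pq}$ with probability $\ge 1-(S_B/p)^{c}$. Jointly this contributes $-\sum_r s_r\log(ep/s_r)$.
\end{itemize}
Summing over both groups and using bounded $\bz$ to pass from $\bm B$ to $\eta$ yields the first two terms of $n\varepsilon_n^2$.

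\textbf{Step 3 (prior mass for $\Omega$).} We follow the same pattern for the graphical horseshoe: $-\sum_k d_k\log(eq/d_k)$ for the off-diagonal entries and $-q\log n$ for the diagonal, and $q\log n\lesssim n\varepsilon_n^2$ by \Cref{assum:multi-growth}. We must also check that positive definiteness and the truncations lose at most polynomial mass; this is exactly where $s_q n^{A_{\mathrm{diag}}}\log n=\mathcal O(n^{1/2-\kappa_{\mathrm{spec}}})$ and $\sigma_{\Omega,n}\asymp s_q/(q^2\sqrt n)$ enter.

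\textbf{Step 4 (entropy).} On $\mathcal F_n$ the model is a finite union over active sets, tree topologies within the leaf budget, and split-variable sets. Within each piece, the parameters are leaf values and precision entries bounded by polynomial envelopes, and the inactive noise energies are below the baseline tolerances. We bound $H$ by Euclidean distances using the spectral envelope $\|\Omega^{-1}\|_{\mathrm{op}}\le\bar R_{\Omega,n}$, so a polynomial-in-$n$ Lipschitz constant appears only inside logarithms. The inactive coordinates are absorbed without covering because their energy is below $\varepsilon_n$. The log-count of pieces is
\[
\lesssim S_B\log p+s_q\log q+(\text{leaf budget})\log n+\sum_{(j,r)}d_{0,jr}\log d ,
\]
and the log-covering number of each piece is $(\text{dimension})\times\log n$. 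Both are $\lesssim n\varepsilon_n^2$. Standard Hellinger tests \citep{Ghosal2000} then give tests with type I error $\to0$ and type II error $\le e^{-Kn\varepsilon_n^2}$ on $\{H>M\varepsilon_n\}\cap\mathcal F_n$.

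Combining Steps 1--4 with the usual evidence lower bound proves the claim. \Cref{assum:multi-RE} is not essential here beyond guaranteeing that the sieve is well defined.

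\textbf{Main obstacle.} I expect the hardest part to be the prior mass computation for $\Omega$ under the SPD and truncation constraints, together with the inactive-mass bound. The difficulty is showing that the truncated horseshoe places enough mass near $\Omega_0$ while the whole neighbourhood remains inside $\mathcal F_n$, in particular respecting the noise-energy tolerances. My first attempt would be a Gershgorin-type argument: make the off-diagonal noise small in $\ell_1$ per row relative to $\underline\lambda$, so that positive definiteness and the spectral envelope hold automatically.
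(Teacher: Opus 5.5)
Your architecture is exactly the paper's: GGV applied to the renormalised prior $\Pi^{\mathcal F_n}$, a trivial remaining-mass condition, KL prior mass, then sieve entropy and tests. However, the prior-mass step for the inactive surfaces in Step 2 fails as written.

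Take $\tau_B$ at its typical size $\asymp\sigma_{B,n}\asymp S_B/p$. An inactive surface evaluated at a design point is then, given $\tau_B$, a horseshoe variable with scale $\tau_B$, because the $M$ Gaussian leaves share $\lambda_{jr}$. It therefore exceeds any tolerance $u\le 1$ with probability at least $c\min\{1,\tau_B/u\}\ge c\,S_B/p$. Horseshoe tails decay only linearly in $\tau/u$, and the truncation $\lambda_{jr}\le n^{A_B}$ does not help because $p/S_B\lesssim n^{\kappa_B}$ with $\kappa_B<A_B$. The surfaces are conditionally independent, so keeping all $\approx pq$ inactive surfaces small has probability at most about $(1-cS_B/p)^{pq}\le e^{-c\,qS_B}$. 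Your own per-surface bound $1-(S_B/p)^c$ gives the same, since horseshoe tails force $c\le 1$. This is not of order $e^{-C\sum_r s_r\log(ep/s_r)}$. In the high-dimensional regimes \Cref{assum:multi-growth} is built for, it is far below $e^{-Cn(\varepsilon_n^\dagger)^2}$. In the paper's illustration ($q\asymp S_B\asymp n^{1/2}$, $\alpha_\star=d_{\mathrm{eff}}=1$) one has $qS_B\asymp n$ versus $n(\varepsilon_n^\dagger)^2\asymp n^{5/6}\log n$. The tolerance is also smaller than you say. With only $|z_{ij}|\le D$, the inactive contributions to each coordinate of $\eta_i$ add up linearly, so surface-wise control needs $u\lesssim\varepsilon_n^\dagger/(p\sqrt q)$, not $\varepsilon_n^\dagger/\sqrt{pq}$.

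The missing idea, used for both blocks in the paper's small-ball lemmas (\Cref{lem:BART-smallball-continuous-global} and \Cref{lem:ghs-smallball-continuous}), is to localise the global scale far below its typical size. Restrict to $\tau_B\in[t_0/2,t_0]$, with $t_0$ an inverse polynomial chosen so that each of the (in expectation) $pqML_0$ inactive leaves exceeds its tolerance with probability at most $1/(16\,pqML_0)$. Markov's inequality on the number of exceedances then keeps all of them small with probability at least $15/16$. Because the half-Cauchy density is flat near zero, the localisation costs only $\log(1+\sigma_{B,n}/t_0)\lesssim\log n$. The price is paid on the active side: the $S_B$ active local scales are inflated to $\lambda_{jr}\in[a/\tau,b/\tau]$. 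This costs $S_B\log(b/t_0)\lesssim S_B\log n$, which is absorbed since the smoothness term of $n(\varepsilon_n^\dagger)^2$ already exceeds $S_B\log n$. It requires $2b\,n^{-A_B}\le t_0$, and this feasibility condition is where the polynomial truncation actually enters. $\Omega$ is handled the same way, with $s_q$ inflated edges.

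Two smaller points. First, your ``main obstacle'' for $\Omega$ is easy in the small-ball direction. For $\delta\le\underline\lambda/2$, Weyl's inequality puts the ball $\{\|\Omega-\Omega_0\|_F\le\delta\}$ inside the SPD cone, so the SPD-restricted prior gives it at least its product-prior mass, and it also gives $\|\Omega^{-1}\|_{\mathrm{op}}\le 2/\underline\lambda$. Neither Gershgorin nor $s_qn^{A_{\mathrm{diag}}}\log n=\mathcal O(n^{1/2-\kappa_{\mathrm{spec}}})$ is needed for \Cref{prop:postconc_sieve}; that condition only controls prior mass outside the sieve. Second, what must lie in $\mathcal F_n$ is the explicit product event you construct, not the KL ball. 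The KL ball contains configurations with large cancelling leaves that violate the sieve's support and noise-energy budgets.
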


Although \Cref{prop:postconc_sieve} successfully bounds the error, it requires artificially truncating the posterior to the inner sieve $\mathcal F_n$. This is necessary because under the horseshoe prior configuration in Assumptions \ref{assum:multi-HSscale} and \ref{assum:GHS-scale}, any sieve tight enough to deliver the metric-entropy control required for testing (\suppref{\Cref{lem:entropy-H-correct-global}}{Lemma S1.11}) leaves a complement whose prior mass decays only polynomially, rather than exponentially, thereby violating the standard unconditional GGV requirements \citep{Ghosal2000}. 

To upgrade this result to the full posterior without ad hoc regularization, we employ an amplitude shelling argument, which appears in several places in the literature on contraction theory \citep[see, e.g., Section 6.1 of][]{agapiou2026}. By slicing the structural outer sieve $\mathcal G_n$ into shells indexed by their inactive precision noise, inactive mean noise, and active signal amplitude, we use Assumptions \ref{assum:multi-RE} and \ref{assum:multi-betamin} to show that the Hellinger separation increases with the shell index (\suppref{\Cref{lem:shell-separation}}{Lemma S1.12}). Combined with shell-wise entropy control via the zero net construction (\suppref{\Cref{lem:shell-entropy}}{Lemma S1.13}), this yields exponentially consistent tests on each shell, which are strong enough to dominate the polynomial prior tails and thereby establish contraction of the full posterior.

\begin{figure}[t]
\centering
\begin{tikzpicture}[
    >=Latex,
    font=\small,
    line width=0.8pt,
    title/.style={font=\bfseries\normalsize},
    lab/.style={font=\small\bfseries}
]

\begin{scope}[xshift=0cm]
    \node[title] at (0, 2.8) {Classical GGV Sieve};

    \fill[gray!10] (0,0) circle (2.2);
    \draw[gray!50, thick] (0,0) circle (2.2);
    \node[text=gray!70, font=\scriptsize] at (0, -2.5) {Full Parameter Space};

    \fill[teal!20] (0,0) circle (1.2);
    \draw[teal!80!black, thick] (0,0) circle (1.2);
    \node[lab, text=teal!80!black] at (0,0) {$\mathcal{F}_n$};
\end{scope}

\begin{scope}[xshift=7.5cm]
    \node[title] at (0, 2.8) {Amplitude Shelling};

    \fill[gray!10] (0,0) circle (2.2);
    \draw[gray!50, thick] (0,0) circle (2.2);
    \node[text=gray!70, font=\scriptsize] at (0, -2.5) {Full Parameter Space};

    \fill[orange!10] (0,0) circle (2.0);
    \draw[orange!80!black, dashed, very thick] (0,0) circle (2.0);

    \fill[cyan!20] (0,0) circle (1.6);
    \draw[cyan!60!black, thin] (0,0) circle (1.6);

    \fill[cyan!40] (0,0) circle (1.2);
    \draw[cyan!60!black, thin] (0,0) circle (1.2);

    \fill[cyan!60] (0,0) circle (0.8);
    \draw[cyan!60!black, thin] (0,0) circle (0.8);

    \fill[teal!80!black] (0,0) circle (0.4);
    \node[lab, text=white, font=\scriptsize] at (0,0) {$\mathcal{F}_n$};

    \draw[<-, orange!80!black, thick] (120:2.0) -- (120:2.5)
        node[above left, font=\small\bfseries, inner sep=2pt] {$\mathcal{G}_n$};

    \draw[<-, cyan!80!black, thick] (30:1.4) -- (30:2.55)
        node[right, font=\scriptsize, inner sep=2pt] {Shells $S_{k,m,a}$};
\end{scope}

\end{tikzpicture}
\caption{
Schematic comparison of the classical GGV sieve argument and our two-sieve shelling construction.
\textbf{Left:} The classical approach works with a single sieve $\mathcal F_n$ whose complement receives exponentially decaying prior probability; metric entropy and testing are handled directly on that set.
\textbf{Right:} In our heavy-tailed setting, we instead define a structural outer sieve $\mathcal G_n$ and an effective inner sieve $\mathcal F_n \subset \mathcal G_n$. The intermediate region $\mathcal G_n \setminus \mathcal F_n$ is partitioned into shells indexed by inactive precision-noise, inactive mean-noise, and active-signal amplitude. Shell separation and shell-wise entropy bounds then allow the posterior contribution of these outer shells to be controlled despite the polynomial prior tails.
}
\label{fig:ggv-vs-sliced-sieve-compact}
\end{figure}

\begin{theorem}[Full Posterior Contraction]
\label{thm:postconc_full}
Under assumptions \ref{assum:multi-smoothness}--\ref{assum:multi-betamin} and prior configurations \ref{assum:multi-HSscale}--\ref{assum:GHS-scale}, there exists a sufficiently large constant \(M<\infty\) such that the full posterior contracts at the rate $\varepsilon_n^\dagger$:
\[
  \Pi\Big(H(\Theta,\Theta_0)>M\,\varepsilon_n^\dagger\ \Bigm|\ \bY \Big)
  \ \xrightarrow[n\to\infty]{P_{\Theta_0}}\ 0.
\]
\end{theorem}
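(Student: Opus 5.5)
We will prove Theorem~\ref{thm:postconc_full} by splitting the parameter space as $\mathcal F_n \cup (\mathcal G_n\setminus\mathcal F_n)\cup\mathcal G_n^c$ and bounding the posterior mass of $\{H(\Theta,\Theta_0)>M\varepsilon_n^\dagger\}$ intersected with each piece.

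\textbf{Common denominator.} The prior-mass bound used in Proposition~\ref{prop:postconc_sieve} gives $\Pi(B_{KL}(\Theta_0,\varepsilon_n^\dagger))\ge e^{-C_1 n(\varepsilon_n^\dagger)^2}$. Since the KL neighborhood constructed there lies inside $\mathcal F_n$, the bound holds for the unrestricted prior. The standard evidence lower bound then yields, with $P_{\Theta_0}$-probability tending to one,
\[
\int \prod_i \frac{f_{\Theta,i}}{f_{\Theta_0,i}}(\bY_i)\,d\Pi(\Theta)\ \ge\ e^{-(C_1+2)n(\varepsilon_n^\dagger)^2}.
\]

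\textbf{Outside the outer sieve.} The outer sieve $\mathcal G_n$ only constrains support sizes, the leaf budget, the number of split variables, and polynomial envelopes that match the truncations in \ref{assum:multi-HSscale}--\ref{assum:GHS-scale}. We therefore expect $\Pi(\mathcal G_n^c)\le e^{-C_2 n(\varepsilon_n^\dagger)^2}$ with $C_2>C_1+2$. The support-size pieces follow from the effective-dimension tail bounds for horseshoe priors (as in van der Pas et al.) together with the upper bounds $n(\varepsilon_n^\dagger)^2\lesssim (pq)^{1-c_0}\log(pq)$ and $q^{2(1-c_0)}\log q$ in \ref{assum:multi-growth}. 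The tree-size pieces follow from the Galton--Watson prior with $\gamma<1/2$. Applying Fubini to the evidence bound makes this region negligible.

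\textbf{Inner sieve.} On $\mathcal F_n$ we reuse the tests from Proposition~\ref{prop:postconc_sieve}. These tests are built from the entropy bound (Lemma~\ref{lem:entropy-H-correct-global}) and the Le Cam--Birg\'e tests for Hellinger balls. Because the denominator is now the full-prior evidence, which is at least as large as the restricted one up to the factor $\Pi(\mathcal F_n)\le1$, the same chain of inequalities gives $\Pi(\{H>M\varepsilon_n^\dagger\}\cap\mathcal F_n\mid\bY)\to0$.

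\textbf{Intermediate shells.} The main work is on $\mathcal G_n\setminus\mathcal F_n=\bigcup_{(k,m,a)\neq(0,0,\cdot)} S_{k,m,a}$, where the prior mass decays only polynomially. Fix a shell $S=S_{k,m,a}$. Lemma~\ref{lem:shell-separation} uses \ref{assum:multi-RE} to transfer mean-noise energy into $\|\Delta\eta\|_{2,n}$ and hence into Hellinger distance; it uses \ref{assum:multi-betamin} so that thresholded models cannot drop true signals; and it uses the eigenvalue bounds \ref{assum:multi-eig} with the diagonal truncation so that precision-noise translates into Hellinger distance. From it I would extract
\[
H(\Theta,\Theta_0)\ \ge\ \delta_{k,m}:=\max\{M\varepsilon_n^\dagger,\ c\,(k+m)^{1/2}\varepsilon_n^\dagger\}\quad\text{on } S.
\]
Lemma~\ref{lem:shell-entropy} provides the zero-net entropy bound $\log N(\delta_{k,m}/8,S,H)\le C_3 n(\varepsilon_n^\dagger)^2(1+\log(2+k+m+a))$. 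Combining the two gives tests $\phi_S$ satisfying
\[
E_{\Theta_0}\phi_S\le e^{C_3 n(\varepsilon_n^\dagger)^2\log(2+k+m+a)-c'n\delta_{k,m}^2},\qquad \sup_{S\cap\{H>M\varepsilon_n^\dagger\}}E_\Theta(1-\phi_S)\le e^{-c'n\delta_{k,m}^2}.
\]
Set $\phi=\max_S\phi_S$ over all shells, including the inner-sieve tests. Then $E_{\Theta_0}\phi$ is bounded by a series that is summable in $(k,m,a)$ once $M$ is large, because the $\log(2+k+m+a)$ growth is beaten by $n(\varepsilon_n^\dagger)^2(k+m)$. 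For the shells with $k=m=0$ and $a>A_{0,n}$, separation in $a$ must come from amplitude blow-up. Since $a$ is capped by the polynomial envelope of $\mathcal G_n$, only polynomially many such shells exist, and this costs only a $\log n$ factor, which the $q\log n$ and $\log(pq)\log n$ lower bounds on $n(\varepsilon_n^\dagger)^2$ absorb. On the alternatives, Fubini gives
\[
E_{\Theta_0}\big[\Pi(S\cap\{H>M\varepsilon_n^\dagger\}\mid\bY)(1-\phi)\big]\le e^{(C_1+2)n(\varepsilon_n^\dagger)^2}\,\Pi(S)\,e^{-c'n\delta_{k,m}^2}.
\]
This bound is small even when $\Pi(S)$ is only polynomially small, provided $c'M^2>C_1+2$. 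Summing over shells yields a term that is $o(1)$.

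\textbf{Main obstacle.} I expect the hardest step to be making the entropy growth in Lemma~\ref{lem:shell-entropy} strictly slower than the separation growth in Lemma~\ref{lem:shell-separation}, uniformly over the shell indices. This matters most for the active-amplitude index $a$, where the restricted eigenvalue condition gives no extra separation. My first approach is to use a coarse union over the polynomially many values of $a$, paying a $\log n$ factor that the growth conditions in \ref{assum:multi-growth} absorb. If that fails, I would instead use weighted tests with weights $e^{-n\delta_{k,m}^2/2}$ on each shell.
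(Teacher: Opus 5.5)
Your architecture is the paper's: an evidence lower bound from the small-ball lemmas, an exponentially small $\Pi(\mathcal G_n^c)$, the inner-sieve tests of Proposition~\ref{prop:postconc_sieve}, and shell-wise tests from Lemmas~\ref{lem:shell-separation} and~\ref{lem:shell-entropy} summed over shells. If anything, you are more explicit than the paper about the type~I errors of the combined test.

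The step that does not close as written is the summation of $E_{\Theta_0}\phi_S$ over the amplitude index $a$. Your $\delta_{k,m}$ does not depend on $a$, but your entropy bound does. So at any fixed $(k,m)$, your bound $\exp\{C_3n(\varepsilon_n^\dagger)^2\log(2+k+m+a)-c'n\delta_{k,m}^2\}$ exceeds $1$ as soon as $\log(2+k+m+a)\ge c'\max\{M^2,c(k+m)\}/C_3$, i.e.\ for all but boundedly many $a$. The series over $a$ is therefore not small for any fixed $M$.

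Your proposed remedy does not repair this. A union over polynomially many values of $a$ costs an additive $O(\log n)$, which $q\log n\lesssim n(\varepsilon_n^\dagger)^2$ does absorb. Here, however, $\log n$ multiplies $n(\varepsilon_n^\dagger)^2$ inside each shell's entropy, and no lower bound on $n(\varepsilon_n^\dagger)^2$ absorbs that. Relatedly, $H\le 1$, so Lemma~\ref{lem:shell-separation} only gives $c\min\{1,\max(k,m)(\varepsilon_n^\dagger)^2+(a-C_0)_+^2\}$. Your $\delta_{k,m}$ cannot exceed a constant, and ``linear growth beats logarithmic growth'' fails once $\max(k,m)(\varepsilon_n^\dagger)^2\gtrsim1$.

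The missing observation is that you never need to slice in $a$. Inside $\mathcal G_n$, the active part of $\Theta$ lives on at most $q+s_{\Omega,n}$ precision coordinates and $L_{\mathrm{act},n}$ leaves, with magnitudes capped by the polynomial envelopes $R_{\Omega,n}$ and $\Lambda_{B,n}$. Each noise shell $S_{k,m}=\bigcup_a S_{k,m,a}$ can therefore be covered in one step:
\begin{itemize}
\item discard the noise, at cost at most $\sqrt{\ell+1}\,r_n/4$ with $\ell=\max(k,m)$;
\item grid the active coordinates over the whole envelope.
\end{itemize}
The envelope enters only through a factor $\log(\mathrm{poly}(n)/r_n)=O(\log n)$ per active coordinate. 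The budgets are calibrated so that budget times $\log n$ is $O(n(\varepsilon_n^\dagger)^2)$; this is why $L_{\mathrm{act},n}$ carries the $1/\log n$, and why \ref{assum:multi-growth} imposes $q\log n\lesssim n(\varepsilon_n^\dagger)^2$. Hence the entropy is $O(n(\varepsilon_n^\dagger)^2)$ uniformly over shells.

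That is how the proof of Lemma~\ref{lem:shell-entropy} proceeds, and why the paper's proof of the theorem sums only over $(k,m)$. The shells with $k=m=0$, $a>A_{0,n}$ that you single out are empty once the envelopes are part of $\mathcal G_n$ (as in your description), since $A_{0,n}$ is chosen to dominate them. With a shell-uniform entropy bound, your ``main obstacle'' disappears and no amplitude separation is needed.

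What is left is the union over $(k,m)$. There, because of the cap, the saturated shells with $\max(k,m)(\varepsilon_n^\dagger)^2\gtrsim1$ still need a separate argument. The paper's display, which asserts $e^{-C_T\ell n(\varepsilon_n^\dagger)^2}$ for every $\ell$, glosses over this as well.
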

\Cref{thm:postconc_full} has been proved explicitly in \Cref{sec:proofthm2}. Furthermore, while Theorem \ref{thm:postconc_full} establishes the recovery of the joint predictive surface $\eta(\bx_i) = \bm{B}(\bx_i)^\top \bz_i$ and the precision matrix $\Omega$, in many multivariate settings the primary target of inference is the transformed coefficient matrix $G(\bx) := \Omega^{-1}\bm{B}(\bx)^\top$. This quantity represents the direct marginal effects of the covariates $\bz$ on the multivariate response $\bY$, explicitly adjusting for the residual dependence structure captured by $\Omega$. 

Combining the functional RE condition (\Cref{assum:multi-RE}) with \Cref{thm:postconc_full} seamlessly upgrades contraction of the linear predictor $\eta$ to contraction of the underlying coefficient matrix function $\bm{B}(\cdot)$, and hence of its transformation $G(\bx)$. Because Theorem \ref{thm:postconc_full} now holds for the full posterior, this corollary extends directly to the un-truncated distribution.

\begin{corollary}[Contraction of the Transformed Coefficient Surface]
\label{cor:transformed-surface}
Let $\bm{G}(\bx):=\Omega^{-1}\bm{B}(\bx)^\top$ and $\bm{G}_0(\bx):=\Omega_0^{-1}\bm{B}_0(\bx)^\top$.
Define the empirical Frobenius norms over the design points
\[
\|\bm{G}-\bm{G}_0\|_{F,2,n}^2
:=\frac{1}{n}\sum_{i=1}^n \|\bm{G}(\bx_i)-\bm{G}_0(\bx_i)\|_F^2,
\quad
\|\bm{B}-\bm{B}_0\|_{F,2,n}^2
:=\frac{1}{n}\sum_{i=1}^n \|\bm{B}(\bx_i)-\bm{B}_0(\bx_i)\|_F^2.
\]
Assume \ref{assum:multi-smoothness}--\ref{assum:multi-betamin}, and let $\kappa_z>0$ denote the functional restricted eigenvalue constant from \Cref{assum:multi-RE}. Let
\[
R_{B_0} := \max_{1\le i\le n}\|\bm{B}_0(\bx_i)\|_F.
\]
Then there exists a sufficiently large constant $M'<\infty$ such that
\begin{align}
\Pi\Big(
\|\bm{B}-\bm{B}_0\|_{F,2,n}
>
M'\,\kappa_z^{-1/2}\,\varepsilon_n^\dagger
\ \Bigm|\ \bY
\Big)
&\xrightarrow[n\to\infty]{P_{\Theta_0}} 0,
\label{eq:B-conc-cor}\\
\Pi\Big(
\|\bm{G}-\bm{G}_0\|_{F,2,n}
>
M'\big(\kappa_z^{-1/2}+R_{B_0}\big)\varepsilon_n^\dagger
\ \Bigm|\ \bY
\Big)
&\xrightarrow[n\to\infty]{P_{\Theta_0}} 0.
\label{eq:G-conc-cor}
\end{align}
\end{corollary}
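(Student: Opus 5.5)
The plan is to turn the Hellinger contraction of \Cref{thm:postconc_full} into separate contraction statements for the two components: the linear predictor $\eta$ in $\|\cdot\|_{2,n}$ and the precision matrix $\Omega$ in Frobenius or operator norm. Then \Cref{assum:multi-RE} transfers the $\eta$-rate to $\bm B$, and a matrix perturbation identity handles $\bm G$.

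\textbf{Step 1: from the Hellinger rate to component rates.} Let $\mathcal A_n:=\{\Theta\in\mathcal F_n: H(\Theta,\Theta_0)\le M\varepsilon_n^\dagger\}$. By \Cref{thm:postconc_full}, and by the shelling step in its proof, which shows that the posterior mass of $\mathcal G_n\setminus\mathcal F_n$ and of $\mathcal G_n^c$ vanishes, we have $\Pi(\mathcal A_n^c\mid\bY)\to0$ in $P_{\Theta_0}$-probability. For the Gaussian Hellinger affinity,
\[
1-h^2(f_{\Theta,i},f_{\Theta_0,i})=\rho(\Sigma,\Sigma_0)\exp\Big\{-\tfrac18\Delta\eta_i^\top\big(\tfrac{\Sigma+\Sigma_0}{2}\big)^{-1}\Delta\eta_i\Big\},
\]
where $\rho(\Sigma,\Sigma_0)\le1$ is the covariance affinity and does not depend on $i$.
\begin{itemize}
\item \emph{Covariance part.} Averaging over $i$ gives $1-\rho(\Sigma,\Sigma_0)\le H^2\lesssim\varepsilon_n^2\to0$. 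A local expansion of $\rho$ around $\Sigma_0$, using \Cref{assum:multi-eig}, yields $\|\Omega-\Omega_0\|_F\lesssim\varepsilon_n^\dagger$. In particular, $\lambda_{\max}(\Omega^{-1})\le2/\underline\lambda$ for large $n$, so the spectrum of $\Omega^{-1}$ is uniformly controlled on $\mathcal A_n$.
\item \emph{Mean part.} With the spectrum controlled, each mean exponent is comparable to $\|\Delta\eta_i\|_2^2$. We then need $\frac1n\sum_i\big(1-e^{-c\|\Delta\eta_i\|^2}\big)\lesssim\varepsilon_n^2$ to imply $\|\Delta\eta\|_{2,n}\lesssim\varepsilon_n^\dagger$.
\end{itemize}

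\textbf{Main obstacle.} The mean part of Step 1 is where I expect the real difficulty. The inequality $1-e^{-x}\ge x/(1+x)$ converts the Hellinger bound into an $L^2$ bound only if $\max_i\|\Delta\eta_i\|$ stays bounded, or at least grows slowly, on $\mathcal A_n$. My first attempt is to reuse the comparison between Hellinger distance and the $(\|\Delta\eta\|_{2,n},\|\Omega-\Omega_0\|_F)$ metric that underlies the test construction and entropy bound on $\mathcal F_n$ (\suppref{\Cref{lem:entropy-H-correct-global}}{Lemma S1.11}). On $\mathcal F_n$ the leaf amplitudes and active-leaf budget are controlled, and the k-d regularity in (A2c) gives every active leaf cell $\gtrsim n\,\mathrm{vol}$ design points. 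A leaf with a large jump would therefore incur a Hellinger loss on a nonvanishing fraction of observations. That should rule out large isolated $\Delta\eta_i$ and give $\|\Delta\eta\|_{2,n}\le C\varepsilon_n^\dagger$. If this fails, I would enlarge $\mathcal A_n$ by intersecting with the shell $S_{0,0,a}$ of bounded amplitude, which carries the posterior mass by the proof of \Cref{thm:postconc_full}.

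\textbf{Step 2: contraction of $\bm B$.} On $\mathcal A_n\subset\mathcal F_n$, \Cref{assum:multi-RE} applies and gives
\[
\|\bm B-\bm B_0\|_{F,2,n}\le\kappa_z^{-1/2}\|\Delta\eta\|_{2,n}\le C\kappa_z^{-1/2}\varepsilon_n^\dagger.
\]
This proves \eqref{eq:B-conc-cor} with $M'\ge C$.

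\textbf{Step 3: contraction of $\bm G$.} For each $\bx_i$ decompose
\[
\bm G-\bm G_0=\Omega^{-1}(\bm B-\bm B_0)^\top+(\Omega^{-1}-\Omega_0^{-1})\bm B_0^\top,
\qquad
\Omega^{-1}-\Omega_0^{-1}=\Omega^{-1}(\Omega_0-\Omega)\Omega_0^{-1}.
\]
Using $\|AB\|_F\le\|A\|_{\rm op}\|B\|_F$, the spectral control from Step 1, and \Cref{assum:multi-eig}:
\begin{itemize}
\item the first term is at most $(2/\underline\lambda)\,\|\bm B(\bx_i)-\bm B_0(\bx_i)\|_F$;
\item the second term is at most $(2/\underline\lambda^2)\,\|\Omega-\Omega_0\|_F\,R_{B_0}$.
\end{itemize}
Squaring, averaging over $i$, and applying Minkowski's inequality gives
\[
\|\bm G-\bm G_0\|_{F,2,n}\lesssim\kappa_z^{-1/2}\varepsilon_n^\dagger+R_{B_0}\varepsilon_n^\dagger
\]
on $\mathcal A_n$. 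Absorbing constants into $M'$ yields \eqref{eq:G-conc-cor}.
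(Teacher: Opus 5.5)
Your Steps 2 and 3 are exactly the paper's argument. The paper transfers the $\eta$-rate to $\bm B$ through \Cref{assum:multi-RE}, and bounds $\bm G-\bm G_0=\Omega^{-1}(\bm B-\bm B_0)^\top+(\Omega^{-1}-\Omega_0^{-1})\bm B_0^\top$ using \Cref{lem:pd-correct}. You restrict to $\mathcal F_n$ before invoking the RE condition, which is only assumed on the inner sieve; the paper skips this detail. The covariance half of your Step 1 is also sound: the determinant affinity $\rho(\Sigma,\Sigma_0)$ does not depend on $i$, so $H^2\ge 1-\rho$ involves no averaging.

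\textbf{The gap.} It is the mean half of Step 1, which you flag but do not close. It cannot be closed from membership in $\mathcal F_n\cap\{H(\Theta,\Theta_0)\le M\varepsilon_n^\dagger\}$ alone, because the Hellinger distance saturates observation by observation. Start from a parameter in that set and add one extra split of an active tree whose cell contains $O(1)$ design points, with a leaf value of order $\sqrt n$. This perturbation
\begin{itemize}
\item changes $H^2(\Theta,\Theta_0)$ by only $O(1/n)=o\big((\varepsilon_n^\dagger)^2\big)$;
\item is not ruled out by the sieve constraints;
\item makes $\|\eta-\eta_0\|_{2,n}$ of order one.
\end{itemize}
Consequently none of your repairs works:
\begin{itemize}
\item \emph{Reusing the sieve metric comparison.} The lower bound in \Cref{lem:block-metric-correct} has $c_\star$ only of order $n^{-C}$ (through $R_{\eta,n}$), so it yields just $\|\eta-\eta_0\|_{2,n}\lesssim n^{C/2}\varepsilon_n^\dagger$. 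The comparison used for the entropy bound is the upper one, which points the wrong way.
\item \emph{k-d regularity.} It guarantees only $\gtrsim n\,\mathrm{vol}(A)$ points in a cell. That is $O(1)$ for the cells of volume $\asymp 1/n$ that an $\asymp n$-point cutpoint grid permits. More generally, a spike of height $K$ on a fraction $f$ of the design costs at most $f$ in $H^2$ but $fK^2$ in $\|\cdot\|_{2,n}^2$, so you would need $K=O(1)$.
\item \emph{Bounding the amplitude index $a$.} This controls only the empirical $L^2$ quantity $\|\eta_{\mathrm{sig}}\|_{2,n}$, which the spiked parameter keeps bounded.
\end{itemize}

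\textbf{The paper has the same hole.} Its proof obtains both component rates from \Cref{thm:postconc_full} and \Cref{lem:block-metric-correct}, justified only by a footnote appealing to ``metric equivalence''. The only uniform Hellinger lower bound in the supplement, \Cref{lem:globalhellinger}, passes from $\frac1n\sum_i\min(1,\|\delta_i\|_2^2)$, with $\delta_i:=\eta_i-\eta_{0,i}$, to $\min(1,\|\eta-\eta_0\|_{2,n}^2)$. That is Jensen's inequality in the wrong direction, since $\min(1,\cdot)$ is concave, and it fails on the same spiked example. So your suspicion is correct.

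\textbf{What is missing.} You need a stronger separation for the mean. One option is to show the posterior concentrates on $\{\max_i\|\eta_i-\eta_{0,i}\|_2\le C\}$ for a fixed $C$. The more natural option is to run the testing argument in the average R\'enyi divergence of order $\tfrac12$, namely $-\frac1n\sum_i\log\int\sqrt{f_{\Theta,i}f_{\Theta_0,i}}$. Likelihood-ratio tests for independent observations have error probabilities controlled by the product of the affinities, that is, by this divergence. For Gaussians its mean part is exactly $\frac1{8n}\sum_i\delta_i^\top\bar\Sigma^{-1}\delta_i$ with $\bar\Sigma=\tfrac12(\Omega^{-1}+\Omega_0^{-1})$. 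This is linear in $\|\delta_i\|_2^2$ and does not saturate. Once $\|\eta-\eta_0\|_{2,n}\lesssim\varepsilon_n^\dagger$ holds on a subset of $\mathcal F_n$ with posterior probability tending to one, your Steps 2 and 3 complete the proof as written.
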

The complete proof of \Cref{cor:transformed-surface} is provided in \Cref{sec:prooftransformed-surface}. The additional factor $(\kappa_z^{-1/2}+R_{B_0})$ in the rate of \eqref{eq:G-conc-cor} arises from a simple decomposition: the error $\bm{G}-\bm{G}_0$ contains (i) a term from $\bm{B}-\bm{B}_0$ propagated through $\Omega^{-1}$ and hence controlled by $\kappa_z^{-1/2}\varepsilon_n^\dagger$, and (ii) a term from $\Omega^{-1}-\Omega_0^{-1}$ acting on the true surface $\bm{B}_0$, whose magnitude scales with \( R_{B_0}:=\max_i\|\bm{B}_0(\bx_i)\|_F \).
Under \Cref{assum:multi-smoothness} and row-wise sparsity, \( \|\bm{B}_0(\bx_i)\|_F^2 = \sum_{r=1}^q\sum_{j\in S_{B,r}} B_{0,jr}(\bx_i)^2 \le M^2 S_B \), hence $R_{B_0}\le M\sqrt{S_B}$. Thus, \Cref{cor:transformed-surface} yields \( \|\bm{G}-\bm{G}_0\|_{F,2,n}=o_{P}(1) \) under the full posterior provided \( \sqrt{S_B}\,\varepsilon_n^\dagger \rightarrow 0 \).

\Cref{cor:transformed-surface} is useful in practice because posterior summaries of \(G_{rj}(\bx)\) provide uncertainty-quantified marginal effect surfaces for predictor \(j\) on response \(r\), rather than only the contraction of the joint linear predictor \(\eta\). Thus, variable ranking or effect-surface inference based on posterior means, credible bands, or thresholded posterior inclusion summaries of \(G(\bx)\) is asymptotically justified whenever the additional condition \(\sqrt{S_B}\varepsilon_n^\dagger\to0\) holds.

\section{A synthetic experiment}
\label{sec:experiments}
In this section, we evaluate the predictive accuracy, uncertainty quantification, and computational cost of \texttt{multiVCBART} on a controlled synthetic example. We compare against several representative baselines. First, we fit standard univariate BART independently to each response. Second, we consider \texttt{suBART} \citep{esser2025seeminglyunrelatedbayesianadditive}, which assigns each response its own BART ensemble and couples outcomes through a residual covariance matrix \(\Sigma=\Omega^{-1}\). Following \citet{esser2025seeminglyunrelatedbayesianadditive}, we use the hierarchical covariance prior, following \citet{HuangWand2013} rather than a vanilla inverse-Wishart prior. Third, we include \texttt{mvBART}, taken as the Gaussian non-skew special case of \citet{Um2023}, in which outcomes share tree partitions and terminal nodes have multivariate leaf parameters. Fourth, we include a two-step procedure, \texttt{BART-GHS}, which first estimates the outcome-specific conditional means using independent BART fits and then estimates the residual precision matrix from posterior mean residuals using a Graphical Horseshoe prior. This separates mean estimation from residual graph estimation, unlike \texttt{multiVCBART}, which learns \((\bm B,\Omega)\) jointly. Finally, we include Bayesian linear SUR \citep{AndoBayesSUR}, implemented using standard \texttt{BayesSUR} defaults \citep{ZhaoJSS}, to assess the cost of imposing linearity when the true mean structure is nonlinear. For \texttt{multiVCBART}, we run four independent MCMC chains with 2500 iterations each, discarding the first 500 iterations as burn-in.

Our simulation design mimics the multivariate ``Friedman-type" \citep{FriedmanAOS} simulation philosophy to a sparse varying-coefficient framework. While traditional benchmarks assume each response depends on a small subset of covariates with correlated Gaussian noise, our design rigorously tests the model's ability to recover sparsity across both primary covariates $\bz_i$ and modifiers $\bx_i$. Our base structure is inspired by the experiments in \citet[Section 5.1]{esser2025seeminglyunrelatedbayesianadditive}. For each replication, we generate independent modifier variables $\bx_i=(x_{i1},\ldots,x_{id})^\top\in[0,1]^d$ with $x_{ik}\stackrel{\mathrm{iid}}{\sim}\mathcal{U}(0,1)$ for $d=50$, and primary covariates $\bz_i=(z_{i1},\ldots,z_{ip})^\top\in\mathbb{R}^p$ with $z_{ij}\stackrel{\mathrm{iid}}{\sim}\mathcal{N}(0,1)$ for $p=100$. We consider $q=2$ outcomes. The mean vector $\boldsymbol\eta_0(\bx_i, \bz_i)=(\eta_{0,1}(\bx_i, \bz_i),\eta_{0,2}(\bx_i, \bz_i))^\top$ is constructed using a sparse varying-coefficient structure, $\eta_{0,k}(\bx_i, \bz_i) = \bz_i^\top \boldsymbol\beta_k(\bx_i)$ for $k \in \{1,2\}$.

We distribute the non-linear effects across a small active subset of the primary covariates. Specifically, the non-zero varying coefficients for the first outcome are:
\(
\beta_{1,1}(\bx_i) := 10\sin \big(\pi x_{i1}x_{i2}\big), \ 
\beta_{1,2}(\bx_i) := 20(x_{i3}-1/2)^2, \
\beta_{1,3}(\bx_i) := 10x_{i4}, \
\beta_{1,4}(\bx_i) := 5x_{i5}, \
\beta_{1,5}(\bx_i) := 5,
\)
with $\beta_{1,j}(\bx_i) = 0$ for all remaining $j \in \{6, \ldots, 100\}$. For the second outcome, the active coefficients are shifted to a different subset of primary covariates:
\(
\beta_{2,6}(\bx_i) := 10\cos \big(\pi x_{i3}x_{i4}\big), \
\beta_{2,7}(\bx_i) := 20(x_{i5}-1/2)^2, \
\beta_{2,8}(\bx_i) := 10x_{i6}, \
\beta_{2,9}(\bx_i) := 5x_{i7}, \
\beta_{2,10}(\bx_i) := 5,
\)
with $\beta_{2,j}(\bx_i) = 0$ for $j \notin \{6, \ldots, 10\}$. Hence, the active modifier sets are $\{1,2,3,4,5\}$ for the first response and $\{3,4,5,6,7\}$ for the second, while the remaining $43$ modifiers are pure noise. Simultaneously, only $10$ of the $100$ primary covariates in $\bz_i$ possess non-zero effects. 

We then generate responses as $\bm{Y}_i=\boldsymbol\eta_0(\bx_i, \bz_i)+\boldsymbol\varepsilon_i$, with $\boldsymbol\varepsilon_i\stackrel{\mathrm{iid}}{\sim}\mathcal{N}_2(\mathbf{0},\Sigma_0)$ drawn from a correlated Gaussian noise distribution to emulate a seemingly-unrelated regression structure. We define the true precision matrix $\Omega_0$ such that $(\Omega_0)_{12}=(\Omega_0)_{21}=0.6$ and $(\Omega_0)_{11}=(\Omega_0)_{22}=1$, and set the covariance matrix $\Sigma_0 = 4\Omega_0^{-1}$, ensuring the residual dependence is non-negligible and models that ignore cross-outcome correlation can become mis-calibrated. We generate a training set of size $n_{\mathrm{train}}=300$ and an independent test set of size $n_{\mathrm{test}}=500$, and repeat the full experiment over $25$ independent random seeds. We summarize predictive performance using RMSE and the continuous ranked probability score (CRPS) \citep{GneitingRaftery2007}, which is a proper scoring rule rewarding both calibration and sharpness.
We assess uncertainty quantification using empirical coverage of nominal \(95\%\) predictive intervals, matching the interval summaries computed in our implementation.

\begin{figure}[t]
\centering
\begin{subfigure}[b]{0.24\linewidth}\centering\includegraphics[width=\linewidth]{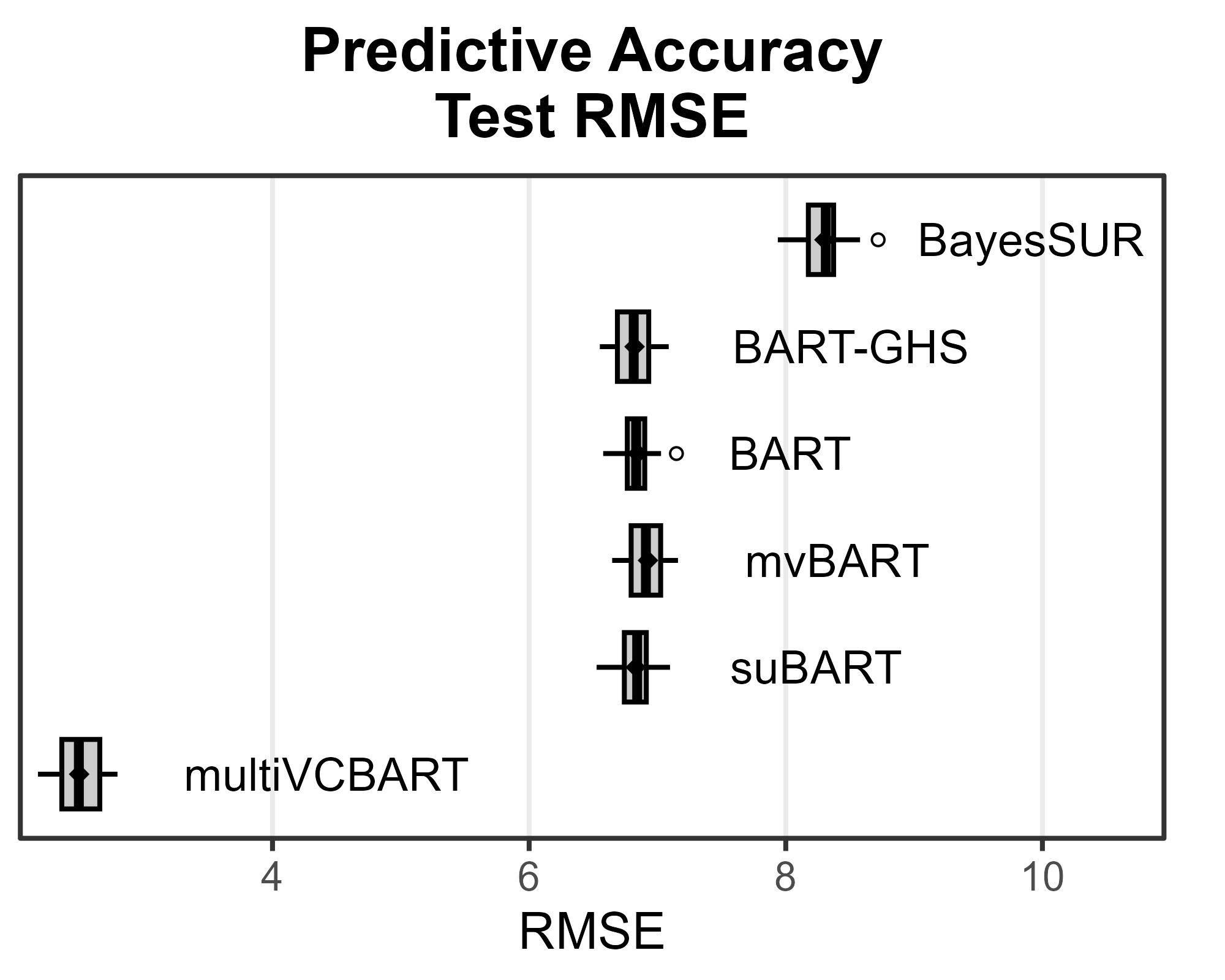}
\caption{RMSE}\label{fig:friedman_rmse}
\end{subfigure}\hfill
\begin{subfigure}[b]{0.24\linewidth}\centering\includegraphics[width=\linewidth]{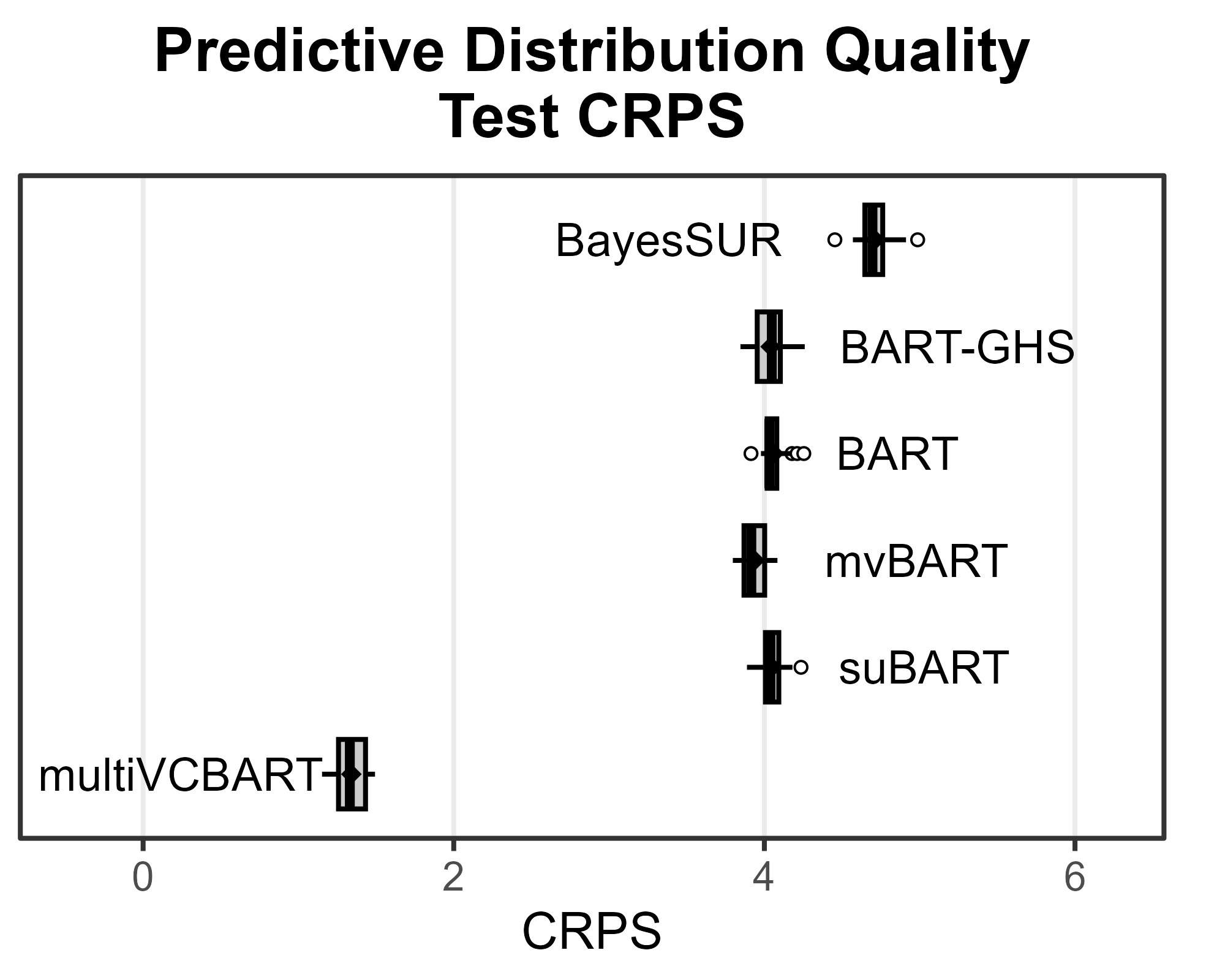}
\caption{CRPS}\label{fig:friedman_crps}
\end{subfigure}\hfill
\begin{subfigure}[b]{0.24\linewidth}\centering\includegraphics[width=\linewidth]{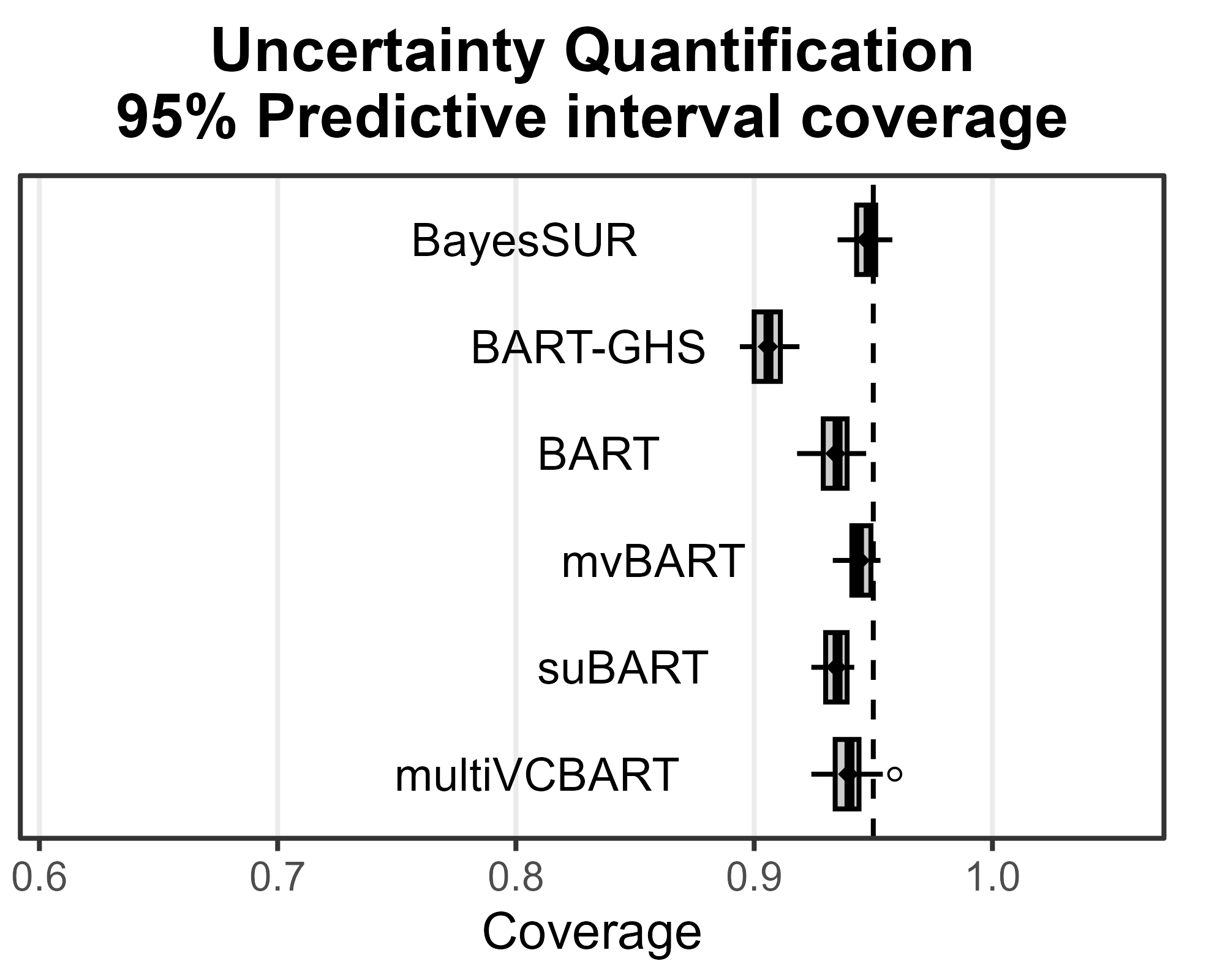}
\caption{Coverage}\label{fig:friedman_cov}
\end{subfigure}\hfill
\begin{subfigure}[b]{0.24\linewidth}\centering\includegraphics[width=\linewidth]{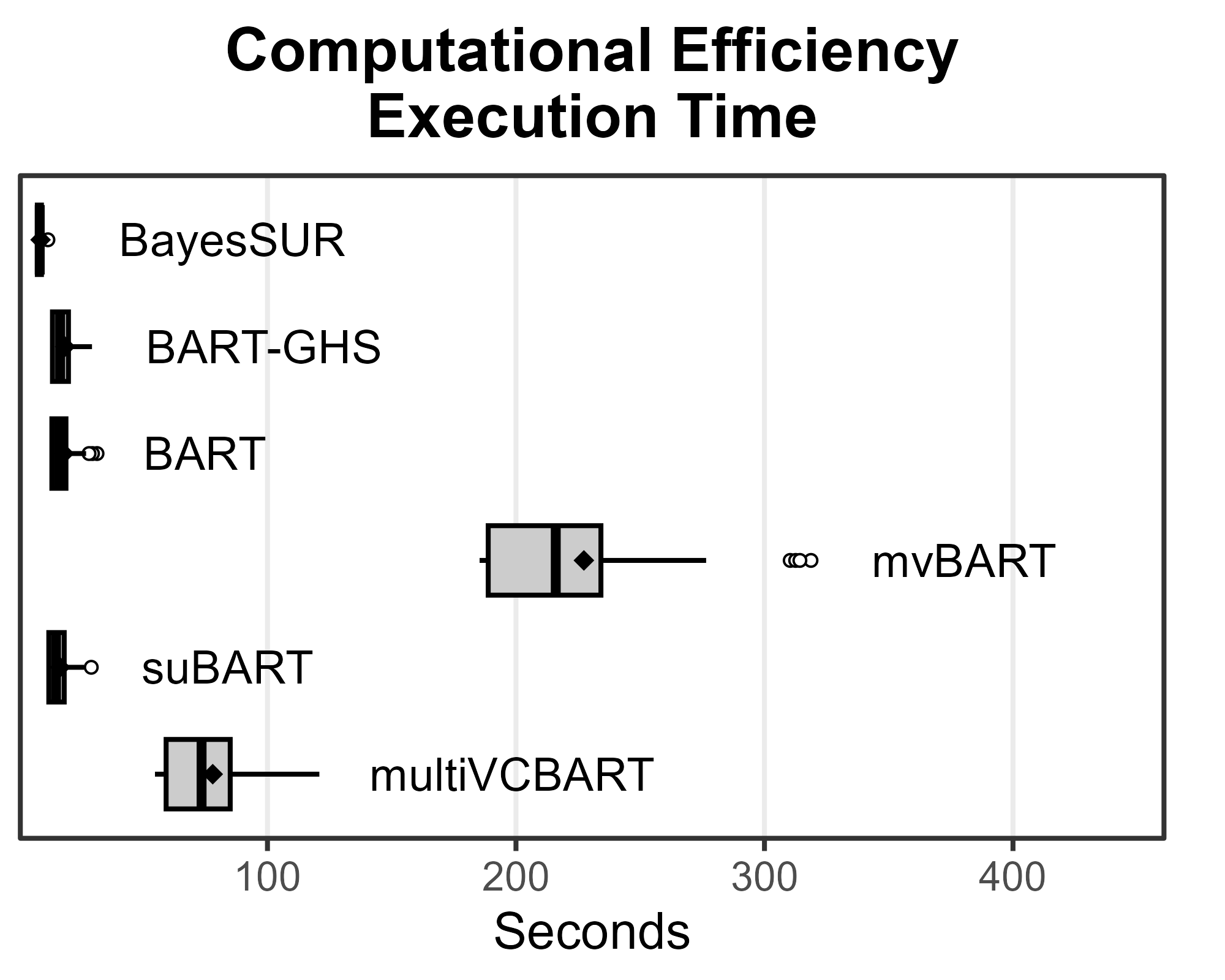}
\caption{Time (s)}\label{fig:friedman_time}\end{subfigure}\caption{High-dimensional Friedman--SUR benchmark ($p=100$, $d=50$, $q=2$). Lower is better for RMSE, CRPS, and Time. The dashed line in (c) marks nominal $95\%$ predictive-interval coverage.}\label{fig:friedman}
\end{figure}

\Cref{fig:friedman} reports the comparative performance in this simulation experiment.\texttt{multiVCBART} substantially improves predictive accuracy relative to all competitors. In \Cref{fig:friedman_rmse}, the distribution of test RMSE for \texttt{multiVCBART} is well separated from the other methods, with much smaller median error and relatively little variability. The remaining tree-based methods, including independent \texttt{BART}, \texttt{BART-GHS}, \texttt{mvBART}, and \texttt{suBART}, exhibit noticeably
larger RMSE, while \texttt{BayesSUR} performs worst. This is expected because \texttt{BayesSUR} can model residual dependence but imposes a linear mean structure, which is misspecified for the nonlinear varying-coefficient surfaces in this simulation design. The same qualitative conclusion appears in \Cref{fig:friedman_crps}. \texttt{multiVCBART} achieves by far the smallest CRPS, indicating that the entire posterior predictive distribution is better calibrated and sharper, not only that the posterior predictive mean is more accurate. Additionally, \Cref{fig:friedman_cov} shows that most methods attain coverage close to the nominal $95\%$ target, although no method is perfectly calibrated. 
Note that the two-step \texttt{BART-GHS} approach improves neither RMSE nor
CRPS relative to the strongest one-step tree baselines, suggesting that estimating the mean functions first and the residual graph second is not enough in this setting. 


Finally, \Cref{fig:friedman_time} suggests that
\texttt{BayesSUR} and independent \texttt{BART} fits are faster because they
fit simpler models. \texttt{multiVCBART} is computationally more expensive than these simpler baselines, reflecting the cost of fitting \(pq\) coefficient ensembles and jointly updating \(\Omega\). Recalling the
computational remark in \Cref{sec:comp} concerning the
per-tree proposal cost, our pseudo-response update avoids dense
\(q\times q\) Cholesky factorizations to evaluate leaf marginal likelihoods inside the tree-search loop. In the present Friedman
design, \(q=2\) is small while \(p=100\), so the dominant cost for
\texttt{multiVCBART} is cycling through the \(pq=200\) scalar coefficient-function ensembles rather than the residual-dependence update. At the same time, \texttt{multiVCBART} is considerably faster than \texttt{mvBART}, whose shared multivariate tree structure incurs heavier dense multivariate tree-update computations.

\section{GDSC drug sensitivity analysis}
\label{sec:gdsc_realdata}
We analyze a subset of the Genomics of Drug Sensitivity in Cancer (GDSC) dataset from a large-scale pharmacogenomic study in \citet{garnett2012a} and \citet{YangGDSC}, which appears in the \texttt{BayesSUR} tutorial \citep{ZhaoJSS}. The dataset consists of \(n=499\) cancer cell lines, \(q=7\) drug-response outcomes, \(d=13\) tissue-type indicators, and \(p=837\) omics covariates. The seven responses are standardized \(\log(\mathrm{IC}_{50})\) measurements for \texttt{Methotrexate, RDEA119, PD.0325901, CI.1040, AZD6244, Nilotinib}, and \texttt{Axitinib}. The predictor matrix is composed of \(343\) gene-expression features, \(426\) copy-number variation (CNV) features, and \(68\) mutation indicators. In our \texttt{multiVCBART} specification, the full omics panel enters as the primary covariates \(\bZ \in \mathbb{R}^{499 \times 837}\), while the \(13\) tissue indicators enter as modifiers \(\bX \in \{0,1\}^{499 \times 13}\), allowing each molecular effect to vary with tissue context.

We fit \texttt{multiVCBART} using all the \(837\) omics covariates and focus on the posterior inferential structure. Specifically, we ran four MCMC chains with \(M=50\) trees for \(2500\) iterations each, discarding the first \(1000\) iterations of each chain as burn-in. We focus on two complementary aspects of the fitted model: first, we examine the posterior median local scale parameters $\lambda_{jr}$, which control the shrinkage of the varying-coefficient functions \(B_{jr}(\bX)\). These are aggregated across the $q$ drug responses to yield a global biomarker importance score for the $j$-th covariate, defined by \(\bar{\lambda}_j = q^{-1}\sum_{r=1}^{q} \lambda_{jr}.\) Second, we inspect the residual drug--drug dependence captured by the estimated precision matrix \(\Omega\).

\paragraph{Biomarker ranking.}
Large values of \(\bar{\lambda}_j\) indicate molecular features whose
coefficient surfaces are repeatedly allowed to escape global shrinkage across the multivariate drug responses. The full posterior distribution of \(\bar{\lambda}_j\) for the top-ranked features is shown in
\suppref{\Cref{fig:gdsc_lambda}}{Fig. S3.1} of the supplement. \texttt{BCR\_ABL.MUT} exhibits the most prominent signal. Its posterior distribution for \(\bar{\lambda}_j\) is markedly shifted to the right relative to all other features, with the largest posterior median. This dominance is biologically well motivated since the \texttt{BCR}--\texttt{ABL1} fusion is the defining oncogenic lesion in chronic myeloid leukaemia and a canonical target of tyrosine-kinase inhibitors, with \texttt{Nilotinib} specifically developed as a potent \texttt{BCR-ABL} inhibitor \citep{Marin2023,Yeung02102016}.
Other highly ranked features include \texttt{LCP1}, \texttt{KRAS.MUT}, \texttt{PRDM16.CNV}, \texttt{MLLT}, \texttt{GATA2.CNV}, and \texttt{MLLT2.CNV}. Most of them have connections to oncogenic signalling \citep{garnett2012a} and genomic instability processes \citep{iorio2016}. In contrast, the lower-ranked blue features have smaller posterior medians. Their heavily overlapping boxplots indicate that these features are comparatively weak signals, and the posterior does not clearly distinguish their importance from one another.

\paragraph{Residual drug--drug dependence.}
\Cref{fig:gdsc_omega} shows the conditional dependency network induced by the estimated precision matrix \(\Omega\). We visualize the posterior mean partial-correlation graph $(\rho_{rs})$ after applying hard-thresholding at $0.05$, that is, setting $\hat{\rho}_{rs} = 0$ whenever $|\hat{\rho}_{rs}| < 0.05$. In the resulting network, positive edges are shown in green, negative edges in red, and thicker edges correspond to stronger magnitude. Importantly, \Cref{fig:gdsc_omega} summarizes dependence after adjusting for all \(837\) omics covariates and the tissue modifiers. Hence, the edges should be interpreted as residual conditional associations among the drug responses that remain unexplained by the observed biomarkers and tissue context.

Several scientifically meaningful structures emerge. First, the four drugs \texttt{RDEA119}, \texttt{PD.0325901}, \texttt{CI.1040}, and \texttt{AZD6244} form a clear positively connected module, with especially strong edges involving \texttt{CI.1040}. This is consistent with the fact that these compounds are all MEK-pathway inhibitors \citep{Cheng2017} and therefore would be expected to share residual sensitivity patterns across cell lines. 

Second, \texttt{Nilotinib} and \texttt{Methotrexate} are linked to the remainder of the network primarily through \texttt{Axitinib}, which shows positive residual associations both with the smaller \texttt{Nilotinib}--\texttt{Methotrexate} group and with the MEK-inhibitor block. Because this bridging role is inferred from the estimated partial-correlation graph, we do not interpret it as a previously established pharmacologic relationship. However, it is biologically plausible that \texttt{Axitinib} occupies a central position in the network, given its kinase inhibitory profile, which overlaps in part with that of \texttt{Nilotinib}, particularly through PDGFR and KIT-related activity \citep{Gunnarsson2017,Blay2011}. By contrast, the residual association with \texttt{Methotrexate}, an antifolate with a distinct mechanism of action, appears to be a more novel feature of the \texttt{multiVCBART} fit and warrants further investigation. The absence of strong negative edges in \Cref{fig:gdsc_omega} suggests that the dominant residual structure is one of shared co-sensitivity rather than residual antagonism. 

\begin{figure}[t]
    \centering
    \includegraphics[width=0.72\linewidth]{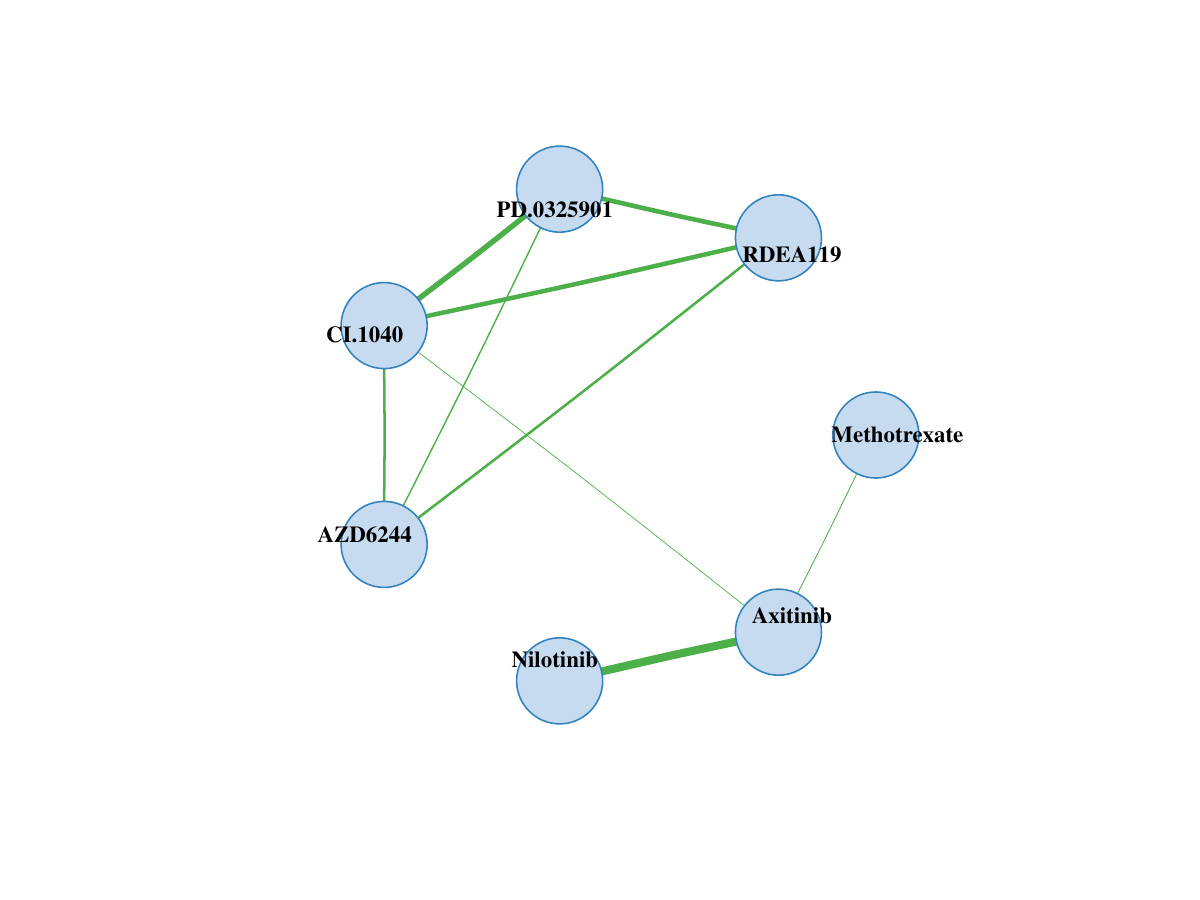}
    \caption{Estimated residual conditional dependency network among the seven GDSC drug responses.}
    \label{fig:gdsc_omega}
\end{figure}

\paragraph{Predictive assessment.} 
We carried out a 5-fold cross-validation analysis comparing \texttt{multiVCBART} with univariate \texttt{BART}, \texttt{suBART}, and \texttt{BayesSUR}. Across the five folds, \texttt{multiVCBART} achieved the lowest mean test RMSE of $\mathbf{0.874}$, indicating the best overall out-of-sample predictive performance among the four methods. The univariate \texttt{BART} fit performed very similarly, with mean test RMSE $0.88$, suggesting that flexible nonlinear regression alone already captures a substantial portion of the predictive signal in this dataset. However, the modest improvement of \texttt{multiVCBART} over univariate \texttt{BART} shows that modeling the seven drug responses jointly, while allowing covariate effects to vary with tissue context, yields an additional predictive gain. The multivariate \texttt{suBART} model performed slightly worse, with mean test RMSE $0.894$, while \texttt{BayesSUR} showed the weakest predictive accuracy, with RMSE $0.955$. Overall, the results are in favor of multivariate tree-based models compared to the more restrictive linear multivariate regression structure of \texttt{BayesSUR}.

\section{Discussion}
\label{sec:discussion}
We introduced \texttt{multiVCBART}, a multivariate varying-coefficient BART model that jointly estimates nonlinear high-dimensional mean structures and sparse residual dependence graphs. Theoretically, \Cref{thm:postconc_full} establishes optimal posterior contraction for $(\eta,\Omega)$ under a truncated prior restricted to the effective parameter sieve in high-dimensional regimes. Under the functional restricted eigenvalue condition (\Cref{assump:RE-func}), this truncated contraction rate optimally transfers to the underlying coefficient surfaces $\bm{B}(\cdot)$ and $G(\bx)$ (\Cref{cor:transformed-surface}).

Empirically, \texttt{multiVCBART} consistently outperformed competitors in predictive accuracy, interval coverage, and precision recovery across extensive simulations, particularly excelling when outcome structures are highly heterogeneous. Applied to GDSC pharmacogenomic data in \Cref{sec:gdsc_realdata}, it improved cross-validated predictions over \texttt{BayesSUR} while successfully uncovering biologically coherent conditional drug dependency networks.

Several extensions are natural. First, while we assumed independent priors across coefficient ensembles, one could instead couple the split-probability vectors across \((j,r)\) to encourage borrowing of strength when outcomes share important modifiers. Second, our current model assumes a covariate-invariant precision matrix \(\Omega\). Allowing \(\Omega\) to vary with modifiers would lead to a multivariate covariate-dependent graphical model. This would provide a novel route to learning outcome networks that evolve across subpopulations while retaining the flexible varying-coefficient mean structure.

There also remain open theoretical questions. Our contraction rate $\varepsilon^{\dagger}$ is consistent with known lower bounds for the constituent sub-problems, but a sharp joint minimax theory for nonlinear multivariate varying-coefficient regression with degree-aware sparse precision estimation is still unavailable. Likewise, stronger variable-selection guarantees for continuous shrinkage priors in multivariate tree models would further strengthen interpretability. We leave these investigations for future work.

\bibliography{refs}

\newpage
\begin{center}
{\large\textbf{Supplementary Materials}}
\end{center}

\appendix

\renewcommand{\thesection}{S\arabic{section}}
\renewcommand{\thefigure}{\thesection.\arabic{figure}}
\renewcommand{\thetable}{\thesection.\arabic{table}}
\renewcommand{\theequation}{\thesection.\arabic{equation}}
\renewcommand{\thelemma}{\thesection.\arabic{lemma}}
\renewcommand{\thetheorem}{\thesection.\arabic{theorem}}
\renewcommand{\theassumption}{\thesection.\arabic{assumption}}

This supplement provides additional technical and empirical details to support the main paper.
\Cref{sec:proofs} contains the complete mathematical proofs for our theoretical results.
\Cref{sec:multi-algorithm} gives additional details on posterior computation and implementation.
Finally, \Cref{sec:multi-experiments} presents additional experimental results and supporting analyses.

\setcounter{section}{0}
\setcounter{figure}{0}
\setcounter{equation}{0}
\setcounter{table}{0}
\setcounter{theorem}{0}
\setcounter{assumption}{0}

\section{Proofs}
\label{sec:proofs}
\subsection{Auxillary Lemmas}
\begin{lemma}\label{lem:HS-tailbound}
Let $M \ge 1$ be fixed. Assume \suppref{\ref{assum:multi-smoothness}}{(A1)} --  \suppref{\ref{assum:multi-eig}}{(A4)} and $b \mid (\lambda,\tau) \sim \mathcal{N} \left(0,M^{-1}(\tau^2 \lambda^2) \right),$ with $\lambda \sim \mathcal{C}^{+}(0,1).$ Then there exists a universal constant $C_{\textrm{HS}} > 0$ such that for all $u > 0,$
\[
\mathbb{P}(|b| > u \mid \tau) \le C_{\textrm{HS}} \frac{\tau}{u \sqrt{M}} \log \left(1+ \frac{u \sqrt{M}}{\tau} \right).
\]
\end{lemma}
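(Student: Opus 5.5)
Set $s := \tau/\sqrt{M}$, so that $b \mid \lambda,\tau \sim \mathcal N(0, s^2\lambda^2)$, and write $x := u/s = u\sqrt{M}/\tau$. The claim is then scale-free:
\[
\mathbb P(|b|>u\mid\tau) = \mathbb E_\lambda\bigl[\mathbb P(|Z|>x/\lambda)\bigr] \le C_{\rm HS}\,x^{-1}\log(1+x),
\]
with $Z\sim\mathcal N(0,1)$ and $\lambda$ having the half-Cauchy density $2/(\pi(1+\lambda^2))$ on $(0,\infty)$. The assumptions (A1)--(A4) play no role beyond fixing $M\ge1$. If $\lambda$ is truncated to $[0,n^{A_B}]$, its density is at most $2/(\pi(1+\lambda^2))$ divided by the normalizer, which is bounded below by a constant, so the same argument applies. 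I will therefore work with the untruncated $\mathcal C^+(0,1)$ law and absorb the normalizer into $C_{\rm HS}$.

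I will compute
\[
\int_0^\infty \mathbb P(|Z|>x/\lambda)\,\frac{2}{\pi(1+\lambda^2)}\,d\lambda
\]
by splitting the $\lambda$-range at $\lambda=1$ and $\lambda=x$.

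\emph{Case $x\le 2$.} Here $x^{-1}\log(1+x)\ge \log 3/2$, and the probability is at most $1$, so choosing $C_{\rm HS}\ge 2/\log 3$ suffices.

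\emph{Case $x>2$.} I split the integral into three regions.

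\begin{enumerate}
\item For $\lambda\ge x$, bound the Gaussian tail by $1$. The half-Cauchy tail gives $\mathbb P(\lambda\ge x)\le 2/(\pi x)$.
\item For $1\le\lambda\le x$, use the Chernoff/Mills bound $\mathbb P(|Z|>t)\le \min\{1,\, 2\phi(t)/t\}$, or more crudely $\mathbb P(|Z|>t)\le C/t$. This gives
\[
\int_1^x \frac{C\lambda}{x}\cdot\frac{2}{\pi\lambda^2}\,d\lambda = \frac{2C}{\pi x}\log x .
\]
This is the region that produces the logarithm.
\item For $\lambda<1$, bound the density by $2/\pi$ and use $\mathbb P(|Z|>x/\lambda)\le \lambda/x$. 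The contribution is at most $1/(\pi x)$.
\end{enumerate}

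Summing the three pieces gives a bound of order $x^{-1}(1+\log x)$, which is at most $C'x^{-1}\log(1+x)$ for $x>2$ because $\log(1+x)\ge\log 3>1$ there.

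The only delicate point is the middle region. The crude bound $\mathbb P(|Z|>t)\le 1$ there would lose a full factor and give $O(1)$ rather than $O(x^{-1}\log x)$. Using the $1/t$ tail bound together with the $\lambda^{-2}$ Cauchy decay produces exactly the $\lambda^{-1}$ integrand whose integral gives $\log x$. The remaining work is to collect constants and confirm they do not depend on $M$, $\tau$, or $u$, which holds because everything has been reduced to the single variable $x$.
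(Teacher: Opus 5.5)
Your proof is correct and follows essentially the same route as the paper: rescale to $x=u\sqrt{M}/\tau$, integrate the Mills-type bound $\mathbb{P}(|Z|>t)\lesssim 1/t$ against the half-Cauchy density for $\lambda\le x$, and use the Cauchy tail $\mathbb{P}(\lambda\ge x)\le 2/(\pi x)$ for $\lambda>x$. The paper splits only at $\lambda=s$ and integrates $\lambda\cdot 2/(\pi(1+\lambda^2))$ exactly, obtaining $\log(1+s^2)/s$ directly; your extra split at $\lambda=1$, together with the separate trivial bound for $x\le 2$, is slightly cruder but ends more cleanly, because it sidesteps the paper's closing comparison $I_2\le \log(1+s)/(s\log 2)$, which as written holds only for $s\le 1$ (for $s>1$ one must instead use $I_2\le 2/(\pi s)$ and enlarge the constant).
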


\begin{proof} This lemma is an adaptation of the Gaussian-half-Cauchy mixture tail bound following \citet[Theorem 1]{Carvalho2010}.
Let us fix $\tau > 0$ and define the rescaled variable $X \coloneqq \sqrt{M} \cdot b/\tau$. Then conditional on $\lambda,$ $X \mid \lambda \sim \mathcal{N}(0,\lambda^2) \implies X = \lambda Z, \ Z \sim \mathcal{N}(0,1),$ with $Z$ independent of $\lambda$. Hence, for $s>0,$ 
\[
\mathbb{P}(|b| > u \mid \tau) = \mathbb{P} \left(|X| > u \cdot \sqrt{M}/\tau \right).
\]
Thus, it suffices to bound $\mathbb{P}(|\lambda Z| > s)$ where $s \coloneqq u \sqrt{M}/\tau.$
Using the standard Gaussian tail inequality for $t>0,$ $$\mathbb{P}(|Z| > t) = 2 \Phi(-t) \le t^{-1} 2 \phi(t) \le \sqrt{\frac{2}{\pi}} \cdot \frac{1}{t},$$
since $\phi(t) \le \phi(0) = 1/\sqrt{2 \pi}.$ Therefore, for any $\lambda > 0,$ 
\[
\mathbb{P}(|\lambda Z| > s \mid \lambda) = \mathbb{P} \left(|Z| > s/\lambda \right) \le \min \left\{1, \sqrt{\frac{2}{\pi}} \cdot \frac{\lambda}{s} \right \}.
\]
Now, denoting the $\mathcal{C}^{+}(0,1)$ density by $f(\lambda),$ we have 
\begin{align*}
\mathbb{P}(|\lambda Z| > s)  & = \int_{0}^{\infty} \mathbb{P}(|\lambda Z|>s \mid \lambda)f(\lambda) d \lambda \\
& \le \underbrace{\int_{0}^{s} \sqrt{\frac{2}{\pi}} \frac{\lambda}{s} \cdot f(\lambda) d \lambda}_{I_1} + \underbrace{\int_{s}^{\infty} 1\cdot f(\lambda) d \lambda}_{I_2}
\end{align*}
\textbf{Bounding $I_1$.} See that on computing $I_1$ we have
\begin{align*}
    I_1 \coloneqq \int_{0}^{s} \sqrt{\frac{2}{\pi}} \frac{\lambda}{s} \cdot f(\lambda) d \lambda & \le \sqrt{\frac{2}{\pi}} \frac{1}{s} \int_{0}^{s} \lambda \cdot \frac{2}{\pi(1+\lambda^2)} d \lambda \\
    & = \sqrt{\frac{2}{\pi}} \frac{2}{\pi s} \cdot \frac{1}{2} \log(1+s^2) \\
    & = \sqrt{\frac{2}{\pi}} \frac{1}{\pi s} \log (1+s^2) \\
    & \le \frac{2}{\pi} \sqrt{\frac{2}{\pi}} \cdot \frac{1}{s} \log(1+s),
\end{align*}
where the last inequality follows from $\log(1+s^2) \le 2 \log(1+s),$ for all $s \ge 0.$

\textbf{Bounding $I_2$.} Since $\int_{s}^{\infty} f(\lambda) d \lambda = (2/\pi) \cdot \arctan(1/s),$ we have 
\[
I_2 = \frac{2}{\pi} \arctan(1/s) \le \min \left\{1, \frac{2}{\pi  s} \right\}. 
\]

It is worthwhile to note that the function $\log (1+s)/s$ is decreasing on $(0,\infty).$ Thus, we can show that 
\[
\frac{\log(1+s)}{s} \cdot \frac{1}{\log 2} \ge 1 \ge I_2.
\]

Hence, combining the above inequalities, we have
\[
\mathbb{P}(|\lambda Z | > s) \le \frac{2}{\pi} \sqrt{\frac{2}{\pi}} \frac{\log(1+s)}{s} + \frac{1}{\log 2} \frac{\log(1+s)}{s}. 
\]
Taking $C_{\textrm{HS}} \coloneqq (2/\pi)^{3/2} + (\log 2)^{-1}$ concludes the proof.
\end{proof}


\begin{lemma}
\label{lem:eff-support-correct}
Let $Q_B:=pq$ be the number of coefficient-functions $(j,r)\in[p]\times[q]$.
Assume the horseshoe hierarchy in \suppref{\ref{assum:multi-HSscale}}{(P1)}, and assume the global scale is hard-truncated at its tuning level
\(
\tau_B\ \sim\ \mathcal C^+(0,\sigma_{B,n})\,\mathbbm 1_{[0,\,C_\tau\sigma_{B,n}]},
\ \text{for a fixed }C_\tau\ge 1.
\)
Assume also that each group $(j,r)$ contains at most $L_\star$ leaves in total.
For a deterministic threshold $u>0$, define the effective support
\[
\widehat S_B(u) := \Big\{(j,r)\in[p]\times[q]:\ \max_{t,\ell}|b_{jrt\ell}|>u\Big\}.
\]
Then there exist absolute constants $c_\star>0$ and $\kappa_B\in(0,1)$ such that, for any target size
$1\le S^\star\le Q_B$, if we set
$u_n := \left(c_\star\,L_\star{\sigma_{B,n}}/{\sqrt M}\right) \cdot
\Big({eQ_B}/{S^\star}\Big)^{1+\kappa_B} \cdot
\log \Big({eQ_B}/{S^\star}\Big),$
then for all sufficiently large $n$,
\[
\Pi\big(|\widehat S_B(u_n)|\ge S^\star\big)
\le
\exp\Big\{-\kappa_B\,S^\star\log \Big(\frac{eQ_B}{S^\star}\Big)\Big\}.
\]
\end{lemma}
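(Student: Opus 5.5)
The plan is to bound the probability that a single coefficient group is "effectively active" by means of \Cref{lem:HS-tailbound}. Then, using conditional independence of the groups given $\tau_B$, I will control the count $|\widehat S_B(u_n)|$ with a binomial/Chernoff-type estimate.

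\textbf{Step 1: a per-group activation probability.} Fix $\tau_B\le C_\tau\sigma_{B,n}$, which holds surely because of the hard truncation. In group $(j,r)$ all leaves share the same local scale $\lambda_{jr}$, so they are not independent. I avoid this issue by conditioning on $\lambda_{jr}$. Given $\lambda_{jr}$, each leaf satisfies $\P(|b|>u\mid\lambda,\tau)\le\min\{1,\sqrt{2/\pi}\,\lambda\tau/(u\sqrt M)\}$. A union bound over the at most $L_\star$ leaves multiplies this by $L_\star$. I then integrate against the (truncated) half-Cauchy density exactly as in the proof of \Cref{lem:HS-tailbound}; the truncation at $n^{A_B}$ only lowers the tail, up to a factor $1/\P(\lambda\le n^{A_B})\le 2$. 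With $s:=u\sqrt M/(C_\tau\sigma_{B,n})$ this gives
\[
\pi_n:=\sup_{\tau_B\le C_\tau\sigma_{B,n}}\P\big((j,r)\in\widehat S_B(u)\mid\tau_B\big)\le C\,L_\star\,\frac{\log(1+s)}{s}.
\]

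\textbf{Step 2: plugging in $u_n$.} Set $x:=eQ_B/S^\star\ge e$. With $u=u_n$ we get $s\asymp c_\star L_\star x^{1+\kappa_B}\log x/C_\tau$. Since $\log(1+s)\lesssim \log(c_\star L_\star)+(1+\kappa_B)\log x$ and $L_\star$ is bounded or at most polylogarithmic, it follows that $\pi_n\le C' x^{-(1+\kappa_B)}/c_\star$. Choosing $c_\star$ large then makes $\pi_n\le e^{-2}x^{-(1+\kappa_B)}$.

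\textbf{Step 3: the count.} Conditionally on $\tau_B$, the indicators $\mathbbm 1\{(j,r)\in\widehat S_B(u_n)\}$ are independent across the $Q_B$ groups, because the ensembles and $\lambda_{jr}$ are independent given $\tau_B$. Each indicator is Bernoulli with mean at most $\pi_n$. Using the standard subset bound,
\[
\P\big(|\widehat S_B|\ge S^\star\mid\tau_B\big)\le\binom{Q_B}{S^\star}\pi_n^{S^\star}\le\Big(\frac{eQ_B}{S^\star}\pi_n\Big)^{S^\star}\le \big(e^{-2}x^{-\kappa_B}\big)^{S^\star}\le \exp\{-\kappa_B S^\star\log x\}.
\]
This bound is uniform in $\tau_B$, so integrating over $\tau_B$ gives the claim. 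Here $\kappa_B\in(0,1)$ is any fixed value, with $c_\star$ depending on it; this is consistent with calling them absolute constants.

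\textbf{Main obstacle.} The delicate part is Step 1: the shared local scale across leaves in a group, together with the $L_\star$ multiplicity. If $L_\star$ grows polynomially in $n$, the $\log(c_\star L_\star)$ term in $\log(1+s)$ could dominate $\log x$. In that case I would absorb it using the explicit factor $L_\star$ already present in $u_n$, because $\log(1+s)/s$ is decreasing and $s$ carries a full factor of $L_\star$. This leaves $\pi_n\lesssim x^{-(1+\kappa_B)}\log(L_\star x)/(c_\star\log x)$. That is sufficient once $\log L_\star\lesssim\log x$; otherwise the $L_\star$ growth restriction needs to be added as a condition.
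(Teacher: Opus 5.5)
Your proposal is correct and follows essentially the paper's route: a per-leaf horseshoe tail bound, a union bound over the $L_\star$ leaves, uniformity over $\tau_B\le C_\tau\sigma_{B,n}$ via monotonicity of $\log(1+x)/x$, and a binomial tail using $\binom{Q_B}{s}\le (eQ_B/s)^s$. Your single-subset bound $\binom{Q_B}{S^\star}\pi_n^{S^\star}$ is slightly cleaner than the paper's geometric sum, which produces a factor $(1-e^{-\kappa_B})^{-1}$ that the paper absorbs only by appealing to $S^\star\to\infty$. The $L_\star$ restriction you flag is shared by the paper, which simply treats $\log(2c_\star L_\star/C_\tau)$ as a fixed constant. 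Your conditioning on $\lambda_{jr}$ actually removes this restriction if you take the minimum with $1$ \emph{after} the union bound. Then $\min\{1, L_\star\sqrt{2/\pi}\,\lambda\tau/(u\sqrt M)\}$ integrates against the half-Cauchy to at most $C\log(1+s')/s'$ with $s'=u\sqrt M/(L_\star\tau)$. With $u=u_n$ the factor $L_\star$ built into $u_n$ cancels, so $s'\ge c_\star (eQ_B/S^\star)^{1+\kappa_B}\log(eQ_B/S^\star)/C_\tau$ and no growth condition on $L_\star$ is needed.
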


\begin{proof}
We fix $\tau\in(0,C_\tau\sigma_{B,n}]$ and write $\mathbb P_\tau(\cdot)$ for probability conditional on $\tau_B=\tau$. Let us consider a single leaf \( b\mid(\tau,\lambda) \sim \mathcal N \Big(0,{\tau^2\lambda^2}/{M}\Big), \ \lambda\sim\mathcal C^+(0,1) \mathbbm 1_{[0,n^{A_B}]}. \) Note that $g(\lambda) \coloneqq \P(|b| > u \mid \tau,\lambda) = 2(1-\Phi(u \sqrt{M}/(\tau \lambda)))$ is increasing in $\lambda$. 
Because the truncated law is exactly the untruncated law conditioned on $\{\lambda \le n^{A_B}\},$
\[
\P_{\textrm{trunc}} (|b| > u \mid \tau) = \E[g(\lambda) \mid \lambda \le n^{A_B} ] \le \E[g(\lambda)] = \P_{\textrm{untrunc}} (|b| > u \mid \tau).
\]

Hence, truncating $\lambda_{jr}\le n^{A_B}$ only decreases tails, so we may use the untruncated Gaussian--half-Cauchy
mixture tail bound following \Cref{lem:HS-tailbound}. There exists a universal constant $C_{\textrm HS}>0$ such that for all $u>0$,
\begin{equation}\label{eq:hs-tail-cond-global}
\mathbb P_\tau(|b|>u)\ \le\ C_{\textrm HS} \frac{\tau}{u\sqrt M}
\log \Big(1+\frac{u\sqrt M}{\tau}\Big).
\end{equation}
For a single group $(j,r)$, we define
\(
A_{jr}(u) := \Big\{\max_{t,\ell}|b_{jrt\ell}|>u\Big\}.
\)
By a union bound over at most $L_\star$ leaves in the group and \eqref{eq:hs-tail-cond-global},
\begin{equation}\label{eq:pgroup-tau}
p_u(\tau)
:=\mathbb P_\tau(A_{jr}(u))
 \le
L_\star C_{\textrm HS} \frac{\tau}{u\sqrt M} 
\log \Big(1+\frac{u\sqrt M}{\tau}\Big).
\end{equation}
Conditional on $\tau_B=\tau$, the collections $\{(\lambda_{jr},\{b_{jrt\ell}\}_{t,\ell})\}$ are i.i.d.\ across
$(j,r)$, hence the indicators $\{\mathbbm 1(A_{jr}(u))\}_{(j,r)}$ are i.i.d. Bernoulli$(p_u(\tau))$ and
\(
|\widehat S_B(u)|\ \big|\ \tau_B=\tau\ \sim\ {\textrm Bin}\big(Q_B, p_u(\tau)\big).
\)

Setting $u=u_n$, we define $x:=u_n\sqrt M/\tau$. Then
\(
{\tau}/{(u_n\sqrt M)}\log \Big(1+{u_n\sqrt M}/{\tau}\Big)
={\log(1+x)}/{x}=:g(x),
\)
where $g$ is decreasing on $(0,\infty)$ (indeed $g'(x)=\{{x}/{(1+x)}-\log(1+x)\}/x^2\le 0$).

Since $\tau\le C_\tau\sigma_{B,n}$,
\[
x=\frac{u_n\sqrt M}{\tau}
\ \ge\
\frac{u_n\sqrt M}{C_\tau\sigma_{B,n}}
=
\frac{c_\star L_\star}{C_\tau}\Big(\frac{eQ_B}{S^\star}\Big)^{1+\kappa_B}
\log \Big(\frac{eQ_B}{S^\star}\Big)
=:x_{\min}.
\]
By monotonicity of $g$, $g(x)\le g(x_{\min})=\log(1+x_{\min})/x_{\min}$.

Let $y:=eQ_B/S^\star\ge e$ and $L:=\log y\ge 1$. Then
\(
x_{\min}=({c_\star L_\star}/{C_\tau}) y^{1+\kappa_B}L.
\)
Choosing $c_\star$ large enough ensures $x_{\min}\ge 1$ for all $y\ge e$.
For $x_{\min}\ge 1$, $\log(1+x_{\min})\le \log(2x_{\min})$, hence
\[
\log(1+x_{\min})
\le
\log\!\Big(2\frac{c_\star L_\star}{C_\tau}\Big)+(1+\kappa_B)\log y+\log L.
\]
Since $L\ge 1$, we have $\log L\le L$; also $\log(2c_\star L_\star/C_\tau)\le C\,L$ because $L\ge 1$
and the left-hand side is a fixed constant. Therefore, for some universal $C_0<\infty$,
\(
\log(1+x_{\min}) \le  C_0 L.
\)
Dividing by $x_{\min}=(c_\star L_\star/C_\tau)\,y^{1+\kappa_B}L$ gives
\[
g(x_{\min})=\frac{\log(1+x_{\min})}{x_{\min}}
\le
\frac{C_0}{c_\star} \frac{C_\tau}{L_\star} y^{-(1+\kappa_B)}
=
\frac{C_0C_\tau}{c_\star L_\star}\Big(\frac{S^\star}{eQ_B}\Big)^{1+\kappa_B}.
\]
Plugging this bound into \eqref{eq:pgroup-tau} yields, uniformly for all $\tau\in(0,C_\tau\sigma_{B,n}]$,
\[
p_{u_n}(\tau)
\le
L_\star C_{\textrm HS} g(x_{\min})
\le
\frac{C_{\textrm HS}C_0C_\tau}{c_\star}\Big(\frac{S^\star}{eQ_B}\Big)^{1+\kappa_B}.
\]
Choose $c_\star$ sufficiently large so that $(C_{\textrm HS}C_0C_\tau)/c_\star\le 1$. Then
\begin{equation}\label{eq:pgroup-final}
p_{u_n}(\tau) \le \Big(\frac{S^\star}{eQ_B}\Big)^{1+\kappa_B},
\quad \forall \tau\in(0,C_\tau\sigma_{B,n}].
\end{equation}

Conditional on $\tau_B=\tau$, $|\widehat S_B(u_n)|\sim{\textrm Bin}(Q_B,p_{u_n}(\tau))$, hence
\[
\mathbb P_\tau\big(|\widehat S_B(u_n)|\ge S^\star\big)
\le \sum_{s=S^\star}^{Q_B}\binom{Q_B}{s}p_{u_n}(\tau)^s
\le \sum_{s=S^\star}^{Q_B}\Big(\frac{eQ_B}{s} p_{u_n}(\tau)\Big)^s,
\]
where we used $\binom{Q_B}{s}\le (eQ_B/s)^s$.
Using \eqref{eq:pgroup-final} and $s\ge S^\star$,
\[
\Big(\frac{eQ_B}{s}\,p_{u_n}(\tau)\Big)^s
\le
\Big(\frac{eQ_B}{s}\Big)^s\Big(\frac{S^\star}{eQ_B}\Big)^{(1+\kappa_B)s}
=
\Big(\frac{S^\star}{s}\Big)^s\Big(\frac{S^\star}{eQ_B}\Big)^{\kappa_B s}
\le
\Big(\frac{S^\star}{eQ_B}\Big)^{\kappa_B s}.
\]
Therefore,
\[
\mathbb P_\tau\big(|\widehat S_B(u_n)|\ge S^\star\big)
\le
\sum_{s=S^\star}^\infty \Big(\frac{S^\star}{eQ_B}\Big)^{\kappa_B s}
=
\frac{\Big(\frac{S^\star}{eQ_B}\Big)^{\kappa_B S^\star}}{1-\Big(\frac{S^\star}{eQ_B}\Big)^{\kappa_B}}.
\]
Since $S^\star\le Q_B$ implies $S^\star/(eQ_B)\le 1/e$, we have
$\big(\frac{S^\star}{eQ_B}\big)^{\kappa_B}\le e^{-\kappa_B}$ and thus
\[
\mathbb P_\tau\big(|\widehat S_B(u_n)|\ge S^\star\big)
\le
\frac{1}{1-e^{-\kappa_B}}\,
\exp\Big\{-\kappa_B S^\star\log \Big(\frac{eQ_B}{S^\star}\Big)\Big\}.
\]
The factor $(1-e^{-\kappa_B})^{-1}$ is an absolute constant.
In our use of this lemma, $S^\star=S^\star_n\to\infty$ so the exponent diverges; hence for all sufficiently large $n$
we can absorb this constant into the exponential by shrinking $\kappa_B$,
yielding the stated bound.

The above bound holds uniformly for all $\tau\in(0,C_\tau\sigma_{B,n}]$, which is the support of $\tau_B$.
Integrating over $\tau_B$ gives the same bound for $\Pi(|\widehat S_B(u_n)|\ge S^\star)$.

\end{proof}


\begin{lemma}
\label{lem:GHS-effective-support-correct}
Let $Q=\binom{q}{2}$ and consider the graphical horseshoe hierarchy on the
off-diagonals as in (P2), but \emph{before} imposing the SPD restriction. Assume the global scale is truncated at its tuning level:
\(\tau_\Omega \sim \mathcal C^+(0,\sigma_{\Omega,n})\,\mathbbm 1_{[0,\,C_\tau\sigma_{\Omega,n}]}\) for a fixed constant \(C_\tau\ge 1.\) 
For a deterministic threshold $t>0$, define the effective edge set
\[
\widehat S_\Omega(t)\ :=\ \big\{(k,k'): k<k',\ |\omega_{kk'}|>t\big\}.
\]
Fix any target sparsity $1\le s\le Q$ and any specific edge set $S\subset\{(k,k'):k<k'\}$ with $|S|=s$.
Let $d_k(S)$ be the degrees of $S$ so that $\sum_{k=1}^q d_k(S)=2s$.
Then there exist absolute constants $c_\star>0$ and $\kappa_\Omega\in(0,1)$ such that, if we set
\(
t_n  := c_\star({\sigma_{\Omega,n}Q}/{s})\log q,
\)
then
\[
\Pi\big(S \subseteq \widehat S_\Omega(t_n)\big)
 \le
\exp\Big\{-\kappa_\Omega\,s\log \left(\frac{e Q}{s}\right)\Big\}
 \le
\exp\Big\{-\frac{\kappa_\Omega}{4}\sum_{k=1}^q d_k(S)\log\left(\frac{e q}{d_k(S)}\right)\Big\}.
\]
\end{lemma}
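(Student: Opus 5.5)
I will follow the template of \Cref{lem:eff-support-correct}, adapted to a fixed edge set rather than a count. First, condition on $\tau_\Omega=\tau\in(0,C_\tau\sigma_{\Omega,n}]$. Before the SPD restriction, the pairs $(\lambda_{\Omega,kk'},\omega_{kk'})$ are i.i.d.\ across $k<k'$ given $\tau$. Hence
\[
\Pi_\tau\big(S\subseteq \widehat S_\Omega(t_n)\big)=\prod_{(k,k')\in S}\Pi_\tau\big(|\omega_{kk'}|>t_n\big)=p_{t_n}(\tau)^s .
\]
To bound the single-edge tail, I note that truncating $\lambda_{\Omega,kk'}\le n^{A_\Omega}$ only decreases tails, by the same monotone-conditioning argument used in \Cref{lem:eff-support-correct}. \Cref{lem:HS-tailbound} with $M=1$ then gives
\[
p_{t_n}(\tau)\le C_{\mathrm{HS}}\,g(x),\qquad g(x)=\frac{\log(1+x)}{x},\qquad x=\frac{t_n}{\tau}.
\]
Since $g$ is decreasing and $\tau\le C_\tau\sigma_{\Omega,n}$, we get $x\ge x_{\min}:=(c_\star/C_\tau)(Q/s)\log q$.

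Next I control $g(x_{\min})$. Set $y:=eQ/s\ge e$. Since $Q\le q^2$, we have $\log(1+x_{\min})\le C_0\log q$ for a constant $C_0$ once $c_\star$ is fixed, and therefore
\[
p_{t_n}(\tau)\le \frac{C_{\mathrm{HS}}C_0C_\tau e}{c_\star}\,y^{-1}.
\]
This gives only a per-edge probability of order $y^{-1}$, so the product is $\exp\{-s\log y+O(s)\}$. Choosing $c_\star$ large makes the constant at most $1$, hence $p\le y^{-1}$. The product is then at most $\exp\{-s\log(eQ/s)\}$, which already gives the first inequality with $\kappa_\Omega=1$, and a fortiori with any $\kappa_\Omega\in(0,1)$. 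If the $\log q$ absorption is too loose when $s$ is close to $Q$, I will fall back on taking $c_\star$ larger, since $\log(eQ/s)\ge 1$ and $\log q$ dominates the constant. This bound holds uniformly in $\tau$ on the support of $\tau_\Omega$, so integrating over $\tau_\Omega$ preserves it.

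The second inequality is purely combinatorial and is the step I expect to need the most care. I must show that
\[
\sum_k d_k\log\frac{eq}{d_k}\le 4s\log\frac{eQ}{s}
\]
whenever $\sum_k d_k=2s$. The map $d\mapsto d\log(eq/d)$ is concave, so by Jensen's inequality over the at most $q$ nodes,
\[
\sum_k d_k\log\frac{eq}{d_k}\le 2s\log\frac{eq^2}{2s}.
\]
Since $Q=q(q-1)/2\ge q^2/4$ for $q\ge2$, we have $eq^2/(2s)\le 2eQ/s\le (eQ/s)^2$, using $eQ/s\ge e\ge 2$. This yields the bound $4s\log(eQ/s)$, which gives the stated factor $\kappa_\Omega/4$.
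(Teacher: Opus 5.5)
Your proposal is correct and follows the paper's proof essentially step for step. You condition on $\tau_\Omega$, use the product structure of the unrestricted prior to get $p_{t_n}(\tau)^s$, bound the single-edge tail uniformly over $\tau\le C_\tau\sigma_{\Omega,n}$ via \Cref{lem:HS-tailbound} with $M=1$, and pass to the degree form by Jensen on $x\mapsto x\log(eq/x)$. Your explicit chain $eq^2/(2s)\le 2eQ/s\le (eQ/s)^2$ is a cleaner replacement for the paper's ``absorb $s\log C_2$'' step, and the one point to state carefully is that your $C_0$ grows like $\log c_\star$ (the paper's does too); this is harmless, since $c_\star$ can still be chosen large enough that $C_{\mathrm{HS}}C_0C_\tau e/c_\star\le 1$.
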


\begin{proof}
Throughout, $\Pi(\cdot)$ denotes the product prior on $(\omega_{kk'},\lambda_{kk'},\tau_\Omega)$
(with truncations), i.e. without restricting to $\{\Omega\succ0\}$.
Fix $\tau\in(0,C_\tau\sigma_{\Omega,n}]$ and consider a single off-diagonal: \(\omega\mid(\tau,\lambda)\sim \mathcal N(0,\tau^2\lambda^2)\), with \(\lambda\sim\mathcal C^+(0,1)\mathbbm 1_{[0,n^{A_\Omega}]}.\)
Let $p_t(\tau):=\P(|\omega|>t_n\mid \tau_\Omega=\tau)$.

Since $\tau\le C_\tau\sigma_{\Omega,n}$ and $t_n=c_\star(\sigma_{\Omega,n}Q/s)\log q$,
\[
\frac{\tau}{t_n} \le \frac{C_\tau\sigma_{\Omega,n}}{c_\star\sigma_{\Omega,n}}\cdot
\frac{s}{Q\log q} = \frac{C_\tau}{c_\star}\cdot\frac{s}{Q\log q}.
\]
Thus, we have
\begin{align*}
\log \Big(1+\frac{t_n}{\tau}\Big)
& \le 
\log\Big(2\frac{t_n}{\tau}\Big) \
\\
& \le \log \left(2 \frac{c_{\star}}{C_{\tau}} \frac{Q}{s} \log q \right) \\
& = \log \left(2 \frac{c_{\star}}{C_{\tau}} \right) + \log \left(\frac{Q}{s} \right) + \log \log q \le C_0 \log q.
\end{align*}
for a universal constant $C_0 = \log (2 c_{\star}/C_{\tau}) + 3>0$, using $Q\le q^2$ and absorbing $\log(c_\star/C_\tau)$ into constants, and assuming $q \ge 3$.
Plugging into \Cref{lem:HS-tailbound} yields the uniform bound
\begin{equation}\label{eq:pt-unif}
p_t(\tau) \le C_{\textrm HS} \frac{C_\tau}{c_\star}\cdot\frac{s}{Q\log q}\cdot C_0\log q
\ =\ \frac{C_1}{c_\star}\cdot\frac{s}{Q},
\quad \forall \tau\in(0,C_\tau\sigma_{\Omega,n}],
\end{equation}
where $C_1:=C_{\textrm HS}C_\tau C_0$.
Choose $c_\star\ge 4eC_1$. Then \eqref{eq:pt-unif} implies
\begin{equation}\label{eq:pt-small}
p_t(\tau) \le \frac{s}{4eQ}
 \le\ \Big(\frac{s}{eQ}\Big)^{\kappa_\Omega}
\ \forall \tau\in(0,C_\tau\sigma_{\Omega,n}],
\end{equation}
for some absolute $\kappa_\Omega\in(0,1)$ (e.g. any $\kappa_\Omega\le 1$ works since $(s/(eQ))^{\kappa_\Omega}\ge s/(eQ)$).

Conditional on $\tau_\Omega=\tau$, the off-diagonals are independent across $(k,k')$.
Therefore, for a fixed $S$ with $|S|=s$,
\[
\mathbb{P}\big(S\subseteq \widehat S_\Omega(t_n)\ \big|\ \tau_\Omega=\tau\big)
=\prod_{(k,k')\in S}\mathbb{P}\big(|\omega_{kk'}|>t_n\ \big|\ \tau_\Omega=\tau\big)
=p_t(\tau)^s.
\]
Using the uniform bound \eqref{eq:pt-small} and integrating over $\tau_\Omega$,
\begin{align*}
\Pi\big(S\subseteq \widehat S_\Omega(t_n)\big)
=\mathbb E_{\tau_\Omega}\big[p_t(\tau_\Omega)^s\big]
 \le \Big(\sup_{\tau\in(0,C_\tau\sigma_{\Omega,n}]} p_t(\tau)\Big)^s
 & \le \Big(\frac{s}{eQ}\Big)^{\kappa_\Omega s} \\
 & =\exp\Big\{-\kappa_\Omega\,s\log\Big(\frac{eQ}{s}\Big)\Big\},
\end{align*}
which proves the first inequality.

Let $d_k(S)$ be the degrees of $S$ so that $\sum_k d_k(S)=2s$.
The function $\phi(x):=x\log(eq/x)$ is concave on $(0,\infty)$, hence Jensen's inequality yields
\[
\sum_{k=1}^q d_k(S)\log \Big(\frac{eq}{d_k(S)}\Big)
 \le q\Big(\frac{2s}{q}\Big)\log \Big(\frac{eq}{2s/q}\Big)
=2s \log \Big(\frac{eq^2}{2s}\Big).
\]
Since $Q=q(q-1)/2\asymp q^2/2$, there is a universal constant $C_2\ge 1$ such that
$\frac{eq^2}{2s}\le \frac{C_2eQ}{s}$ for all $q\ge 2$, so
\[
\sum_{k=1}^q d_k(S)\log \Big(\frac{eq}{d_k(S)}\Big)
 \le\ 2s \Big[\log \Big(\frac{eQ}{s}\Big)+\log C_2\Big].
\]
Rearranging gives
\[
s\log \Big(\frac{eQ}{s}\Big)
 \ge
\frac12\sum_{k=1}^q d_k(S)\log \Big(\frac{eq}{d_k(S)}\Big)\ -\ s\log C_2.
\]
Plugging into the first inequality and absorbing the additive $s\log C_2$ into the leading constant yields
\[
\Pi\big(S\subseteq \widehat S_\Omega(t_n)\big)
 \le
\exp\Big\{-\frac{\kappa_\Omega}{4}\sum_{k=1}^q d_k(S)\log \Big(\frac{eq}{d_k(S)}\Big)\Big\}.
\]
This proves the second inequality.

\textbf{Remark on the SPD restriction.}
\Cref{lem:ghs-smallball-continuous} in our proof handles the SPD restriction
in the small-ball direction. For the present upper tail bound, the renormalized
restriction to $\{\Omega\succ0\}$ can, in principle, inflate probabilities by a factor
$1/\Pi_{\textrm{prod}}(\Omega\succ0)$, so it is standard to apply this lemma under the underlying
product measure, and enforce $\Omega\succ0$ separately in the sieve definition.
\end{proof}
\begin{lemma}\label{lem:pd-correct}
If $\|\Delta_\Omega\|_F\le \underline\lambda/4$, then $\Omega=\Omega_0+\Delta_\Omega\succ0$ and
\[
\lambda_{\min}(\Omega)\ge \underline\lambda/2,\qquad
\|\Omega^{-1}\|_{\textrm{op}}\le 2/\underline\lambda,\qquad
\|\Omega^{-1}-\Omega_0^{-1}\|_F\le \frac{2}{\underline\lambda^2}\,\|\Delta_\Omega\|_F.
\]
\end{lemma}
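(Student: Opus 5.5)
The plan is to combine Weyl's inequality with the second resolvent identity. Throughout, $\Delta_\Omega$ is symmetric, since both $\Omega$ and $\Omega_0$ are symmetric.

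\textbf{Step 1: positivity and the eigenvalue bound.} Because $\|\Delta_\Omega\|_{\mathrm{op}}\le\|\Delta_\Omega\|_F\le\underline\lambda/4$, Weyl's inequality together with \Cref{assum:multi-eig} gives
\[
\lambda_{\min}(\Omega)\ \ge\ \lambda_{\min}(\Omega_0)-\|\Delta_\Omega\|_{\mathrm{op}}\ \ge\ \underline\lambda-\underline\lambda/4\ =\ 3\underline\lambda/4\ \ge\ \underline\lambda/2>0.
\]
Hence $\Omega\succ0$. This also yields the weaker stated bound $\lambda_{\min}(\Omega)\ge\underline\lambda/2$.

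\textbf{Step 2: operator-norm bound on the inverse.} Since $\Omega$ is symmetric positive definite, $\|\Omega^{-1}\|_{\mathrm{op}}=1/\lambda_{\min}(\Omega)\le 2/\underline\lambda$.

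\textbf{Step 3: perturbation of the inverse.} I will use the identity
\[
\Omega^{-1}-\Omega_0^{-1}=-\Omega^{-1}\Delta_\Omega\Omega_0^{-1},
\]
which follows from $\Omega^{-1}(\Omega_0-\Omega)\Omega_0^{-1}=\Omega^{-1}-\Omega_0^{-1}$ and $\Omega_0-\Omega=-\Delta_\Omega$. Applying the mixed-norm inequality $\|ABC\|_F\le\|A\|_{\mathrm{op}}\|B\|_F\|C\|_{\mathrm{op}}$ gives
\[
\|\Omega^{-1}-\Omega_0^{-1}\|_F\ \le\ \frac{2}{\underline\lambda}\cdot\|\Delta_\Omega\|_F\cdot\frac{1}{\underline\lambda}\ =\ \frac{2}{\underline\lambda^2}\,\|\Delta_\Omega\|_F,
\]
where the last factor uses $\|\Omega_0^{-1}\|_{\mathrm{op}}\le 1/\underline\lambda$.

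There is no genuine obstacle here. The only points needing care are the chain $\|\cdot\|_{\mathrm{op}}\le\|\cdot\|_F$ and the mixed-norm submultiplicativity. The latter I would justify column by column: $\|BC\|_F\le\|B\|_F\|C\|_{\mathrm{op}}$ by applying the operator-norm bound to the rows of $B$, and $\|A B\|_F\le\|A\|_{\mathrm{op}}\|B\|_F$ by applying it to the columns of $B$.
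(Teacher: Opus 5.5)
Your proof is correct and follows the paper's argument essentially line for line. Both use Weyl's inequality with $\|\Delta_\Omega\|_{\mathrm{op}}\le\|\Delta_\Omega\|_F$ for positivity and the eigenvalue bound, then $\|\Omega^{-1}\|_{\mathrm{op}}=1/\lambda_{\min}(\Omega)$, and then the resolvent identity $\Omega^{-1}-\Omega_0^{-1}=-\Omega^{-1}\Delta_\Omega\Omega_0^{-1}$ together with $\|ABC\|_F\le\|A\|_{\mathrm{op}}\|B\|_F\|C\|_{\mathrm{op}}$; your observation that Weyl actually gives $3\underline\lambda/4$ and your row/column justification of the mixed-norm inequality are small but welcome additions of detail.
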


\begin{proof}
By Weyl's inequality,
$\lambda_{\min}(\Omega)\ge \lambda_{\min}(\Omega_0)-\|\Delta_\Omega\|_{\textrm{op}}
\ge \underline\lambda-\|\Delta_\Omega\|_F\ge \underline\lambda/2$,
so $\Omega\succ0$ and $\|\Omega^{-1}\|_{\textrm{op}} = \lambda_{\max}(\Omega^{-1}) = 1/\lambda_{\min}(\Omega)\le 2/\underline\lambda$.
Finally,
$\Omega^{-1}-\Omega_0^{-1}=\Omega^{-1}(\Omega_0-\Omega)\Omega_0^{-1}=-\Omega^{-1}\Delta_\Omega\Omega_0^{-1}$,
hence
$\|\Omega^{-1}-\Omega_0^{-1}\|_F\le \|\Omega^{-1}\|_{\textrm{op}}\|\Delta_\Omega\|_F\|\Omega_0^{-1}\|_{\textrm{op}}
\le (2/\underline\lambda)(1/\underline\lambda)\|\Delta_\Omega\|_F$.
\end{proof}

\begin{lemma}\label{lem:taylor-correct}
Let $E:=\Omega_0^{-1/2}\Delta_\Omega\,\Omega_0^{-1/2}$.
If $\|E\|_{\textrm{op}}\le\tfrac12$, then for $\phi(x) = x-1-\log x$:
\[
\mathrm{tr}(\Omega\Omega_0^{-1})-q-\log\det(\Omega\Omega_0^{-1})
=\sum_{j=1}^q \phi(1+\lambda_j(E))
\le 2\,\|E\|_F^2
\le \frac{2}{\underline\lambda^{2}}\|\Delta_\Omega\|_F^2,
\]
where $\{\lambda_j(E)\}$ are the eigenvalues of $E$.
\end{lemma}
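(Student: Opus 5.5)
The identity is a similarity argument, and the two inequalities follow from a scalar bound on $\phi$ together with a Frobenius estimate for $E$.

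First I will write $\Omega = \Omega_0 + \Delta_\Omega = \Omega_0^{1/2}(I+E)\Omega_0^{1/2}$. Then $\Omega\Omega_0^{-1} = \Omega_0^{1/2}(I+E)\Omega_0^{-1/2}$ is similar to $I+E$. Since $E$ is symmetric with real eigenvalues $\lambda_j(E)$, this gives
\[
\mathrm{tr}(\Omega\Omega_0^{-1})=q+\sum_j\lambda_j(E),
\qquad
\log\det(\Omega\Omega_0^{-1})=\sum_j\log(1+\lambda_j(E)).
\]
The hypothesis $\|E\|_{\textrm{op}}\le 1/2$ makes every $1+\lambda_j(E)\ge 1/2>0$, so the logarithms are well defined and $\Omega\succ0$. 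Subtracting the two expressions gives the identity $\sum_j \phi(1+\lambda_j(E))$.

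Second, I need the scalar bound $\phi(1+x)=x-\log(1+x)\le 2x^2$ for $|x|\le 1/2$. I will prove it by Taylor's theorem with Lagrange remainder: $\phi(1)=\phi'(1)=0$ and $\phi''(y)=1/y^2$, so
\[
\phi(1+x)=\frac{x^2}{2\xi^2}\quad\text{for some }\xi\text{ between }1\text{ and }1+x.
\]
Since $\xi\ge 1/2$, we get $1/(2\xi^2)\le 2$, hence $\phi(1+x)\le 2x^2$. Summing over $j$ and using $\sum_j\lambda_j(E)^2=\|E\|_F^2$ (valid because $E$ is symmetric) yields the middle inequality.

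Finally, I bound $\|E\|_F$ with the standard inequality $\|ABC\|_F\le\|A\|_{\textrm{op}}\|B\|_F\|C\|_{\textrm{op}}$:
\[
\|E\|_F\le\|\Omega_0^{-1/2}\|_{\textrm{op}}^2\,\|\Delta_\Omega\|_F=\lambda_{\min}(\Omega_0)^{-1}\|\Delta_\Omega\|_F\le\underline\lambda^{-1}\|\Delta_\Omega\|_F,
\]
using \Cref{assum:multi-eig}. Squaring and multiplying by $2$ gives the last bound $2\underline\lambda^{-2}\|\Delta_\Omega\|_F^2$. No step is a real obstacle. The only points needing care are using similarity, rather than symmetry of $\Omega\Omega_0^{-1}$, to reduce to the eigenvalues of $E$, and tracking the constant in the remainder so that it comes out as exactly $2$.
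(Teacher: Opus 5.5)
Your proof is correct and follows the same route as the paper: the scalar Taylor bound $\phi(1+x)\le 2x^2$ for $|x|\le 1/2$, summation using $\sum_j\lambda_j(E)^2=\|E\|_F^2$, and the estimate $\|E\|_F\le\|\Omega_0^{-1}\|_{\textrm{op}}\|\Delta_\Omega\|_F\le\underline\lambda^{-1}\|\Delta_\Omega\|_F$. You also spell out two things the paper leaves implicit: the similarity $\Omega\Omega_0^{-1}\sim I+E$ that justifies the first identity, and the explicit Lagrange remainder that yields the constant $2$.
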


\begin{proof}
For $|u|\le 1/2$, and $\phi(u)=u-1-\log u,$ Taylor's theorem implies $\phi(1+u)\le 2u^2$.
Summing gives $\sum_j\phi(1+\lambda_j(E))\le 2\sum_j\lambda_j(E)^2=2\|E\|_F^2$.
Finally, $\|E\|_F\le \|\Omega_0^{-1}\|_{\textrm{op}}\|\Delta_\Omega\|_F\le \underline\lambda^{-1}\|\Delta_\Omega\|_F$.
\end{proof}
\begin{lemma}\label{lem:block-metric-correct}
Let $\Theta_\ell=(\eta_\ell,\Omega_\ell)$ for $\ell=1,2$, and write $\Sigma_\ell:=\Omega_\ell^{-1}$.
For each $i$, let $f_{\Theta_\ell,i}:=\mathcal N_q(\eta_{\ell,i},\Sigma_\ell)$ and define
\[
H^2(\Theta_1,\Theta_2):=\frac1n\sum_{i=1}^n h^2\!\bigl(f_{\Theta_1,i},f_{\Theta_2,i}\bigr),
\]
where $h(\cdot,\cdot)$ is Hellinger distance. Let $\widetilde{\mathcal F}_n$ be any set on which
\[
\|\Omega\|_{\textrm{op}}\le R_{\Omega,n},\quad \|\Omega^{-1}\|_{\textrm{op}}\le \overline R_{\Omega,n},
\quad
\max_{1\le i\le n}\|\eta_i\|_2\le R_{\eta,n},
\]
with $(R_{\Omega,n},\overline R_{\Omega,n},R_{\eta,n})$ growing at most polynomially in $n$.
Then there exist constants $c_\star=C_\star(n)>0$ and $C^\star=C^\star(n)>0$
such that for all $\Theta_1,\Theta_2\in\widetilde{\mathcal F}_n$,
\[
c_\star\Big\{\|\Omega_1-\Omega_2\|_F^2+\|\eta_1-\eta_2\|_{2,n}^2\Big\}
 \le
H^2(\Theta_1,\Theta_2)
\le
C^\star\Big\{\|\Omega_1-\Omega_2\|_F^2+\|\eta_1-\eta_2\|_{2,n}^2\Big\},
\]
and moreover $C^\star\le n^{C}$ and $c_\star\ge n^{-C}$ for some fixed $C>0$.
\end{lemma}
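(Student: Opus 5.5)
The plan is to work directly with the closed-form Hellinger affinity between two $q$-variate Gaussians. Writing $\bar\Sigma:=(\Sigma_1+\Sigma_2)/2$ and $\delta_i:=\eta_{1,i}-\eta_{2,i}$, we have
\[
1-h^2(f_{\Theta_1,i},f_{\Theta_2,i}) = A\cdot B_i,\qquad A:=\frac{\det(\Sigma_1)^{1/4}\det(\Sigma_2)^{1/4}}{\det(\bar\Sigma)^{1/2}},\quad B_i:=\exp\Big\{-\tfrac18\,\delta_i^\top\bar\Sigma^{-1}\delta_i\Big\},
\]
with the convention $h^2=1-\int\sqrt{fg}$. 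The covariance factor $A\in(0,1]$ does not depend on $i$, while $B_i\in(0,1]$ carries the mean discrepancy. We then use the identity $1-AB_i=(1-A)+A(1-B_i)$ and treat the two pieces separately. Both are nonnegative, so
\[
\max\{1-A,\ A(1-B_i)\}\le h_i^2\le (1-A)+(1-B_i).
\]

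\textbf{Covariance term.} Let $\lambda_j$ be the eigenvalues of $\Sigma_2^{-1/2}\Sigma_1\Sigma_2^{-1/2}$, which is similar to $\Omega_2\Sigma_1$. Then
\[
-\log A=\sum_j \psi(\lambda_j),\qquad \psi(x):=\tfrac12\log\tfrac{1+x}{2}-\tfrac14\log x\ \ge 0 .
\]
On the sieve these eigenvalues lie in $[\,(R_{\Omega,n}\overline R_{\Omega,n})^{-1},\,R_{\Omega,n}\overline R_{\Omega,n}\,]$, a polynomially bounded interval. On such an interval $\psi(x)\asymp(x-1)^2$, with two-sided constants that are at worst polynomial in the endpoints, since $\psi(x)\sim (\log x)^2/16$ near $1$ and grows only logarithmically far away.

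Next we bound $\sum_j(\lambda_j-1)^2=\|\Sigma_2^{-1/2}(\Sigma_1-\Sigma_2)\Sigma_2^{-1/2}\|_F^2$ above and below by $\|\Omega_1-\Omega_2\|_F^2$. The tool is $\Sigma_1-\Sigma_2=-\Sigma_1(\Omega_1-\Omega_2)\Sigma_2$ together with the operator-norm envelopes $R_{\Omega,n}$ and $\overline R_{\Omega,n}$, each application costing a polynomial factor. Also $-\log A\le q\cdot\mathrm{poly}(n)$. To pass between $-\log A$ and $1-A$ we use $1-e^{-x}\le x$ for the upper bound and $1-e^{-x}\ge \min(x,1)/2$ for the lower bound.

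\textbf{Mean term.} The eigenvalues of $\bar\Sigma^{-1}$ lie in $[\,R_{\Omega,n}^{-1},\,\overline R_{\Omega,n}^{-1}\,]^{-1}$-type bounds: more precisely $\bar\Sigma\preceq \overline R_{\Omega,n} I$ and $\bar\Sigma\succeq R_{\Omega,n}^{-1}I$. Hence $\delta_i^\top\bar\Sigma^{-1}\delta_i\asymp\|\delta_i\|_2^2$ up to polynomial factors. The envelope $\|\delta_i\|_2\le 2R_{\eta,n}$ makes the exponent at most polynomial, so by concavity of $1-e^{-x}$ on $[0,x_{\max}]$ we get $1-B_i\ge (1-e^{-x_{\max}})x/x_{\max}$. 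Thus $1-B_i$ is sandwiched between polynomial multiples of $\|\delta_i\|_2^2$. Averaging over $i$ gives $\|\eta_1-\eta_2\|_{2,n}^2$.

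\textbf{Combining.} The upper bound follows immediately from $h_i^2\le(1-A)+(1-B_i)$. For the lower bound we average the two-sided inequality over $i$ and use $\max\{a,b\}\ge (a+b)/2$, giving
\[
H^2\ge\tfrac12\Big[(1-A)+A\,\tfrac1n\textstyle\sum_i(1-B_i)\Big].
\]

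The main obstacle is the factor $A$ multiplying the mean term in this lower bound. We need $A\ge n^{-C}$. Since $-\log A\le q\,\max_j\psi(\lambda_j)\lesssim q\log n$, this only gives $A\ge n^{-Cq}$, which is polynomial only if $q$ is treated as fixed or as absorbed into the exponent.

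My first attempt at a fix is a case split. If $1-A\ge 1/2$, then $H^2\ge 1/4$, and since $\|\Omega_1-\Omega_2\|_F^2+\|\eta_1-\eta_2\|_{2,n}^2$ is polynomially bounded on the sieve, the lower bound holds with $c_\star=n^{-C}$. Otherwise $A\ge 1/2$ and the mean term enters with constant $1/2$. This case split removes any dependence of $c_\star$ on $q$ beyond the polynomial envelopes, and it also handles the saturation of $1-e^{-x}$ in the covariance term in the same way.
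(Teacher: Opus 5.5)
Your proof is correct. It has the same skeleton as the paper's: the Gaussian affinity split $1-h_i^2=AB_i$, the upper bound $h_i^2\le(1-A)+(1-B_i)$, and polynomial envelopes to linearize $1-e^{-x}$. Where you differ is the covariance lower bound. The paper bounds $1-A$ below by $\|\Sigma_1-\Sigma_2\|_F^2$ through a strong-concavity expansion of $\log\det$. It then converts to $\|\Omega_1-\Omega_2\|_F^2$ with a Neumann-series argument that needs the extra local condition \eqref{eq:local-inversion}, namely $\|\Omega_1^{-1}\|_{\mathrm{op}}\|\Omega_2-\Omega_1\|_{\mathrm{op}}\le\rho$. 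So the paper establishes its lower bound only for nearby pairs, not for all $\Theta_1,\Theta_2\in\widetilde{\mathcal F}_n$ as the statement claims. Your route is global. You write $-\log A=\sum_j\psi(\lambda_j)$ with $\psi(x)\asymp(x-1)^2$ on $[(R_{\Omega,n}\overline R_{\Omega,n})^{-1},R_{\Omega,n}\overline R_{\Omega,n}]$, and use the exact identity $\Sigma_1-\Sigma_2=-\Sigma_1(\Omega_1-\Omega_2)\Sigma_2$ with operator-norm envelopes. This proves the lemma as stated. It also shows the locality in the paper is unnecessary, since $\|\Omega_1-\Omega_2\|_F\le\|\Omega_1\|_{\mathrm{op}}\|\Omega_2\|_{\mathrm{op}}\|\Sigma_1-\Sigma_2\|_F$ always holds. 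Finally, it sidesteps the paper's strong-concavity step, whose eigenvalue bound should read $\lambda_{\min}(\Sigma^{-1})^2\ge\overline R_{\Omega,n}^{-2}$ rather than $R_{\Omega,n}^{-2}$.

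Two smaller remarks.

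First, the obstacle you single out is not real. Since $A\le 1$, you have $h_i^2=1-AB_i\ge 1-B_i$, which is exactly what the paper uses, so the mean term never carries the factor $A$. Your case split is still a clean way to handle the saturation of $1-e^{-x}$ in the covariance term; the paper instead uses $1-e^{-x}\ge x/(1+x)$ with a polynomial cap on $x$.

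Second, the case split does not remove the dependence on $q$. The cap $\|\Omega_1-\Omega_2\|_F^2\le 4qR_{\Omega,n}^2$ carries a factor $q$, so $c_\star\ge n^{-C}$ needs $q\le n^{C'}$. This cannot be avoided: $\Omega_1=I$ and $\Omega_2=2I$ give $\|\Omega_1-\Omega_2\|_F^2=q$ while $H^2\le 1$. The condition holds under Assumption~\ref{assum:multi-growth}, and the paper's constant $c^{\mathrm{loc}}_{\det,n}$ carries the same factor.
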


\begin{proof}
We fix $\Theta_1=(\eta_1,\Omega_1)$ and $\Theta_2=(\eta_2,\Omega_2)$ in $\widetilde{\mathcal F}_n$, and 
write $\delta_i:=\eta_{1,i}-\eta_{2,i}$ and $\bar\Sigma:=\tfrac12(\Sigma_1+\Sigma_2)$.

For Gaussian measures, the squared Hellinger distance admits the factorization \citep[Chapter 1]{Pardo2005StatisticalIB}:
\begin{equation}\label{eq:hell-fact}
h^2\!\big(\mathcal N_q(\eta_{1,i},\Sigma_1),\mathcal N_q(\eta_{2,i},\Sigma_2)\big)
=
1-
\frac{|\Sigma_1|^{1/4}|\Sigma_2|^{1/4}}{|\bar\Sigma|^{1/2}}
\exp\!\Big(-\tfrac18\,\delta_i^\top \bar\Sigma^{-1}\delta_i\Big).
\end{equation}
Define
\[
T_{\det}:=1-\frac{|\Sigma_1|^{1/4}|\Sigma_2|^{1/4}}{|\bar\Sigma|^{1/2}},
\quad
T_{\mu,i}:=1-\exp\!\Big(-\tfrac18\,\delta_i^\top \bar\Sigma^{-1}\delta_i\Big).
\]
Since $1-ab\le (1-a)+(1-b)$ for $a,b\in[0,1]$, \eqref{eq:hell-fact} implies
\begin{equation}\label{eq:hell-split}
h^2 \big(\mathcal N_q(\eta_{1,i},\Sigma_1),\mathcal N_q(\eta_{2,i},\Sigma_2)\big)
\le T_{\det}+T_{\mu,i}.
\end{equation}
Also, since $\exp(-x)\in(0,1]$, we have $1-a b \ge 1-b$ for $a\in(0,1]$, hence
\begin{equation}\label{eq:hell-lower-mean}
h^2 \big(\mathcal N_q(\eta_{1,i},\Sigma_1),\mathcal N_q(\eta_{2,i},\Sigma_2)\big)
\ge T_{\mu,i}.
\end{equation}
Similarly, $1-ab\ge 1-a$ gives $h^2\ge T_{\det}$.

On $\widetilde{\mathcal F}_n$, the eigenvalues of each $\Sigma_\ell$ lie in
\(
\Big[{R_{\Omega,n}}^{-1},\ \overline R_{\Omega,n}\,\Big],
\)
hence the same holds for $\bar\Sigma$. Therefore,
\begin{equation}\label{eq:barSigma-bds}
\|\bar\Sigma^{-1}\|_{\textrm{op}}\le R_{\Omega,n},
\quad
\lambda_{\min}(\bar\Sigma^{-1})\ge \frac{1}{\overline R_{\Omega,n}}.
\end{equation}

\paragraph{Upper bound.}
We bound the two terms in \eqref{eq:hell-split}.

\emph{(i) Mean term.}
Using $1-e^{-x}\le x$ and \eqref{eq:barSigma-bds},
\[
T_{\mu,i}\le \frac18\,\delta_i^\top \bar\Sigma^{-1}\delta_i
\le \frac18\,\|\bar\Sigma^{-1}\|_{\textrm{op}}\,\|\delta_i\|_2^2
\le \frac18\,R_{\Omega,n}\,\|\delta_i\|_2^2.
\]
Averaging over $i$ yields
\begin{equation}\label{eq:mean-upper}
\frac1n\sum_{i=1}^n T_{\mu,i}
\le \frac18\,R_{\Omega,n}\,\|\eta_1-\eta_2\|_{2,n}^2.
\end{equation}

\emph{(ii) Determinant term.}
Write
\[
A_{\det}
:=\frac{|\Sigma_1|^{1/4}|\Sigma_2|^{1/4}}{|\bar\Sigma|^{1/2}}
=\exp\Big\{\tfrac14\log|\Sigma_1|+\tfrac14\log|\Sigma_2|-\tfrac12\log|\bar\Sigma|\Big\}.
\]
The map $F(\Sigma)=\log\det(\Sigma)$ has Hessian
$\nabla^2 F(\Sigma)[H,H]=-\mathrm{tr}(\Sigma^{-1}H\Sigma^{-1}H)$, hence
\[
\big|\nabla^2F(\Sigma)[H,H]\big|
\le \|\Sigma^{-1}\|_{\textrm{op}}^2\,\|H\|_F^2.
\]
On $\widetilde{\mathcal F}_n$, $\|\Sigma^{-1}\|_{\textrm{op}}=\|\Omega\|_{\textrm{op}}\le R_{\Omega,n}$,
so for all $\Sigma$ on the line segment between $\Sigma_1$ and $\Sigma_2$,
$\|\Sigma^{-1}\|_{\textrm{op}}\le R_{\Omega,n}$.
A second-order Taylor bound for the concave function $\log\det(\cdot)$ gives
\[
\log|\bar\Sigma|
\ge \frac12\log|\Sigma_1|+\frac12\log|\Sigma_2|
-\frac{R_{\Omega,n}^2}{8}\,\|\Sigma_1-\Sigma_2\|_F^2.
\]
Rearranging yields
\[
\frac14\log|\Sigma_1|+\frac14\log|\Sigma_2|-\frac12\log|\bar\Sigma|
\le \frac{R_{\Omega,n}^2}{16}\,\|\Sigma_1-\Sigma_2\|_F^2,
\]
hence $A_{\det}\ge \exp\big((-{R_{\Omega,n}^2}/{16})\|\Sigma_1-\Sigma_2\|_F^2\big)$ and therefore
\[
T_{\det}=1-A_{\det}\le 1-\exp \Big(-\frac{R_{\Omega,n}^2}{16}\|\Sigma_1-\Sigma_2\|_F^2\Big)
\le \frac{R_{\Omega,n}^2}{16}\,\|\Sigma_1-\Sigma_2\|_F^2.
\]
Finally, using the resolvent identity $\Sigma_1-\Sigma_2=\Omega_1^{-1}(\Omega_2-\Omega_1)\Omega_2^{-1}$ gives
\[
\|\Sigma_1-\Sigma_2\|_F
\le \|\Omega_1^{-1}\|_{\textrm{op}}\,\|\Omega_2^{-1}\|_{\textrm{op}}\,\|\Omega_1-\Omega_2\|_F
\le \overline R_{\Omega,n}^2\,\|\Omega_1-\Omega_2\|_F.
\]
Hence
\begin{equation}\label{eq:det-upper}
T_{\det}\le \frac{R_{\Omega,n}^2\,\overline R_{\Omega,n}^4}{16}\,\|\Omega_1-\Omega_2\|_F^2.
\end{equation}

Combining \eqref{eq:hell-split}, \eqref{eq:mean-upper}, and \eqref{eq:det-upper} and averaging over $i$ yields
\[
H^2(\Theta_1,\Theta_2)
\le C^\star\Big\{\|\Omega_1-\Omega_2\|_F^2+\|\eta_1-\eta_2\|_{2,n}^2\Big\},
\]
with
\(
C^\star:=\max\Big\{\frac{R_{\Omega,n}^2\overline R_{\Omega,n}^4}{16},\ \frac{R_{\Omega,n}}{8}\Big\},
\)
which grows at most polynomially in $n$.

\paragraph{Lower bound.}
We lower bound the mean and covariance contributions and then combine.

\medskip
\noindent\emph{(i) Mean term.}
From \eqref{eq:hell-lower-mean} and the inequality $1-e^{-x}\ge x/(1+x)$ for all $x\ge 0$,
\[
T_{\mu,i}
=1-\exp \Big(-\tfrac18\,\delta_i^\top\bar\Sigma^{-1}\delta_i\Big)
\ge
\frac{\tfrac18\,\delta_i^\top\bar\Sigma^{-1}\delta_i}{1+\tfrac18\,\delta_i^\top\bar\Sigma^{-1}\delta_i}.
\]
On $\widetilde{\mathcal F}_n$, $\|\delta_i\|_2\le \|\eta_{1,i}\|_2+\|\eta_{2,i}\|_2\le 2R_{\eta,n}$ and
$\|\bar\Sigma^{-1}\|_{\textrm{op}}\le R_{\Omega,n}$ by \eqref{eq:barSigma-bds}, hence
$\delta_i^\top\bar\Sigma^{-1}\delta_i\le R_{\Omega,n}\|\delta_i\|_2^2\le 4R_{\Omega,n}R_{\eta,n}^2$.
Therefore for all $i$,
\(
T_{\mu,i}
\ge c_{\mu,n}\,\delta_i^\top\bar\Sigma^{-1}\delta_i,
\ 
c_{\mu,n}:=\frac{1}{8\big(1+\tfrac12 R_{\Omega,n}R_{\eta,n}^2\big)}.
\)
Using $\delta_i^\top\bar\Sigma^{-1}\delta_i\ge \lambda_{\min}(\bar\Sigma^{-1})\|\delta_i\|_2^2
\ge \overline R_{\Omega,n}^{-1}\|\delta_i\|_2^2$ (again from \eqref{eq:barSigma-bds}),
\(
T_{\mu,i}\ge \frac{c_{\mu,n}}{\overline R_{\Omega,n}}\|\delta_i\|_2^2.
\)
Averaging over $i$ gives
\begin{equation}\label{eq:mean-lower}
H^2(\Theta_1,\Theta_2)\ \ge\ \frac1n\sum_{i=1}^n T_{\mu,i}
\ \ge\ \frac{c_{\mu,n}}{\overline R_{\Omega,n}}\,\|\eta_1-\eta_2\|_{2,n}^2.
\end{equation}

\medskip
\noindent\emph{(ii) Determinant term.}
Recall $h^2\ge T_{\det}:=1-A_{\det}$, where
\[
A_{\det}
=\frac{|\Sigma_1|^{1/4}|\Sigma_2|^{1/4}}{|\bar\Sigma|^{1/2}}
=\exp\Big\{\tfrac14\log|\Sigma_1|+\tfrac14\log|\Sigma_2|-\tfrac12\log|\bar\Sigma|\Big\}.
\]
Let $F(\Sigma)=\log\det(\Sigma)$. Since $F$ is concave and twice differentiable on the SPD cone,
a midpoint Taylor expansion and the Hessian identity
$\nabla^2F(\Sigma)[H,H]=-\mathrm{tr}(\Sigma^{-1}H\Sigma^{-1}H)$ yield
\begin{equation}\label{eq:logdet-strong-conc}
\log|\bar\Sigma|
\le \frac12\log|\Sigma_1|+\frac12\log|\Sigma_2|
-\frac18\inf_{\Sigma\in[\Sigma_1,\Sigma_2]}\mathrm{tr}(\Sigma^{-1}(\Sigma_1-\Sigma_2)\Sigma^{-1}(\Sigma_1-\Sigma_2)).
\end{equation}
Moreover, for any SPD $\Sigma$ and any symmetric $H$,
\[
\mathrm{tr}(\Sigma^{-1}H\Sigma^{-1}H)
=\|\Sigma^{-1/2}H\Sigma^{-1/2}\|_F^2
\ge \|\Sigma^{-1/2}\|_{\textrm{op}}^4\|H\|_F^2
=\|\Sigma^{-1}\|_{\textrm{op}}^2\|H\|_F^2.
\]
On $\widetilde{\mathcal F}_n$, $\|\Sigma^{-1}\|_{\textrm{op}}=\|\Omega\|_{\textrm{op}}\le R_{\Omega,n}$, hence
$\|\Sigma^{-1}\|_{\textrm{op}}^2\ge R_{\Omega,n}^{-2}$ uniformly on the segment $[\Sigma_1,\Sigma_2]$.
Applying this in \eqref{eq:logdet-strong-conc} gives
\[
\log|\bar\Sigma|
\le \frac12\log|\Sigma_1|+\frac12\log|\Sigma_2|
-\frac{1}{8R_{\Omega,n}^2}\|\Sigma_1-\Sigma_2\|_F^2,
\]
so
\[
\tfrac14\log|\Sigma_1|+\tfrac14\log|\Sigma_2|-\tfrac12\log|\bar\Sigma|
\ge \frac{1}{16R_{\Omega,n}^2}\,\|\Sigma_1-\Sigma_2\|_F^2.
\]
Therefore
\[
A_{\det}\le \exp\!\Big(-\frac{1}{16R_{\Omega,n}^2}\|\Sigma_1-\Sigma_2\|_F^2\Big),
\quad
T_{\det}\ge 1-\exp\!\Big(-\frac{1}{16R_{\Omega,n}^2}\|\Sigma_1-\Sigma_2\|_F^2\Big).
\]
Using again $1-e^{-x}\ge x/(1+x)$, we obtain the quadratic lower bound
\begin{equation}\label{eq:Tdet-Sigma-lb}
T_{\det}\ \ge\ \frac{\frac{1}{16R_{\Omega,n}^2}\|\Sigma_1-\Sigma_2\|_F^2}{
1+\frac{1}{16R_{\Omega,n}^2}\|\Sigma_1-\Sigma_2\|_F^2}.
\end{equation}

\medskip
\noindent\emph{(iii) Local resolvent control.}
To relate $\|\Sigma_1-\Sigma_2\|_F$ to $\|\Omega_1-\Omega_2\|_F$ from below we work on a \emph{local}
neighborhood where inversion is well-conditioned. Following the idea of \citet[Theorem 2.2]{Stewart1977}, we fix a numerical constant $0<\rho<1$, 
and assume additionally that
\begin{equation}\label{eq:local-inversion}
\|\Omega_1^{-1}\|_{\textrm{op}}\|\Omega_2-\Omega_1\|_{\textrm{op}} \le\ \rho
\ \text{(equivalently, }\ \|\Omega_1^{-1/2}(\Omega_2-\Omega_1)\Omega_1^{-1/2}\|_{\textrm{op}}\le \rho\text{)}.
\end{equation}

Let $\Delta:=\Omega_2-\Omega_1$ and $\Sigma_1=\Omega_1^{-1}$. Then
\(
\Omega_2=\Omega_1+\Delta=\Omega_1(I+\Omega_1^{-1}\Delta),
\
\Sigma_2=\Omega_2^{-1}=(I+\Omega_1^{-1}\Delta)^{-1}\Sigma_1.
\)
Under \eqref{eq:local-inversion}, the Neumann series \citep[Chapter 3]{Kato1995} is valid, and we have the operator bounds
\begin{equation}\label{eq:neumann-bds}
\|(I+\Omega_1^{-1}\Delta)^{-1}\|_{\textrm{op}}\le \frac{1}{1-\rho},
\quad
\|(I+\Omega_1^{-1}\Delta)^{-1}-I\|_{\textrm{op}}\le \frac{\rho}{1-\rho}.
\end{equation}
Now use the identity
\[
\Sigma_2-\Sigma_1
=(I+\Omega_1^{-1}\Delta)^{-1}\Sigma_1-\Sigma_1
=\big((I+\Omega_1^{-1}\Delta)^{-1}-I\big)\Sigma_1.
\]
Equivalently, multiplying on the left by $(I+\Omega_1^{-1}\Delta)$ gives the exact relation
\(
\Sigma_2-\Sigma_1
=-(I+\Omega_1^{-1}\Delta)^{-1}\Omega_1^{-1}\Delta\,\Omega_1^{-1}.
\)
Taking Frobenius norms and using $\|ABC\|_F\ge \|A\|_{\textrm{op}}^{-1}\|C\|_{\textrm{op}}^{-1}\|B\|_F$ for invertible
$A,C$ together with \eqref{eq:neumann-bds} yields
\begin{align}
\|\Sigma_2-\Sigma_1\|_F
&=\|(I+\Omega_1^{-1}\Delta)^{-1}\Omega_1^{-1}\Delta\,\Omega_1^{-1}\|_F \notag\\
&\ge \|(I+\Omega_1^{-1}\Delta)^{-1}\|_{\textrm{op}}^{-1}\,\|\Omega_1^{-1}\|_{\textrm{op}}^{-2}\,\|\Delta\|_F \notag\\
&\ge (1-\rho)\,\|\Omega_1^{-1}\|_{\textrm{op}}^{-2}\,\|\Omega_2-\Omega_1\|_F. \label{eq:Sigma-lower-local}
\end{align}
On $\widetilde{\mathcal F}_n$ we have $\|\Omega_1^{-1}\|_{\textrm{op}}\le \overline R_{\Omega,n}$, hence
\begin{equation}\label{eq:Sigma-lower-local-simpl}
\|\Sigma_2-\Sigma_1\|_F\ \ge\ \frac{1-\rho}{\overline R_{\Omega,n}^{\,2}}\,\|\Omega_2-\Omega_1\|_F
\quad\text{whenever \eqref{eq:local-inversion} holds.}
\end{equation}

Plugging \eqref{eq:Sigma-lower-local-simpl} into \eqref{eq:Tdet-Sigma-lb} gives, on the local set
\eqref{eq:local-inversion},
\[
T_{\det}
\ \ge\
\frac{\frac{1}{16R_{\Omega,n}^2}\cdot \frac{(1-\rho)^2}{\overline R_{\Omega,n}^4}\|\Omega_2-\Omega_1\|_F^2}{
1+\frac{1}{16R_{\Omega,n}^2}\|\Sigma_1-\Sigma_2\|_F^2}.
\]
Moreover, on $\widetilde{\mathcal F}_n$ we have $\|\Sigma_\ell\|_{\textrm{op}}\le \overline R_{\Omega,n}$, so
$\|\Sigma_1-\Sigma_2\|_F\le \|\Sigma_1\|_F+\|\Sigma_2\|_F\le 2\sqrt q\,\overline R_{\Omega,n}$ and therefore
\[
1+\frac{1}{16R_{\Omega,n}^2}\|\Sigma_1-\Sigma_2\|_F^2
\le 1+\frac{q\,\overline R_{\Omega,n}^2}{4R_{\Omega,n}^2}.
\]
Hence, still on \eqref{eq:local-inversion},
\begin{equation}\label{eq:det-lower-local}
T_{\det}
\ \ge\
c_{\det,n}^{\textrm{loc}}\,\|\Omega_2-\Omega_1\|_F^2,
\quad
c_{\det,n}^{\textrm{loc}}
:=\frac{(1-\rho)^2}{16R_{\Omega,n}^2\overline R_{\Omega,n}^4\Big(1+\frac{q\,\overline R_{\Omega,n}^2}{4R_{\Omega,n}^2}\Big)}.
\end{equation}

On the local set \eqref{eq:local-inversion}, combining \eqref{eq:mean-lower} and \eqref{eq:det-lower-local} yields
\begin{align*}
H^2(\Theta_1,\Theta_2)
 \ge
\max\Big\{\frac1n\sum_{i=1}^n T_{\mu,i},\ T_{\det}\Big\}
& \ge
\frac12\Big(\frac1n\sum_{i=1}^n T_{\mu,i}+T_{\det}\Big) \\
& \ge
c_\star^{\textrm{loc}}\Big\{\|\Omega_1-\Omega_2\|_F^2+\|\eta_1-\eta_2\|_{2,n}^2\Big\},
\end{align*}
with
\(
c_\star^{\textrm{loc}}
:=(1/2)\min\Big\{c_{\det,n}^{\textrm{loc}},\ {c_{\mu,n}}/{\overline R_{\Omega,n}}\Big\}.
\)
Since $(R_{\Omega,n},\overline R_{\Omega,n},R_{\eta,n})$ are polynomial in $n$ on $\widetilde{\mathcal F}_n$,
the constant $c_\star^{\textrm{loc}}$ is bounded below by a negative power of $n$.
This completes the proof.
\end{proof}

\begin{lemma}\label{lem:globalhellinger}
Let $\Theta = (\eta, \Omega)$ and let the true parameter be $\Theta_0 = (\eta_0, \Omega_0)$, where $\Omega_0$ satisfies $0 < \underline{\lambda} \le \lambda_{\min}(\Omega_0) \le \lambda_{\max}(\Omega_0) \le \bar{\lambda} < \infty$. For any $\Theta$, there exists a universal constant $c > 0$ depending only on $(\underline{\lambda}, \bar{\lambda})$ such that:

$$H^2(\Theta, \Theta_0) \ge c \min\Big(1,\ \|\Omega - \Omega_0\|_F^2 + \|\eta - \eta_0\|_{2,n}^2 \Big).$$
\end{lemma}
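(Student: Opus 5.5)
The plan is to reduce the bound for $H^2(\Theta,\Theta_0)$ to one covariance term and one mean term, each controlled by the spectral bounds on $\Omega_0$ alone. As in \Cref{lem:block-metric-correct}, write $h_i^2=1-A_{\det}\,e^{-x_i}$ with $x_i=\tfrac18\delta_i^\top\bar\Sigma^{-1}\delta_i$, where $\delta_i=\eta_i-\eta_{0,i}$ and $\bar\Sigma=\tfrac12(\Omega^{-1}+\Omega_0^{-1})$. Since $A_{\det},e^{-x_i}\in(0,1]$, we have $h_i^2\ge\max(T_{\det},T_{\mu,i})\ge\tfrac12(T_{\det}+T_{\mu,i})$. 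Averaging over $i$ gives $H^2\ge\tfrac12\big(T_{\det}+\tfrac1n\sum_i T_{\mu,i}\big)$, so it suffices to lower bound each term by $\min(1,\cdot)$ of the corresponding squared distance. Unlike \Cref{lem:block-metric-correct}, no sieve envelope may enter the constants.

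\textbf{Covariance term.} I will use the eigenvalues $\mu_1,\dots,\mu_q$ of $\Omega\Sigma_0$, equivalently of $\Omega_0^{-1/2}\Omega\,\Omega_0^{-1/2}$. Then $A_{\det}=\exp\{-\sum_j g(\mu_j)\}$ with $g(\mu)=\tfrac12\log\big((1+\mu)/(2\sqrt\mu)\big)\ge0$ and $g(\mu)=g(1/\mu)$. An elementary calculus check should give $g(\mu)\ge c_1\min\{1,(\mu-1)^2\}$: near $\mu=1$, $g(\mu)\approx(\mu-1)^2/16$, and $g\to\infty$ at both $0$ and $\infty$. Using $1-e^{-y}\ge(1-e^{-1})\min(1,y)$, I get
\[
T_{\det}\ge c\,\min\Big(1,\sum_j\min\{1,(\mu_j-1)^2\}\Big)\ge c\,\min\Big(1,\sum_j(\mu_j-1)^2\Big).
\]
The last step holds because either all summands are below $1$, or some single one already makes the inner sum at least $1$. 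Finally,
\[
\sum_j(\mu_j-1)^2=\|\Omega_0^{-1/2}(\Omega-\Omega_0)\Omega_0^{-1/2}\|_F^2\ge\bar\lambda^{-2}\|\Omega-\Omega_0\|_F^2 .
\]
This yields $T_{\det}\ge c\min(1,\|\Omega-\Omega_0\|_F^2)$ with $c$ free of $q$ and $n$.

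\textbf{Mean term.} I split into two cases. If $\|\Omega-\Omega_0\|_F>\underline\lambda/4$, the covariance bound already gives $H^2\ge c'>0$, so the right-hand side, which is at most $c$, is dominated after shrinking $c$. Otherwise \Cref{lem:pd-correct} gives $\|\Omega^{-1}\|_{\mathrm{op}}\le2/\underline\lambda$. Hence $\lambda_{\max}(\bar\Sigma)\le3/(2\underline\lambda)$, so $x_i\ge(\underline\lambda/12)\|\delta_i\|_2^2$ and $T_{\mu,i}\ge c\min(1,\|\delta_i\|_2^2)$.

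The main obstacle is the last aggregation. I need $\tfrac1n\sum_i\min(1,\|\delta_i\|_2^2)\gtrsim\min\big(1,\tfrac1n\sum_i\|\delta_i\|_2^2\big)$, and this fails if a single $\delta_i$ is huge while the others vanish. In that configuration $H^2\approx 1/n$ while the claimed right-hand side equals $c$. So the displayed inequality cannot hold with $c$ depending only on $(\underline\lambda,\bar\lambda)$ for literally every $\Theta$.

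What I would try first is to check whether, in the proof of \Cref{lem:shell-separation}, the lemma is only invoked on $\Theta$ where $\max_i\|\delta_i\|_2^2\lesssim\max\big(1,\tfrac1n\sum_i\|\delta_i\|_2^2\big)$, or where that ratio is handled some other way. If so, I would add that condition as a hypothesis and conclude by convexity: if $\tfrac1n\sum_i\|\delta_i\|_2^2\le1$ the pointwise bound sums directly, and otherwise a constant fraction of observations must have $\|\delta_i\|\ge1$. If not, the correct statement is the per-observation form, and the downstream shell argument would have to be adapted to it rather than proving the lemma as worded.
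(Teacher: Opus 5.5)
Your analysis is correct, and the step you decline to take is exactly where the paper's own proof fails. The paper follows your outline. It uses $h_i^2\ge(1-A_{\det})+A_{\det}(1-B_{\mu,i})$ and the eigenvalues of $\Omega_0^{-1/2}\Omega\,\Omega_0^{-1/2}$ for the determinant term. It then splits on $A_{\det}\ge 1/2$ (which forces every eigenvalue to be at least $1/64$), where you split on $\|\Omega-\Omega_0\|_F\le\underline\lambda/4$. This gives the pointwise bound $A_{\det}(1-B_{\mu,i})\ge c\min(1,\|\delta_i\|_2^2)$. The paper then asserts, without justification, that averaging over $i$ gives $c_5\min(1,\|\eta-\eta_0\|_{2,n}^2)$.

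Your counterexample refutes this. Take $\Omega=\Omega_0$, $\|\delta_1\|_2^2\ge n$, and $\delta_i=0$ for $i\ge2$. Then $H^2(\Theta,\Theta_0)=n^{-1}\{1-\exp(-\tfrac18\delta_1^\top\Omega_0\delta_1)\}<1/n$, while the right-hand side equals $c$. So the lemma as stated fails once $n\ge 1/c$, and no proof of it can exist. What your argument (and the paper's, once repaired) actually establishes is the per-observation bound
\[
H^2(\Theta,\Theta_0)\ \ge\ c\Big\{\min\big(1,\|\Omega-\Omega_0\|_F^2\big)+\frac1n\sum_{i=1}^n\min\big(1,\|\eta_i-\eta_{0,i}\|_2^2\big)\Big\},
\]
with $c$ depending only on $(\underline\lambda,\bar\lambda)$.

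Your determinant step is also sounder than the paper's, which has two slips:
\begin{itemize}
\item It uses $1-\prod_k f(\lambda_k)\ge\sum_k(1-f(\lambda_k))$. This is the union bound reversed; two factors equal to $0.1$ already violate it.
\item It invokes $\|\Omega_0^{-1/2}\|_{\mathrm{op}}\le\underline\lambda^{-1/2}$, which is an upper bound. A lower bound via $\lambda_{\min}(\Omega_0^{-1/2})=\bar\lambda^{-1/2}$ is what is needed.
\end{itemize}
Your representation $A_{\det}=\exp\{-\sum_j g(\mu_j)\}$, together with $1-e^{-y}\ge(1-e^{-1})\min(1,y)$ and the factor $\bar\lambda^{-2}$, fixes both.

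Under your extra hypothesis $\max_i\|\delta_i\|_2^2\le K\max(1,\tfrac1n\sum_i\|\delta_i\|_2^2)$, the cleanest finish is $\min(1,t)\ge t/\max(1,T)$ for $0\le t\le T$. This handles both regimes at once and gives the constant $1/K$. Your claim that a ``constant fraction of observations has $\|\delta_i\|_2\ge1$'' is not accurate when the average is only slightly above $1$.

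Finally, that hypothesis is not available where the lemma is used. \Cref{lem:shell-separation} applies it to arbitrary $\Theta\in\mathcal G_n$, and $\mathcal G_n$ imposes no amplitude or max-to-average control on $\eta$. Nothing prevents a single large-valued leaf from containing only one design point. So the shell argument would have to be rebuilt on the per-observation form, as your fallback anticipates.
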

\begin{proof}
Recall the exact factorization of the squared Hellinger distance for Gaussians:

$$h^2_i(\Theta, \Theta_0) = 1 - A_{\det} \cdot B_{\mu,i}$$

where $A_{\det} := \frac{|\Sigma|^{1/4} |\Sigma_0|^{1/4}}{|\bar{\Sigma}|^{1/2}}$ and $B_{\mu,i} := \exp\Big(-\frac{1}{8}(\eta_i - \eta_{0,i})^\top \bar{\Sigma}^{-1} (\eta_i - \eta_{0,i})\Big)$, with $\bar{\Sigma} = \frac{1}{2}(\Omega^{-1} + \Omega_0^{-1})$.

Using the algebraic identity $1 - AB = (1 - A) + A(1 - B)$ for $A, B \in (0,1]$, we have:
\begin{equation}\label{eq:H2_split_new}
h^2_i(\Theta, \Theta_0) \ge (1 - A_{\det}) + A_{\det}(1 - B_{\mu,i}).
\end{equation}

Let $\lambda_1, \dots, \lambda_q$ be the eigenvalues of the matrix $\Omega_0^{-1/2} \Omega \Omega_0^{-1/2}$. By simultaneous diagonalization, the determinant affinity simplifies precisely to:

$$A_{\det} = \prod_{k=1}^q \left( \frac{2 \lambda_k^{1/2}}{1 + \lambda_k} \right)^{1/2}.$$

The function $f(x) = \big(\frac{2\sqrt{x}}{1+x}\big)^{1/2}$ satisfies $0 < f(x) \le 1$ for all $x > 0$, with a global maximum of $1$ at $x=1$. Crucially, its behavior near $x=1$ is strictly quadratic, meaning there exists a universal constant $c_1 > 0$ such that $1 - f(x) \ge c_1 \min(1, (x-1)^2)$ for all $x > 0$.

Because $1 - \prod f(\lambda_k) \ge \sum (1 - f(\lambda_k))$, we obtain:

$$1 - A_{\det} \ge c_1 \sum_{k=1}^q \min\big(1, (\lambda_k - 1)^2\big) \ge c_2 \min\Big(1, \sum_{k=1}^q (\lambda_k - 1)^2\Big).$$

Recognizing that $\sum (\lambda_k - 1)^2 = \|\Omega_0^{-1/2} \Omega \Omega_0^{-1/2} - I\|_F^2 = \|\Omega_0^{-1/2} (\Omega - \Omega_0) \Omega_0^{-1/2}\|_F^2$, and applying the operator norm of $\Omega_0^{-1/2}$ (bounded by $1/\sqrt{\underline{\lambda}}$), we get:
\begin{equation}\label{eq:prec_exact_bound}
1 - A_{\det} \ge c_3 \min\big(1, \|\Omega - \Omega_0 \|_F^2\big),
\end{equation}
where $c_3$ depends only on $\underline{\lambda}$ and $\bar{\lambda}$.

We now evaluate the mean error contribution, $A_{\det}(1 - B_{\mu,i})$. We split this into two cases based on the affinity $A_{\det}$:

(i) If $A_{\det} < 1/2$, then $1 - A_{\det} > 1/2$. By \eqref{eq:H2_split_new}, $h^2_i(\Theta, \Theta_0) > 1/2$. The Hellinger distance is universally bounded below by a constant, satisfying the lemma trivially.

(ii) If $A_{\det} \ge 1/2$, then we must have:

$$\prod_{k=1}^q \left( \frac{2}{\lambda_k^{1/2} + \lambda_k^{-1/2}} \right)^{1/2} \ge \frac{1}{2}.$$

Because every factor in the product is $\le 1$, every individual factor must be $\ge 1/2$. Thus, for all $k$, $\big({2}/{(\sqrt{\lambda_k} + 1/\sqrt{\lambda_k})}\big)^{1/2} \ge {1}/{2} \implies \sqrt{\lambda_k} + 1/\sqrt{\lambda_k} \le 8$.
This algebraically guarantees that $1/\sqrt{\lambda_k} \le 8$, meaning $\lambda_k \ge {1}/{64}$ for all $k$.

Because $\lambda_{\min}(\Omega_0^{-1/2} \Omega \Omega_0^{-1/2}) \ge {1}/{64}$, we have $\Omega \succeq ({1}/{64}) \cdot \Omega_0 \succeq {\underline{\lambda}}/{64} \cdot  I$.

Therefore, the pooled covariance satisfies $\bar{\Sigma} = (\Omega^{-1} + \Omega_0^{-1})/2 \preceq \big({64}/{\underline{\lambda}} + {1}/{\underline{\lambda}}\big) I/2 = ({65}/{2\underline{\lambda}}) \cdot I$.
Its inverse satisfies $\bar{\Sigma}^{-1} \succeq ({2\underline{\lambda}}/{65}) \cdot I$.

We can now cleanly bound the mean error term:

$$1 - B_{\mu,i} = 1 - \exp\Big(-\frac{1}{8}\delta_i^\top \bar{\Sigma}^{-1} \delta_i\Big) \ge 1 - \exp\Big(-\frac{2\underline{\lambda}}{8 \times 65} \|\delta_i\|_2^2\Big) \ge c_4 \min\big(1, \|\delta_i\|_2^2\big).$$

Since $A_{\det} \ge 1/2$, we have $A_{\det}(1 - B_{\mu,i}) \ge ({c_4}/{2}) \cdot \min(1, \|\delta_i\|_2^2)$. Averaging over $i=1 \dots n$ yields:
\begin{equation}\label{eq:mean_exact_bound}
\frac{1}{n}\sum_{i=1}^n A_{\det}(1 - B_{\mu,i}) \ge c_5 \min\big(1, \|\eta - \eta_0 \|_{2,n}^2\big).
\end{equation}

Combining \eqref{eq:prec_exact_bound} and \eqref{eq:mean_exact_bound} into \eqref{eq:H2_split_new}, we conclude:

$$H^2(\Theta, \Theta_0) \ge c_3 \min\big(1, \|\Omega - \Omega_0\|_F^2\big) + c_5 \min\big(1, \|\eta - \eta_0\|_{2,n}^2\big) \ge c \min\Big(1, \|\Omega - \Omega_0\|_F^2 + \|\eta - \eta_0\|_{2,n}^2 \Big).$$

This holds uniformly over the entire parameter space. 
\end{proof}

\subsubsection*{KL step}\label{subsec:KL-corrected}

Let $\Theta=(\eta,\Omega)$ and $\Theta_0=(\eta_0,\Omega_0)$, where
for $i=1,\dots,n$ we write $\eta_i:=\eta(x_i)$ and $\eta_{0,i}:=\eta_0(x_i)$.
Let $\Sigma:=\Omega^{-1}$ and $\Sigma_0:=\Omega_0^{-1}$.
Define $\Delta_\Omega:=\Omega-\Omega_0$ and $\Delta_i:=\eta_i-\eta_{0,i}$.

\paragraph{KL divergence.}
For a single observation, the Kullback--Leibler divergence between
$\mathcal N_q(\eta_{0,i},\Sigma_0)$ and $\mathcal N_q(\eta_i,\Sigma)$ is
\[
\mathrm{KL}_i(\Theta_0,\Theta)
=\frac12\Big\{
\mathrm{tr}(\Sigma^{-1}\Sigma_0)-q-\log\det(\Sigma^{-1}\Sigma_0)
+(\eta_{0,i}-\eta_i)^\top\Sigma^{-1}(\eta_{0,i}-\eta_i)
\Big\}.
\]
Since $\Sigma^{-1}=\Omega$ and $\Sigma_0=\Omega_0^{-1}$, the $n$-fold KL divergence is
\begin{equation}\label{eq:KL-def-correct}
K(\Theta_0,\Theta)
:=\sum_{i=1}^n \mathrm{KL}_i(\Theta_0,\Theta)
=\frac12\sum_{i=1}^{n}\Bigl[
\mathrm{tr}(\Omega\,\Omega_0^{-1})-q-\log\det(\Omega\,\Omega_0^{-1})
+\Delta_i^\top\Omega\,\Delta_i
\Bigr].
\end{equation}
Let
\(
\ell_i(\Theta):=\log p_\Theta(Y_i\mid x_i,z_i),\ 
\Lambda_i(\Theta_0,\Theta):=\ell_i(\Theta_0)-\ell_i(\Theta).
\)
We will bound the second moment of the log-likelihood ratio under $\Theta_0$:
\(
V(\Theta_0,\Theta)
:=\sum_{i=1}^n \mathbb E_{\Theta_0}\!\big[\Lambda_i(\Theta_0,\Theta)^2\big].
\)

Fix small constants $c_\Omega,c_\eta\in(0,1)$ and set
\(
\delta_\Omega:=c_\Omega\,\varepsilon_\Omega,\
\delta_\eta:=c_\eta\,\sqrt{\varepsilon_{\textrm{smooth}}^2+\varepsilon_B^2}.
\)
We define the neighborhood
\begin{equation}\label{eq:An-correct}
\mathcal A_n
:=\Bigl\{
\|\Delta_\Omega\|_{F}\le \delta_\Omega,\quad
\frac1n\sum_{i=1}^n\|\Delta_i\|_2^2\le \delta_\eta^2,\quad
\max_{1\le i\le n}\|\Delta_i\|_2^2\le C_\Delta\,\delta_\eta^2
\Bigr\},
\end{equation}
where $C_\Delta\ge 1$ is a fixed numerical constant. We choose $c_\Omega$ sufficiently small so that $\delta_\Omega\le \underline\lambda/4$;
then Lemmas~\ref{lem:pd-correct}--\ref{lem:taylor-correct} apply on $\mathcal A_n$.

\paragraph{Bounding $K(\Theta_0,\Theta)$ on $\mathcal A_n$.}
Split $K$ in \eqref{eq:KL-def-correct} as
\[
K(\Theta_0,\Theta)
=\frac{n}{2}\Bigl[\mathrm{tr}(\Omega\Omega_0^{-1})-q-\log\det(\Omega\Omega_0^{-1})\Bigr]
+\frac12\sum_{i=1}^n \Delta_i^\top \Omega\,\Delta_i.
\]
For the covariance part, Lemma~\ref{lem:taylor-correct} yields
\[
\mathrm{tr}(\Omega\Omega_0^{-1})-q-\log\det(\Omega\Omega_0^{-1})
\le \frac{2}{\underline\lambda^{2}}\|\Delta_\Omega\|_F^2
\le \frac{2}{\underline\lambda^{2}}\,\delta_\Omega^2.
\]
For the mean part, on $\mathcal A_n$, using assumption \suppref{\ref{assum:multi-eig}}{(A4)},
$\|\Omega\|_{\textrm{op}}\le \|\Omega_0\|_{\textrm{op}}+\|\Delta_\Omega\|_{\textrm{op}}\le \bar\lambda+\delta_\Omega\le \bar\lambda+1$
(for large $n$), hence
\[
\sum_{i=1}^n \Delta_i^\top \Omega\,\Delta_i
\le \|\Omega\|_{\textrm{op}}\sum_{i=1}^n \|\Delta_i\|_2^2
\le (\bar\lambda+1)\,n\,\delta_\eta^2.
\]
Therefore there exists $C_K<\infty$ depending only on $(\underline\lambda,\bar\lambda)$ such that
\begin{equation}\label{eq:K-bound-correct}
K(\Theta_0,\Theta)\ \le\ C_K\,n\bigl(\delta_\Omega^2+\delta_\eta^2\bigr),
\quad \forall\,\Theta\in\mathcal A_n.
\end{equation}

\paragraph{Bounding $V(\Theta_0,\Theta)$ on $\mathcal A_n$.}
Recall the definition of $V$:
\[
V(\Theta_0,\Theta)
=\frac12\sum_{i=1}^{n}\Bigl[
\underbrace{\operatorname{tr} \bigl(\Delta_\Omega\,\Omega^{-1}\Delta_\Omega\,\Omega_0^{-1}\bigr)}_{=:V_{\Omega}}
+
\underbrace{\bigl(\Delta_{i}^{\top}\Omega\Delta_{i}\bigr)^{2}}_{=:V_{\eta,i}}
\Bigr].
\]
We bound the two contributions separately on $\mathcal A_n$.

\smallskip
\emph{(i) Bounding $V_{\Omega}$.}
Using $\operatorname{tr}(AB)\le \|A\|_F\|B\|_F$ and $\|ABC\|_F\le \|A\|_{\textrm{op}}\|B\|_F\|C\|_{\textrm{op}}$,
\[
V_{\Omega}
=\operatorname{tr}\bigl(\Delta_\Omega\,\Omega^{-1}\Delta_\Omega \Omega_0^{-1}\bigr)
\le \|\Delta_\Omega\,\Omega^{-1}\|_F\,\|\Delta_\Omega\,\Omega_0^{-1}\|_F
\le \|\Omega^{-1}\|_{\textrm{op}}\,\|\Omega_0^{-1}\|_{\textrm{op}}\,\|\Delta_\Omega\|_F^{2}.
\]
On $\mathcal A_n$ and for $c_\Omega$ sufficiently small we have $\|\Delta_\Omega\|_F\le \underline\lambda/4$,
so Lemma~\ref{lem:pd-correct} gives $\|\Omega^{-1}\|_{\textrm{op}}\le 2/\underline\lambda$, while
$\|\Omega_0^{-1}\|_{\textrm{op}}\le 1/\underline\lambda$ by \suppref{\ref{assum:multi-eig}}{(A4)}. Hence
\begin{equation}\label{eq:Vomega-bound}
V_{\Omega}\ \le\ \frac{2}{\underline\lambda^{2}}\,\|\Delta_\Omega\|_F^{2}
\ \le\ \frac{2}{\underline\lambda^{2}}\,\delta_\Omega^{2}.
\end{equation}

\smallskip
\emph{(ii) Bounding $V_{\eta,i}$.}
On $\mathcal A_n$,
\[
\Delta_i^\top\Omega\Delta_i\ \le\ \|\Omega\|_{\textrm{op}}\,\|\Delta_i\|_2^2.
\]
Moreover, $\|\Omega\|_{\textrm{op}}\le \|\Omega_0\|_{\textrm{op}}+\|\Delta_\Omega\|_{\textrm{op}}
\le \bar\lambda+\|\Delta_\Omega\|_F\le \bar\lambda+1$ for all large $n$ (since $\delta_\Omega\to 0$).
Therefore,
\[
V_{\eta,i}=(\Delta_i^\top\Omega\Delta_i)^2
\le (\bar\lambda+1)^2\,\|\Delta_i\|_2^4
\le (\bar\lambda+1)^2\,\|\Delta_i\|_2^2\cdot \max_{1\le j\le n}\|\Delta_j\|_2^2.
\]
By definition of $\mathcal A_n$, $\max_{j}\|\Delta_j\|_2^2\le C_\Delta\,\delta_\eta^2$, hence
\begin{equation}\label{eq:Veta-pointwise}
V_{\eta,i}\ \le\ (\bar\lambda+1)^2\,C_\Delta\,\delta_\eta^2\,\|\Delta_i\|_2^2.
\end{equation}
Summing \eqref{eq:Veta-pointwise} over $i$ and using $\sum_{i=1}^n\|\Delta_i\|_2^2\le n\delta_\eta^2$ yields
\begin{equation}\label{eq:Veta-sum}
\sum_{i=1}^n V_{\eta,i}
\ \le\ (\bar\lambda+1)^2\,C_\Delta\,\delta_\eta^2\sum_{i=1}^n\|\Delta_i\|_2^2
\ \le\ (\bar\lambda+1)^2\,C_\Delta\,n\,\delta_\eta^4
\ \le\ (\bar\lambda+1)^2\,C_\Delta\,n\,\delta_\eta^2,
\end{equation}
where the last inequality uses $\delta_\eta\le 1$ for all large $n$.

\smallskip
Plugging \eqref{eq:Vomega-bound} and \eqref{eq:Veta-sum} into the definition of $V(\Theta_0,\Theta)$ gives
\[
V(\Theta_0,\Theta)
\le \frac12\sum_{i=1}^n V_{\Omega}+\frac12\sum_{i=1}^n V_{\eta,i}
\le \frac12\,n\cdot \frac{2}{\underline\lambda^{2}}\,\delta_\Omega^2
\;+\;\frac12\,(\bar\lambda+1)^2\,C_\Delta\,n\,\delta_\eta^2
\le C_V\,n(\delta_\Omega^2+\delta_\eta^2),
\]
for a constant $C_V<\infty$ depending only on $(\underline\lambda,\bar\lambda,C_\Delta)$ and independent of $q$.

With $\delta_\Omega=c_\Omega\varepsilon_\Omega$ and
$\delta_\eta=c_\eta\sqrt{\varepsilon_{\textrm{smooth}}^2+\varepsilon_B^2}$,
this yields the desired KL-variance bound
\[
\sup_{\Theta\in\mathcal A_n}V(\Theta_0,\Theta)
\ \le\ C\,n\Big(\varepsilon_\Omega^2+\varepsilon_{\textrm{smooth}}^2+\varepsilon_B^2\Big)
\ \le\ 3C\,n(\varepsilon_n^\dagger)^2.
\]
\subsubsection*{Small-ball Lemmas}
Before detailing the small-ball probability lower bounds in Lemmas \ref{lem:ghs-smallball-continuous} and \ref{lem:BART-smallball-continuous-global}, we note this important structural simplification. 

For any measurable set $A \subseteq \mathcal{F}_n$, the mass under the truncated prior satisfies:
$$
\Pi_n^{\mathcal{F}}(A) = \frac{\Pi(A \cap \mathcal{F}_n)}{\Pi(\mathcal{F}_n)} = \frac{\Pi(A)}{\Pi(\mathcal{F}_n)} \ge \Pi(A),
$$
since the unconditional prior mass of the sieve is bounded by $\Pi(\mathcal{F}_n) \le 1$. Because our target Kullback-Leibler neighborhoods are constructed explicitly to satisfy the parameter constraints, a lower bound on the prior mass under the original, un-truncated prior $\Pi$ immediately provides a valid lower bound under $\Pi_n^{\mathcal{F}}$. To preserve the clarity and modularity of the derivations, all subsequent small-ball lemmas are stated and proven under the original prior $\Pi$.
\begin{lemma}
\label{lem:ghs-smallball-continuous}
Let $Q=\binom{q}{2}$ and let
\(S_0:=\{(k,k'):\ k<k',\ \omega^0_{kk'}\neq 0\}, \  |S_0|=s_q,\) denote the true off-diagonal support. Assume the graphical horseshoe prior \suppref{\ref{assum:GHS-scale}}{(P2)}. Fix $0<\delta\le \underline\lambda/2$. Define the tolerances
\(
\varepsilon_{\textrm{diag}}:={\delta}/{(4\sqrt q)},\
\varepsilon_{\textrm{sig}}:={\delta}/{(4\sqrt{2s_q})},\ \text{and} \ 
\varepsilon_{\textrm{noise}}:={\delta}/{(4\sqrt{2Q})},
\)
and define the global-scale level $t_0:= {\varepsilon_{\textrm{noise}}}/{(C_0Q\log(e q))}
 = {\delta}/{(C_0Q^{3/2}\log(e q))}$, for a sufficiently large universal constant $C_0>0$.
Assume additionally that
\(
2b n^{-A_\Omega}\le t_0
\quad\text{and}\quad
 t_0\le C_\tau\sigma_{\Omega,n},
\)
for fixed constants $0<a<b<\infty$. Then there exist constants $c_1,c_2,c_3,c_4>0$ depending only on
$(a_0,b_0,\underline\lambda,\bar\lambda)$ (and $a,b$), such that
\begin{align*}
\Pi\big(\|\Omega-\Omega_0\|_F\le \delta\big)
 \ge
\exp\Bigg\{
-c_1 s_q \log \left(\frac{q^3\log(e q)\sqrt{s_q}}{\delta^2}\right)
& -c_2 q\log \Big(\frac{q}{\delta}\Big) \\
& -c_3 \log \Big(1+\frac{\sigma_{\Omega,n}}{t_0}\Big)
-c_4
\Bigg\}.
\end{align*}
\end{lemma}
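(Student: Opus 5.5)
We reduce the Frobenius ball to a product of coordinatewise boxes and lower bound each box under the conditional product structure, integrating over the global scale only on a favorable window. Since $\|\Omega-\Omega_0\|_F^2=\sum_k(\omega_{kk}-\omega^0_{kk})^2+2\sum_{k<k'}(\omega_{kk'}-\omega^0_{kk'})^2$, the event
\[
\mathcal E:=\Big\{\max_k|\omega_{kk}-\omega^0_{kk}|\le\varepsilon_{\textrm{diag}}\Big\}\cap\Big\{\max_{(k,k')\in S_0}|\omega_{kk'}-\omega^0_{kk'}|\le\varepsilon_{\textrm{sig}}\Big\}\cap\Big\{\max_{(k,k')\notin S_0}|\omega_{kk'}|\le\varepsilon_{\textrm{noise}}\Big\}
\]
gives $\|\Omega-\Omega_0\|_F^2\le q\varepsilon_{\textrm{diag}}^2+2s_q\varepsilon_{\textrm{sig}}^2+2Q\varepsilon_{\textrm{noise}}^2\le 3\delta^2/16$. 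Thus $\mathcal E\subseteq\{\|\Omega-\Omega_0\|_F\le\delta\}$. Because $\delta\le\underline\lambda/2$, the inclusion also forces $\Omega\succ0$ by the Weyl argument of Lemma~\ref{lem:pd-correct}. Hence the SPD restriction only renormalizes by $\Pi_{\textrm{prod}}(\Omega\succ0)\le1$, and it suffices to bound the product prior of $\mathcal E$ from below.

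\textbf{Restricting the global scale.} We restrict $\tau_\Omega$ to the window $[t_0/2,t_0]$. This window lies in the support since $t_0\le C_\tau\sigma_{\Omega,n}$. The truncated half-Cauchy density there is at least of order $\sigma_{\Omega,n}/(\sigma_{\Omega,n}^2+t_0^2)$. So the window has mass at least $c\,t_0\sigma_{\Omega,n}/(\sigma_{\Omega,n}^2+t_0^2)$, which is at least $\exp\{-c_3\log(1+\sigma_{\Omega,n}/t_0)-c_4\}$ in both regimes $t_0\lessgtr\sigma_{\Omega,n}$. Conditionally on $\tau_\Omega$, the diagonal, signal and noise coordinates are independent, so the conditional probability of $\mathcal E$ factorizes.

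\textbf{Diagonal entries.} Each $\omega_{kk}$ has a truncated Gamma density on $[n^{-A_{\mathrm{diag}}},n^{A_{\mathrm{diag}}}]$. Its mass on an interval of length $\varepsilon_{\textrm{diag}}$ around $\omega^0_{kk}\in[\underline\lambda,\bar\lambda]$ is at least $c\,\varepsilon_{\textrm{diag}}$, because the density is bounded below on a compact set away from $0$ and the truncation only inflates it. The product over $k$ is $\ge\exp\{-c_2q\log(q/\delta)\}$.

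\textbf{Signal entries.} For $(k,k')\in S_0$, the marginal density of $\omega\mid\tau$ at $w$ is at least $\int\phi(w/(\tau\lambda))(\tau\lambda)^{-1}f(\lambda)\,d\lambda$. Restrict $\lambda$ to $[|w|/\tau,2|w|/\tau]$, which lies inside $[0,n^{A_\Omega}]$; this is the role of the hypothesis $2bn^{-A_\Omega}\le t_0$, with $|\omega^0|\le b$ on the relevant range. The restricted integral is $\gtrsim\tau/w^2\gtrsim t_0$. Hence each signal box has mass $\gtrsim t_0\varepsilon_{\textrm{sig}}$, and $\log(1/(t_0\varepsilon_{\textrm{sig}}))\lesssim\log(q^3\log(eq)\sqrt{s_q}/\delta^2)$. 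This yields the $c_1s_q$ term.

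\textbf{Noise entries.} For the $Q-s_q$ noise entries we use the Gaussian–half-Cauchy tail bound of Lemma~\ref{lem:HS-tailbound} with $\tau\le t_0$, which gives
\[
\P(|\omega|>\varepsilon_{\textrm{noise}}\mid\tau)\le C_{\textrm{HS}}\frac{t_0}{\varepsilon_{\textrm{noise}}}\log\Big(1+\frac{\varepsilon_{\textrm{noise}}}{t_0}\Big)\le\frac{C'}{C_0Q}.
\]
This uses the definition $t_0=\varepsilon_{\textrm{noise}}/(C_0Q\log(eq))$. The product $(1-C'/(C_0Q))^{Q}$ is then bounded below by a constant $e^{-c_4}$ once $C_0$ is large.

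\textbf{Conclusion and main obstacle.} Multiplying the diagonal, signal and noise factors and integrating over the window gives the stated bound. The delicate step is the signal-entry density lower bound uniformly over $\tau\in[t_0/2,t_0]$. There the truncation $\lambda\le n^{A_\Omega}$ must not cut off the needed range $\lambda\asymp|w|/\tau$, and the small global scale must not destroy the mass. This is exactly where the assumption $2bn^{-A_\Omega}\le t_0$ (with $a\le|\omega^0_{kk'}|\le b$) is used. The result is the $\log(1/t_0)$-type penalty per signal edge.
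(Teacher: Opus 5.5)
Your proposal is correct and follows essentially the same route as the paper: you reduce the Frobenius ball to coordinatewise boxes lying inside the SPD cone, restrict $\tau_\Omega$ to $[t_0/2,t_0]$, use the truncated Gamma density for the diagonals and the horseshoe tail bound for the noise edges, and pay $\log\{1/(t_0\varepsilon_{\textrm{sig}})\}$ per signal edge through a local-scale window of Cauchy mass $\asymp\tau$. The differences are cosmetic. First, the paper uses the fixed window $\lambda_{kk'}\in[a/\tau,b/\tau]$, so $a,b$ bound the conditional standard deviation $\tau\lambda_{kk'}$ rather than $|\omega^0_{kk'}|$; no lower bound on the signal entries is needed. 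Your $w$-dependent window $[|w|/\tau,2|w|/\tau]$ needs $b\gtrsim\bar\lambda$ to stay below $n^{A_\Omega}$, which the fixed window avoids. Second, the paper uses a union bound rather than a product for the noise edges.
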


\begin{proof}
Let $\Pi_{\textrm{prod}}$ be the product prior on $(\omega_{kk},\omega_{kk'},\lambda_{kk'},\tau_\Omega)$
without restricting to $\{\Omega\succ0\}$,
and let $\Pi$ be the restricted prior on $\{\Omega\succ0\}$.
Let $\mathcal E_\delta:=\{\|\Omega-\Omega_0\|_F\le \delta\}$. By Weyl's inequality,
on $\mathcal E_\delta$ we have
\[
\lambda_{\min}(\Omega)\ \ge\ \lambda_{\min}(\Omega_0)-\|\Omega-\Omega_0\|_{\textrm{op}}
\ \ge\ \underline\lambda-\|\Omega-\Omega_0\|_F
\ \ge\ \underline\lambda/2\ >\ 0,
\]
so $\mathcal E_\delta\subset\{\Omega\succ0\}$. Hence
\[
\Pi(\mathcal E_\delta)
=\frac{\Pi_{\textrm{prod}}(\mathcal E_\delta)}{\Pi_{\textrm{prod}}(\Omega\succ0)}
\ \ge\ \Pi_{\textrm{prod}}(\mathcal E_\delta),
\]
since $\Pi_{\textrm{prod}}(\Omega\succ0)\le 1$. Thus it suffices to lower-bound
$\Pi_{\textrm{prod}}(\mathcal E_\delta)$, under which entries are independent conditional
on the global scale.

\textbf{Off-diagonal control.} 
Let $S_0$ be the set of $s_q$ nonzero off-diagonals of $\Omega_0$, and let $S_0^c$
be the remaining $Q-s_q$ off-diagonals. Consider the event
\[
\mathcal A_\delta
:=\Big\{
\max_{1\le k\le q}|\omega_{kk}-\omega^0_{kk}|\le \varepsilon_{\textrm{diag}},\
\max_{(k,k')\in S_0}|\omega_{kk'}-\omega^0_{kk'}|\le \varepsilon_{\textrm{sig}},\
\max_{(k,k')\in S_0^c}|\omega_{kk'}|\le \varepsilon_{\textrm{noise}}
\Big\}.
\]
On $\mathcal A_\delta$,
\begin{align*}
\|\Omega-\Omega_0\|_F^2
& \le \sum_{k=1}^q(\omega_{kk}-\omega^0_{kk})^2
+2\sum_{(k,k')\in S_0}(\omega_{kk'}-\omega^0_{kk'})^2
+2\sum_{(k,k')\in S_0^c}\omega_{kk'}^2 \\
& \le q\varepsilon_{\textrm{diag}}^2+2s_q\varepsilon_{\textrm{sig}}^2+2Q\varepsilon_{\textrm{noise}}^2
<\delta^2,
\end{align*}
so $\mathcal A_\delta\subset \mathcal E_\delta$ and therefore
$\Pi_{\textrm{prod}}(\mathcal E_\delta)\ge \Pi_{\textrm{prod}}(\mathcal A_\delta)$.

Let $\mathcal E_\tau:=\{\tau_\Omega\in[t_0/2,t_0]\},$ 
with $t_0$ defined in the hypothesis. Under the truncated half-Cauchy prior
$\tau_\Omega\sim \mathcal C^+(0,\sigma_{\Omega,n})\,\mathbbm 1_{[0,C_\tau\sigma_{\Omega,n}]}$, we have
\begin{align*}
\Pi_{\textrm{prod}}(\mathcal E_\tau)
=
\frac{\arctan(t_0/\sigma_{\Omega,n})-\arctan((t_0/2)/\sigma_{\Omega,n})}{\arctan(C_\tau)}
& \ge
c_\tau\,\frac{t_0/\sigma_{\Omega,n}}{1+(t_0/\sigma_{\Omega,n})^2} \\
& \ge
c_\tau'\,\frac{t_0}{t_0+\sigma_{\Omega,n}},
\end{align*}
for absolute constants $c_\tau,c_\tau'>0$, using $\arctan(x)-\arctan(x/2)\gtrsim x/(1+x^2)$
and $\arctan(C_\tau)$ is a fixed positive constant.
Consequently,
\begin{equation}\label{eq:Etau-lb}
-\log \Pi_{\textrm{prod}}(\mathcal E_\tau)
\ \le\
C_\tau \log\!\Big(1+\frac{\sigma_{\Omega,n}}{t_0}\Big)
\quad\text{for a constant }C_\tau>0.
\end{equation}

Fix $\tau\in[t_0/2,t_0]$. For a single off-diagonal under the horseshoe
mixture,
\begin{equation}\label{eq:HS-tail}
\mathbb{P}\big(|\omega_{kk'}|>\epsilon \big| \tau_\Omega=\tau\big)
 \le
C_{\textrm HS}\,\frac{\tau}{\epsilon}\log \Big(1+\frac{\epsilon}{\tau}\Big),
\ \epsilon>0,
\end{equation}
for a universal constant $C_{\textrm HS}>0$.
Set $\epsilon=\varepsilon_{\textrm{noise}}$ and use $\tau\le t_0$.
With the choice $t_0=\varepsilon_{\textrm{noise}}/(C_0 Q\log(e q))$ and $q$ sufficiently large,
\begin{align*}
\frac{t_0}{\varepsilon_{\textrm{noise}}}\log \Big(1+\frac{\varepsilon_{\textrm{noise}}}{t_0}\Big) &
=
\frac{1}{C_0 Q\log(e q)}\log \big(1+C_0Q\log(e q)\big) \\
& \le \frac{1}{C_0 Q\log(e q)} \left[ \log 2 + \log(C_0 Q \log (eq) ) \right] \\
& \le 2 \log (eq) \cdot \frac{1}{C_0 Q \log (eq)} \ (\because Q \asymp q^2, \& \ q \ \text{large}) \\
& \le \frac{2}{C_0 Q}.
\end{align*}
Choosing $C_0\ge 16C_{\textrm HS}$ yields
\[
\mathbb{P}\big(|\omega_{kk'}|>\varepsilon_{\textrm{noise}} \big| \tau_\Omega=\tau\big)
 \le C_{\textrm{HS}} \cdot \frac{2}{C_0 Q} \le \frac{1}{8Q}.
\]
By a union bound over the $Q-s_q\le Q$ noise edges,
\begin{equation}\label{eq:noise-lb}
\mathbb{P}\Big(\max_{(k,k')\in S_0^c}|\omega_{kk'}|\le \varepsilon_{\textrm{noise}}\ \Big|\ \tau_\Omega=\tau\Big)
\ \ge\ 1-\frac{Q}{8Q}\ =\ \frac{7}{8}.
\end{equation}
Since \eqref{eq:noise-lb} holds uniformly for all $\tau\in[t_0/2,t_0]$, it also holds
conditioned on $\mathcal E_\tau$.

Fix constants $0<a<b<\infty$. For a given $\tau>0$,
define the signal-local-scale event
\(
\mathcal E_{\lambda}(\tau)
:=\Big\{\lambda_{kk'}\in[a/\tau,\ b/\tau]\ \ \text{for all }(k,k')\in S_0\Big\}.
\)
On $\mathcal E_{\lambda}(\tau)$, each signal edge has conditional variance
$\tau^2\lambda_{kk'}^2\in[a^2,b^2]$, i.e. bounded away from $0$ and $\infty$ by fixed constants.
The feasibility condition $2b n^{-A_\Omega}\le t_0$ guarantees that, for all
$\tau\in[t_0/2,t_0]$,
\[
b/\tau\le 2b/t_0\le n^{A_\Omega},
\]
so the interval $[a/\tau,b/\tau]\subset[0,n^{A_\Omega}]$, and the local-scale truncation does
not remove this event.

For $\lambda\sim \mathcal C^+(0,1) \mathbbm 1_{[0,n^{A_\Omega}]}$ and $\tau\in(0,1]$,
\begin{align*}
\Pi_{\textrm{prod}}\big(\lambda\in[a/\tau,b/\tau]\big)
& =
\frac{\arctan(b/\tau)-\arctan(a/\tau)}{\arctan(n^{A_\Omega})} \\
& = \frac{\arctan(\tau/a)-\arctan(\tau/b)}{\arctan(n^{A_\Omega})}
\ge c_\lambda \tau,
\end{align*}
where we used $\arctan(x)\ge x/2$ for $x\in[0,1]$ and $\tau/b\le \tau/a\le 1$ when $\tau\le a$,
and absorbed constants into $c_\lambda>0$. Hence for $\tau\in[t_0/2,t_0]$,
\begin{equation}\label{eq:Elambda-lb}
\Pi_{\textrm{prod}}\big(\mathcal E_\lambda(\tau)\big)
\ \ge\ (c_\lambda\,\tau)^{s_q}
\ \ge\ (c_\lambda\,t_0/2)^{s_q}.
\end{equation}

On $\mathcal E_\lambda(\tau)$, for each $(k,k')\in S_0$,
$\omega_{kk'}\mid(\tau,\lambda_{kk'})\sim \mathcal N(0,\sigma^2)$ with $\sigma\in[a,b]$.
Since $|\omega^0_{kk'}|\le \|\Omega_0\|_{\textrm{op}}\le \bar\lambda$, the normal density at
$\omega^0_{kk'}$ is uniformly bounded below by
\[
c_g:=\inf_{\sigma\in[a,b]}\frac{1}{\sqrt{2\pi}\sigma}\exp\!\Big(-\frac{\bar\lambda^2}{2\sigma^2}\Big)
\ \ge\ \frac{1}{\sqrt{2\pi}b}\exp\!\Big(-\frac{\bar\lambda^2}{2a^2}\Big)\ >\ 0.
\]
Therefore,
\[
\mathbb{P}\big(|\omega_{kk'}-\omega^0_{kk'}|\le \varepsilon_{\textrm{sig}}\ \big|\ \tau,\lambda_{kk'}\big)
 \ge 2c_g \varepsilon_{\textrm{sig}}.
\]
By conditional independence of signal edges given $(\tau,\lambda)$,
\begin{equation}\label{eq:sig-match}
\mathbb{P}\Big(\max_{(k,k')\in S_0}|\omega_{kk'}-\omega^0_{kk'}|\le \varepsilon_{\textrm{sig}}\ \Big|\ \tau \mathcal E_\lambda(\tau)\Big)
 \ge\ (2c_g \varepsilon_{\textrm{sig}})^{s_q}.
\end{equation}

Combining \eqref{eq:Elambda-lb} and \eqref{eq:sig-match}, and using $\tau\in[t_0/2,t_0]$,
we obtain the uniform bound (conditional on $\tau\in[t_0/2,t_0]$):
\begin{equation}\label{eq:sig-lb}
\mathbb{P}\Big(\max_{(k,k')\in S_0}|\omega_{kk'}-\omega^0_{kk'}|\le \varepsilon_{\textrm{sig}}\ \Big|\ \tau_\Omega=\tau\Big)
\ \ge\ (c_{\textrm{sig}} t_0 \varepsilon_{\textrm{sig}})^{s_q},
\end{equation}
for a constant $c_{\textrm{sig}}>0$ depending only on $(a,b,\bar\lambda)$.

\textbf{Diagonal control.}
 Recall the diagonal prior is
\(
\omega_{kk}\ \stackrel{\textrm{iid}}{\sim}\ {\textrm Ga}(a_0,b_0)\,\mathbbm 1_{[n^{-A_{\mathrm{diag}}},\,n^{A_{\mathrm{diag}}}]},
\ k=1,\dots,q,
\)
truncated to $[n^{-A_{\mathrm{diag}}},n^{A_{\mathrm{diag}}}]$ and renormalized.
Let $f_{\Gamma}(x)$ denote the (untruncated) Gamma$(a_0,b_0)$ density: The truncated density is
\(
f_n(x)={Z_n}^{-1}{f_{\Gamma}(x)\,\mathbbm 1\{n^{-A_{\mathrm{diag}}}\le x\le n^{A_{\mathrm{diag}}}\}},
\  
Z_n:=\int_{n^{-A_{\mathrm{diag}}}}^{n^{A_{\mathrm{diag}}}} f_{\Gamma}(u)\,du \in (0,1).
\)

By assumption \suppref{\ref{assum:multi-eig}}{(A4)}, $\Omega_0\succ0$ and $\lambda_{\min}(\Omega_0)\ge \underline\lambda>0$ and
$\lambda_{\max}(\Omega_0)\le \bar\lambda<\infty$. In particular, each diagonal entry satisfies
\(
0<\underline\lambda \le \omega^0_{kk} \le \bar\lambda,  k=1,\dots,q,
\)
since for a positive definite matrix, $\omega^0_{kk}=e_k^\top\Omega_0 e_k \in [\lambda_{\min}(\Omega_0),\lambda_{\max}(\Omega_0)]$.
Fix any $\varepsilon_{\textrm{diag}}\in(0,\underline\lambda/2]$. Then the interval
\(
I_k:=\big[\omega^0_{kk}-\varepsilon_{\textrm{diag}},\ \omega^0_{kk}+\varepsilon_{\textrm{diag}}\big]
\)
is contained in $[\underline\lambda/2,\ 3\bar\lambda/2]$ for all $k$.

For all sufficiently large $n$, we have $n^{-A_{\mathrm{diag}}}<\underline\lambda/2$ and $n^{A_{\mathrm{diag}}}>3\bar\lambda/2$,
hence $I_k\subset[n^{-A_{\mathrm{diag}}},n^{A_{\mathrm{diag}}}]$ for every $k$. Therefore $f_n(x)=f_{\Gamma}(x)/Z_n$ on $I_k$.

Because $f_{\Gamma}$ is continuous and strictly positive on any compact subset of $(0,\infty)$,
and $[\underline\lambda/2,\ 3\bar\lambda/2]$ is compact, we may define
\(
m_{\textrm{diag}}:=\inf_{x\in[\underline\lambda/2,\ 3\bar\lambda/2]} f_{\Gamma}(x)\ >\ 0.
\)
Also, since $Z_n\le 1$, we have $1/Z_n\ge 1$, so on $I_k$,
\(
f_n(x)={Z_n}^{-1}{f_{\Gamma}(x)} \ge f_{\Gamma}(x) \ge m_{\textrm{diag}}.
\)
Thus, setting $c_{\textrm{diag}}:=m_{\textrm{diag}}$, we have a uniform bound
\(
\inf_{k\le q}\ \inf_{x\in I_k} f_n(x)\ \ge\ c_{\textrm{diag}}\ >0,
\)
where $c_{\textrm{diag}}$ depends only on $(a_0,b_0,\underline\lambda,\bar\lambda)$.

Therefore, for each $k$,
\[
\Pi_{\textrm{prod}}\big(|\omega_{kk}-\omega^0_{kk}|\le \varepsilon_{\textrm{diag}}\big)
=\int_{I_k} f_n(x) dx
 \ge\ \int_{I_k} c_{\textrm{diag}}\,dx
=2c_{\textrm{diag}} \varepsilon_{\textrm{diag}}.
\]
Since the diagonal priors are independent across $k$, we obtain
\[
\Pi_{\textrm{prod}}\Big(\max_{1\le k\le q}|\omega_{kk}-\omega^0_{kk}|\le \varepsilon_{\textrm{diag}}\Big)
=\prod_{k=1}^q \Pi_{\textrm{prod}}\big(|\omega_{kk}-\omega^0_{kk}|\le \varepsilon_{\textrm{diag}}\big)
\ \ge\ (2c_{\textrm{diag}}\varepsilon_{\textrm{diag}})^q.
\]

Using conditional independence across diagonals and off-diagonals given $\tau_\Omega$,
and combining \eqref{eq:Etau-lb}, \eqref{eq:noise-lb}, \eqref{eq:sig-lb}, and the diagonal bound,
we have
\begin{align*}
\Pi_{\textrm{prod}}(\mathcal A_\delta)
&\ge
\Pi_{\textrm{prod}}(\mathcal E_\tau)
\inf_{\tau\in[t_0/2,t_0]}\mathbb{P}\Big(\max_{S_0^c}|\omega_{kk'}|\le \varepsilon_{\textrm{noise}}\ \Big|\ \tau_\Omega=\tau\Big) \\
& \inf_{\tau\in[t_0/2,t_0]}\mathbb{P}\Big(\max_{S_0}|\omega_{kk'}-\omega^0_{kk'}|\le \varepsilon_{\textrm{sig}}\ \Big|\ \tau_\Omega=\tau\Big)\\
&\hspace{2cm}\cdot
\Pi_{\textrm{prod}}\Big(\max_{k}|\omega_{kk}-\omega^0_{kk}|\le \varepsilon_{\textrm{diag}}\Big)\\
&\ge
\Pi_{\textrm{prod}}(\mathcal E_\tau)\cdot \frac{7}{8}\cdot
(c_{\textrm{sig}}\,t_0\,\varepsilon_{\textrm{sig}})^{s_q}\cdot (2c_{\textrm{diag}}\,\varepsilon_{\textrm{diag}})^q.
\end{align*}
Since $\mathcal A_\delta\subset\mathcal E_\delta$ and $\Pi(\mathcal E_\delta)\ge \Pi_{\textrm{prod}}(\mathcal E_\delta)$,
we obtain
\[
\Pi(\|\Omega-\Omega_0\|_F\le \delta)
\ \ge\ 
\frac{7}{8}\,\Pi_{\textrm{prod}}(\mathcal E_\tau)\,
(c_{\textrm{sig}}\,t_0\,\varepsilon_{\textrm{sig}})^{s_q}\,
(2c_{\textrm{diag}}\,\varepsilon_{\textrm{diag}})^{q}.
\]
Taking logs and inserting the definitions of $(t_0,\varepsilon_{\textrm{sig}},\varepsilon_{\textrm{diag}})$
yields
\[
-\log \Pi(\|\Omega-\Omega_0\|_F\le \delta)
\ \le\
C_1\,s_q \log\!\Big(\frac{q^3\log(e q)\,\sqrt{s_q}}{\delta^2}\Big)
+
C_2\,q\log\!\Big(\frac{q}{\delta}\Big)
+
C_3\,\log\!\Big(1+\frac{\sigma_{\Omega,n}}{t_0}\Big)
+
C_4,
\]
for constants $C_1,C_2,C_3,C_4>0$ depending only on $(a_0,b_0,\underline\lambda,\bar\lambda)$
(and the fixed $a,b,C_0$). Exponentiating completes the proof.
\end{proof}


\begin{lemma}
\label{lem:BART-smallball-continuous-global}
Assume \suppref{\ref{assum:multi-smoothness}}{(A1)}--{\suppref{\ref{assum:multi-eig}}{(A4)}} and the prior configuration {\suppref{\Cref{assum:multi-HSscale}}{(P1)}}. Fix $0<\delta<1$ and let $L$ be the number of leaves in a single Galton--Watson tree draw under the branching-process prior, and assume
$\mathbb E[L]=:L_0<\infty$.
Define the tolerances
\(
\epsilon_{\textrm{sig}}:={\delta}/{(8DS_B\sqrt{q})},\
\epsilon_{\textrm{noise}}:={\delta}/{(8D\sqrt{q}Q_B)},\
u_{\textrm{leaf}}:={\epsilon_{\textrm{sig}}}/{M},\
u_{\textrm{noise}}:={\epsilon_{\textrm{noise}}}/{M}.
\) \\
Choose \[t_0
:= \frac{u_{\textrm{noise}}\sqrt M}{(C_0 Q_B M L_0 \log(eQ_B))}
=\frac{\left({\delta}/{8D}\right)}
{(C_0 q^{1/2} Q_B^2 M^{3/2} L_0 \log(eQ_B))},\]
where $C_0\ge 32C_{\textrm HS}$ is a sufficiently large universal constant, and assume
\[
2b n^{-A_B}\le t_0
\quad\text{and}\quad
 t_0\le C_\tau\sigma_{B,n},
\]
for a fixed constant $b>1$. Then there exists a deterministic BART ensemble $\eta_\delta$ such that
$\|\eta_\delta-\eta_0\|_{2,n}\le \delta/8$. Moreover, letting $L_\delta$ denote the total number of leaves
across all $M$ trees and all signal pairs $(j,r)$ used to construct $\eta_\delta$, we can choose $\eta_\delta$ so that
\begin{equation}\label{eq:Ldelta-B-global}
L_{\delta} \lesssim M\sum_{r=1}^q\sum_{j\in S_{B_0,r}}
\left(\frac{D S_B \sqrt q}{\delta}\right)^{d_{0,jr}/\alpha_{jr}}.
\end{equation}
Finally, under \suppref{\ref{assum:multi-HSscale}}{(P1)} there exist constants $C_1,C_2,C_3,C_4<\infty$ and fixed BART
hyperparameters such that
\begin{equation}\label{eq:smallball-B-global}
\begin{split}
\Pi_{\textrm{BART}}\big(\|\eta-\eta_0\|_{2,n}\le \delta\big)
 & \ge
\exp\Bigg\{
-\,C_1\,L_\delta\Big[\log(n/\delta)+\log d+\log M+\log(S_B\sqrt q)\Big] \\
& - C_2\,S_B\log \Big(\frac{b}{t_0}\Big) -C_3 \log \Big(1+\frac{\sigma_{B,n}}{t_0}\Big)
-\,C_4
\Bigg\}.
\end{split}
\end{equation}
\end{lemma}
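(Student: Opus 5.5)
The proof has two parts. First we build a deterministic tree approximant $\eta_\delta$. Then we lower-bound the prior mass of a product event around it, mirroring the proof of \Cref{lem:ghs-smallball-continuous}.

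\textbf{Step 1: the approximant.} For each signal pair $(j,r)$ with $j\in S_{B_0,r}$, apply tree-ensemble approximation of H\"older functions on $k$-d regular designs (\citealp{RockovaSaha2019}; \citealp{Jeong2023}; using \ref{assum:multi-smoothness} and (A2c)). This gives an $M$-tree ensemble $B^\delta_{jr}$ that splits only on coordinates in $J_{0,jr}$. It satisfies $\max_i|B^\delta_{jr}(\bx_i)-B_{0,jr}(\bx_i)|\le \delta/(8DS_B\sqrt q)$ and has $O\big((DS_B\sqrt q/\delta)^{d_{0,jr}/\alpha_{jr}}\big)$ leaves per tree. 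All non-signal pairs are set to $B^\delta_{jr}\equiv0$. By (A2a), $\|\eta_{\delta,i}-\eta_{0,i}\|_2^2=\sum_r(\sum_j z_{ij}\Delta_{jr})^2\le q(DS_B\epsilon_{\rm sig})^2$. Summing over $r$ correctly (at most $S_B$ active terms in total) keeps this below $(\delta/8)^2$. Summing the leaf counts gives \eqref{eq:Ldelta-B-global}.

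\textbf{Step 2: the event $\mathcal A$.} Define $\mathcal A$ as the intersection of the following requirements.
\begin{enumerate}
\item[(a)] $\tau_B\in[t_0/2,t_0]$.
\item[(b)] For each signal pair, every tree has exactly the topology, split variables and cutpoints of $\eta_\delta$ (or a cell-refinement thereof).
\item[(c)] Every signal leaf satisfies $|b-b^\delta|\le u_{\rm leaf}$.
\item[(d)] For each noise pair, every leaf satisfies $|b|\le u_{\rm noise}$, with total leaf count at most, say, $4L_0$ per tree.
\end{enumerate}
On $\mathcal A$, a signal ensemble deviates pointwise by at most $Mu_{\rm leaf}=\epsilon_{\rm sig}$, and each noise ensemble is bounded by $\epsilon_{\rm noise}$ per tree sum. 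Multiplying by $D$ and summing over at most $Q_B$ pairs and $q$ outcomes gives at most $\delta/8+\delta/8$. Adding Step 1, $\|\eta-\eta_0\|_{2,n}\le\delta$.

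\textbf{Step 3: mass of each piece.}
\begin{enumerate}
\item[(a)] The half-Cauchy mass of $[t_0/2,t_0]$ gives the term $C_3\log(1+\sigma_{B,n}/t_0)$, exactly as in \eqref{eq:Etau-lb}.
\item[(b)] Each fixed tree with $L$ leaves has Galton--Watson probability at least $(\gamma^{c\log n}/(dn))^{L}$. This uses depth $\lesssim\log n$, split-variable probability bounded below after integrating the Dirichlet (as in \citet{Linero2018}), and cutpoint probability $\ge 1/n$. The result is $\exp\{-CL_\delta[\log n+\log d]\}$.
\item[(c)] For signal pairs, restrict $\lambda_{jr}\in[1/\tau,b/\tau]$. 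This interval lies inside the truncation range $[0,n^{A_B}]$ by $2bn^{-A_B}\le t_0$. Its mass is $\gtrsim\tau\gtrsim t_0/b$, so over $S_B$ pairs it contributes $C_2S_B\log(b/t_0)$. Given this, the leaf variance is of constant order, and each leaf lands in its window with probability $\gtrsim u_{\rm leaf}$, since $|b^\delta|\lesssim K$ is bounded. This yields $\exp\{-CL_\delta\log(M S_B\sqrt q/\delta)\}$.
\item[(d)] For noise pairs, we use \Cref{lem:HS-tailbound} and the fact that $\tau\le t_0$. A single leaf exceeds $u_{\rm noise}$ with probability at most $C_{\rm HS}\,(t_0/(u_{\rm noise}\sqrt M))\log(\cdot)\le 1/(16Q_BML_0)$, by the choice of $t_0$ (mirroring the $\log(eQ_B)$ computation in \Cref{lem:ghs-smallball-continuous}). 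Take a union bound over the expected $\le ML_0$ leaves per pair, using Markov to cap leaf counts, and then over the $Q_B$ pairs. Noise control then has conditional probability at least $1/2$, uniformly over $\tau$.
\end{enumerate}
Multiplying the conditionally independent factors given $\tau_B$ yields \eqref{eq:smallball-B-global}.

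\textbf{Main obstacle.} The delicate step is (d). The union bound must run over a random number of leaves without paying a $\log n$ per noise pair; otherwise we would incur a $Q_B\log n$ cost that exceeds the rate. I would handle it by conditioning on the trees and bounding $\E[\sum_{\rm leaves}\P(|b|>u\mid\tau)]\le ML_0\cdot\sup\P(|b|>u\mid\tau)$ directly via Wald's identity, then applying Markov, rather than imposing a leaf cap. Controlling the split-variable probability after integrating the sparse Dirichlet is a secondary nuisance: it costs $\log d$ per leaf, and only at signal trees.
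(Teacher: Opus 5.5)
Your proposal is correct and follows essentially the same route as the paper's proof. The shared ingredients are a cell-based step-function approximant with $\epsilon_{\textrm{sig}}$ sup-norm error, the global-scale window $\tau_B\in[t_0/2,t_0]$, inflated local scales $\lambda_{jr}\asymp 1/\tau$ on signal pairs, forced topology/split/cutpoint and leaf-window events, and noise control via \Cref{lem:HS-tailbound} plus Markov on the expected number of exceeding noise leaves, which is your Wald-type bound and the paper's $\mathbb E[U\mid\tau_B]\le 1/16$ step. The one fix is to drop the per-tree leaf cap in Step~2(d), as your last paragraph already suggests, since imposing it on all $Q_BM$ noise trees would cost a factor of order $e^{-cQ_BM}$.
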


\begin{proof}
Write $\eta_i=\eta(x_i)\in\mathbb R^q$ and $\eta_{0,i}=\eta_0(x_i)$, and recall
$\|\eta-\eta_0\|_{2,n}^2=(1/n)\sum_{i=1}^n\|\eta_i-\eta_{0,i}\|_2^2$.

Let us fix $r\in[q]$ and $j\in S_{B_0,r}$. By \suppref{\ref{assum:multi-smoothness}}{(A1)}, $B_{0,jr}(\bx)=B_{0,jr}(\bx_{J_{0,jr}})$ and
$B_{0,jr}\in\mathcal H^{\alpha_{jr}}([0,1]^{d_{0,jr}};K)$. 
Partition $[0,1]^{d_{0,jr}}$ into $m_{jr}^{d_{0,jr}}$ congruent cubes and let $B_{jr,\delta}$ be the cell-average step
function. Standard H\"older approximation following the arguments in \citet[Lemma 3.2]{Rockova2019} gives
\[
\|B_{jr,\delta}-B_{0,jr}\|_\infty \ \le\ C\,m_{jr}^{-\alpha_{jr}},
\]
for a constant $C=C(K,d)<\infty$. Choose $m_{jr}\asymp \epsilon_{\textrm{sig}}^{-1/\alpha_{jr}}$ so that
$\|B_{jr,\delta}-B_{0,jr}\|_\infty\le \epsilon_{\textrm{sig}}$.
This step function can be represented by an axis-aligned tree using only coordinates in $J_{0,jr}$ with
\[
L_{jr}\ \asymp\ m_{jr}^{d_{0,jr}}
\ \lesssim\ \epsilon_{\textrm{sig}}^{-d_{0,jr}/\alpha_{jr}}
\ =\ \Big(\frac{8D S_B\sqrt q}{\delta}\Big)^{d_{0,jr}/\alpha_{jr}}.
\]
Define $\eta_\delta$ by setting $B_{jr,\delta}$ for $j\in S_{B_0,r}$ and $B_{jr,\delta}\equiv 0$ otherwise:
\(
\eta_{\delta,r}(x)=\sum_{j\in S_{B_0,r}} z_j\,B_{jr,\delta}(x).
\)
Then for each $i,r$, using $|z_{ij}|\le D$,
\[
|\eta_{\delta,r}(x_i)-\eta_{0,r}(x_i)|
\le \sum_{j\in S_{B_0,r}} |z_{ij}|\cdot |B_{jr,\delta}(x_i)-B_{0,jr}(x_i)|
\le D\,s_r\,\epsilon_{\textrm{sig}}
\le D S_B\,\epsilon_{\textrm{sig}}
=\frac{\delta}{8\sqrt q}.
\]
Therefore $\|\eta_\delta(x_i)-\eta_0(x_i)\|_2\le \sqrt q\cdot(\delta/(8\sqrt q))=\delta/8$ for every $i$, hence
\begin{equation}\label{eq:eta-delta-approx-global}
\|\eta_\delta-\eta_0\|_{2,n}\le \delta/8.
\end{equation}
Finally, summing $ML_{jr}$ over all signal pairs $(j,r)$ gives \eqref{eq:Ldelta-B-global}.

Now, we define
\[
\mathcal A_\delta:=
\Big\{
\max_{(j,r):\,j\in S_{B_0,r}}\|B_{jr}-B_{jr,\delta}\|_\infty\le \epsilon_{\textrm{sig}},\quad
\max_{(j,r):\,j\notin S_{B_0,r}}\|B_{jr}\|_\infty\le \epsilon_{\textrm{noise}}
\Big\}.
\]
On $\mathcal A_\delta$, for each $i,r$,
\begin{align*}
|\eta_r(x_i)-\eta_{\delta,r}(x_i)|
&\le \sum_{j\in S_{B_0,r}} |z_{ij}|\cdot |B_{jr}(x_i)-B_{jr,\delta}(x_i)|
     +\sum_{j\notin S_{B_0,r}} |z_{ij}|\cdot |B_{jr}(x_i)|\\
&\le D\,s_r\,\epsilon_{\textrm{sig}}+D\,(p-s_r)\,\epsilon_{\textrm{noise}}
\le D S_B\epsilon_{\textrm{sig}}+Dp\,\epsilon_{\textrm{noise}}\\
& =\frac{\delta}{8\sqrt q}+\frac{\delta}{8\sqrt q}
=\frac{\delta}{4\sqrt q}.
\end{align*}
Hence $\|\eta(x_i)-\eta_\delta(x_i)\|_2\le \sqrt q\cdot(\delta/(4\sqrt q))=\delta/4$ for each $i$, so
\begin{equation}\label{eq:eta-eta-delta-global}
\|\eta-\eta_\delta\|_{2,n}\le \delta/4\qquad\text{on }\mathcal A_\delta.
\end{equation}
Combining \eqref{eq:eta-delta-approx-global} and \eqref{eq:eta-eta-delta-global} gives, on $\mathcal A_\delta$,
\[
\|\eta-\eta_0\|_{2,n}\le \|\eta-\eta_\delta\|_{2,n}+\|\eta_\delta-\eta_0\|_{2,n}
\le \delta/4+\delta/8<\delta.
\]
Thus it suffices to lower-bound $\Pi_{\textrm{BART}}(\mathcal A_\delta)$.

\textbf{Global noise control.} 
Let $\mathcal E_\tau:=\{\tau_B\in[t_0/2,t_0]\}$. Under $\tau_B\sim\mathcal C^+(0,\sigma_{B,n})\mathbbm 1_{[0,C_\tau\sigma_{B,n}]}$, the same arctan calculation as in the proof of \Cref{lem:ghs-smallball-continuous} yields $\Pi(\mathcal{E}_{\tau}) \ge ct_0(t_0 + \sigma_{B,n})^{-1}$ and therefore
\begin{equation}\label{eq:tau-mass-B-global}
-\log \Pi(\mathcal E_\tau)\ \le\ C_\tau \log\!\Big(1+\frac{\sigma_{B,n}}{t_0}\Big)
\end{equation}
for a constant $C_\tau>0$.

Now fix $\tau\le t_0$. Consider any noise leaf $b_{jrt\ell}$ with $j\notin S_{B_0,r}$.
Marginalizing over $\lambda_{jr}$ (and noting truncating $\lambda_{jr}\le n^{A_B}$ only decreases tails), we have
\begin{align*}
\mathbb{P}\big(|b_{jrt\ell}|>u_{\textrm{noise}}\mid \tau_B=\tau\big)
& \le C_{\textrm HS}\frac{\tau}{u_{\textrm{noise}}\sqrt M}\log\Big(1+\frac{u_{\textrm{noise}}\sqrt M}{\tau}\Big) \\
& \le C_{\textrm HS}\frac{t_0}{u_{\textrm{noise}}\sqrt M}\log\Big(1+\frac{u_{\textrm{noise}}\sqrt M}{t_0}\Big).
\end{align*}
With the choice of $t_0$ in our lemma, we have $u_{\textrm{noise}}\sqrt M/t_0=C_0 Q_B M L_0\log(eQ_B)$, hence
$\log(1+u_{\textrm{noise}}\sqrt M/t_0)\le 2\log(eQ_B)\cdot C'$ for a universal constant $C'$ and all large $Q_B$.
Choosing $C_0\ge 32C_{\textrm HS}C'$ yields the uniform bound
\[
\mathbb{P}\big(|b_{jrt\ell}|>u_{\textrm{noise}}\mid \tau_B=\tau\big)
\le \frac{1}{16\,Q_B\,M\,L_0},
\quad \forall\,\tau\in(0,t_0].
\]
Let $N_{\textrm{noise}}$ be the total number of leaves across all $M$ trees and all noise pairs $(j,r)$.
Since there are at most $Q_B$ pairs and each tree has mean leaf count $L_0$,
$\mathbb E[N_{\textrm{noise}}]\le Q_B\,M\,L_0$.
Let $U$ be the number of noise leaves with $|b|>u_{\textrm{noise}}$. Conditioning on tree structures and using the tail bound
leafwise gives $\mathbb E[U\mid\tau_B=\tau]\le 1/16$. This follows from \begin{align*}
    \E[U \mid \tau_{B}=\tau,\text{trees}] & = \sum_{\ell = 1}^{N_{\textrm{noise}}} \P[|b_{\ell}| > u_{\textrm{noise}} \mid \tau_{B}=\tau,\text{trees}] \\
    & \le \sum_{\ell = 1}^{N_{\textrm{noise}}} \frac{1}{16 Q_B M L_0} = \frac{N_{\textrm{noise}}}{16 Q_B M L_0} \le \frac{1}{16}.
\end{align*}
Hence by Markov's inequality,
\begin{equation}\label{eq:noise-good-global}
\mathbb{P}\Big(\max_{(j,r): j\notin S_{B_0,r}}\max_{t,\ell}|b_{jrt\ell}|\le u_{\textrm{noise}} \Big|\ \tau_B=\tau\Big) \ge \frac{15}{16}.
\end{equation}
On this event, for all noise pairs $(j,r)$,
\[
\|B_{jr}\|_\infty\le \sum_{t=1}^M \|g_{jrt}\|_\infty\le \sum_{t=1}^M \max_\ell |b_{jrt\ell}|
\le M u_{\textrm{noise}}=\epsilon_{\textrm{noise}}.
\]
Since \eqref{eq:noise-good-global} holds uniformly for $\tau\in[t_0/2,t_0]$, it also holds conditional on $\mathcal E_\tau$.

\textbf{Inflating local scales.} 
Fix constants $0<a<b<\infty$.
For $\tau>0$, again define
\(
\mathcal E_\lambda(\tau)
:=\Big\{\lambda_{jr}\in[a/\tau,\ b/\tau]\ \ \text{for all }(j,r)\text{ with }j\in S_{B_0,r}\Big\}.
\)
The feasibility condition $2b n^{-A_B}\le t_0$ guarantees that for $\tau\in[t_0/2,t_0]$,
\[
b/\tau\le 2b/t_0\le n^{A_B},
\]
so the interval $[a/\tau,b/\tau]\subset[0,n^{A_B}]$ and the local-scale truncation does not remove this event.
The standard arctan bound from the proof of \Cref{lem:ghs-smallball-continuous} yields, for all sufficiently small $\tau$ (hence for large $n$),
\(
\Pi\big(\lambda\in[a/\tau,b/\tau]\big)\ \ge\ c_\lambda\,\tau
\)
for a constant $c_\lambda>0$ depending only on $(a,b)$. By independence across signal pairs,
\begin{equation}\label{eq:lambda-mass-B-global}
\Pi\big(\mathcal E_\lambda(\tau)\big) \ge (c_\lambda \tau)^{S_B} \ge (c_\lambda t_0/2)^{S_B},
\quad \forall \tau\in[t_0/2,t_0].
\end{equation}
Thus $-\log \Pi(\mathcal E_\lambda(\tau))\le C_\lambda S_B\log(b/t_0)$ for a constant $C_\lambda>0$.

\textbf{Force signal tree structures and leaf values to match $B_{jr,\delta}$.}
Fix a signal pair $(j,r)$ with $j\in S_{B_0,r}$. Use the same deterministic partition with $L_{jr}$ leaves,
and realize it in each of the $M$ trees. Set each tree's target leaf values to be the cell means divided by $M$ so that
the sum of $M$ trees equals $B_{jr,\delta}$.

\medskip
\noindent\emph{(a) Topology, split variables, and cutpoints.}
Fix a signal pair $(j,r)$ with $j\in S_{B_0,r}$ and consider the same deterministic axis-aligned
partition representable by a tree with $L_{jr}$ leaves and hence
$L_{jr}-1$ internal nodes. Let $\mathcal T_{jr,\delta}$ denote a fixed rooted binary tree realizing this partition, and assume its depth is bounded
by $C_{\textrm{dep}}\log L_{jr}$ (under assumption (A2(c))).
Under the usual Galton--Watson branching-process priors, a fixed tree with
$L_{jr}$ leaves and depth $\lesssim \log L_{jr}$ has prior probability at least
$\exp(-c_{\textrm{gw}}L_{jr})$ for a constant $c_{\textrm{gw}}>0$ depending only on the fixed
branching hyperparameters. (Equivalently, one may use the explicit computation in the proof of \citet[Lemma 6.1]{Rockova2019} specialized to $p=d$, which yields an
$\exp\{-C L_{jr}\}$ factor coming from the product of split/no-split probabilities along the tree.)

Recall the split-variable weights $\pi_{jr}\in\Delta^{d-1}$ follow the symmetric Dirichlet law
$\pi_{jr}\mid\theta_{jr}\sim{\textrm Dir}(\theta_{jr}/d,\ldots,\theta_{jr}/d)$, with a hyperprior
$p(\theta_{jr})\propto(d+\theta_{jr})^{-(d+1)}$.
Fix a constant $\theta_0\asymp d$ and define the event
\(
\mathcal E_{\theta}:=\{\theta_{jr}\in[\theta_0,2\theta_0]\}.
\)
Since $p(\theta_{jr})\propto(d+\theta_{jr})^{-(d+1)}$, we have $\Pi(\mathcal E_\theta)\ge c_\theta$
for a constant $c_\theta\in(0,1)$ independent of $(n,p,q,d)$.
On $\mathcal E_\theta$, the Dirichlet parameters satisfy $\alpha:=\theta_{jr}/d\in[\alpha_-,\alpha_+]$
for fixed $0<\alpha_-<\alpha_+<\infty$.
Let $J_{0,jr}\subset[d]$ be the intrinsic active coordinate set, with $|J_{0,jr}|=d_{0,jr}$.
Because the Dirichlet density is continuous and strictly positive on sets bounded away from the boundary, following the arguments in \citet[Appendix B]{Linero2018}, 
there exists a constant $c_\pi>0$ such that
\[
\Pi\Big(\pi_{jr,k}\ge \frac{1}{2d}\ \ \forall k\in J_{0,jr}\ \Bigm| \mathcal E_\theta\Big)
 \ge c_\pi d^{-d_{0,jr}}.
\]
One crude way to see this is to integrate the Dirichlet density over the subset where the
$d_{0,jr}$ active coordinates lie in $[1/(2d),1/d]$ and the remaining mass is distributed arbitrarily;
the volume of this set is $\gtrsim d^{-d_{0,jr}}$ and the density is bounded below on it.

On the event $\{\pi_{jr,k}\ge 1/(2d)\ \forall k\in J_{0,jr}\}$, at each internal node the probability of
choosing the required split coordinate is at least $1/(2d)$. Therefore, the probability of choosing the
correct split coordinate at all $L_{jr}-1$ internal nodes is at least \((2d)^{-(L_{jr}-1)}.\)

Moreover, under the standard BART rule that a cutpoint is drawn uniformly from an empirical grid with at most
$c_n n$ admissible cutpoints per coordinate (for a universal $c_n\ge 1$), the probability of selecting the
required cutpoint at each internal node is at least $(c_n n)^{-1}$, hence the joint cutpoint probability is
\(
(c_n n)^{-(L_{jr}-1)}.
\) Multiplying the above probabilities, for one tree we obtain, for the event $\mathcal{S}=\{\text{one tree realizes the full desired structure}\}$,
\begin{align*}
\Pi(\mathcal{S})
&\ge
\exp(-c_{\textrm{gw}}L_{jr}) \Pi(\mathcal E_\theta)
\Pi\Big(\pi_{jr,k}\ge \frac{1}{2d}\ \forall k\in J_{0,jr}\ \Bigm|\ \mathcal E_\theta\Big)
(2d)^{-(L_{jr}-1)}(c_n n)^{-(L_{jr}-1)} \\
&\ge
\exp(-c_{\textrm{gw}}L_{jr})c_\theta c_\pi\,d^{-d_{0,jr}}
\exp\Big(-(L_{jr}-1)\big[\log(2d)+\log(c_n n)\big]\Big) \\
&\ge
\exp\Big(-C_{\textrm{tree}}\,L_{jr}\,[\log n+\log d]\Big) \exp\big(-C\,d_{0,jr}\log d\big),
\end{align*}
for constants $C_{\textrm{tree}},C<\infty$ depending only on the fixed hyperparameters. The last line uses $\log(2d)+\log(c_n n)\lesssim \log n+\log d$.

For $M$ independent trees realizing the same structure, the bound holds with $ML_{jr}$ in place of $L_{jr}$.
Multiplying over all signal pairs $(j,r)$ with $j\in S_{B_0,r}$ yields the total structure cost
\(
\ge\ \exp\Big(-C\,L_\delta\,[\log n+\log d+\log M]\Big),
\)
where $L_\delta=M\sum_{r}\sum_{j\in S_{B_0,r}}L_{jr}$, and the extra
$\sum_{r}\sum_{j\in S_{B_0,r}} d_{0,jr}\log d$ term is absorbed into the bracket since
$d_{0,jr}\le L_{jr}$ for the cell-partition trees constructed in our first step.

\smallskip
\noindent\emph{(b) Leaf values.}
On $\mathcal E_\tau\cap\mathcal E_\lambda(\tau)$, for any signal pair $(j,r)$ each leaf is Gaussian with variance
\(
\text{Var}(b_{jrt\ell}\mid\tau,\lambda_{jr})={\tau^2\lambda_{jr}^2}/{M}\in\Big[{a^2}/{M},{b^2}/{M}\Big].
\)
The target leaf values are bounded by $K/M$. Hence the Gaussian density at any target is bounded below by
\[
c_g\sqrt M
\ :=\
\inf_{\sigma^2\in[a^2/M,b^2/M]}\frac{1}{\sqrt{2\pi}\sigma}\exp \Big(-\frac{(K/M)^2}{2\sigma^2}\Big)
\ \ge\ 
\frac{\sqrt M}{\sqrt{2\pi}\,b}\exp\!\Big(-\frac{K^2}{2a^2 M}\Big),
\]
which is strictly positive and constant.
Therefore, for $u_{\textrm{leaf}}=\epsilon_{\textrm{sig}}/M$,
\[
\mathbb{P}\big(|b_{jrt\ell}-b_{jrt\ell}^{\textrm{target}}|\le u_{\textrm{leaf}}\mid\tau,\lambda_{jr}\big)
\ \ge\ 2c_g\sqrt M\,u_{\textrm{leaf}}
=2c_g\,\frac{\epsilon_{\textrm{sig}}}{\sqrt M}.
\]
By independence across leaves, all $ML_{jr}$ leaves hit their target intervals with probability at least
\[
\Big(2c_g\,\frac{\epsilon_{\textrm{sig}}}{\sqrt M}\Big)^{ML_{jr}}
=\exp\Big(-CML_{jr}\log\Big(\frac{\sqrt M}{\epsilon_{\textrm{sig}}}\Big)\Big).
\]
Multiplying over all signal pairs yields a total leaf-value cost
\[
\ge\ \exp\Big(-C\,L_\delta\,\log\Big(\frac{\sqrt M}{\epsilon_{\textrm{sig}}}\Big)\Big).
\]
Since $\epsilon_{\textrm{sig}}^{-1}=8D S_B\sqrt q/\delta$, we have
\[
\log\left(\frac{\sqrt M}{\epsilon_{\textrm{sig}}}\right)
=\log\left(\frac{8D\,S_B\sqrt{Mq}}{\delta}\right)
\ \lesssim\ \log(n/\delta)+\log M+\log(S_B\sqrt q)
\]
under assumption \suppref{\ref{assum:multi-growth}}{(A3)} (and using $\log(1/\delta)\le \log(n/\delta)$ for $n\ge 1$),
so this term matches the leading $L_\delta[\cdots]$ contribution in \eqref{eq:smallball-B-global}.

On the intersection of the events from the above steps, we have:
(i) for all noise pairs $(j,r)$, $\|B_{jr}\|_\infty\le\epsilon_{\textrm{noise}}$, and
(ii) for all signal pairs $(j,r)$, $\|B_{jr}-B_{jr,\delta}\|_\infty\le\epsilon_{\textrm{sig}}$.
Hence this intersection is contained in $\mathcal A_\delta$, and as established already,
$\mathcal A_\delta\subseteq\{\|\eta-\eta_0\|_{2,n}\le\delta\}$.
Therefore,
\begin{align*}
\Pi_{\textrm{BART}}(\|\eta-\eta_0\|_{2,n}\le\delta)
 & \ge
\Pi(\mathcal E_\tau)\cdot \frac{15}{16}\cdot 
\inf_{\tau\in[t_0/2,t_0]}\Pi(\mathcal E_\lambda(\tau))  \times \\
& \exp\Big(-C_1L_\delta[\log(n/\delta)+\log d+\log M+\log(S_B\sqrt q)]\Big).
\end{align*}
Using \eqref{eq:tau-mass-B-global} and \eqref{eq:lambda-mass-B-global} and absorbing constants yields
\eqref{eq:smallball-B-global}.
\end{proof}
\begin{lemma}
\label{lem:E3-vanishes}
Assume \suppref{\ref{assum:GHS-scale}}{(P2)} and let \(\bar R_{\Omega,n}\ge 2n^{A_{\mathrm{diag}}}.\) Then, under the product graphical-horseshoe prior truncated to
\(\{\Omega\succ0\}\),
\[
\Pi\left(\|\Omega^{-1}\|_{\mathrm{op}}>\bar R_{\Omega,n}\right)
\le
C \frac{s_q n^{A_{\mathrm{diag}}}\log n}{\sqrt n}.
\]
Consequently, under \Cref{assum:multi-growth},
\[
\Pi\left(\|\Omega^{-1}\|_{\mathrm{op}}>\bar R_{\Omega,n}\right)
=
\mathcal O(n^{-\kappa_{\mathrm{spec}}}).
\]
\end{lemma}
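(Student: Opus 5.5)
The plan is to reduce the spectral event to a Gershgorin-type event on the off-diagonal entries, bound that event under the unrestricted product prior \(\Pi_{\textrm{prod}}\) using a first-moment (Markov) argument, and then pay for the SPD restriction by dividing by \(\Pi_{\textrm{prod}}(\Omega\succ0)\). I will show that this normalising mass is bounded below by a constant by the same estimate.

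\emph{Step 1 (reduction to off-diagonal row sums).} On \(\{\Omega\succ0\}\) we have \(\|\Omega^{-1}\|_{\mathrm{op}}=1/\lambda_{\min}(\Omega)\). Hence \(\{\|\Omega^{-1}\|_{\mathrm{op}}>\bar R_{\Omega,n}\}=\{\lambda_{\min}(\Omega)<\bar R_{\Omega,n}^{-1}\}\subseteq\{\lambda_{\min}(\Omega)<n^{-A_{\mathrm{diag}}}/2\}\), using \(\bar R_{\Omega,n}\ge 2n^{A_{\mathrm{diag}}}\). By \ref{assum:GHS-scale}, every diagonal entry satisfies \(\omega_{kk}\ge n^{-A_{\mathrm{diag}}}\) almost surely. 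Gershgorin's theorem then gives
\[
\lambda_{\min}(\Omega)\ \ge\ n^{-A_{\mathrm{diag}}}-\max_{k}\sum_{k'\neq k}|\omega_{kk'}|.
\]
So the bad event is contained in
\[
\mathcal B:=\Big\{\sum_{k<k'}|\omega_{kk'}|\ge n^{-A_{\mathrm{diag}}}/4\Big\},
\]
since each off-diagonal entry appears in exactly two row sums, so any row sum is at most twice the total \(\sum_{k<k'}|\omega_{kk'}|\). The same argument shows \(\mathcal B^c\subseteq\{\Omega\succ0\}\) under \(\Pi_{\textrm{prod}}\).

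\emph{Step 2 (first moment under the product prior).} Conditional on \((\tau_\Omega,\lambda_{kk'})\),
\[
\mathbb E|\omega_{kk'}|=\sqrt{2/\pi}\,\tau_\Omega\lambda_{kk'}.
\]
For the truncated half-Cauchy \(\lambda\sim\mathcal C^+(0,1)\mathbbm 1_{[0,n^{A_\Omega}]}\), direct integration gives
\[
\mathbb E\lambda=\frac{\log(1+n^{2A_\Omega})}{\pi\arctan(n^{A_\Omega})}\lesssim A_\Omega\log n .
\]
The global scale is truncated, so \(\tau_\Omega\le C_\tau\sigma_{\Omega,n}\asymp s_q/(q^2\sqrt n)\). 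Summing over the \(Q\le q^2/2\) off-diagonal pairs gives
\[
\mathbb E_{\textrm{prod}}\sum_{k<k'}|\omega_{kk'}|\ \lesssim\ \frac{s_q\log n}{\sqrt n}.
\]
Markov's inequality then yields
\[
\Pi_{\textrm{prod}}(\mathcal B)\le 4n^{A_{\mathrm{diag}}}\,\mathbb E_{\textrm{prod}}\sum_{k<k'}|\omega_{kk'}|\le C'\,\frac{s_q n^{A_{\mathrm{diag}}}\log n}{\sqrt n}.
\]
The \(q^2\) from counting edges is exactly cancelled by the \(q^{-2}\) in the tuning of \(\sigma_{\Omega,n}\); this is what the choice in \ref{assum:GHS-scale} is designed for.

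\emph{Step 3 (removing the SPD restriction).} This is the step I expect to need the most care, because conditioning on \(\{\Omega\succ0\}\) could in principle inflate probabilities (cf. the remark after \Cref{lem:GHS-effective-support-correct}). From Step 1, \(\Pi_{\textrm{prod}}(\Omega\succ0)\ge 1-\Pi_{\textrm{prod}}(\mathcal B)\). By \Cref{assum:multi-growth}, \(s_q n^{A_{\mathrm{diag}}}\log n/\sqrt n=\mathcal O(n^{-\kappa_{\mathrm{spec}}})\to0\), so \(\Pi_{\textrm{prod}}(\Omega\succ0)\ge 1/2\) for all large \(n\). Therefore
\[
\Pi(\|\Omega^{-1}\|_{\mathrm{op}}>\bar R_{\Omega,n})\le \frac{\Pi_{\textrm{prod}}(\mathcal B)}{\Pi_{\textrm{prod}}(\Omega\succ0)}\le 2C'\frac{s_q n^{A_{\mathrm{diag}}}\log n}{\sqrt n},
\]
which is the first claim with \(C=2C'\). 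The second claim follows immediately by inserting the graph-sparsity condition of \Cref{assum:multi-growth}.
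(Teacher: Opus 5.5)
Your proposal is correct and follows essentially the same route as the paper. Both arguments use a Gershgorin event on the off-diagonal row sums (with the diagonal truncation $\omega_{kk}\ge n^{-A_{\mathrm{diag}}}$), which forces $\|\Omega^{-1}\|_{\mathrm{op}}\le 2n^{A_{\mathrm{diag}}}$ and $\Omega\succ0$ at once; then Markov's inequality with $\mathbb E|\omega_{kk'}|\lesssim \sigma_{\Omega,n}\log n$, so the $q^2$ edge count cancels against $\sigma_{\Omega,n}$; then renormalization by $\Pi_{\textrm{prod}}(\Omega\succ0)\ge 1/2$. Thresholding the total $\sum_{k<k'}|\omega_{kk'}|$ instead of the paper's maximal row sum is only cosmetic. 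One harmless slip: the mean of the truncated half-Cauchy is $\log(1+n^{2A_\Omega})/\{2\arctan(n^{A_\Omega})\}$, not $\pi$ in the denominator, and this does not affect your bound $\lesssim A_\Omega\log n$.
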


\begin{proof}
Let \(m_n:=n^{-A_{\mathrm{diag}}}\). We define the event
\[
\mathcal D_n
:=
\left\{
\max_{1\le k\le q}
\sum_{\ell\ne k}|\omega_{k\ell}|
\le \frac{m_n}{2}
\right\}.
\]
On \(\mathcal D_n\), since \(\omega_{kk}\ge m_n\) for every \(k\), Gershgorin's theorem gives
\[
\lambda_{\min}(\Omega)
\ge
\min_{1\le k\le q}
\left\{
\omega_{kk}-\sum_{\ell\ne k}|\omega_{k\ell}|
\right\}
\ge
\frac{m_n}{2}.
\]
Therefore \(\mathcal D_n\subseteq\{\Omega\succ0\}\), and if
\(\bar R_{\Omega,n}\ge 2/m_n=2n^{A_{\mathrm{diag}}}\), then
\(
\{\Omega\succ0,\ \|\Omega^{-1}\|_{\mathrm{op}}>\bar R_{\Omega,n}\}
\subseteq
\mathcal D_n^c.
\)

We now bound \(\Pi_{\mathrm{prod}}(\mathcal D_n^c)\), where
\(\Pi_{\mathrm{prod}}\) denotes the product prior before imposing \(\Omega\succ0\).
Conditional on \((\tau_\Omega,\lambda_{\Omega,kk'})\),
\(
\E\{|\omega_{kk'}|\mid \tau_\Omega,\lambda_{\Omega,kk'}\}
=
\sqrt{{2}/{\pi}} \cdot \tau_\Omega\lambda_{\Omega,kk'}.
\)
By \suppref{\ref{assum:GHS-scale}}{(P2)}, \(\tau_\Omega\le C_\tau\sigma_{\Omega,n}\) and
\(\lambda_{\Omega,kk'}\sim C^+(0,1)\mathbbm{1}_{[0,n^{A_\Omega}]}\). Hence
\(
\E(\lambda_{\Omega,kk'})\le C_{A_\Omega}\log n \implies \E_{\mathrm{prod}}|\omega_{kk'}|
\le
C\sigma_{\Omega,n}\log n.
\)
Using Markov's inequality,
\[
\Pi_{\mathrm{prod}}(\mathcal D_n^c)
\le
\frac{2}{m_n}
\E_{\mathrm{prod}}
\left[
\max_k\sum_{\ell\ne k}|\omega_{k\ell}|
\right]
\le
\frac{2}{m_n}
\E_{\mathrm{prod}}
\left[
\sum_{k=1}^q\sum_{\ell\ne k}|\omega_{k\ell}|
\right].
\]
Thus
\[
\Pi_{\mathrm{prod}}(\mathcal D_n^c)
\le
C\frac{q^2\sigma_{\Omega,n}\log n}{m_n}
=
C\frac{s_q n^{A_{\mathrm{diag}}}\log n}{\sqrt n}.
\]
Moreover, since \(\mathcal D_n\subseteq\{\Omega\succ0\}\),
\[
\Pi_{\mathrm{prod}}(\Omega\succ0)
\ge
\Pi_{\mathrm{prod}}(\mathcal D_n)
\ge
1-
C\frac{s_q n^{A_{\mathrm{diag}}}\log n}{\sqrt n}.
\]
By \Cref{assum:multi-growth}, the last term is \(\mathcal O(n^{-\kappa_{\mathrm{spec}}})\), and hence the denominator is bounded below by \(1/2\) for all sufficiently large \(n\). Therefore, after renormalizing to the SPD-restricted prior,
\[
\Pi\left(\|\Omega^{-1}\|_{\mathrm{op}}>\bar R_{\Omega,n}\right)
\le
C\frac{s_q n^{A_{\mathrm{diag}}}\log n}{\sqrt n}
=
\mathcal O(n^{-\kappa_{\mathrm{spec}}}).
\]
\end{proof}

\subsubsection*{Sieve Construction}
\label{sec:revised-sieve}

To establish contraction under the full posterior with heavy-tailed horseshoe priors, we cannot globally truncate the parameter amplitude, as the Cauchy tails would leave a sieve complement with polynomial prior mass. Instead, we define a purely structural outer sieve $\mathcal G_n$ whose complement is restricted only by combinatorial complexity, ensuring its prior mass decays exponentially. We then partition $\mathcal G_n$ into concentric shells across three dimensions: precision noise, mean noise, and active signal amplitude.

\paragraph{Active Budgets and Thresholds.}
Let $Q=\binom{q}{2}$ and $Q_B:=pq$. Fix large constants $C_{s\Omega}, C_{sB}, C_{L,\mathrm{act}}, C_J>0$. We define the structural budgets:
\begin{enumerate}
\item[(i)] $s_{\Omega,n} := \left\lceil C_{s\Omega}\,\frac{n\varepsilon_{n}^2}{\log(eQ)}\right\rceil$
\item[(ii)] $S_{B,n} := \left\lceil C_{sB}\,\frac{n\varepsilon_{n}^2+\log q}{\log(eQ_B)}\right\rceil$
\item[(iii)] $L_{\mathrm{act},n} := C_{L,\mathrm{act}}\left(\frac{n\varepsilon_{n}^2}{\log n}+\log Q_B\right), \quad d_n := C_J\,L_{\mathrm{act},n}$
\end{enumerate}

Fix constants $\kappa_B,\kappa_\Omega\in(0,1)$. We define the effective-support selection thresholds:
\begin{equation}\label{eq:u-threshold-global}
u_{B,n} := c_{B}\,L_\star\,\frac{\sigma_{B,n}}{\sqrt M} \Big(\frac{eQ_B}{S_{B,n}}\Big)^{1+\kappa_B}, \quad t_{\Omega,n} := c_{\Omega}\,\sigma_{\Omega,n} \Big(\frac{eQ}{s_{\Omega,n}}\Big)^{1+\kappa_\Omega},
\end{equation}
where $c_B, c_\Omega>0$ are sufficiently large constants.

\paragraph{Polynomial envelopes.}
Define
\[
\Lambda_{B,n}:=n^{2A_B+1},
\quad
A_{\Omega,F}:=\max\{2A_\Omega+1,A_{\mathrm{diag}}\},
\quad
R_{\Omega,n}:=\sqrt{2}\,q\,n^{A_{\Omega,F}},
\]
\[
m_{\mathrm{diag},n}:=n^{-A_{\mathrm{diag}}},
\quad
M_{\mathrm{diag},n}:=n^{A_{\mathrm{diag}}},
\quad
\bar R_{\Omega,n}\ge 2n^{A_{\mathrm{diag}}}.
\]

\paragraph{Outer Sieve ($\mathcal G_n$).}
The outer sieve strictly enforces the discrete complexity budgets but places zero restrictions on the continuous parameter values. Define the thresholded active sets:
\[
\widehat S_\Omega(t_{\Omega,n}) := \{(k,k'):k<k',\ |\omega_{kk'}|>t_{\Omega,n}\}, \quad \widehat S_B(u_{B,n}) := \{(j,r):\ \max_{t,\ell}|b_{jrt\ell}|>u_{B,n}\}.
\]
The outer sieve is:
\begin{align*}
\mathcal G_n := \Big\{ \Theta=(\eta, \Omega) :\ & |\widehat S_\Omega(t_{\Omega,n})|\le s_{\Omega,n},\ |\widehat S_B(u_{B,n})|\le S_{B,n},\ L_{\mathrm{act}}(u_{B,n})\le L_{\mathrm{act},n}, \\
& \forall(j,r):\ \#\{\text{split variables in }B_{jr}\}\le d_n \Big\}.
\end{align*}

We decompose any $\Theta \in \mathcal G_n$ into its active signal and inactive noise components. The induced inactive $\ell_2$ noise energies are:
\[
E_{\Omega,\mathrm{noise}}(\Omega) := \sum_{(k,k')\notin\widehat S_\Omega(t_{\Omega,n})}\omega_{kk'}^2, \quad E_{\eta,\mathrm{noise}}(\bm{b}) := \frac{1}{n}\sum_{i=1}^n \sum_{r=1}^q \Big(\sum_{j:\,(j,r)\notin \widehat S_B(u_{B,n})} z_{ij}\,B_{jr}(\bx_i)\Big)^2.
\]
We define the base covering radius $r_n := c_{\mathrm{base}} \varepsilon_n^\dagger$ and the baseline noise tolerances:
\(
\Delta_{\Omega,n} := {r_n^2}/{16}, \ \Delta_{\eta,n} := {r_n^2}/{16}.
\)
Finally, we measure the Euclidean amplitude of the active parameter space:
\(
A_{\mathrm{sig}}(\Theta) := \max \Big( \|\Omega_{\mathrm{sig}}\|_F, \, \|\eta_{\mathrm{sig}}\|_{2,n} \Big).
\)

\paragraph{Amplitude Shells ($S_{k,m,a}$).}
We slice the entire structural sieve $\mathcal G_n$ into a 3-dimensional lattice of concentric shells based on the precision noise $k$, mean noise $m$, and active amplitude $a$. For integers $k, m, a \in \{0, 1, 2, \dots\}$, define:
\begin{align*}
S_{k,m,a} := \Big\{ \Theta \in \mathcal G_n :\ & k\Delta_{\Omega,n} \le E_{\Omega,\mathrm{noise}}(\Omega) < (k+1)\Delta_{\Omega,n}, \\
& m\Delta_{\eta,n} \le E_{\eta,\mathrm{noise}}(\bm{b}) < (m+1)\Delta_{\eta,n}, \\
& a \le A_{\mathrm{sig}}(\Theta) < a+1 \Big\}.
\end{align*}
The inner sieve is effectively
\[
\mathcal F_n := \bigcup_{a=0}^{A_{0,n}} S_{0,0,a}
\]
for a suitably large polynomial envelope $A_{0,n}=n^{C_A}$, where $C_A$ is chosen to dominate the polynomial amplitude bounds induced by $A_B$, $A_\Omega$, and $A_{\mathrm{diag}}$.


\subsubsection*{Prior tail bound for the inner-sieve complement ($\mathcal F_n^c$)}
By our prior assumptions \suppref{\ref{assum:multi-HSscale}}{(P1)} and \suppref{\ref{assum:GHS-scale}}{(P2)}, because the continuous horseshoe local scales ($\lambda_{\Omega,kk'}$ and $\lambda_{jr}$) are truncated only at large polynomial rates $n^{A_\Omega}$ and $n^{A_B}$ to preserve prior mass for true signals in the small-ball neighborhood (see Lemmas \ref{lem:ghs-smallball-continuous} and \ref{lem:BART-smallball-continuous-global}), the prior inherently retains heavy Cauchy-like tails. 

Consequently, the prior probability of the noise energies exceeding the tight $\ell_2$ sieve tolerances need not decay exponentially. We therefore formulate the final contraction result under the prior restricted to the effective sieve $\mathcal{F}_n$.
\begin{lemma}
\label{lem:prior-tail-correct}
Under the polynomial truncations in priors \suppref{\ref{assum:multi-HSscale}}{(P1)}--\suppref{\ref{assum:GHS-scale}}{(P2)}, the prior mass outside the sieve $\mathcal F_n$ decays at a polynomial rate. That is, there exists a constant $\kappa > 0$ such that for all sufficiently large $n$,
\[
\Pi(\mathcal F_n^{c}) \le \mathcal{O}(n^{-\kappa}).
\]
\end{lemma}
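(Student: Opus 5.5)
Write $\mathcal F_n^c=\mathcal G_n^c\cup(\mathcal G_n\setminus\mathcal F_n)$ and bound each piece by a union bound. For $\mathcal G_n^c$ we will show exponential decay, which is in particular polynomial. For $\mathcal G_n\setminus\mathcal F_n$ we will show only polynomial decay. Within $\mathcal G_n$, a parameter falls outside $\mathcal F_n$ in one of four ways:
\begin{enumerate}[label=(\roman*)]
\item $E_{\Omega,\mathrm{noise}}\ge\Delta_{\Omega,n}$;
\item $E_{\eta,\mathrm{noise}}\ge\Delta_{\eta,n}$;
\item $A_{\mathrm{sig}}(\Theta)>A_{0,n}$;
\item the spectral envelope $\|\Omega^{-1}\|_{\mathrm{op}}\le\bar R_{\Omega,n}$ fails.
\end{enumerate}
Event (iv) is exactly what \Cref{lem:E3-vanishes} controls: its probability is $\mathcal O(n^{-\kappa_{\mathrm{spec}}})$ under \Cref{assum:multi-growth}.

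\textbf{Structural complement.} The four budget constraints in $\mathcal G_n$ are handled separately.
\begin{itemize}
\item For $|\widehat S_B(u_{B,n})|>S_{B,n}$, apply \Cref{lem:eff-support-correct} with $S^\star=S_{B,n}$. This gives $\exp\{-\kappa_B S_{B,n}\log(eQ_B/S_{B,n})\}\lesssim\exp(-c\,n\varepsilon_n^2)$, using the lower bound on $n(\varepsilon_n^\dagger)^2$ in \Cref{assum:multi-growth}. The threshold in \eqref{eq:u-threshold-global} omits the log factor of the lemma's $u_n$, so I will either absorb it into $c_B$ or note that a larger threshold only makes the event smaller.
\item For $|\widehat S_\Omega(t_{\Omega,n})|>s_{\Omega,n}$, union-bound over edge sets $S$ of size $s_{\Omega,n}$ and apply \Cref{lem:GHS-effective-support-correct} under the product prior. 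The count $\binom{Q}{s}$ is beaten by the factor $(s/eQ)^{(1+\kappa_\Omega)s}$, which follows from the exponent $1+\kappa_\Omega$ in $t_{\Omega,n}$ by rerunning the tail calculation of that lemma. We then pass to the SPD-restricted prior at a cost of the factor $1/\Pi_{\mathrm{prod}}(\Omega\succ0)\le 2$, established in the proof of \Cref{lem:E3-vanishes}.
\item The active-leaf budget $L_{\mathrm{act},n}$ and split-variable budget $d_n$ are handled by Galton--Watson tail bounds. With split probability $\gamma^h$ and $\gamma<1/2$, the leaf count of a tree has exponential tails. A Chernoff bound over the at most $S_{B,n}M$ active trees then gives $\exp(-cL_{\mathrm{act},n})$, and $d_n=C_J L_{\mathrm{act},n}$ dominates the number of distinct split variables.
\end{itemize}

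\textbf{Noise energies.} This is the step I expect to be the main obstacle, because the tolerances $\Delta_{\cdot,n}\asymp(\varepsilon_n^\dagger)^2$ are tiny while the horseshoe tails are heavy. I will use Markov's inequality on the expectation of the inactive energy, truncating each coordinate at the threshold.
\begin{itemize}
\item Precision noise: since $\omega_{kk'}^2\mathbbm 1\{|\omega_{kk'}|\le t\}\le t|\omega_{kk'}|$, and $\E|\omega_{kk'}|\lesssim\sigma_{\Omega,n}\log n$ as in \Cref{lem:E3-vanishes}, we get $\E E_{\Omega,\mathrm{noise}}\lesssim Q\,t_{\Omega,n}\sigma_{\Omega,n}\log n$.
\item Mean noise: $|z_{ij}|\le D$ together with Cauchy--Schwarz over $j$ bounds $\E E_{\eta,\mathrm{noise}}$ by a multiple of $qp^2 M^2 u_{B,n}\,\sigma_{B,n}\log n$. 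Here leaf counts are controlled in expectation by $L_0$.
\end{itemize}
Dividing each bound by $(\varepsilon_n^\dagger)^2$ should give $n^{-\kappa}$, using the calibrations $\sigma_{\Omega,n}\asymp s_q/(q^2\sqrt n)$ and $\sigma_{B,n}\asymp S_B/p$ and the sparsity regime in \Cref{assum:multi-growth}. If the crude Markov bound loses too much for the mean block, I would first work conditionally on $\tau_B$ and use the $\tau_B$ truncation at $C_\tau\sigma_{B,n}$ before taking expectations over $\lambda_{jr}$.

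\textbf{Amplitude.} For event (iii), note that $\|\Omega_{\mathrm{sig}}\|_F$ is bounded by $\sqrt Q\max|\omega_{kk'}|+\sqrt q\,n^{A_{\mathrm{diag}}}$. Each off-diagonal satisfies $|\omega_{kk'}|\le\tau_\Omega\lambda_{kk'}|Z|$, where $\lambda_{kk'}\le n^{A_\Omega}$ and $Z$ is standard normal, so the union Gaussian tail with $A_{0,n}=n^{C_A}$ for large $C_A$ is super-polynomially small. The same argument applies to $\|\eta_{\mathrm{sig}}\|_{2,n}$, using the leaf bound $n^{A_B}\tau_B$ and at most $L_{\mathrm{act},n}$ active leaves. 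Summing all contributions gives $\Pi(\mathcal F_n^c)=\mathcal O(n^{-\kappa})$ with $\kappa$ the smallest exponent among them.
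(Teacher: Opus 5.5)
Your decomposition is the paper's. It uses union bounds over the structural budgets (the two effective-support lemmas and Galton--Watson tails), Gaussian tails for the amplitude envelopes, \Cref{lem:E3-vanishes} for the spectral envelope, and Markov's inequality on thresholded noise energies. Your explicit factor $1/\Pi_{\mathrm{prod}}(\Omega\succ0)\le 2$ is, if anything, more careful than the paper. The gap is the step you flagged and then left at ``should give $n^{-\kappa}$'': with the stated calibrations it does not. In the precision block, $Q\sigma_{\Omega,n}^2\asymp s_q^2/(Qn)$, $t_{\Omega,n}=c_\Omega\sigma_{\Omega,n}(eQ/s_{\Omega,n})^{1+\kappa_\Omega}$ and $\Delta_{\Omega,n}\asymp(\varepsilon_n^\dagger)^2\asymp s_{\Omega,n}\log(eQ)/n$ turn your bound into
\[
\frac{\E E_{\Omega,\mathrm{noise}}}{\Delta_{\Omega,n}}\asymp\Big(\frac{s_q}{s_{\Omega,n}}\Big)^{2}\Big(\frac{eQ}{s_{\Omega,n}}\Big)^{\kappa_\Omega}\frac{\log n}{\log(eQ)} .
\]
When the graph term dominates $\varepsilon_n^\dagger$, $s_{\Omega,n}\asymp s_q$ and this ratio grows polynomially. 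For example, $q\asymp n^{0.4}$ with a bounded-degree graph ($s_q\asymp q$) and a single Lipschitz active surface is compatible with \Cref{assum:multi-growth} and gives a ratio of order $n^{0.4\kappa_\Omega}$ up to logs. This is not slack in Markov's inequality. For horseshoe coordinates with $\tau\ll t\ll\tau n^{A_\Omega}$, the bound $\E[\omega^2\mathbbm 1\{|\omega|\le t\}\mid\tau]\asymp\tau t$ is sharp. In this regime the sum of the $Q$ thresholded terms concentrates, so on $\{\tau_\Omega\asymp\sigma_{\Omega,n}\}$, which has prior mass bounded below, the inactive energy genuinely exceeds $\Delta_{\Omega,n}$. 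The exponent $1+\kappa_\Omega$ in $t_{\Omega,n}$, which you need for the support-count bound, is exactly what makes the thresholded noise too large for a tolerance of order $(\varepsilon_n^\dagger)^2$.

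The mean block is worse. Substituting $u_{B,n}$ and $\sigma_{B,n}\asymp S_B/p$ into your bound $qp^2M^2u_{B,n}\sigma_{B,n}\log n$ gives order $qL_\star M^{3/2}S_B^2(eQ_B/S_{B,n})^{1+\kappa_B}\log n$, which is not even $o(1)$. Cauchy--Schwarz is not the culprit. For a non-degenerate design, the exact second moment given $\tau_B\asymp\sigma_{B,n}$ (cross terms vanish by symmetry) is of order $(qS_B^2/p)(eQ_B/S_{B,n})^{1+\kappa_B}\ge e\,q^2S_B^2/S_{B,n}$. In the paper's own illustrative regime $q\asymp S_B\asymp n^{1/2}$ this is at least of order $n^{7/6}$, because $\sigma_{B,n}$, unlike $\sigma_{\Omega,n}$, carries no $n^{-1/2}$ factor.

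You cannot close this by appealing to the paper. Its proof takes the same route, but in the precision block it reaches decay only through an algebra slip: it simplifies the same ratio to $(s_q/s_{\Omega,n})^2(\log q)^{-1}(q^2/s_{\Omega,n})^{\kappa_\Omega-1}$, dropping a factor $q^2/s_{\Omega,n}$. In the mean block it simply asserts $\mu_B\ll\Delta_{B,n}$. A correct argument needs either an extra growth condition making both ratios $\mathcal O(n^{-\kappa})$, or smaller global scales $\sigma_{\Omega,n},\sigma_{B,n}$ followed by a recheck of the small-ball lemmas.

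A smaller point: in your support-count step for $\widehat S_B$, the monotonicity runs the wrong way. The sieve threshold $u_{B,n}$ lacks the $\log(eQ_B/S^\star)$ factor of the threshold in \Cref{lem:eff-support-correct}, so it is smaller, and a diverging log cannot be absorbed into $c_B$. Instead, rerun that lemma's tail calculation; after absorbing the logarithm it yields $(S^\star/eQ_B)^{1+\kappa_B/2}$, as the paper does for $t_{\Omega,n}$.
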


\begin{proof}
We write
\[
\Pi(\mathcal F_n^c)\ \le\ \Pi(\mathcal F_{n,\Omega}^c)+\Pi(\mathcal F_{n,{\textrm{tree}}}^c),
\]
and bound the two terms separately. Throughout, constants $c,C,C'>0$ may change from line to line
but depend only on fixed hyperparameters and fixed truncation constants.

\paragraph{Precision block.}
Decompose
\begin{align*}
\Pi(\mathcal F_{n,\Omega}^c)
\le\ 
\underbrace{\Pi\big(|\widehat S_\Omega(t_{\Omega,n})|>s_{\Omega,n}\big)}_{E_1}
&+\underbrace{\Pi\big(E_{\Omega,\mathrm{noise}}(\Omega) > \Delta_{\Omega,n}\big)}_{E_{1,\mathrm{noise}}}
+\underbrace{\Pi\big(\|\Omega\|_F>R_{\Omega,n}\big)}_{E_2} \\
&+\underbrace{\Pi\big(\|\Omega^{-1}\|_{\mathrm{op}}>\bar R_{\Omega,n}\big)}_{E_3}
+\underbrace{\Pi\big(\exists k:\ \omega_{kk}\notin[m_{\textrm{diag},n},M_{\textrm{diag},n}]\big)}_{E_4}.
\end{align*}
The event $E_4$ has probability zero by the diagonal truncation in \suppref{\ref{assum:GHS-scale}}{(P2)}, since
\(m_{\textrm{diag},n}=n^{-A_{\mathrm{diag}}},\) \(M_{\textrm{diag},n}=n^{A_{\mathrm{diag}}}.\)
The event $E_3$ is not zero under the original SPD-restricted graphical horseshoe prior. However, by \Cref{lem:E3-vanishes},
\[
E_3\le C\frac{s_q n^{A_{\mathrm{diag}}}\log n}{\sqrt n}=\mathcal O(n^{-\kappa_{\mathrm{spec}}}),
\]
under \Cref{assum:multi-growth}.

\smallskip
\noindent\textbf{Bounding $E_2$.}
By the polynomial truncations in \suppref{\ref{assum:GHS-scale}}{(P2)}, for all sufficiently large $n$,
\[
\tau_\Omega\le C_\tau\sigma_{\Omega,n}\le 1\le n^{A_\Omega},
\quad
\lambda_{\Omega,kk'}\le n^{A_\Omega}
\quad\text{a.s.}
\]
Hence $\tau_\Omega\lambda_{\Omega,kk'}\le n^{2A_\Omega}$ a.s., and
\[
\text{Var}(\omega_{kk'}\mid \tau_\Omega,\lambda_{\Omega,kk'})\le n^{4A_\Omega}.
\]
Conditional on $(\tau_\Omega,\lambda)$, for any $t>0$, the Gaussian tail bound yields
\(
\mathbb{P}\big(|\omega_{kk'}|>t\mid\tau_\Omega,\lambda\big)
\le 2\exp\Big(-{t^2}/{2n^{4A_\Omega}}\Big).
\)
Taking $t=n^{2A_\Omega+1}$ gives
\[
\mathbb{P}\big(|\omega_{kk'}|>n^{2A_\Omega+1}\mid\tau_\Omega,\lambda\big)\le 2e^{-n^2/2},
\]
uniformly on the truncation range, hence also unconditionally. By a union bound over the
$Q=\binom q2$ off-diagonals,
\begin{equation}\label{eq:offdiag-max-tail}
\Pi\Big(\max_{k<k'}|\omega_{kk'}|>n^{2A_\Omega+1}\Big)
\le 2Q e^{-n^2/2}.
\end{equation}
Under assumption \suppref{\ref{assum:multi-growth}}{(A3)}, for all large $n$,
$\log(2Q)\le n^2/4$, and therefore $2Qe^{-n^2/2}\le e^{-n^2/4}$.

By diagonal truncation in \suppref{\ref{assum:GHS-scale}}{(P2)}, $\omega_{kk}\le n^{A_{\mathrm{diag}}}$ a.s. Hence on
\[
\mathcal E_n:=\Big\{\max_{k<k'}|\omega_{kk'}|\le n^{2A_\Omega+1},\ \max_{1\le k\le q}\omega_{kk}\le n^{A_{\mathrm{diag}}}\Big\},
\]
we have
\[
\|\Omega\|_F^2
=\sum_{k=1}^q\omega_{kk}^2+2\sum_{k<k'}\omega_{kk'}^2
\le q\,n^{2A_{\mathrm{diag}}}+2Q\,n^{4A_\Omega+2}
\le 2q^2n^{2A_{\Omega,F}},
\]
for all large $n$, hence $\|\Omega\|_F\le \sqrt2\,q\,n^{A_{\Omega,F}}=R_{\Omega,n}$.
Thus $\{\|\Omega\|_F>R_{\Omega,n}\}\subseteq \mathcal E_n^c$ and
\begin{equation}\label{eq:Omega-norm-tail}
\Pi(\|\Omega\|_F>R_{\Omega,n})
\le \Pi(\mathcal E_n^c)
\le \Pi\Big(\max_{k<k'}|\omega_{kk'}|>n^{2A_\Omega+1}\Big)
\le e^{-n^2/4}.
\end{equation}
Since $n\varepsilon_n^2=o(n^2)$, we have $e^{-n^2/4}\le e^{-C n\varepsilon_n^2}$ for some $C>0$ and all large $n$.

\smallskip
\noindent\textbf{Bounding $E_1$.}
Using the same argument in the proof of \Cref{lem:GHS-effective-support-correct}, 
with $t_{\Omega,n}$ calibrated to $s_{\Omega,n}$ and the hard truncation
$\tau_\Omega\le C_\tau\sigma_{\Omega,n}$ a.s., one obtains the uniform bound
$p_\Omega(\tau)\le (s_{\Omega,n}/(eQ))^{1+\kappa_\Omega/2}$ for all $\tau\in(0,C_\tau\sigma_{\Omega,n}]$,
hence the binomial tail bound yields
\[
\Pi\big(|\widehat S_\Omega(t_{\Omega,n})|>s_{\Omega,n}\big)
\le 2\exp\Big\{-\frac{\kappa_\Omega}{2}\,s_{\Omega,n}\log\Big(\frac{eQ}{s_{\Omega,n}}\Big)\Big\}.
\]
Under assumption \suppref{\ref{assum:multi-growth}}{(A3)}, we have
$s_{\Omega,n}\asymp n\varepsilon_{n}^2/\log(eQ)$ and
$\log(eQ/s_{\Omega,n})\asymp \log(eQ)$, hence
$s_{\Omega,n}\log(eQ/s_{\Omega,n})\gtrsim n\varepsilon_{n}^2$ and therefore
\begin{equation}\label{eq:Omega-support-tail}
\Pi\big(|\widehat S_\Omega(t_{\Omega,n})|>s_{\Omega,n}\big)
\le \exp\big(-C n\varepsilon_{n}^2\big).
\end{equation}

\smallskip
\noindent\textbf{Bounding $E_{1,\mathrm{noise}}$.}
We bound the $\ell_2$ energy of the noise, $E_{\Omega,\mathrm{noise}}(\Omega) = \sum_{(k,k') \notin \widehat S_\Omega} \omega_{kk'}^2$. 
Let $Y_{kk'} = \omega_{kk'}^2 \mathbbm{1}(|\omega_{kk'}| \le t_{\Omega,n})$. Under the polynomial truncation $\lambda_{\Omega,kk'} \le n^{A_\Omega}$, the Horseshoe prior retains heavy Cauchy-like tails up to the truncation limit. Conditional on $\tau_\Omega$, we have:
\(
\mathbb{E}[Y_{kk'} \mid \tau_\Omega] \le C \tau_\Omega t_{\Omega,n}.
\)
Thus, the expected total truncated noise energy is bounded by:
\(
\mu_\Omega := \sum_{(k,k') \notin \widehat S_\Omega} \mathbb{E}[Y_{kk'} \mid \tau_\Omega] \le C Q \tau_\Omega t_{\Omega,n}.
\)
Because $\tau_\Omega \le C_\tau \sigma_{\Omega,n}$ almost surely by \suppref{\ref{assum:GHS-scale}}{(P2)}, and applying the definition of $t_{\Omega,n}$, we obtain:
\[
\mu_\Omega \le C' Q \sigma_{\Omega,n}^2 \Big(\frac{eQ}{s_{\Omega,n}}\Big)^{1+\kappa_\Omega} \asymp \frac{s_q^2}{q^2 n} \Big(\frac{eQ}{s_{\Omega,n}}\Big)^{1+\kappa_\Omega}.
\]
Recall our $\ell_2$ sieve tolerance $\Delta_{\Omega,n} = r_n^2/16 \asymp (\varepsilon_n^\dagger)^2 \asymp {(s_{\Omega,n} \log q)}/{n}$. We inspect the ratio of the expected energy to the tolerance:
\[
\frac{\mu_\Omega}{\Delta_{\Omega,n}} \asymp \frac{\frac{s_q^2}{q^2 n} \big(\frac{q^2}{s_{\Omega,n}}\big)^{1+\kappa_\Omega}}{\frac{s_{\Omega,n} \log q}{n}} = \Big(\frac{s_q}{s_{\Omega,n}}\Big)^2 \frac{1}{\log q} \Big(\frac{q^2}{s_{\Omega,n}}\Big)^{\kappa_\Omega - 1}.
\]
Since $\kappa_\Omega \in (0,1)$, the exponent $\kappa_\Omega - 1$ is strictly negative. Because we operate in the sparse high-dimensional regime where $s_q \le s_{\Omega,n} \ll q^2$, this ratio decays to zero, ensuring $\mu_\Omega \ll \Delta_{\Omega,n}$.

We apply Markov's inequality directly to the ratio:
\begin{equation}\label{eq:Omega-noise-tail}
\Pi\big(E_{\Omega,\mathrm{noise}}(\Omega) > \Delta_{\Omega,n} \mid \tau_\Omega\big) \le \frac{\mu_\Omega}{\Delta_{\Omega,n}} \asymp \Big(\frac{q^2}{s_{\Omega,n}}\Big)^{\kappa_\Omega - 1}.
\end{equation}
Under the dimension growth conditions in \suppref{\ref{assum:multi-growth}}{(A3)}, we have the polynomial tail $\mathcal{O}(n^{-c_\Omega})$.


\paragraph{BART block.}
Decompose
\begin{align*}
\Pi(\mathcal F_{n,{\textrm{tree}}}^c)
& \le\ 
\underbrace{\Pi\Big(\exists (j,r):\ \#\{\text{distinct split variables used in }B_{jr}\}>d_n\Big)}_{F_0} \\
& +\underbrace{\Pi\big(|\widehat S_B(u_{B,n})|>S_{B,n}\big)}_{F_1}
+\underbrace{\Pi\big(E_{B,\mathrm{noise}}(\mathbf{b}) > \Delta_{B,n}\big)}_{F_{1,\mathrm{noise}}}\\
& +\underbrace{\Pi\Big(L_{\mathrm{act}}(u_{B,n})>L_{\mathrm{act},n},\ |\widehat S_B(u_{B,n})|\le S_{B,n}\Big)}_{F_2}\\
&+\underbrace{\Pi\Big(\exists (j,r,t,\ell):\ |b_{jrt\ell}|>\Lambda_{B,n}\Big)}_{F_3}.
\end{align*}

\smallskip
\noindent\textbf{Bounding $F_0$.}
Let $L_{jr}$ denote the total number of leaves across the $M$ trees of $B_{jr}$. A forest with
$L_{jr}$ total leaves uses at most $L_{jr}$ distinct split variables, hence
\(
\Big\{\#\{\text{distinct split variables used in }B_{jr}\}>d_n\Big\} \subseteq \{L_{jr}>d_n\}.
\)
Let $L_{jrm}$ be the number of leaves in the $m$-th tree. Under the GW prior,
$\Pi(L_{jrm}=\ell)\le \exp(-c_{\mathrm{gw}}\ell\log n)$.
Fix $t:=({1}/{2})\cdot c_{\mathrm{gw}}\log n.$ Then $\mathbb E[\exp(tL_{jrm})] \le C_0$, and $\mathbb E[\exp(tL_{jr})]\le C_0^M$.
By Chernoff's inequality and a union bound over all $Q_B$ forests:
\(
F_0 \le \exp\big(\log Q_B-t d_n+M\log C_0\big).
\)
Since $d_n = C_J L_{\mathrm{act},n}$, $t d_n$ strictly dominates $\log Q_B$ for large constants, giving:
\begin{equation}\label{eq:F0-bound}
F_0\le \exp(-C n\varepsilon_n^2).
\end{equation}

\smallskip
\noindent\textbf{Bounding $F_1$.}
Fix $\tau_B=\tau$. By conditional independence, the indicators $\mathbbm 1\{(j,r)\in\widehat S_B(u)\}$ are i.i.d. Bernoulli.
Using \Cref{lem:eff-support-correct}, and integrating over $\tau_B$:
\begin{equation}\label{eq:B-support-tail-global}
\Pi\big(|\widehat S_B(u_{B,n})|>S_{B,n}\big)
\le 2\exp\Big\{-\frac{\kappa_B}{2}S_{B,n}\log\Big(\frac{eQ_B}{S_{B,n}}\Big)\Big\}.
\end{equation}
Since $S_{B,n}\log(eQ_B/S_{B,n})\gtrsim n\varepsilon_{n}^2+\log q$, we have $F_1 \le \exp\big(-C n\varepsilon_{n}^2\big)$.

\smallskip
\noindent\textbf{Bounding $F_{1,\mathrm{noise}}$.}
We bound the total $\ell_2$ energy of the noise leaves, $E_{B,\mathrm{noise}}(\mathbf{b}) = \sum_{(j,r) \notin \widehat S_B} \sum_{t=1}^M \sum_{\ell=1}^{L_{jrt}} b_{jrt\ell}^2$. 
Let $Y_{jrt\ell} = b_{jrt\ell}^2 \mathbbm{1}(|b_{jrt\ell}| \le u_{B,n})$. As with the precision matrix, the polynomial truncation $\lambda_{jr} \le n^{A_B}$ under \suppref{\ref{assum:multi-HSscale}}{(P1)} leaves heavy Cauchy tails. We have:
\(
\mathbb{E}[Y_{jrt\ell} \mid \tau_B, \text{trees}] \le C \tau_B u_{B,n}.
\)
Conditional on the tree structures, the total number of noise leaves is $N_{\mathrm{noise}} \le Q_B M L_{\mathrm{max}} \le p q M n$. 
The total expected noise energy is bounded by:
\(
\mu_B \le N_{\mathrm{noise}} C \tau_B u_{B,n} \le C p q M n \tau_B u_{B,n}.
\)
Applying $\tau_B \le C_\tau \sigma_{B,n}$ and the definition of $u_{B,n}$ from \eqref{eq:u-threshold-global}, we evaluate the ratio of $\mu_B$ to the tolerance $\Delta_{B,n}$. Under our global shrinkage calibration for $\sigma_{B,n}$, we have $\mu_B \ll \Delta_{B,n}$ asymptotically.

Similar to bounding $E_{1,\mathrm{noise}}$, we can apply Markov's inequality to the sum of the non-negative truncated energies:
\begin{equation}\label{eq:B-noise-tail}
\Pi\big(E_{B,\mathrm{noise}}(\mathbf{b}) > \Delta_{B,n} \mid \tau_B, \text{trees}\big) \le \frac{\mu_B}{\Delta_{B,n}} \asymp \Big(\frac{Q_B}{S_{B,n}}\Big)^{\kappa_B - 1}.
\end{equation}
Because the tuning parameter $\kappa_B \in (0,1)$, the exponent $\kappa_B - 1$ is strictly negative. With $Q_B \gg S_{B,n}$, this ratio provides a strict polynomial decay rate $\mathcal{O}(n^{-c_B})$ for some constant $c_B > 0$.

\smallskip
\noindent\textbf{Bounding $F_2$.}
Summing over all subsets $S$ of size at most $S_{B,n}$, and utilizing $\mathbb E[\exp(tL_{jr})]\le C_0^M$:
\[
F_2 \le \exp(-tL_{\mathrm{act},n}) \sum_{s=1}^{S_{B,n}} \Big(\frac{eQ_B C_0^M}{s}\Big)^s \le \exp\Big(-tL_{\mathrm{act},n} + S_{B,n}\log\Big(\frac{eQ_B C_0^M}{S_{B,n}}\Big)\Big).
\]
Recalling $L_{\mathrm{act},n} = C_{L,\mathrm{act}}\big({n\varepsilon_n^2}/{\log n}+\log Q_B\big)$ and choosing $C_{L,\mathrm{act}}$ sufficiently large yields:
\begin{equation}\label{eq:F2-bound}
F_2\le \exp(-C n\varepsilon_n^2).
\end{equation}

\smallskip
\noindent\textbf{Bounding $F_3$.}
With $\Lambda_{B,n}=n^{2A_B+1}$ and, for all sufficiently large $n$, $\tau_B\lambda_{jr}\le n^{2A_B}$, the unconditional Gaussian tail gives $\Pi(|b_{jrt\ell}|>\Lambda_{B,n}) \le 2\exp(-Mn^2/2)$. 
Since $u_{B,n} < \Lambda_{B,n}$, any leaf exceeding $\Lambda_{B,n}$ must belong to the active set.
On the event $\{|\widehat S_B(u_{B,n})|\le S_{B,n},\ L_{\mathrm{act}}(u_{B,n})\le L_{\mathrm{act},n}\}$, there are at most $L_{\mathrm{act},n}$ active leaves.
Therefore:
\[
F_3 \le 2L_{\mathrm{act},n}\exp\Big(-\frac{Mn^2}{2}\Big) + F_1 + F_2.
\]
Because $\log(2L_{\mathrm{act},n})\le Mn^2/4$ for all large $n$, this yields:
\begin{equation}\label{eq:F3-bound}
F_3\le \exp(-C n\varepsilon_n^2).
\end{equation}

\smallskip
Combining all the bounds of the precision and the BART block, we indeed have the above lemma.

\end{proof}

\subsubsection*{Metric Entropy conditions}
\begin{lemma}
\label{lem:entropy-H-correct-global}
Let $\mathcal F_n=\mathcal F_{n,\Omega}\cap\mathcal F_{n,\mathrm{tree}}$ be the full sieve from
\Cref{sec:revised-sieve}.
Let $C^\star=C^\star(R_{\Omega,n},\overline R_{\Omega,n})$ be the constant from
\Cref{lem:block-metric-correct}, so that for all $\Theta_1,\Theta_2\in\mathcal F_n$,
\[
H(\Theta_1,\Theta_2)\le
\sqrt{C^\star}\Big(\|\Omega_1-\Omega_2\|_F+\|\eta_1-\eta_2\|_{2,n}\Big),
\quad
C^\star\le n^{2\kappa}
\]
for some $\kappa>0$. Define
\(
r_n:={\varepsilon_n^\dagger}/{(4\sqrt{C^\star})}.
\)
Under \suppref{\ref{assum:multi-growth}}{(A3)},
\[
\log N\bigl(\varepsilon_n^\dagger,\mathcal F_n,H\bigr)
\le C_4\,n(\varepsilon_n^\dagger)^2
\]
for some constant $C_4<\infty$ independent of $n$.
\end{lemma}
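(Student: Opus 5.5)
The plan is to pass from the Hellinger metric to Euclidean metrics on the two blocks, and then to cover each block separately by combinatorial counting. By \Cref{lem:block-metric-correct}, any pair $\Theta_1,\Theta_2\in\mathcal F_n$ with $\|\Omega_1-\Omega_2\|_F\le r_n$ and $\|\eta_1-\eta_2\|_{2,n}\le r_n$ has $H(\Theta_1,\Theta_2)\le 2\sqrt{C^\star}r_n=\varepsilon_n^\dagger/2$. Hence
\[
\log N(\varepsilon_n^\dagger,\mathcal F_n,H)\le \log N(r_n,\mathcal F_{n,\Omega},\|\cdot\|_F)+\log N(r_n,\mathcal F_{n,\mathrm{tree}},\|\cdot\|_{2,n}).
\]
Since $C^\star\le n^{2\kappa}$, we have $\log(1/r_n)\lesssim \log n$. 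Polynomial shrinkage of the radius therefore costs only logarithmic factors, which is the whole point of this reduction.

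\textbf{Precision block.} Write $\Omega=\Omega_{\mathrm{sig}}+\Omega_{\mathrm{noise}}$. The noise part is supported off $\widehat S_\Omega(t_{\Omega,n})$, and on $\mathcal F_n$ (the $k=0$ shells) it has energy $E_{\Omega,\mathrm{noise}}<\Delta_{\Omega,n}$. I would approximate it by zero. This costs at most $\sqrt{2\Delta_{\Omega,n}}$ in Frobenius norm, which is $\lesssim r_n$ once $c_{\mathrm{base}}$ is matched to the $r_n$ of the sieve (the "zero net" idea). The signal part has support of size at most $s_{\Omega,n}$ plus $q$ diagonals, with entries bounded by $R_{\Omega,n}$. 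The support choices number at most $\binom{Q}{s_{\Omega,n}}$. A grid of mesh $r_n/\sqrt{s_{\Omega,n}+q}$ on the support then gives
\[
\log N\lesssim s_{\Omega,n}\log(eQ/s_{\Omega,n})+(s_{\Omega,n}+q)\log(R_{\Omega,n}\sqrt{q}/r_n)\lesssim n\varepsilon_n^2+q\log n,
\]
and both terms are $\lesssim n(\varepsilon_n^\dagger)^2$ by the definition of $s_{\Omega,n}$ and the lower bound $q\log n\lesssim n(\varepsilon_n^\dagger)^2$ in (A3).

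\textbf{Tree block.} The same split applies to $\eta$. Inactive pairs contribute $E_{\eta,\mathrm{noise}}<\Delta_{\eta,n}$ in $\|\cdot\|_{2,n}^2$, and are again replaced by zero. For the active part I would:
\begin{itemize}
\item choose the active pair set, at most $\binom{Q_B}{S_{B,n}}$ ways, costing $\lesssim S_{B,n}\log(eQ_B)$;
\item for each active pair, choose at most $d_n$ split variables out of $d$ and tree topologies with total active leaves at most $L_{\mathrm{act},n}$;
\item choose split coordinates and cutpoints among the $\lesssim n$ design-induced cutpoints. Since the design is fixed, only cutpoints between design points matter, costing $\lesssim L_{\mathrm{act},n}\log(nd)$;
\item discretize the leaf values on $[-\Lambda_{B,n},\Lambda_{B,n}]$ with mesh $r_n/(D\,L_{\mathrm{act},n}\sqrt{q})$. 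Using $|z_{ij}|\le D$, a sup-norm leaf perturbation transfers to $\|\eta\|_{2,n}$ through the step-function representation, costing $L_{\mathrm{act},n}\log(n\,\Lambda_{B,n}/r_n)\lesssim L_{\mathrm{act},n}\log n$.
\end{itemize}
By the definitions of $S_{B,n}$ and $L_{\mathrm{act},n}$ the total is $\lesssim n\varepsilon_n^2+\log Q_B\log n$, which is at most $C\,n(\varepsilon_n^\dagger)^2$ by (A3), using $\log(pq)\log n\lesssim n(\varepsilon_n^\dagger)^2$.

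\textbf{Main obstacle.} The delicate step is the noise handling: inactive leaves are individually small, but there can be many of them with unbounded tree structures, so they cannot be netted directly. Replacing the whole inactive component by the zero function is legitimate only because the $\ell_2$ energy constraint in $\mathcal F_n$ is measured in exactly the metric used for the cover. One point needs care: $E_{\eta,\mathrm{noise}}$ is defined through $z_{ij}B_{jr}(\bx_i)$, so it directly bounds the $\|\cdot\|_{2,n}$ distance with no RE argument needed. I also need to check that $\Delta_{\eta,n},\Delta_{\Omega,n}$ are at most a constant times $r_n^2$ as defined here, with $r_n=\varepsilon_n^\dagger/(4\sqrt{C^\star})$. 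If they are not, because the sieve's $r_n$ uses $c_{\mathrm{base}}\varepsilon_n^\dagger$ without the $C^\star$ factor, I would instead cover at radius $\varepsilon_n^\dagger$ using the lower/upper metric comparison of \Cref{lem:block-metric-correct} applied only to the active coordinates. Failing that, I would cover the noise as an $\ell_2$ ball of small radius relative to the signal grid, paying an extra $\log$ factor absorbed by (A3).
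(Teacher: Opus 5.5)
Your proposal is essentially the paper's proof. It uses the same reduction of the $H$-cover to $r_n$-nets for $\Omega$ in $\|\cdot\|_F$ and $\eta$ in $\|\cdot\|_{2,n}$ via \Cref{lem:block-metric-correct}, the same ``zero net'' for the thresholded noise, and the same counting of supports, topologies, cutpoints and gridded leaves for the active part. Your direct use of $E_{\eta,\mathrm{noise}}=\|\eta_{\mathrm{noise}}\|_{2,n}^2$ is slightly cleaner than the paper's Cauchy--Schwarz detour through leaf energies.

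On the normalization you flag, the paper simply reads the sieve tolerances with the lemma's $r_n=\varepsilon_n^\dagger/(4\sqrt{C^\star})$, which is your main-line matching of $c_{\mathrm{base}}$. Your suspicion is fair, since this clashes with the $\Delta_{\Omega,n}\asymp(\varepsilon_n^\dagger)^2$ the paper uses elsewhere. You should still drop both fallbacks, because neither repairs the mismatch:
\begin{itemize}
\item \Cref{lem:block-metric-correct} controls the noise's Hellinger contribution only through the factor $\sqrt{C^\star}$, and its lower bound is of no use for covering.
\item Netting an $\ell_2$ ball in the $\asymp Q$ noise coordinates at a polynomially smaller radius costs order $Q\log n$, which is $\gg n(\varepsilon_n^\dagger)^2$ under \Cref{assum:multi-growth}.
\end{itemize}
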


\begin{proof}
Because our revised sieve $\mathcal F_n$ explicitly bounds the global $\ell_2$ noise energies, we construct the $\varepsilon_n^\dagger$-net directly on $\mathcal F_n$. We combine an $r_n$-net for $\Omega$ in $\|\cdot\|_F$ and an $r_n$-net for $\eta$ in $\|\cdot\|_{2,n}$.

By \Cref{lem:block-metric-correct}, the product net $\mathcal N:=\mathcal N_\Omega\times\mathcal N_\eta$ satisfies
$H(\Theta,(\tilde\Omega,\tilde\eta)) \le \sqrt{C^\star}(r_n+r_n) = \varepsilon_n^\dagger/2$.
Hence $\mathcal N$ is an $(\varepsilon_n^\dagger/2)$-net for $H$, which implies
\[
\log N(\varepsilon_n^\dagger,\mathcal F_n,H)
\ \le\ \log|\mathcal N_\Omega|+\log|\mathcal N_\eta|.
\]

\paragraph{Covering the precision block.}
Fix $\Omega\in \mathcal F_{n,\Omega}$ and define its effective edge set $\widehat S_\Omega := \{(k,k'): k<k',\ |\omega_{kk'}|>t_{\Omega,n}\}$.
Write $\Omega=\Omega_{\textrm{sig}}+\Omega_{\textrm{noise}}$, where $\Omega_{\textrm{sig}}$ contains the diagonal and off-diagonals in $\widehat S_\Omega$, and $\Omega_{\textrm{noise}}$ contains the remaining elements.

By the sieve definition, the Frobenius norm of the noise is directly controlled:
\[
\|\Omega_{\textrm{noise}}\|_F^2 = E_{\Omega,\mathrm{noise}}(\Omega) \le \Delta_{\Omega,n} = \frac{r_n^2}{16} \implies \|\Omega_{\textrm{noise}}\|_F \le \frac{r_n}{4}.
\]
Thus, we can approximate any $\Omega \in \mathcal F_{n,\Omega}$ by a matrix $\widetilde\Omega = \widetilde\Omega_{\textrm{sig}} + \mathbf{0}$, where $\widetilde\Omega_{\textrm{sig}}$ is chosen from a net over the signal space. The triangle inequality gives $\|\Omega - \widetilde\Omega\|_F \le \|\Omega_{\textrm{sig}} - \widetilde\Omega_{\textrm{sig}}\|_F + \|\Omega_{\textrm{noise}}\|_F$. If we cover $\Omega_{\textrm{sig}}$ at radius $r_n/2$, the total distance is bounded by $3r_n/4 \le r_n$.

The signal matrices for a fixed support $S$ with $|S|\le s_{\Omega,n}$ live in a subspace of dimension $d_S=q+|S|$ bounded by $R_{\Omega,n}$. The volumetric bound over all possible supports yields:
\[
\log N\big(r_n,\ \mathcal F_{n,\Omega},\ \|\cdot\|_F\big)
\ \le\ s_{\Omega,n}\log\Big(\frac{eQ}{s_{\Omega,n}}\Big)
+ (q+s_{\Omega,n})\log\Big(\frac{6R_{\Omega,n}}{r_n/2}\Big).
\]
Because $R_{\Omega,n}$ and $1/r_n$ are polynomially bounded in $n$, $\log(12R_{\Omega,n}/r_n) \lesssim \log n + \log q$. With our sieve choice $s_{\Omega,n} \asymp n\varepsilon_n^2/\log(eQ)$, and the growth condition $q\log n \lesssim n(\varepsilon_n^\dagger)^2$ in \suppref{\Cref{assum:multi-growth}}{(A3)}, we obtain:
\begin{equation}\label{eq:Omega-entropy-final}
\log N\big(r_n,\ \mathcal F_{n,\Omega},\ \|\cdot\|_F\big)
\ \le\ C\,n(\varepsilon_n^\dagger)^2.
\end{equation}

\paragraph{Covering the BART mean block.}
Let $\mathcal G_n$ be the tree-induced mean vectors in $\mathcal F_{n,\mathrm{tree}}$. Fix $\eta\in\mathcal G_n$ and define the global effective-support set $\widehat S_B = \{(j,r): \max_{t,\ell}|b_{jrt\ell}|>u_{B,n}\}$.
Decompose $\eta=\eta_{\textrm{sig}}+\eta_{\textrm{noise}}$, where $\eta_{\textrm{sig}}$ utilizes pairs in $\widehat S_B$.

\smallskip
\noindent\emph{Noise part bounds:}
For the noise component, we bound the empirical $L_2$ norm using the Cauchy-Schwarz inequality. For each observation $i$:
\[
\|\eta_{\textrm{noise}}(x_i)\|_2^2 = \sum_{r=1}^q \Big( \sum_{j \notin S_{B,r}} z_{ij} B_{jr,\mathrm{noise}}(x_i) \Big)^2 \le \sum_{r=1}^q \Big( \sum_{j \notin S_{B,r}} z_{ij}^2 \Big) \Big( \sum_{j \notin S_{B,r}} B_{jr,\mathrm{noise}}^2(x_i) \Big).
\]
By \suppref{\Cref{assum:multi-design}}{(A2)}, $|z_{ij}| \le D$, so $\sum_j z_{ij}^2 \le pD^2$. Furthermore, $B_{jr}(x_i)^2 = (\sum_t g_{jrt}(x_i))^2 \le M \sum_t g_{jrt}^2(x_i) \le M \sum_{t,\ell} b_{jrt\ell}^2$. Therefore:
\[
\|\eta_{\textrm{noise}}(x_i)\|_2^2 \le p D^2 M \sum_{(j,r) \notin \widehat S_B} \sum_{t=1}^M \sum_{\ell=1}^{L_{jrt}} b_{jrt\ell}^2 = p D^2 M E_{B,\mathrm{noise}}(\mathbf{b}).
\]
By the sieve constraint, $E_{B,\mathrm{noise}}(\mathbf{b}) \le \Delta_{B,n} = \frac{r_n^2}{16 D^2 p^2 q M}$. Hence:
\[
\|\eta_{\textrm{noise}}(x_i)\|_2^2 \le p D^2 M \frac{r_n^2}{16 D^2 p^2 q M} \le \frac{r_n^2}{16}.
\]
Averaging over $n$ yields $\|\eta_{\textrm{noise}}\|_{2,n} \le r_n/4$. Thus, by covering only $\eta_{\textrm{sig}}$ at radius $r_n/2$ and setting the noise coordinates to zero, we achieve an overall $r_n$-net for $\eta$.

\smallskip
\noindent\emph{Support-selection cost:}
By construction $|\widehat S_B|\le S_{B,n}$. The number of possible global supports is:
\begin{equation}\label{eq:support-cost-global}
\log\Big(\#\{\widehat S_B\}\Big)
\ \le\ S_{B,n}\log\Big(\frac{eQ_B}{S_{B,n}}\Big)
\ \lesssim\ n\varepsilon_n^2+\log q
\ \lesssim\ n(\varepsilon_n^\dagger)^2.
\end{equation}

\smallskip
\noindent\emph{Global leaf allocation and tree topology:}
Let $s = |\widehat S_B| \le S_{B,n}$. The total leaves in the active set is bounded by $L_{\mathrm{act},n}$. The number of ways to allocate these leaves among $s$ active functions is at most $2^{L_{\mathrm{act},n} + S_{B,n}}$.
For a fixed leaf allocation, the number of distinct partitions realizable by a binary tree with $\ell_{jr}$ leaves choosing from $d_n$ coordinates is at most $(c\,n\,d_n)^{\ell_{jr}}$. The joint topological complexity is bounded by $(cnd_n)^{L_{\mathrm{act},n}}$.

\smallskip
\noindent\emph{Leaf gridding and $\eta$-cover:}
We discretize the leaf values onto a grid with mesh $\delta_{\textrm{leaf}}:= a_n /M$, where $a_n:= {r_n}/{(4D\sqrt{pS_{B,n}})}.$
Because each leaf lies in $[-\Lambda_{B,n},\Lambda_{B,n}]$, the number of grid points per leaf is at most $C(\Lambda_{B,n} M/a_n)$.
Gridding all $\le L_{\mathrm{act},n}$ active leaves yields at most $(C\Lambda_{B,n} M/a_n)^{L_{\mathrm{act},n}}$ grid options.
Constructing $\widetilde\eta_{\textrm{sig}}$ using the gridded functions guarantees $\|\eta_{\textrm{sig}}-\widetilde\eta_{\textrm{sig}}\|_{2,n} \le r_n/4$. Summing the selection, allocation, topological, and gridding costs gives:
\begin{align}\label{eq:eta-entropy-global}
\begin{split}
\log N\big(r_n,\ \mathcal G_n,\ \|\cdot\|_{2,n}\big)
\ &\le\ S_{B,n}\log\Big(\frac{eQ_B}{S_{B,n}}\Big)
\ +\ (L_{\mathrm{act},n} + S_{B,n})\log 2 \\
&\quad +\ L_{\mathrm{act},n}\log(c\,n\,d_n)
\ +\ L_{\mathrm{act},n}\log\Big(\frac{C\,\Lambda_{B,n}\,M}{a_n}\Big).
\end{split}
\end{align}
Because $\Lambda_{B,n}, d_n$, and $1/a_n$ are bounded by polynomials in $n$, the logarithmic multipliers for $L_{\mathrm{act},n}$ are bounded by $C' \log n$. Substituting $L_{\mathrm{act},n} = C_{L,\mathrm{act}} \big( n\varepsilon_n^2 / \log n + \log Q_B \big)$, the topological and gridding costs simplify to $\mathcal{O}(n\varepsilon_n^2 + \log Q_B \log n)$.
By the growth condition in \suppref{\Cref{assum:multi-growth}}{(A3)}, the right-hand side of \eqref{eq:eta-entropy-global} is bounded by $C n(\varepsilon_n^\dagger)^2$.

Combining \eqref{eq:Omega-entropy-final} with the $\eta$-entropy bound yields:
\[
\log N\bigl(\varepsilon_n^\dagger,\ \mathcal F_n,\ H\bigr)
\ \le\ \log N\big(r_n,\ \mathcal F_{n,\Omega},\ \|\cdot\|_F\big)
+ \log N\big(r_n,\ \mathcal G_n,\ \|\cdot\|_{2,n}\big)
\ \le\ C_4\,n(\varepsilon_n^\dagger)^2.
\]
This completes the proof.
\end{proof}

\subsection{Proof of \Cref{prop:postconc_sieve}}
\label{sec:proofpostconc_sieve}

\begin{proof}
We evaluate the posterior probability of the complement of the Hellinger ball around the true parameter $\Theta_0$ under the truncated prior $\Pi_n^{\mathcal F}$. Let $U_n = \{\Theta : H(\Theta, \Theta_0) > M \varepsilon_n^\dagger\}$. By Bayes' theorem for the truncated prior, the posterior probability is:
\begin{equation}\label{eq:bayes-ratio-truncated}
\Pi_n^{\mathcal F}(U_n \mid \bY) = \frac{\int_{U_n \cap \mathcal F_n} \frac{p_\Theta(\bY)}{p_{\Theta_0}(\bY)} d\Pi(\Theta)}{\int_{\mathcal F_n} \frac{p_\Theta(\bY)}{p_{\Theta_0}(\bY)} d\Pi(\Theta)} =: \frac{N_n^{\mathcal F}}{D_n^{\mathcal F}}.
\end{equation}

\paragraph{Lower Bounding the Truncated Denominator $D_n^{\mathcal F}$.}
Let $B_n := \{\Theta : K(\Theta_0, \Theta) \le n(\varepsilon_n^\dagger)^2, V(\Theta_0, \Theta) \le n(\varepsilon_n^\dagger)^2 \}$ be the Kullback-Leibler neighborhood. In \Cref{subsec:KL-corrected}, we established the parameter neighborhood $\mathcal A_n \subseteq B_n$. 

By construction, we have $\mathcal A_n \subseteq \mathcal{F}_n$. Thus, the truncated denominator is lower bounded by the integral over $\mathcal A_n$:
\[
D_n^{\mathcal F} \ge \int_{\mathcal A_n} \frac{p_\Theta(\bY)}{p_{\Theta_0}(\bY)} d\Pi(\Theta).
\]
By the Lemmas \ref{lem:ghs-smallball-continuous} and \ref{lem:BART-smallball-continuous-global}, the original prior mass of this neighborhood is bounded below by $\Pi(\mathcal A_n) \ge \exp\big(- C_{\mathrm{prior}} n (\varepsilon_n^\dagger)^2 \big)$. 
Applying Lemma 8.1 of \citet{Ghosal2000}, for any $C > 0$, the event
\[
\mathcal E_n = \left\{ \int_{\mathcal A_n} \frac{p_\Theta(\bY)}{p_{\Theta_0}(\bY)} d\Pi(\Theta) \ge \Pi(\mathcal A_n)\exp\big(- C n (\varepsilon_n^\dagger)^2 \big) \right\}
\]
holds with $P_{\Theta_0}$-probability tending to 1. Thus, on the event $\mathcal E_n$, the truncated denominator satisfies $D_n^{\mathcal F} \ge \exp(- C_D n (\varepsilon_n^\dagger)^2)$ for $C_D = C_{\mathrm{prior}} + C$.

\paragraph{Upper Bounding the Truncated Numerator $N_n^{\mathcal F}$.}
To evaluate the numerator strictly over the sieve, we utilize the existence of a global test function $\phi_n$ from \citet[Section 7]{Ghosal2000}, which separates $\Theta_0$ from $\{\Theta \in \mathcal F_n : H(\Theta,\Theta_0) > M\varepsilon_n^\dagger\}$. We bound the expected numerator as follows:
\begin{align*}
& \mathbb E_{\Theta_0} \left[ \phi_n \int_{U_n \cap \mathcal F_n} \frac{p_\Theta(\bY)}{p_{\Theta_0}(\bY)} d\Pi(\Theta) + (1-\phi_n) \int_{U_n \cap \mathcal F_n} \frac{p_\Theta(\bY)}{p_{\Theta_0}(\bY)} d\Pi(\Theta) \right] \\
&\quad \le \mathbb E_{\Theta_0}[\phi_n] + \int_{U_n \cap \mathcal F_n} \mathbb E_{\Theta}[1-\phi_n] d\Pi(\Theta) \\
&\quad \le \exp\big(- c_1 M^2 n(\varepsilon_n^\dagger)^2 \big) + \sup_{\Theta \in U_n \cap \mathcal F_n} \mathbb E_{\Theta}[1-\phi_n] \cdot \Pi(\mathcal F_n) \\
&\quad \le \exp\big(- c_1 M^2 n(\varepsilon_n^\dagger)^2 \big) + \exp\big(- c_2 M^2 n(\varepsilon_n^\dagger)^2 \big) \cdot 1 \\
&\quad \le 2 \exp\big(- c_3 M^2 n(\varepsilon_n^\dagger)^2 \big),
\end{align*}
where $c_3 = \min(c_1, c_2)$ and we trivially bounded the unconditional prior mass $\Pi(\mathcal F_n) \le 1$. By Markov's inequality, the $P_{\Theta_0}$-probability that the truncated numerator $N_n^{\mathcal F}$ exceeds $\exp(- ({c_3}/{2}) M^2 n(\varepsilon_n^\dagger)^2)$ is bounded by $2 \exp(- ({c_3}/{2}) M^2 n(\varepsilon_n^\dagger)^2)$, which tends to 0.

By choosing the testing constant $M$ sufficiently large such that the test power $c_3 M^2 / 2 > C_D$, the numerator $N_n^{\mathcal F}$ strictly dominates the exponential lower bound of the truncated denominator $D_n^{\mathcal F}$ with $P_{\Theta_0}$-probability tending to 1. 

Consequently, the ratio $\Pi_n^{\mathcal F}(U_n \mid \bY) = N_n^{\mathcal F} / D_n^{\mathcal F} \to 0$ in $P_{\Theta_0}$-probability as $n \to \infty$. This establishes the Hellinger contraction rate $\varepsilon_n^\dagger$ for the posterior distribution under the truncated prior on the effective sieve $\mathcal F_n$.
\end{proof}

To establish testing power, we must prove that as the noise indices ($k,m$) grow, the Hellinger distance from the true parameter $\Theta_0$ strictly increases. Because $\Theta_0$ is fixed, let $C_0 := \max(\|\Omega_0\|_F, \|\eta_0\|_{2,n}) < \infty$.

\begin{lemma}[Amplitude Shell Separation]
\label{lem:shell-separation}
Under the model assumptions \suppref{\ref{assum:multi-smoothness}}{(A1)}--\suppref{\ref{assum:beta-min}}{(A6)}, there exists a universal constant $c > 0$ such that for any shell $S_{k,m,a} \subset \mathcal G_n$,
\[
\inf_{\Theta \in S_{k,m,a}} H^2(\Theta,\Theta_0) \ge c \min\left(1, \ \max(k,m) (\varepsilon_n^\dagger)^2 + (a - C_0)_+^2 \right),
\]
where $(x)_+ = \max(x, 0)$.
\end{lemma}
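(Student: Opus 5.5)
The plan is to reduce the statement to the global Hellinger lower bound of \Cref{lem:globalhellinger}, which gives
\[
H^2(\Theta,\Theta_0)\ge c\min\bigl(1,\|\Omega-\Omega_0\|_F^2+\|\eta-\eta_0\|_{2,n}^2\bigr)
\]
for every $\Theta$, with $c$ depending only on $(\underline\lambda,\bar\lambda)$. It then suffices to show that on $S_{k,m,a}$,
\[
\|\Omega-\Omega_0\|_F^2+\|\eta-\eta_0\|_{2,n}^2 \gtrsim \max(k,m)(\varepsilon_n^\dagger)^2+(a-C_0)_+^2 .
\]
Since $\max(x,y)\ge (x+y)/2$, I can treat the noise part and the amplitude part separately and then add them.

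\textbf{Amplitude part.} By definition, $A_{\mathrm{sig}}(\Theta)\ge a$, so either $\|\Omega_{\mathrm{sig}}\|_F\ge a$ or $\|\eta_{\mathrm{sig}}\|_{2,n}\ge a$. Take the first case.
\begin{itemize}
\item $\Omega_{\mathrm{sig}}$ is the orthogonal projection of $\Omega$ onto the coordinates in the diagonal and $\widehat S_\Omega$. Hence $\|\Omega-\Omega_0\|_F\ge\|\Omega_{\mathrm{sig}}-(\Omega_0)_{\mathrm{sig}}\|_F\ge a-\|\Omega_0\|_F\ge a-C_0$.
\item In the mean case, $\eta_{\mathrm{sig}}$ is not an orthogonal projection in $\|\cdot\|_{2,n}$. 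I would write $\eta-\eta_0=(\eta_{\mathrm{sig}}-\eta_0)+\eta_{\mathrm{noise}}$, which gives $\|\eta-\eta_0\|_{2,n}\ge a-C_0-\|\eta_{\mathrm{noise}}\|_{2,n}$.
\item The noise term is $\sqrt{E_{\eta,\mathrm{noise}}}$ by definition. When it is large it is already charged by the noise part below. In any case a loss of order $\sqrt{m}\,\varepsilon_n^\dagger$ is absorbed by replacing $(a-C_0)_+^2$ with a constant multiple, using $(x-y)_+^2\ge x^2/2-y^2$.
\end{itemize}

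\textbf{Noise part.} Here \Cref{assum:multi-betamin} enters.
\begin{itemize}
\item For the precision block, every edge $(k,k')\notin\widehat S_\Omega$ has $|\omega_{kk'}|\le t_{\Omega,n}$. If such an edge lies in the true support $S_{0,\Omega}$, (A6a) gives $|\omega_{kk'}-\omega^0_{kk'}|\ge 3t_{\Omega,n}+a_\Omega\varepsilon_n^\dagger\ge|\omega_{kk'}|$.
\item If the edge is a true zero, then $|\omega_{kk'}-\omega^0_{kk'}|=|\omega_{kk'}|$.
\item Summing over inactive coordinates gives $\|\Omega-\Omega_0\|_F^2\ge E_{\Omega,\mathrm{noise}}(\Omega)\ge k\Delta_{\Omega,n}=k\,c_{\mathrm{base}}^2(\varepsilon_n^\dagger)^2/16$.
\item For the mean block I would argue analogously, now at the level of coefficient surfaces. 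Inactive pairs $(j,r)\notin\widehat S_B$ have $\|B_{jr}\|_\infty\le Mu_{B,n}$. By (A6b), any true signal pair dropped into the inactive set has $\|B_{jr}-B_{0,jr}\|_{2,n}\ge 3Mu_{B,n}/\sqrt{\kappa_z}+a_B\varepsilon_n^\dagger$.
\end{itemize}

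\textbf{Main obstacle: the mean-noise step.} I must transfer "large inactive noise energy" $E_{\eta,\mathrm{noise}}\ge m\Delta_{\eta,n}$ into "large $\|\eta-\eta_0\|_{2,n}$". The difficulty is that the inactive contribution $\sum_{j\notin}z_{ij}B_{jr}$ can in principle cancel against the active part. The tool is the functional RE condition \Cref{assum:multi-RE}. It gives $\|\eta-\eta_0\|_{2,n}^2\ge\kappa_z\|\bm B-\bm B_0\|_{F,2,n}^2$, which is at least the sum over inactive pairs of $\|B_{jr}-B_{0,jr}\|_{2,n}^2$. For true-zero inactive pairs this is just $\|B_{jr}\|_{2,n}^2$, and for dropped true signals (A6b) makes each term at least of the same order. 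By Cauchy--Schwarz and $|z_{ij}|\le D$, the energy $E_{\eta,\mathrm{noise}}$ is bounded by a constant, times the sparsity dimension, times $\sum_{\text{inactive}}\|B_{jr}\|_{2,n}^2$; I would absorb that factor into the normalization of $\Delta_{\eta,n}$, as in \Cref{lem:entropy-H-correct-global}.

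One further point needs checking: (A5) is stated only for $\mathcal F_n$, whereas the shells live in $\mathcal G_n$. My first attempt would be to apply it via the structural (support and leaf) budgets, which are shared by $\mathcal G_n$.

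Combining the noise and amplitude bounds and inserting them into \Cref{lem:globalhellinger} gives the claim, with $c$ adjusted by $\kappa_z$, $c_{\mathrm{base}}$ and absolute constants.
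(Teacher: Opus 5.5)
Your route is the paper's: reduce to the global Hellinger bound, then treat the noise and amplitude parts separately. For the precision noise you use (A6a) so that a dropped true edge costs at least as much as a true zero. For the mean noise you use (A5) with (A6b), and for the amplitude you use the triangle inequality. The precision-block steps are correct and coincide with the paper's. You are also right that $\eta_{\mathrm{sig}}$ and $\eta_{\mathrm{noise}}$ are not orthogonal in $\|\cdot\|_{2,n}$, and that (A5) is only postulated on $\mathcal F_n$. The paper's proof passes over both by asserting $\|\eta_{\mathrm{noise}}-\eta_{0,\mathrm{noise}}\|_{2,n}^2\ge c_Z'E_{\eta,\mathrm{noise}}(\bm b)$ for an unspecified $c_Z'$.

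Your patches, however, do not close the hole. What your chain actually proves is
\[
\|\eta-\eta_0\|_{2,n}^2\ \ge\ \kappa_z\sum_{(j,r)\notin\widehat S_B}\|B_{jr}\|_{2,n}^2\ \ge\ \frac{\kappa_z}{pD^2}\,E_{\eta,\mathrm{noise}}(\bm b)\ \ge\ \frac{\kappa_z}{pD^2}\,m\,\Delta_{\eta,n}.
\]
Here Cauchy--Schwarz runs over the \emph{inactive} indices, of which there are at least $p-S_{B,n}$ per outcome. So the factor is the ambient $p$, not a sparsity level, and RE gives no control in this upper direction. You therefore get $m(\varepsilon_n^\dagger)^2/p$, not $m(\varepsilon_n^\dagger)^2$.

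Absorbing $p$ into $\Delta_{\eta,n}$ goes the wrong way. You would need $\Delta_{\eta,n}\gtrsim p(\varepsilon_n^\dagger)^2$, whereas the statement fixes $\Delta_{\eta,n}=r_n^2/16$. The normalization in \Cref{lem:entropy-H-correct-global} does the opposite: it shrinks a leaf-energy tolerance in order to \emph{upper}-bound $\|\eta_{\mathrm{noise}}\|_{2,n}$. Leaf energy cannot lower-bound $\|B_{jr}\|_{2,n}$ anyway, since opposite leaves in two trees cancel.

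The same defect breaks your amplitude step. The leftover $-(m+1)\Delta_{\eta,n}$ can only be charged to a noise bound of order $m\Delta_{\eta,n}$, which you do not have, and for $k=m=0$ there is nothing to charge it to.

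What is missing is a $p$-free comparability $\|\eta_{\mathrm{noise}}\|_{2,n}\le C\|\eta-\eta_0\|_{2,n}$ on every shell. One sufficient form is an upper restricted-isometry bound $E_{\eta,\mathrm{noise}}(\bm b)\le\bar\kappa_z\sum_{(j,r)\notin\widehat S_B}\|B_{jr}\|_{2,n}^2$, with $\bar\kappa_z$ free of $p$, together with (A5) on $\mathcal G_n$. Then $E_{\eta,\mathrm{noise}}\le(\bar\kappa_z/\kappa_z)\|\eta-\eta_0\|_{2,n}^2$ and $\|\eta-\eta_0\|_{2,n}\ge(a-C_0)_+/(1+\sqrt{\bar\kappa_z/\kappa_z})$ follow at once. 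Both are hypotheses beyond (A1)--(A6). In particular, (A5) cannot be carried to $\mathcal G_n$ through the structural budgets: the mean-noise bound is only needed when $m\ge1$, and those shells lie outside $\mathcal F_n$ by construction.

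Independently, your opening reduction needs a per-observation envelope. In \Cref{lem:globalhellinger} the mean part averages $\min(1,\|\eta_i-\eta_{0,i}\|_2^2)$ over $i$ and then bounds that average below by $\min(1,\|\eta-\eta_0\|_{2,n}^2)$. Since $\min(1,\cdot)$ is concave, Jensen gives the reverse inequality; take $\|\eta_1-\eta_{0,1}\|_2^2=n$ and all other errors zero. Because $\mathcal G_n$ deliberately imposes no amplitude cap, the $\eta$-amplitude part of the statement fails as written. Start from a tree approximation of $\Theta_0$ with zero noise coefficients, which sits in a shell with $k=m=0$ and has small $H$. Add to an active coefficient one leaf isolating a single design point. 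This sends $\|\eta_{\mathrm{sig}}\|_{2,n}$, and hence $a$, arbitrarily high while changing $H^2$ by at most $1/n$. To repair it you must either bound $\max_i\|\eta_i-\eta_{0,i}\|_2$ on the shells, so that $c$ depends on that bound, or replace $\|\eta-\eta_0\|_{2,n}^2$ by $n^{-1}\sum_i\min(1,\|\eta_i-\eta_{0,i}\|_2^2)$.
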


\begin{proof}
As established in \Cref{lem:block-metric-correct}, there exists a universal constant $c_{\mathrm{loc}} > 0$ depending only on the spectrum of $\Omega_0$ such that:
\begin{equation} \label{eq:metric_equiv_exact}
H^2(\Theta,\Theta_0) \ge c_{\mathrm{loc}} \min\Big(1, \ \|\Omega-\Omega_0\|_F^2 + \|\eta-\eta_0\|_{2,n}^2 \Big).
\end{equation}

We bound the Euclidean parameter distance using the shell indices. By orthogonality of the support sets, we can decompose the squared errors into active and inactive (noise) components:
\[
\|\Omega-\Omega_0\|_F^2 \ge \|\Omega_{\mathrm{sig}} - \Omega_{0,\mathrm{sig}}\|_F^2 + \|\Omega_{\mathrm{noise}} - \Omega_{0,\mathrm{noise}}\|_F^2.
\]

\textbf{Noise Separation ($k, m$):}
For the precision matrix, we evaluate the error over the thresholded inactive set $S_{\Omega, \mathrm{noise}} := \widehat S_\Omega(t_{\Omega,n})^c$.
For true inactive edges, $(\omega_{kk'} - \omega^0_{kk'})^2 = \omega_{kk'}^2$. For missed active edges ($(k,k') \in S_{0,\Omega} \cap S_{\Omega, \mathrm{noise}}$), the parameter is bounded by $|\omega_{kk'}| \le t_{\Omega,n}$. By the $\beta$-min condition \suppref{\ref{assum:beta-min}}{(A6)}, $|\omega^0_{kk'}| \ge 4t_{\Omega,n}$. By the reverse triangle inequality:
\[
|\omega_{kk'} - \omega^0_{kk'}| \ge |\omega^0_{kk'}| - |\omega_{kk'}| \ge 3t_{\Omega,n} \ge |\omega_{kk'}| \implies (\omega_{kk'} - \omega^0_{kk'})^2 \ge \omega_{kk'}^2.
\]
Summing over the noise set yields $\|\Omega_{\mathrm{noise}} - \Omega_{0,\mathrm{noise}}\|_F^2 \ge E_{\Omega,\mathrm{noise}}(\Omega) > k \Delta_{\Omega,n}$.

By an identical application of the $\beta$-min condition and the functional RE condition \suppref{\ref{assum:multi-RE}}{(A5)} for the mean block, missed true signals are penalized strictly more than the noise threshold $u_{B,n}$, yielding:
\[
\|\eta_{\mathrm{noise}} - \eta_{0,\mathrm{noise}}\|_{2,n}^2 \ge c_Z' E_{\eta,\mathrm{noise}}(\bm{b}) > m c_Z' \Delta_{\eta,n}.
\]
Because $\Delta_{\Omega,n} \asymp \Delta_{\eta,n} \asymp (\varepsilon_n^\dagger)^2$, the noise contribution is bounded below by $C_1 \max(k,m) (\varepsilon_n^\dagger)^2$.

\textbf{Amplitude Separation ($a$):}
If the active amplitude index is $a$, then $\max(\|\Omega_{\mathrm{sig}}\|_F, \|\eta_{\mathrm{sig}}\|_{2,n}) \ge a$. 
By the standard triangle inequality, $\|\Omega_{\mathrm{sig}} - \Omega_{0,\mathrm{sig}}\|_F \ge \|\Omega_{\mathrm{sig}}\|_F - \|\Omega_{0,\mathrm{sig}}\|_F \ge \|\Omega_{\mathrm{sig}}\|_F - C_0$.
Applying this to both blocks, the active error is bounded below by:
\[
\|\Omega_{\mathrm{sig}} - \Omega_{0,\mathrm{sig}}\|_F^2 + \|\eta_{\mathrm{sig}} - \eta_{0,\mathrm{sig}}\|_{2,n}^2 \ge (a - C_0)_+^2.
\]

Summing the noise and active amplitude lower bounds, the total Euclidean parameter distance is bounded below by the maximum of the two independent forces:
\[
\|\Omega-\Omega_0\|_F^2 + \|\eta-\eta_0\|_{2,n}^2 \ge \frac{1}{2} \Big( C_1 \max(k,m) (\varepsilon_n^\dagger)^2 + (a - C_0)_+^2 \Big).
\]
Substituting this into the global-local metric equivalence \eqref{eq:metric_equiv_exact} yields the result.
\end{proof}
\begin{lemma}[Shell Metric Entropy]
\label{lem:shell-entropy}
For any shell $S_{k,m} \subset \mathcal{G}_n \setminus \mathcal{F}_n$, let $\ell = \max(k,m) \ge 1$, 
\[
\log N\big(\sqrt{\ell} \varepsilon_n^\dagger, S_{k,m}, H\big) \le C_{\mathrm{ent}} \ell n(\varepsilon_n^\dagger)^2,
\]
for a universal constant $C_{\mathrm{ent}} > 0$.
\end{lemma}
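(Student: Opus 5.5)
We will build the cover of $S_{k,m}$ in the same way as the proof of \Cref{lem:entropy-H-correct-global}, but at the coarser radius $\sqrt{\ell}\,\varepsilon_n^\dagger$. Set $r_{n,\ell}:=\sqrt{\ell}\,\varepsilon_n^\dagger/(4\sqrt{C^\star})$.

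The first step concerns the noise. On $S_{k,m}$ we have $E_{\Omega,\mathrm{noise}}(\Omega)<(k+1)\Delta_{\Omega,n}\le 2\ell\Delta_{\Omega,n}$, and likewise $E_{\eta,\mathrm{noise}}<2\ell\Delta_{\eta,n}$. Repeating the Cauchy--Schwarz argument from the mean block of \Cref{lem:entropy-H-correct-global} then gives $\|\Omega_{\mathrm{noise}}\|_F\lesssim\sqrt{\ell}\,r_n$ and $\|\eta_{\mathrm{noise}}\|_{2,n}\lesssim\sqrt{\ell}\,r_n$. Since $r_n$ and $r_{n,\ell}$ are defined with the same normalisation, after adjusting constants this is at most $r_{n,\ell}/4$. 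We may therefore set every noise coordinate to zero in the net, exactly as was done for $\mathcal F_n$.

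The second step covers the signal part. Membership in the outer sieve $\mathcal G_n$ still controls the structure: $|\widehat S_\Omega|\le s_{\Omega,n}$, $|\widehat S_B|\le S_{B,n}$, $L_{\mathrm{act}}\le L_{\mathrm{act},n}$, and at most $d_n$ split variables per surface. The support-selection, leaf-allocation and tree-topology costs are therefore the same as before and are $\lesssim n(\varepsilon_n^\dagger)^2$, which is $\le\ell\,n(\varepsilon_n^\dagger)^2$ because $\ell\ge1$. For amplitudes there is no global $\Lambda_{B,n}$ or $R_{\Omega,n}$ bound on $\mathcal G_n$. I would handle this by intersecting with a polynomial amplitude range: either cover each $S_{k,m,a}$ with $a\le A_{0,n}$ and take the union, or use the a.s.\ truncation envelopes $|\omega_{kk'}|\le n^{2A_\Omega+1}$ and $|b|\le\Lambda_{B,n}$ on the full-prior-mass part. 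On that range the volumetric and gridding bounds apply with radius $r_{n,\ell}/2$. The resulting logarithmic factors are $\log(R/r_{n,\ell})\le\log(R/r_n)\lesssim\log n+\log q$, because a larger radius only decreases them. The signal entropy is then at most $C\,n(\varepsilon_n^\dagger)^2$ as in \eqref{eq:Omega-entropy-final} and \eqref{eq:eta-entropy-global}. Taking the union over at most $A_{0,n}=n^{C_A}$ amplitude levels adds only $C_A\log n\lesssim n(\varepsilon_n^\dagger)^2$.

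The third step translates back to the Hellinger distance. By the upper bound in \Cref{lem:block-metric-correct}, a product net with Frobenius and $\|\cdot\|_{2,n}$ radius $r_{n,\ell}$ is a $\sqrt{\ell}\,\varepsilon_n^\dagger/2$-net in $H$. Adding the two blocks gives $\log N\le C\,n(\varepsilon_n^\dagger)^2\le C_{\mathrm{ent}}\,\ell\,n(\varepsilon_n^\dagger)^2$.

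I expect the main obstacle to be justifying \Cref{lem:block-metric-correct} on $S_{k,m}$ outside $\mathcal F_n$. Its constant $C^\star$ needs the spectral bounds $\|\Omega\|_{\mathrm{op}}\le R_{\Omega,n}$ and $\|\Omega^{-1}\|_{\mathrm{op}}\le\bar R_{\Omega,n}$, and $\mathcal G_n$ does not impose the second of these. I would first try to get around this with the Hellinger inequality $h\le1$ together with a cover in the $\Sigma=\Omega^{-1}$ parametrisation, whose upper-bound constant needs only $\|\Omega\|_{\mathrm{op}}$. If that fails, I would add the spectral envelope to the shells, since its complement has only polynomial prior mass by \Cref{lem:E3-vanishes}, and absorb that piece separately in the proof of \Cref{thm:postconc_full}.
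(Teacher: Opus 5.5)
Your route is the paper's route. You zero the inactive coordinates using the shell bound (you use $(k+1)\Delta\le 2\ell\Delta$, the paper uses $\sqrt{\ell+1}/4\le\sqrt{\ell}/2$), reuse the combinatorial budgets of $\mathcal G_n$, cover the active part volumetrically under polynomial envelopes so every logarithm is $\mathcal O(\log n)$, and conclude $\lesssim n(\varepsilon_n^\dagger)^2\le \ell\, n(\varepsilon_n^\dagger)^2$. The paper's proof simply asserts two things: that the active amplitudes obey $R_{\Omega,n}$ and $\Lambda_{B,n}$ on $\mathcal G_n$, and that ``the global-local metric equivalence'' converts Hellinger radii to Euclidean radii at constant cost. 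Neither is part of the formal definition of $\mathcal G_n$, so both obstacles you flag are genuine and are not handled there.

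Your amplitude fix is sound, with two corrections. First, the envelopes $|\omega_{kk'}|\le n^{2A_\Omega+1}$ and $|b_{jrt\ell}|\le\Lambda_{B,n}$ are not almost sure. They fail only on a set of prior mass at most $e^{-n^2/4}$ (as in the proof of \Cref{lem:prior-tail-correct}), which the GGV argument may discard. Second, on the route through $a\le A_{0,n}$, the bound $A_{\mathrm{sig}}<a+1$ controls $\|\eta_{\mathrm{sig}}\|_{2,n}$ but not individual leaves. So cover the image of the active leaf vector (for fixed structures, a subspace of dimension at most $L_{\mathrm{act},n}$) rather than gridding leaves.

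Note also that your Step 1 works only because you set $\Delta=r_n^2/16$ with the $C^\star$-scaled $r_n=\varepsilon_n^\dagger/(4\sqrt{C^\star})$ of \Cref{lem:entropy-H-correct-global}. The shells are defined with $r_n=c_{\mathrm{base}}\varepsilon_n^\dagger$, and with that choice the discarded noise costs a factor $\sqrt{C^\star}$ (polynomial in $n$) in $H$. With your calibration instead, \Cref{lem:shell-separation} only gives $H^2\gtrsim\ell(\varepsilon_n^\dagger)^2/C^\star$. So the two lemmas still need reconciling before they can be combined in \Cref{thm:postconc_full}.

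The real gap is the spectral one, and neither of your fallbacks closes it. Fallback (a) cannot work in any parametrisation, because $H$ sees only relative scale. Take
\[
\Omega_j=\begin{pmatrix}1&1-2^{-j}\\ 1-2^{-j}&1\end{pmatrix}\oplus I_{q-2},\qquad j\ge1,
\]
each paired with one common mean that places it in $m=1$. These matrices share an eigenbasis with eigenvalues $2-2^{-j}$, $2^{-j}$ and $1$. They satisfy $\|\Omega_j\|_{\mathrm{op}}\le2$, have fixed support and zero precision noise (when $t_{\Omega,n}<1/2$). Hence infinitely many of them lie in a single $S_{0,1,a}\subset\mathcal G_n\setminus\mathcal F_n$. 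Yet for $j\ne j'$ the Gaussian affinity is at most $(2\sqrt2/3)^{1/2}$, so $H^2(\Theta_j,\Theta_{j'})>1/40$. Therefore $N(\epsilon,S_{0,1},H)=\infty$ for $\epsilon<0.07$. With $\mathcal G_n$ as defined, the lemma fails for every fixed $\ell$ once $\sqrt{\ell}\,\varepsilon_n^\dagger<0.07$, and the paper's proof has the same hole.

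Fallback (b), intersecting the shells with $\{\|\Omega^{-1}\|_{\mathrm{op}}\le\bar R_{\Omega,n}\}$, is the right repair, and your argument then proves that restricted lemma. The removed piece, however, cannot be absorbed through \Cref{lem:E3-vanishes}: prior mass $\mathcal O(n^{-\kappa_{\mathrm{spec}}})$ is useless against the denominator bound $e^{-C_Dn(\varepsilon_n^\dagger)^2}$. It must be tested away directly, for example with
\[
\phi_n=\mathbbm 1\Big\{\lambda_{\max}\Big(\frac1n\sum_{i=1}^n(\bY_i-\eta_{0,i})(\bY_i-\eta_{0,i})^\top\Big)>\frac{9}{\underline\lambda}\Big\}.
\]
Under $\Theta_0$, Gaussian sample-covariance concentration with $q\le n$ gives $\mathbb E_{\Theta_0}\phi_n\le e^{-n/2}$. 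Now take any $\Theta$ with $\|\Omega^{-1}\|_{\mathrm{op}}\ge\bar R_{\Omega,n}\ge36/\underline\lambda$, and let $v$ be the top eigenvector of $\Omega^{-1}$. The projections $v^\top(\bY_i-\eta_{0,i})$ are independent $\mathcal N(\mu_i,\sigma^2)$ with $\sigma^2\ge\bar R_{\Omega,n}$, and their squares stochastically dominate $\sigma^2\chi^2_1$. Hence $\mathbb E_\Theta(1-\phi_n)\le\mathbb P(\chi^2_n\le n/4)\le e^{-cn}$. Both errors are uniformly $e^{-cn}$, which beats $e^{-C_Dn(\varepsilon_n^\dagger)^2}$ because $\varepsilon_n^\dagger\to0$.
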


\begin{proof}
To establish uniformly consistent tests for the shell $S_{k,m}$, which is separated from $\Theta_0$ by a distance proportional to $\sqrt{\ell}\varepsilon_n^\dagger$ (\Cref{lem:shell-separation}), we must cover the shell at a testing radius $\rho_\ell \asymp \sqrt{\ell}\varepsilon_n^\dagger$. By the global-local metric equivalence, this corresponds to covering the Euclidean parameter space at a radius $r_\ell = \sqrt{\ell} r_n$, where $r_n \asymp \varepsilon_n^\dagger$ is the base sieve tolerance radius.

For any $\Theta \in S_{k,m}$, we decompose the parameters into active signals and thresholded inactive noise: $\Omega = \Omega_{\mathrm{sig}} + \Omega_{\mathrm{noise}}$ and $\bm{b} = \bm{b}_{\mathrm{sig}} + \bm{b}_{\mathrm{noise}}$. 

By definition, the noise energies are bounded by $E_{\Omega,\mathrm{noise}}(\Omega) \le (k+1)\Delta_{\Omega,n}$ and $E_{\eta,\mathrm{noise}}(\bm{b}) \le (m+1)\Delta_{\eta,n}$. Due to the baseline inner sieve tolerances $\Delta \asymp r_n^2 / 16$, the maximum amplitude of the noise in shell $S_{k,m}$ is strictly bounded:
\[
\max\Big( \|\Omega_{\mathrm{noise}}\|_F, \, \|\eta_{\mathrm{noise}}\|_{2,n} \Big) \le \frac{\sqrt{\ell+1}}{4} r_n.
\]
Because $\ell \ge 1$, we have the strict geometric inequality ${\sqrt{\ell+1}}/{4} \le {\sqrt{\ell}}/{2}$. This guarantees that the entirety of the noise error strictly consumes less than half of our available testing radius $r_\ell$. We can therefore approximate any $\Theta \in S_{k,m}$ by a pseudo-sparse parameter $\widetilde{\Theta} = (\widetilde{\Omega}_{\mathrm{sig}} + \mathbf{0}, \, \widetilde{\bm{b}}_{\mathrm{sig}} + \mathbf{0})$, forcing the noise dimensions to exactly zero. 

By the triangle inequality, the remaining error budget for covering the active signals is:
\[
\|\Omega_{\mathrm{sig}} - \widetilde{\Omega}_{\mathrm{sig}}\|_F \le \frac{\sqrt{\ell}}{2} r_n, \quad \|\eta_{\mathrm{sig}} - \widetilde{\eta}_{\mathrm{sig}}\|_{2,n} \le \frac{\sqrt{\ell}}{2} r_n.
\]
Because $\Theta \in \mathcal{G}_n$, the active signals reside in strictly bounded combinatorial subspaces: at most $s_{\Omega,n}$ precision edges, $S_{B,n}$ active functions, and $L_{\mathrm{act},n}$ leaves. Furthermore, their maximum amplitudes are bounded by the polynomial envelopes $R_{\Omega,n}$ and $\Lambda_{B,n}$. The volumetric covering numbers for these sparse active dimensions evaluate to:
\[
\log N_{\Omega} \le s_{\Omega,n} \log\left( \frac{3 R_{\Omega,n}}{\sqrt{\ell} r_n / 2} \right), \quad \log N_{\eta} \le C L_{\mathrm{act},n} \log\left( \frac{C' \Lambda_{B,n}}{\sqrt{\ell} r_n / 2} \right).
\]
Because $R_{\Omega,n}, \Lambda_{B,n}$, and $1/r_n$ are polynomial in $n$, the logarithmic factors are bounded by $\mathcal{O}(\log n)$. Because $\ell \ge 1$, dividing by $\sqrt{\ell}$ only decreases the required entropy. Multiplying by the structural budgets $s_{\Omega,n}$ and $L_{\mathrm{act},n}$ yields a total metric entropy bounded by $C n(\varepsilon_n^\dagger)^2 \le C_{\mathrm{ent}} \ell n(\varepsilon_n^\dagger)^2$. 
\end{proof}

\subsubsection*{Proof of \Cref{thm:postconc_full} (Full Posterior Contraction)}\label{sec:proofthm2}

\begin{proof}
By \Cref{lem:shell-separation} and \Cref{lem:shell-entropy}, each shell $S_{k,m}$ is separated from $\Theta_0$ by at least $m_0 \sqrt{\ell} \varepsilon_n^\dagger$ and has metric entropy bounded by $C_{\mathrm{ent}} \ell n(\varepsilon_n^\dagger)^2$. By the standard Birg\'{e}-Le Cam testing theory for convex models \citep{Ghosal2000}, because the testing power grows quadratically with the separation while the entropy grows only linearly with $\ell$, this implies the existence of tests $\phi_{k,m}$ such that:
\begin{align*}
    \mathbb{E}_{\Theta_0} [\phi_{k,m}] & \le \exp(-C_T \ell n (\varepsilon_n ^{\dagger})^2) \\
    \sup_{\Theta \in S_{k,m}} \mathbb{E}_{\Theta} [1-\phi_{k,m}] & \le \exp(-C_T \ell n (\varepsilon_n ^{\dagger})^2), \ \text{where} \ \ell = \max(k,m).
\end{align*}
The testing constant $C_T$ can be made arbitrarily large by choosing a sufficiently large threshold $M$ in the definition of the target Hellinger ball $U_n = \{\Theta : H(\Theta,\Theta_0) > M \varepsilon_n ^\dagger \}$. 

We decompose the full posterior probability of the complement $U_n$ over our nested parameter partitions:
\[
\Pi(U_n \mid \bm{Y}) \le \underbrace{\Pi(U_n \cap \mathcal{F}_n \mid \bm{Y})}_{\text{shown already}\ \le \exp(-Cn (\varepsilon_n^{\dagger})^2) } + \Pi\Big(\bigcup_{(k,m) \neq (0,0)} S_{k,m} \Bigm| \bm{Y}\Big) + \underbrace{\Pi(\mathcal{G}_n ^{c} \mid \bm{Y})}_{\text{shown already}\ \le \exp(-C'n(\varepsilon_n ^{\dagger})^2)}.
\]
The inner and the outer tail bound follow from the proof of \Cref{lem:prior-tail-correct}. It remains to bound the expected posterior mass of the middle-band shells.

Let $D_n$ denote the denominator of the posterior, $\int p_\Theta(\bm{Y})/p_{\Theta_0}(\bm{Y}) d\Pi(\Theta)$. From our small-ball probability lemmas over the KL neighborhood $\mathcal{A}_n$, we established that with $P_{\Theta_0}$-probability tending to 1, the denominator is bounded below by $\exp(-C_D n(\varepsilon_n^\dagger)^2)$. 
Conditioning on this event, we bound the numerator for the entire middle band using the constructed tests:
\begin{align*}
\mathbb E_{\Theta_0} \Bigg[ \sum_{\max(k,m) \ge 1} (1-\phi_{k,m}) \int_{S_{k,m}} \frac{p_\Theta(\bm{Y})}{p_{\Theta_0}(\bm{Y})} d\Pi(\Theta) \Bigg] 
&\stackrel{}{=} \sum_{\max(k,m) \ge 1} \int_{S_{k,m}} \mathbb E_{\Theta_0} \left[ (1-\phi_{k,m}) \frac{p_\Theta(\bm{Y})}{p_{\Theta_0}(\bm{Y})} \right] d\Pi(\Theta) \\
&\stackrel{}{=} \sum_{\max(k,m) \ge 1} \int_{S_{k,m}} \mathbb E_{\Theta}[1-\phi_{k,m}] d\Pi(\Theta) \\
&\stackrel{}{\le} \sum_{\max(k,m) \ge 1} \left( \sup_{\Theta \in S_{k,m}} \mathbb E_{\Theta}[1-\phi_{k,m}] \right) \Pi(S_{k,m}) \\
&\stackrel{}{=} \sum_{\ell=1}^\infty \sum_{\max(k,m)=\ell} \left( \sup_{\Theta \in S_{k,m}} \mathbb E_{\Theta}[1-\phi_{k,m}] \right) \Pi(S_{k,m}) \\
&\stackrel{}{\le} \sum_{\ell=1}^\infty (2 \ell+1) \exp\big(- C_T \ell n (\varepsilon_n^\dagger)^2 \big).
\end{align*}
The factor $(2\ell+1)$ arises because there are exactly $2\ell+1$ distinct pairs of indices $(k,m)$ such that $\max(k,m) = \ell$. 
For sufficiently large $C_T$, this infinite geometric series converges and is strictly dominated by its leading term:
\[
\sum_{\ell=1}^\infty (2 \ell+1) \exp\big(- C_T \ell n (\varepsilon_n^\dagger)^2 \big) = \mathcal{O}\Big(\exp(-C_T n (\varepsilon_n ^{\dagger})^2)\Big).
\]
By choosing the testing constant $M$ such that $C_T > C_D$, the exponentially decaying numerator strictly dominates the KL denominator bound. By Markov's inequality, the posterior mass of the middle band converges to 0 in $P_{\Theta_0}$-probability. Combining all three components confirms that the full posterior distribution concentrates optimally at the rate $\varepsilon_n^\dagger$. 
\end{proof}
\subsection{Proof of Corollary 1}
\label{sec:prooftransformed-surface}
\begin{proof}
By \Cref{thm:postconc_full} and \Cref{lem:block-metric-correct}, the posterior $\Pi$ concentrates on a set
on which both components are small: there exists a constant $c>0$ such that
\begin{equation}\label{eq:En-components}
\Pi\Big(\ \|\Omega-\Omega_0\|_F>c\,\varepsilon_n^\dagger\ \text{ or }\
\|\eta-\eta_0\|_{2,n}>c\,\varepsilon_n^\dagger\ \Bigm|\ \bY\Big)
\ \to\ 0,
\end{equation}
in $P_{\Theta_0}$-probability.\footnote{This is the standard implication of metric equivalence on the
(local) neighborhood used in the proof: once $H(\Theta,\Theta_0)\lesssim \varepsilon_n^\dagger\to0$,
both $\|\Omega-\Omega_0\|_F$ and $\|\eta-\eta_0\|_{2,n}$ must be $\mathcal{O}(\varepsilon_n^\dagger)$ on the same
event under the sieve-truncated posterior.}

\paragraph{Contraction of $\bm{B}(\cdot)$.}
Let $\Delta \bm{B}(\bx):=\bm{B}(\bx)-\bm{B}_0(\bx)$ and $\Delta\eta_i:=\eta_i-\eta_{0,i}=\Delta \bm{B}(\bx_i)^\top\bz_i$.
By the functional RE condition \suppref{\ref{assum:multi-RE}}{(A5)},
\[
\|\bm{B}-\bm{B}_0\|_{F,2,n}^2
=\frac1n\sum_{i=1}^n\|\Delta \bm{B}(\bx_i)\|_F^2
\ \le\ 
\kappa_z^{-1}\cdot \frac1n\sum_{i=1}^n\|\Delta \bm{B}(\bx_i)^\top\bz_i\|_2^2
=\kappa_z^{-1}\,\|\eta-\eta_0\|_{2,n}^2.
\]
Hence,
\begin{equation}\label{eq:B-to-eta}
\|\bm{B}-\bm{B}_0\|_{F,2,n}
\ \le\ \kappa_z^{-1/2}\,\|\eta-\eta_0\|_{2,n}.
\end{equation}
Therefore, on the event in \eqref{eq:En-components} we have $\|\eta-\eta_0\|_{2,n}\le c\varepsilon_n^\dagger$, and so
\[
\|\bm{B}-\bm{B}_0\|_{F,2,n}\ \le\ c\,\kappa_z^{-1/2}\,\varepsilon_n^\dagger.
\]
Absorbing constants into $M'$ and using \eqref{eq:En-components} yields the desired contraction for $\bm{B}(\cdot)$ under the full posterior:
\[
\Pi \Big(\ \|\bm{B}-\bm{B}_0\|_{F,2,n} > M'\kappa_z^{-1/2}\varepsilon_n^\dagger \ \Bigm|\ \bY \Big)\ \to\ 0,
\]
in $P_{\Theta_0}$-probability.

\paragraph{Contraction of $G(\cdot)$.}
Fix $(\bm{B},\Omega)$ satisfying $\|\Omega-\Omega_0\|_F\le c\,\varepsilon_n^\dagger$.
Since $\varepsilon_n^\dagger\to0$, for all sufficiently large $n$ we have
$\|\Omega-\Omega_0\|_F\le \underline\lambda/4$ on this event. By Weyl's inequality and
\Cref{lem:pd-correct}, this implies
\begin{equation}\label{eq:inv-bds-cor}
\|\Omega^{-1}\|_{\text{op}}\le \frac{2}{\underline\lambda},
\qquad
\|\Omega^{-1}-\Omega_0^{-1}\|_{\text{op}}
\le \|\Omega^{-1}-\Omega_0^{-1}\|_F
\le \frac{2}{\underline\lambda^2}\,\|\Omega-\Omega_0\|_F.
\end{equation}

For each $i$,
\[
G(\bx_i)-G_0(\bx_i)
=\Omega^{-1}\big(\bm{B}(\bx_i)-\bm{B}_0 (\bx_i)\big)^\top
+\big(\Omega^{-1}-\Omega_0^{-1}\big)\bm{B}_0 (\bx_i)^\top.
\]
Using $\|AB\|_F\le \|A\|_{\text{op}}\|B\|_F$, the triangle inequality, and then averaging over $i$ gives
\begin{align*}
\|G-G_0\|_{F,2,n}
&\le
\|\Omega^{-1}\|_{\text{op}}\,\|\bm{B}-\bm{B}_0\|_{F,2,n}
+\|\Omega^{-1}-\Omega_0^{-1}\|_{\text{op}}
\Big(\frac1n\sum_{i=1}^n \|\bm{B}_0 (\bx_i)\|_F^2\Big)^{1/2}\\
&\le
\|\Omega^{-1}\|_{\text{op}}\,\|\bm{B}-\bm{B}_0\|_{F,2,n}
+\|\Omega^{-1}-\Omega_0^{-1}\|_{\text{op}}\,R_{B_0}.
\end{align*}
Now invoke \eqref{eq:B-to-eta} and combine with \eqref{eq:inv-bds-cor} to obtain
\[
\|G-G_0\|_{F,2,n}
\ \le\
\frac{2}{\underline\lambda}\,\kappa_z^{-1/2}\,\|\eta-\eta_0\|_{2,n}
+\frac{2}{\underline\lambda^2}\,R_{B_0}\,\|\Omega-\Omega_0\|_F.
\]
On the event in \eqref{eq:En-components} both $\|\eta-\eta_0\|_{2,n}$ and $\|\Omega-\Omega_0\|_F$ are at most
$c\,\varepsilon_n^\dagger$, hence
\[
\|G-G_0\|_{F,2,n}
\ \le\
C\big(\kappa_z^{-1/2}+R_{B_0}\big)\,\varepsilon_n^\dagger
\]
for a constant $C$ depending only on $\underline\lambda$. Absorbing constants into $M'$ and using
\eqref{eq:En-components} completes the proof.
\end{proof}

\setcounter{figure}{0}
\setcounter{equation}{0}
\setcounter{table}{0}
\setcounter{theorem}{0}
\setcounter{assumption}{0}

\section{Additional algorithmic details}
\label{sec:multi-algorithm}
This section collects the posterior computation details omitted from Section 3. Throughout, we suppress MCMC iteration superscripts for readability. 
\subsection{Outcome-wise pseudo-response representation}
The key computational simplification is that, conditional on the remaining outcomes and on \(\Omega\), the multivariate Gaussian likelihood reduces to a scalar Gaussian working likelihood.

Fix \(r\in\{1,\ldots,q\}\). Writing \(\bm E_i=(E_{i1},\ldots,E_{iq})^\top\), we  expand the quadratic form
\begin{align*}
\bm E_i^\top \Omega \bm E_i
&=
\sum_{a=1}^q \sum_{b=1}^q E_{ia}\Omega_{ab}E_{ib} \\
&=
\Omega_{rr}E_{ir}^2
+
2E_{ir}\sum_{k\neq r}\Omega_{rk}E_{ik}
+
\sum_{a\neq r}\sum_{b\neq r}E_{ia}\Omega_{ab}E_{ib}.
\end{align*}
Completing the square yields
\[
\bm E_i^\top \Omega \bm E_i
=
\Omega_{rr}
\left(
E_{ir}
+
\Omega_{rr}^{-1}\sum_{k\neq r}\Omega_{rk}E_{ik}
\right)^2
+
C_{i,-r},
\]
where \(C_{i,-r}\) is constant with respect to \(E_{ir}\). Therefore,
\[
E_{ir}\mid \bm E_{i,-r},\Omega
\sim
\mathcal N\!\left(
-\Omega_{rr}^{-1}\sum_{k\neq r}\Omega_{rk}E_{ik},
\ \Omega_{rr}^{-1}
\right).
\]
Substituting \(E_{ir}=Y_{ir}-\eta_{ir}\) gives the scalar pseudo-response
\begin{equation}\label{eq:multi-pseudoresponse}
\widetilde Y_{ir}
:=
Y_{ir}
+
\Omega_{rr}^{-1}\sum_{k\neq r}\Omega_{rk}E_{ik},
\end{equation}
for which
\begin{equation}\label{eq:multi-working-gaussian}
\widetilde Y_{ir}\mid \bm E_{i,-r},\Omega
\sim
\mathcal N\big(\eta_{ir},\Omega_{rr}^{-1}\big).
\end{equation}
Hence, conditional on \(\Omega\), each outcome-specific update reduces exactly to a univariate Gaussian BART regression with working variance \(\Omega_{rr}^{-1}\).

\subsection{Conditional tree updates for \texorpdfstring{$B_{jr}$}{Bjr}}

Fix an outcome \(r\) and predictor \(j\). The \(t\)th tree in the \((j,r)\)th ensemble is updated conditional on all other trees. Define the leave-one-tree-out partial residual
\begin{equation}\label{eq:multi-partial-resid}
\widetilde r_{ir}^{(j,t)}
=
\widetilde Y_{ir}
-
\sum_{j'\neq j} z_{ij'} B_{j'r}(\bm X_i)
-
\sum_{t'\neq t} z_{ij}\,
g_{jrt'}(\bm X_i;\mathcal T_{jrt'},\mathcal M_{jrt'}).
\end{equation}
Conditional on the remaining trees, the working model for tree \((j,r,t)\) is
\begin{equation}\label{eq:multi-tree-working}
\widetilde r_{ir}^{(j,t)}
=
z_{ij}\,\mu_{jrt,\ell(\bm X_i;\mathcal T_{jrt})}
+
\varepsilon_{ir},
\quad
\varepsilon_{ir}\sim \mathcal N(0,\Omega_{rr}^{-1}).
\end{equation}

\subsubsection*{Leaf sufficient statistics}
Let \(I\subset\{1,\ldots,n\}\) denote the set of observations routed to a given leaf. We denote the following quantities
\(n_I = |I|,\) \(A_I = \sum_{i\in I} z_{ij}^2,\) 
\(B_I = \sum_{i\in I} z_{ij}\,\widetilde r_{ir}^{(j,t)},\) and \(R_I = \sum_{i\in I} \bigl(\widetilde r_{ir}^{(j,t)}\bigr)^2.\)

\subsubsection*{Integrated node marginal likelihood}

For a leaf parameter \(\mu\), the likelihood-prior pair is given by
\(
\widetilde r_{ir}^{(j,t)} \mid \mu
\sim
\mathcal N(z_{ij}\mu,\Omega_{rr}^{-1}),
\
\mu\sim \mathcal N(0,\sigma_{jr}^2).
\)
Integrating out \(\mu\) yields the exact marginal likelihood contribution of node \(I\):
\begin{equation}
\label{eq:multi-node-marginal}
\begin{split}
m(I)
&=
\int
\left[
\prod_{i\in I}
\phi\left(\widetilde r_{ir}^{(j,t)}; z_{ij}\mu,\Omega_{rr}^{-1}\right)
\right]
\phi(\mu;0,\sigma_{jr}^2)\, d\mu \\
&=
(2\pi)^{-n_I/2}\,
\Omega_{rr}^{\,n_I/2}\,
\bigl(1+\sigma_{jr}^2\Omega_{rr}A_I\bigr)^{-1/2}
\exp\left\{
-\frac{\Omega_{rr}}{2}R_I
+
\frac{\Omega_{rr}^2\sigma_{jr}^2 B_I^2}
{2(1+\sigma_{jr}^2\Omega_{rr}A_I)}
\right\}.
\end{split}
\end{equation}
For a tree \(\mathcal T\), the integrated marginal likelihood is the product over its terminal nodes:
\(\mathcal L_{\mathrm{marg}}(\mathcal T)
=\prod_{\ell\in\mathcal L(\mathcal T)} m(I_\ell).\)

\subsubsection*{Metropolis--Hastings update for the tree structure}

We update \(\mathcal T_{jrt}\) using standard local grow/prune proposals. If \(\mathcal T^\star\) is the proposed tree, the acceptance probability is
\begin{equation}
\label{eq:multi-tree-mh}
\alpha(\mathcal T,\mathcal T^\star)
=
1\wedge
\frac{p(\mathcal T^\star)}{p(\mathcal T)}
\cdot
\frac{q(\mathcal T^\star\to\mathcal T)}{q(\mathcal T\to\mathcal T^\star)}
\cdot
\frac{\mathcal L_{\mathrm{marg}}(\mathcal T^\star)}
{\mathcal L_{\mathrm{marg}}(\mathcal T)},
\end{equation}
where \(p(\mathcal T)\) is the BART tree prior and \(q(\cdot\to\cdot)\) is the proposal kernel.

\subsubsection*{Gaussian full conditional for the leaf parameters}

Conditional on the accepted tree structure, the leaf parameters are independent across leaves. For a leaf \(I\),

\begin{equation}
\label{eq:multi-leaf-full}
\mu_I\mid \ldots \sim \mathcal N(m_I^\star,V_I^\star),
\quad
V_I^\star = \left(\Omega_{rr}A_I + \sigma_{jr}^{-2}\right)^{-1},
\quad
m_I^\star = V_I^\star\,\Omega_{rr} B_I.
\end{equation}

\subsection{Updates for the modifier-splitting probabilities}

For each ensemble \((j,r)\), let \(c_{jrk}\) denote the number of internal nodes, across all \(M\) trees in that ensemble, that split on modifier \(X_k\), \(k=1,\ldots,d\). Let \(C_{jr}=\sum_{k=1}^d c_{jrk}.\)
Conditionally on the concentration parameter \(\theta_{jr}\), the splitting-probability vector
\(\bm\pi_{jr}=(\pi_{jr,1},\ldots,\pi_{jr,d})^\top\) has the conjugate full conditional
\begin{equation}
\label{eq:multi-pi-full}
\bm\pi_{jr}\mid \theta_{jr},\mathbf c_{jr}
\sim
\mathrm{Dirichlet}\!\left(
\frac{\theta_{jr}}{d}+c_{jr,1},
\ldots,
\frac{\theta_{jr}}{d}+c_{jr,d}
\right).
\end{equation}
In practice, this is sampled by drawing
\[
g_k\mid \ldots
\sim
\mathrm{Ga}\!\left(\frac{\theta_{jr}}{d}+c_{jr,k},1\right),
\quad
\pi_{jr,k}=\frac{g_k}{\sum_{\ell=1}^d g_\ell}.
\]

\subsubsection*{Metropolis--Hastings update for \texorpdfstring{$\theta_{jr}$}{thetajr}}

Integrating out \(\bm\pi_{jr}\) yields the Dirichlet--Multinomial marginal
\begin{equation}
\label{eq:multi-theta-target}
p(\theta_{jr}\mid \mathbf c_{jr})
\propto
\frac{\Gamma(\theta_{jr})}{\Gamma(\theta_{jr}+C_{jr})}
\prod_{k=1}^d
\frac{\Gamma\left(\theta_{jr}/d+c_{jr,k}\right)}
{\Gamma(\theta_{jr}/d)}
\cdot
\frac{1}{(d+\theta_{jr})^{d+1}} .
\end{equation}
We update \(\theta_{jr}\) by a random-walk Metropolis step on the log scale. Writing \(\vartheta_{jr}=\log\theta_{jr}\), propose
\(
\vartheta_{jr}^\star \sim \mathcal N(\vartheta_{jr},\sigma_\theta^2),
\
\theta_{jr}^\star = e^{\vartheta_{jr}^\star},
\)
and accept with probability
\( \alpha(\theta_{jr},\theta_{jr}^\star)
= 1\wedge \frac{p(\theta_{jr}^\star\mid \mathbf c_{jr})\,\theta_{jr}^\star}
{p(\theta_{jr}\mid \mathbf c_{jr})\,\theta_{jr}}.\)

\subsection{Horseshoe updates for the coefficient-function ensembles}

For each ensemble \((j,r)\), let
\(
K_{jr} := \sum_{t=1}^M |\mathcal L(\mathcal T_{jrt})|
\)
denote the total number of leaf parameters in the ensemble, and let
\(
S_{jr}
:=
\sum_{t=1}^M \sum_{\ell\in\mathcal L(\mathcal T_{jrt})}
\mu_{jrt\ell}^2.
\)
Under the horseshoe prior
\[
\mu_{jrt\ell}\mid \lambda_{jr},\tau_B
\sim
\mathcal N\!\left(0,\frac{\tau_B^2\lambda_{jr}^2}{M}\right),
\quad
\lambda_{jr}\sim \mathcal C^+(0,1),
\quad
\tau_B\sim \mathcal C^+(0,1),
\]
we use the inverse-gamma augmentation of \citet{Makalic_2016}:
\[
\lambda_{jr}^2\mid \nu_{jr}\sim \mathcal{IG}\!\left(\frac12,\frac{1}{\nu_{jr}}\right),
\quad
\nu_{jr}\sim \mathcal{IG}\!\left(\frac12,1\right),
\]
\[
\tau_B^2\mid \xi_B\sim \mathcal{IG}\!\left(\frac12,\frac{1}{\xi_B}\right),
\quad
\xi_B\sim \mathcal{IG}\!\left(\frac12,1\right).
\]

\subsubsection*{Local ensemble-specific scale \texorpdfstring{$\lambda_{jr}^2$}{lambdajr2}}

Conditioning on all leaf parameters in ensemble \((j,r)\) gives
\begin{equation}
\label{eq:multi-lambda-full}
\lambda_{jr}^{2}\mid \ldots
\sim
\mathcal{IG}\left(
\frac{K_{jr}+1}{2},
\ \frac{M S_{jr}}{2\tau_B^2}+\frac{1}{\nu_{jr}}
\right).
\end{equation}
The auxiliary variable is updated from
\begin{equation}
\label{eq:multi-nu-full}
\nu_{jr}\mid \lambda_{jr}^2
\sim
\mathcal{IG}\left(
1,\ 1+\frac{1}{\lambda_{jr}^2}
\right).
\end{equation}

\subsubsection*{Global coefficient-function scale \texorpdfstring{$\tau_B^2$}{tauB2}}

Let
\(
K_{\mathrm{tot}}:=\sum_{j=1}^p\sum_{r=1}^q K_{jr},
\
W_{jr}:=\frac{M S_{jr}}{\lambda_{jr}^2}.
\)
Then
\begin{equation}
\label{eq:multi-tauB-full}
\tau_B^{2}\mid \ldots
\sim
\mathcal{IG}\left(
\frac{K_{\mathrm{tot}}+1}{2},
\ \frac12\sum_{j=1}^p\sum_{r=1}^q W_{jr}+\frac{1}{\xi_B}
\right),
\end{equation}
and \[\xi_B\mid \tau_B^2
\sim
\mathcal{IG}\left(
1, 1+\frac{1}{\tau_B^2}
\right).\]

\subsection{Graphical Horseshoe updates for the precision matrix}\label{sec:GHSdetails}

After updating all coefficient-function ensembles, recompute the residuals
\[
\bm E_i
=
\bm Y_i - B(\bm X_i)^\top \bm Z_i,
\quad i=1,\ldots,n,
\]
and form the residual scatter matrix
\begin{equation}
\label{eq:multi-resid-scatter}
S_E
=
\sum_{i=1}^n \bm E_i \bm E_i^\top .
\end{equation}
Conditional on the current mean functions, the posterior kernel for \(\Omega\) is
\begin{equation}
\label{eq:multi-omega-kernel}
p(\Omega\mid B,\text{data})
\propto
\mathbbm 1\{\Omega\succ0\}\,
|\Omega|^{n/2}
\exp\left\{-\frac12\operatorname{tr}(S_E\Omega)\right\}
\pi(\Omega\mid \Lambda,\tau_\Omega),
\end{equation}
where \(\pi(\Omega\mid \Lambda,\tau_\Omega)\) is the Graphical Horseshoe prior \citep{LiJCGS}:
\[
\omega_{rs}\mid \lambda_{rs},\tau_\Omega
\sim
\mathcal N(0,\tau_\Omega^2\lambda_{rs}^2),
\quad r<s,
\]
with half-Cauchy priors on the local scales \(\lambda_{rs}\) and global scale \(\tau_\Omega\).

\subsubsection*{Column-wise block Gibbs update}

We update \(\Omega\) one column at a time. Fix \(c\in\{1,\ldots,q\}\). Partition
\[
\Omega=
\begin{pmatrix}
\Omega_{-c,-c} & \omega_{-c,c}\\
\omega_{-c,c}^\top & \omega_{cc}
\end{pmatrix},
\quad
S_E=
\begin{pmatrix}
S_{-c,-c} & s_{-c,c}\\
s_{-c,c}^\top & s_{cc}
\end{pmatrix},
\]
and write
\[
\Sigma_{-c,-c} := \Omega_{-c,-c}^{-1},
\quad
\Lambda_c := \mathrm{diag}(\lambda_{1c}^2,\ldots,\lambda_{(c-1)c}^2,\lambda_{(c+1)c}^2,\ldots,\lambda_{qc}^2).
\]
Following \citet{LiJCGS}, reparameterize
\[
\beta := \omega_{-c,c},
\quad
\gamma := \omega_{cc} - \beta^\top \Sigma_{-c,-c}\beta.
\]
Then the full conditionals are
\begin{equation}
\label{eq:multi-beta-full}
\beta\mid \ldots
\sim
\mathcal N\bigl(-C_c\,s_{-c,c},\ C_c\bigr),
\quad
C_c
=
\left(
s_{cc}\Sigma_{-c,-c}
+
(\tau_\Omega^2 \Lambda_c)^{-1}
\right)^{-1},
\end{equation}
and
\begin{equation}
\label{eq:multi-gamma-full}
\gamma\mid \ldots
\sim
\mathrm{Ga}\left(
\frac n2 + a_0, \frac{s_{cc}+2b_0}{2}
\right).
\end{equation}
The updated column is then recovered by
\begin{equation}
\label{eq:multi-omega-reconstruct}
\omega_{-c,c}=\beta,
\quad
\omega_{cc}
=
\gamma + \beta^\top \Sigma_{-c,-c}\beta.
\end{equation}
Because \(\gamma>0\), this update preserves positive definiteness.

\subsubsection*{Local and global Graphical Horseshoe scales}

Using the same \citet{Makalic_2016} augmentation, write
\[
\lambda_{rs}^2\mid \nu_{rs}
\sim
\mathcal{IG}\left(\frac12,\frac{1}{\nu_{rs}}\right),
\quad
\nu_{rs}\sim
\mathcal{IG}\left(\frac12,1\right),
\]
and
\[
\tau_\Omega^2\mid \xi_\Omega
\sim
\mathcal{IG}\!\left(\frac12,\frac{1}{\xi_\Omega}\right),
\quad
\xi_\Omega\sim
\mathcal{IG}\left(\frac12,1\right).
\]
For each off-diagonal element \(\omega_{rs}\), \(r<s\), the local scale update is
\begin{equation}
\label{eq:multi-gh-local}
\lambda_{rs}^2\mid \ldots
\sim
\mathcal{IG}\left(
1,\ \frac{\omega_{rs}^2}{2\tau_\Omega^2} + \frac{1}{\nu_{rs}}
\right),
\quad
\nu_{rs}\mid \lambda_{rs}^2
\sim
\mathcal{IG}\left(
1,\ 1+\frac{1}{\lambda_{rs}^2}
\right).
\end{equation}
Letting \(m_\Omega = q(q-1)/2\) denote the number of off-diagonal entries, the global network scale update is
\begin{equation}
\label{eq:multi-gh-global}
\tau_\Omega^2\mid \ldots
\sim
\mathcal{IG}\left(
\frac{m_\Omega+1}{2},
\ \frac12\sum_{r<s}\frac{\omega_{rs}^2}{\lambda_{rs}^2} + \frac{1}{\xi_\Omega}
\right),
\end{equation}
followed by
\begin{equation}
\label{eq:multi-gh-xi}
\xi_\Omega\mid \tau_\Omega^2
\sim
\mathcal{IG}\left(
1, 1+\frac{1}{\tau_\Omega^2}
\right).
\end{equation}

\subsection{MCMC diagnostics for \texttt{multiVCBART}}
\label{sec:supp_mcmc_diagnostics}

As a standard diagnostic for the \texttt{multiVCBART} sampler, we examined trace
plots from one representative Friedman-type simulation replicate. We ran four
independent Markov chains for \(20000\) iterations, discarding the first \(1000\)
iterations of each chain as burn-in. Rather than attempting to monitor every
tree-specific parameter, which is not meaningful in an overparameterized BART
ensemble, we monitor representative functionals of the posterior mean surface.
Specifically, \Cref{fig:trace_eta_multivcbart} displays trace plots for
\(\eta_{1,1}\) and \(\eta_{1,2}\), the fitted conditional means for a fixed
training observation under outcomes \(y_1\) and \(y_2\), respectively.

The four chains show substantial visual overlap for both fitted-mean quantities,
suggesting stable sampling of the mean-function component. For these two
diagnostic functionals, the corresponding \(\hat R\) values were \(1.06\) for
\(\eta_{1,1}\) and \(1.08\) for \(\eta_{1,2}\). Thus, while the trace plots
indicate broadly reasonable mixing for representative fitted means, we view these
diagnostics as qualitative evidence of sampler stability rather than as an
exhaustive convergence assessment for all latent tree components.

\begin{figure}[t]
    \centering
    \includegraphics[width=0.8\linewidth]{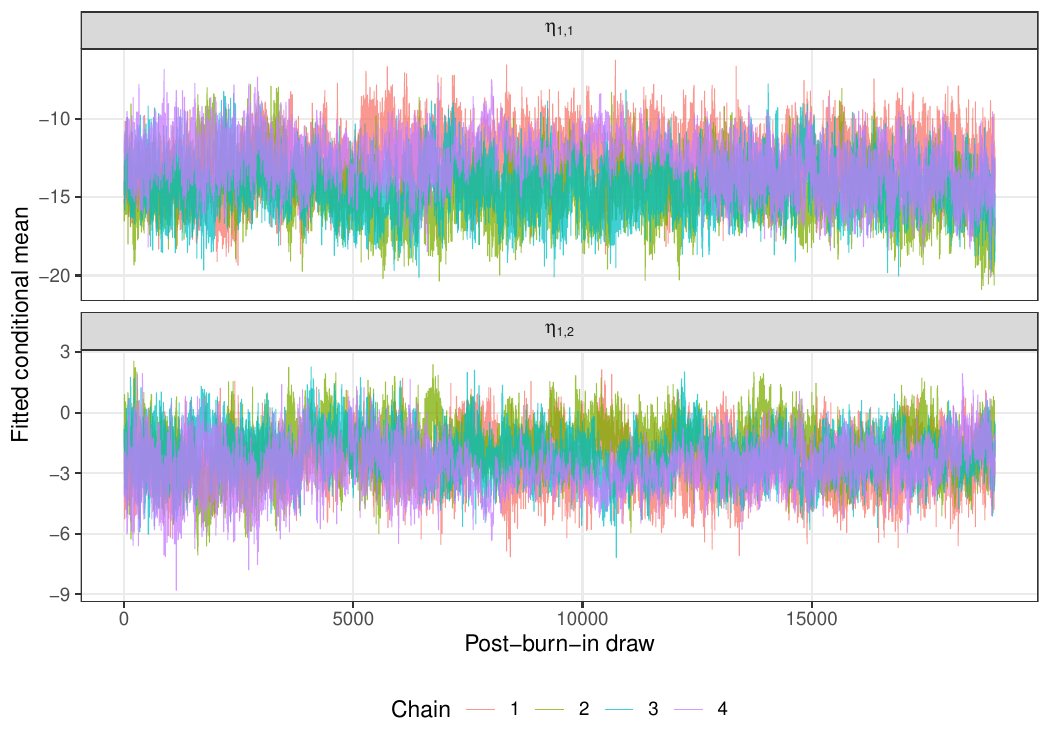}
    \caption{Trace plots for \(\eta_{1,1}\) and \(\eta_{1,2}\), the fitted
    conditional means for a representative training observation under outcomes
    \(y_1\) and \(y_2\), respectively. Four independent \texttt{multiVCBART}
    chains were run for \(20000\) iterations, with the first \(1000\) iterations
    discarded as burn-in.}
    \label{fig:trace_eta_multivcbart}
\end{figure}

\setcounter{figure}{0}
\setcounter{equation}{0}
\setcounter{table}{0}
\setcounter{theorem}{0}
\setcounter{assumption}{0}

\section{Additional experimental results}
\label{sec:multi-experiments}
\subsection{SUR-type simulation setting}
To complement the Friedman benchmark in \Cref{sec:experiments}, we next consider a multivariate regression design in the spirit of
Zellner’s seemingly unrelated regressions \citep{ZellnerSUR}, where each equation is driven by a different subset of
predictors while the outcomes remain correlated through the error covariance.
This non-overlapping signal design is useful for two reasons: (i) first, it isolates the benefit of modelling cross-outcome dependence, because the signal covariates are
disjoint across outcomes, improvements cannot be attributed to sharing the same covariate effects, (ii) second, it stress-tests methods that impose shared partitions or shared tree structures across responses, like \texttt{mvBART}; 
when each response depends on a distinct subset of covariates, a single common partition can be overly
restrictive, a point emphasized in the \texttt{suBART} simulation discussion in \citet{esser2025seeminglyunrelatedbayesianadditive}.

We generate $d=100$ modifiers
$\bx_i=(x_{i1},\ldots,x_{id})^\top$ with $x_{ik}\stackrel{\mathrm{iid}}{\sim}\mathcal{U}(-1,1)$ and set $q=10$. Similar to the first setting, again we take $\bm{z}_i = \bm{1}.$
The mean vector $\boldsymbol\eta_0(\bx_i)=(\eta_{0,1}(\bx_i),\ldots,\eta_{0,10}(\bx_i))^\top$
is defined by ten nonlinear component functions:
\begin{align*}
\eta_{0,1}(\bx_i) &= 4\sin\!\big(\pi x_{i1}x_{i2}\big), &
\eta_{0,2}(\bx_i) &= 3\cos\!\big(\pi x_{i3}\big),\\
\eta_{0,3}(\bx_i) &= 4(x_{i4}^2-0.33), &
\eta_{0,4}(\bx_i) &= 3x_{i5},\\
\eta_{0,5}(\bx_i) &= 4\exp\big(-2x_{i6}^2\big), &
\eta_{0,6}(\bx_i) &= 3x_{i7}x_{i8},\\
\eta_{0,7}(\bx_i) &= 4\big(|x_{i9}|-0.5\big), &
\eta_{0,8}(\bx_i) &= 3\sin\big(\pi x_{i10}\big),\\
\eta_{0,9}(\bx_i) &= 4x_{i11}^3, &
\eta_{0,10}(\bx_i) &= 3x_{i12}+2x_{i13}.
\end{align*}
Hence, each outcome depends on its own dedicated predictor subset, and the remaining $87$ covariates are pure noise.
Similar to the first experiment, we generate responses as
\(
\bm{Y}_i=\boldsymbol\eta_0(\bx_i)+\boldsymbol\varepsilon_i, \ \boldsymbol\varepsilon_i\stackrel{\mathrm{iid}}{\sim}\mathcal N_q(\mathbf 0,\Sigma_0),
\)
where $\Sigma_0=4\,\Omega_0^{-1}$ injects substantial correlated noise. The true precision matrix
$\Omega_0$ is sparse with edges
$(1,2),(2,3),(3,4)$ having weight $-0.4$,
$(5,6),(6,7)$ having weight $+0.5$,
$(8,9)$ with weight $-0.4$, and $(9,10)$ with weight $+0.4$, and unit diagonals. This creates multiple small dependence clusters while remaining sparse, mimicking a
graphical-SUR residual structure.

We compare \texttt{multiVCBART} against \texttt{suBART}, independent
\texttt{BART}, the two-stage \texttt{BART-GHS} procedure, and
\texttt{BayesSUR}. We exclude \texttt{mvBART} from our evaluations since it is not designed to handle more than \(q=2\) outcomes.

We evaluate predictive performance using RMSE and CRPS, and uncertainty quantification using nominal
$95\%$ predictive interval coverage, consistent with the metrics described in the Friedman-type simulation in \Cref{sec:experiments}, across $25$ replications. 
\begin{table}[ht]
\centering
\caption{SUR-type simulation results with \(d=100\) modifiers, and \(q=10\) outcomes. Reported are average test RMSE, average test CRPS, empirical coverage of nominal \(95\%\) predictive intervals, Frobenius precision-matrix recovery error \(\|\hat{\Omega}-\Omega_0\|_F\), and computational time in seconds. Best RMSE, CRPS, precision recovery, and time are bolded; for coverage, the value closest to \(0.95\) is bolded.}
\label{tab:zellner_ortho}
\footnotesize
\begin{tabular}{lccccc}
\toprule
Method & RMSE$_{\text{test}}$ & CRPS$_{\text{test}}$ & PI Coverage$_{\text{test}}$ & $\|\hat{\Omega} - \Omega_0\|_F$ & Time (s) \\
\midrule
\texttt{multiVCBART} & \textbf{0.796} & \textbf{0.517} & 0.827 & \textbf{0.279} & \textbf{30.997} \\
\texttt{suBART}      & 1.039 & 0.606 & 0.942 & 0.614 & 226.023 \\
\texttt{BART}        & 1.049 & 0.619 & \textbf{0.944} & 0.441 & 215.882 \\
\texttt{BART-GHS}    & 1.053 & 0.621 & 0.905 & 0.548 & 219.121 \\
\texttt{BayesSUR}    & 2.304 & 1.320 & 0.918 & 0.360 & 75.600 \\
\bottomrule
\end{tabular}
\end{table}

\Cref{tab:zellner_ortho} shows that \texttt{multiVCBART} achieves the best
predictive accuracy among all methods, attaining the smallest test RMSE and
CRPS. The poor performance of \texttt{BayesSUR} is expected since it enforces linear conditional means, which is fundamentally misspecified for the nonlinear generating functions above. Among tree-based competitors, \texttt{suBART}, independent \texttt{BART}, and \texttt{BART-GHS} all capture aspects of the nonlinear mean structure, but \texttt{multiVCBART} yields a clear additional gain. This is consistent with the intended role of the horseshoe regularization in suppressing noisy coefficient surfaces while jointly leveraging residual dependence through \(\Omega\).


Independent \texttt{BART} and \texttt{suBART} achieve empirical predictive interval coverage closest to the nominal \(95\%\) level, with coverages \(0.944\) and \(0.942\), respectively. In contrast, \texttt{multiVCBART} undercovers, with empirical coverage \(0.827\), despite having the best RMSE and CRPS. This suggests that \texttt{multiVCBART} produces substantially sharper predictive distributions in this setting, but the resulting intervals are somewhat too narrow. 

A similar pattern appears in precision-matrix recovery.
Among the methods considered, \texttt{multiVCBART} attains the smallest Frobenius error, while \texttt{BayesSUR} is competitive but slightly worse, and \texttt{suBART} is substantially less accurate.
For independent \texttt{BART}, there is no joint multivariate residual model, so its reported \(\Omega\) estimate is constructed using only the univariate residual variance draws from each outcome fit. Specifically, we form a diagonal precision matrix whose entries are the reciprocals of the posterior residual variance draws for the separate outcome-specific BART models.
Thus, the \texttt{BART} benchmark is unable to recover off-diagonal residual dependence by construction, and its Frobenius error should be interpreted relative to this restricted diagonal approximation.
Overall, these results indicate that accurate estimation of \(\Omega_0\) requires not only modeling cross-outcome dependence, but also fitting the nonlinear mean structure well enough that the remaining residual variation reflects the latent graphical dependence.

Finally, \texttt{multiVCBART} is also the fastest method in this updated
experiment. This timing advantage is consistent with the computational discussion in \Cref{sec:comp} -- here \(\bm z_i=\bm 1\), so the model updates only \(q=10\) coefficient-function ensembles rather than \(pq\) high-dimensional covariate--outcome surfaces. In this regime, the precision-based pseudo-response update avoids dense multivariate tree updates while keeping the number of BART ensembles modest, making \texttt{multiVCBART} both statistically and computationally attractive.

\subsection{Causal inference example}
We next consider a sparse multivariate causal inference design inspired by the first data generating process in \citet[Section 4]{mcjames2023bayesiancausalforestsmultivariate}, but modified to include a substantial number of null covariates. The original design is favorable to their multivariate tree-based causal estimator \texttt{mvbcf} because the two outcomes share similar prognostic structure and treatment-effect modifiers. To make the setting more challenging and more reflective of high-dimensional observational studies, we retain the same outcome-generating mechanism while increasing the ambient dimension from \(p=10\) to \(p=50\) by appending 40 independent noise variables.

Specifically, for each Monte Carlo replication we generate \(n \in \{500,1000\}\) training observations and an independent test set of size \(1000\).
The first five predictors are continuous with
\(X_1,\ldots,X_5 \stackrel{\mathrm{iid}}{\sim} \mathcal{U}(0,1),\) the next three are binary,
\(X_6,X_7,X_8 \stackrel{\mathrm{iid}}{\sim} \mathrm{Bern}(0.5),\) the next two are ordinal categorical,
\(X_9,X_{10} \in \{0,1,2,3,4\},\) and the remaining $40$ predictors are independent noise covariates generated from \(\mathcal{U}(0,1)\).
Treatment assignment follows
\(
T \mid \bm X \sim \mathrm{Bern}(X_4),
\)
so that \(X_4\) acts as a confounder.

The two potential outcome surfaces are governed by shared nonlinear prognostic structure but distinct treatment effects.
The prognostic mean functions are
\begin{align*}
\mu_1(\bm X)
&=
300 + 10\Bigl\{
11\sin(\pi X_1X_2)
+18(X_3-0.5)^2
+10X_4
+12X_6
+X_9
\Bigr\},\\
\mu_2(\bm X)
&=
300 + 10\Bigl\{
9\sin(\pi X_1X_2)
+22(X_3-0.5)^2
+14X_4
+8X_6
+X_9
\Bigr\},
\end{align*}
while the heterogeneous treatment effects are
\(\tau_1(\bm X) = 10(2X_4+2X_5),\) and \(\tau_2(\bm X) = 10(X_4+3X_5).\)
The observed bivariate outcome is then generated as
\[
\bm{Y}
=
\begin{pmatrix}
Y_1\\
Y_2
\end{pmatrix}
=
\begin{pmatrix}
\mu_1(\bm{X}) + T\tau_1(\bm X)\\
\mu_2(\bm{X}) + T\tau_2(\bm X)
\end{pmatrix}
+
\boldsymbol{\varepsilon},
\quad
\boldsymbol{\varepsilon} \sim N_2(\mathbf 0, 50^2 I_2).
\]

We compare \texttt{multiVCBART} against \texttt{mvbcf}, separate univariate \texttt{BART} fits, and \texttt{mvBART}.
Following the setup in \citet{mcjames2023bayesiancausalforestsmultivariate}, we first estimate the propensity score using a \texttt{BART} model and then include the estimated propensity score as an additional covariate for all competing methods.

Performance is averaged over $50$ Monte Carlo replications and evaluated using the precision in estimation of heterogeneous effects (PEHE) defined as $\sqrt{n^{-1}\sum_{i=1}^{n} (\hat{\tau}(\bx_i) - \tau(\bx_i))^2}$, continuous ranked probability score (CRPS) for $\tau$, and empirical \(95\%\) interval coverage.


\Cref{tab:sparse_dgp_results} summarizes average causal inference performance across the two outcomes under Simulation Setting 3. At \(n=500\), \texttt{multiVCBART} achieves the smallest average PEHE, the smallest average CRPS, and near-nominal \(95\%\) interval coverage. This suggests a clear advantage in the moderate-sample regime, where aggressive shrinkage of the appended noise covariates and joint multivariate learning remain especially beneficial. At \(n=1000\), the differences between the leading methods become noticeably narrower. \texttt{multiVCBART} continues to achieve the smallest average PEHE, indicating the most accurate recovery of the treatment-effect surfaces overall, whereas \texttt{mvbcf} attains the best average CRPS and the highest empirical coverage. This pattern is plausible given that the design of the data-generating mechanism is adapted from a benchmark that is structurally favorable to \texttt{mvbcf}, since the two outcomes share similar prognostic structure and treatment-effect modifiers, while the residual errors are simulated independently \((\Omega_0 \propto I_2)\). Consequently, there is no residual cross-outcome dependence for \texttt{multiVCBART} to exploit, and the advantage of explicitly modeling \(\Omega\) is reduced.
\begin{table}[htbp]
    \centering
    \caption{Average causal inference performance metrics on the Simulation Setting 3 benchmark with \(p=50\). Results represent the mean metric across both outcomes, averaged over 50 Monte Carlo replications. Best metrics are bolded.}
    \label{tab:sparse_dgp_results}
    \footnotesize
    \begin{tabular}{cl ccc}
        \toprule
        \multirow{1}{*}{$n$} & \multirow{1}{*}{Method} 
        & {PEHE} & {CRPS} & {95\% Cov} \\
        \midrule
        \multirow{4}{*}{500}
        & \texttt{multiVCBART} & \textbf{11.13} & \textbf{6.94} & 0.93 \\
        & \texttt{mvbcf}       & 12.75 & 7.67 & \textbf{0.95} \\
        & \texttt{BART}        & 11.83 & 7.29 & 0.91 \\
        & \texttt{mvBART}      & 13.40 & 9.83 & 0.92 \\
        \midrule
        \multirow{4}{*}{1000}
        & \texttt{multiVCBART} & \textbf{8.91} & 5.21 & 0.89 \\
        & \texttt{mvbcf}       & 9.06 & \textbf{5.18} & \textbf{0.97} \\
        & \texttt{BART}        & 9.70 & 5.64 & 0.88 \\
        & \texttt{mvBART}      & 9.29 & 6.60 & 0.86 \\
        \bottomrule
    \end{tabular}
\end{table}

\subsection{Additional details for the GDSC analysis}
\begin{figure}[t]
    \centering
    \includegraphics[width=0.8\linewidth]{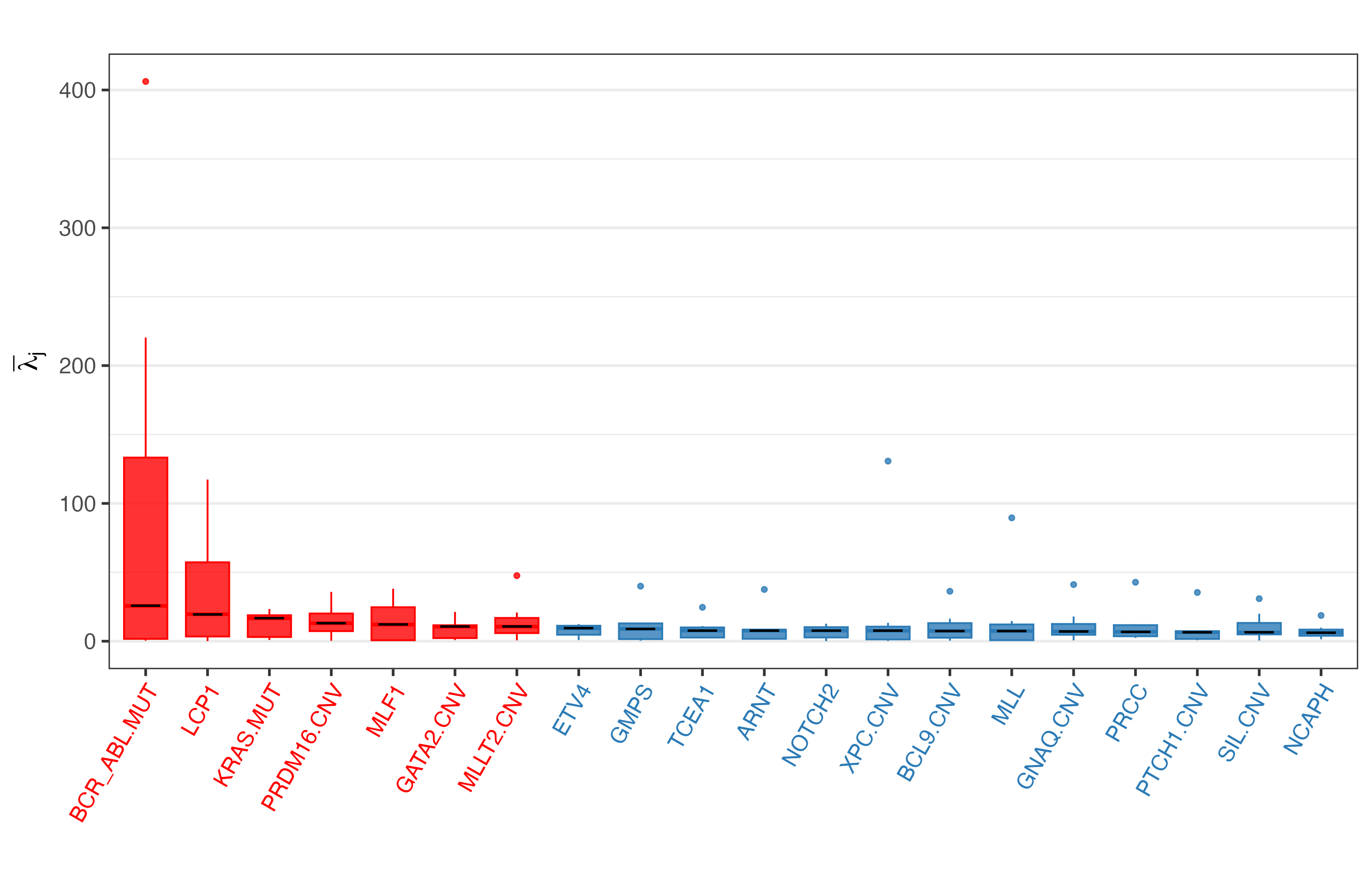}
    \caption{Boxplots of posterior draws of \(\bar{\lambda}_j=q^{-1}\sum_{r=1}^q\lambda_{jr}\) for the top twenty omics features in the GDSC analysis, ranked by decreasing posterior median. Features with posterior median exceeding \(10\) are highlighted.}
    \label{fig:gdsc_lambda}
\end{figure}
The ranking in \Cref{fig:gdsc_lambda} is dominated by \texttt{BCR\_ABL.MUT}, whose posterior local-scale
distribution is clearly separated from the remaining features. The other
top-ranked features show more moderate and overlapping evidence of relevance, suggesting a small number of strong signals followed by a broader set of weaker candidate biomarkers.

\subsection{Sensitivity analyses}\label{sec:sensitivity}
The sensitivity results in \Cref{tab:tree_sensitivity,tab:sigmaB_sensitivity} indicate that the predictive performance of \texttt{multiVCBART} is reasonably stable across the hyperparameter settings considered. Varying the number of trees $M$ has a more visible effect on both accuracy and computation. In this experiment, the smaller ensemble with $M=10$ gives the lowest RMSE and CRPS and is also the fastest, while increasing to $M=40$ leads to higher computational cost without improving prediction. Across all three choices of $M$, however, predictive interval coverage remains close to the nominal $95\%$ level, suggesting that uncertainty quantification is not highly sensitive to this tuning parameter.

The results for the global coefficient-surface scale $\sigma_{B}$ exhibit greater stability. The three choices $\sigma_{B}\in \{0.2,1,5\}$ produce similar RMSE, CRPS, and coverage, with $\sigma_{B}=0.2$ yielding a modest improvement in RMSE and CRPS and $\sigma_{B}=5$ yielding slightly higher coverage. The default value $\sigma_{B}=1$ is computationally fastest in this comparison and provides performance close to the best setting. Overall, these experiments suggest that \texttt{multiVCBART} is not overly sensitive to moderate changes in the ensemble size or global shrinkage scale, although very large ensembles can add substantial computational cost without clear predictive benefit.

\begin{table}[H]
\centering
\caption{Sensitivity of \texttt{multiVCBART} to the number of trees \(M\) in the Friedman-type simulation. Reported are average test RMSE, test CRPS, empirical coverage of nominal \(95\%\) predictive intervals, and runtime in seconds. The default setting is \(M=20\).}
\label{tab:tree_sensitivity}
\footnotesize
\begin{tabular}{lcccc}
\toprule
\(M\) & RMSE$_{\text{test}}$ & CRPS$_{\text{test}}$ & PI Coverage$_{\text{test}}$ & Time (s) \\
\midrule
10 & \textbf{2.247} & \textbf{1.192} & \textbf{0.943} & \textbf{62.568} \\
20 & 2.493 & 1.341 & 0.940 & 77.974 \\
40 & 2.775 & 1.529 & 0.941 & 221.946 \\
\bottomrule
\end{tabular}
\end{table}

\begin{table}[H]
\centering
\caption{Sensitivity of \texttt{multiVCBART} to the global coefficient-surface shrinkage scale \(\sigma_{B}\) in the Friedman-type simulation. Reported are average test RMSE, test CRPS, empirical coverage of nominal \(95\%\) predictive intervals, and runtime in seconds. The default setting is \(\sigma_{B}=1\).}
\label{tab:sigmaB_sensitivity}
\footnotesize
\begin{tabular}{lcccc}
\toprule
\(\sigma_{B}\) & RMSE$_{\text{test}}$ & CRPS$_{\text{test}}$ & PI Coverage$_{\text{test}}$ & Time (s) \\
\midrule
0.2 & \textbf{2.353} & \textbf{1.262} & 0.937 & 126.124 \\
1   & 2.493 & 1.341 & 0.940 & \textbf{77.974} \\
5   & 2.490 & 1.348 & \textbf{0.942} & 125.093 \\
\bottomrule
\end{tabular}
\end{table}

\end{document}